\definecolor{OliveGreen}{HTML}{3C8031}
\newenvironment{fminipage}%
  {\end{minipage}\end{Sbox}\fbox{\TheSbox}}
\newtheorem{theorem}{Theorem}[section]
\newtheorem{lemma}[theorem]{Lemma}
\newtheorem{corollary}[theorem]{Corollary}
\newtheorem{definition}[theorem]{Definition}
\newtheorem{claim}{Claim}[theorem]
\newtheorem{fact}[theorem]{Fact}
\newenvironment{claimproof}{\noindent$\rhd$\hspace{1em}}{\hfill$\lhd$}
\newlength{\widebarargwidth}
\newlength{\widebarargheight}
\newlength{\widebarargdepth}
\long\def\@makecaption#1#2{
       \vskip 0.8ex
       \setbox\@tempboxa\hbox{\small {\bf #1:} #2}
       \parindent 1.5em 
       \dimen0=\hsize
       \advance\dimen0 by -3em
       \ifdim \wd\@tempboxa >\dimen0
               \hbox to \hsize{
                       \parindent 0em
                       \hfil 
                       \parbox{\dimen0}{\def\baselinestretch{0.96}\small
                               {\bf #1.} #2
                               } 
                       \hfil}
       \else \hbox to \hsize{\hfil \box\@tempboxa \hfil}
       \fi
       }
\long\def\comment#1{}
\newcommand{\EE}{\ensuremath{{\mathbb{E}}}}
\newcommand{\bbone}[1]{\mathbbm{1}\{#1\}}
\newcommand{\set}[1]{\{#1\}}
\newcommand{\defeq}{\coloneqq}
\newcommand{\N}{\mathbb{N}}
\newcommand{\Z}{\mathbb{Z}}
\newcommand{\E}{\mathbb{E}}
\newcommand{\IE}{E_{\mathsf{int}}}
\newcommand{\IV}{V_{\mathsf{int}}}
\newcommand{\Sha}{\lfloor 3\Delta/2\rfloor}
\newcommand{\aug}{\mathsf{Aug}}
\newcommand{\LOCAL}{$\mathsf{LOCAL}$\xspace}
\newcommand{\poly}{\mathsf{poly}}
\newcommand{\Shift}{\mathsf{Shift}}
\newcommand{\vend}{\mathsf{vEnd}}
\newcommand{\vstart}{\mathsf{vStart}}
\newcommand{\Start}{\mathsf{Start}}
\newcommand{\End}{\mathsf{End}}
\newcommand{\Pivot}{\mathsf{Pivot}}
\newcommand{\length}{\mathsf{length}}
\newcommand{\visited}{\mathsf{visited}}
\newcommand{\dom}{\mathsf{dom}}
\newcommand{\0}{\varnothing}
\newcommand{\bemph}[1]{{\normalfont#1}} 
\newcommand{\ep}[1]{\bemph{(}#1\bemph{)}} 
\newcommand{\emphdef}[1]{\textbf{\textit{{#1}}}}
\newcommand{\emphd}[1]{\emphdef{#1}}
\newcommand{\blank}{\mathsf{blank}}
\newcommand{\val}{\mathsf{val}}
\newcommand{\wt}{\mathsf{wt}}
\title{Edge-Coloring Algorithms for Bounded Degree Multigraphs}
\author[1]{Abhishek Dhawan\thanks{Email: \href{mailto:abhishek.dhawan@math.gatech.edu}{abhishek.dhawan@math.gatech.edu}}}
\affil[1]{School of Mathematics, Georgia Institute of Technology}
\date{\today}
\begin{document}

\maketitle

\begin{abstract}
    In this paper, we consider algorithms for edge-coloring multigraphs $G$ of bounded maximum degree, i.e., $\Delta(G) = O(1)$. 
    Shannon's theorem states that any multigraph of maximum degree $\Delta$ can be properly edge-colored with $\Sha$ colors.
    Our main results include algorithms for computing such colorings.
    We design deterministic and randomized sequential algorithms with running time $O(n\log n)$ and $O(n)$, respectively. 
    This is the first improvement since the $O(n^2)$ algorithm in Shannon's original paper, and our randomized algorithm is optimal up to constant factors.
    We also develop distributed algorithms in the \LOCAL model of computation.
    Namely, we design deterministic and randomized \LOCAL algorithms with running time $\tilde O(\log^5 n)$ and $O(\log^2n)$, respectively.
    The deterministic sequential algorithm is a simplified extension of earlier work of Gabow et al. in edge-coloring simple graphs.
    The other algorithms apply the entropy compression method in a similar way to recent work by the author and Bernshteyn, where the authors design algorithms for Vizing's theorem for simple graphs.
    We also extend those results to Vizing's theorem for multigraphs.
\end{abstract}

\tableofcontents

\section{Introduction}\label{section:intro}

All multigraphs considered in this paper are finite, undirected and loopless.
Edge-coloring is a central topic in graph theory with numerous applications (in scheduling problems, designing communication networks and round robin tournaments, to name a few).
In this paper, we are interested in algorithms for edge-coloring multigraphs with few colors.
To begin with, we will make some definitions.
For $r \in \N$, we let $[r] \defeq \set{1, \ldots, r}$.
Let $G$ be a multigraph with $n$ vertices and $m$ edges.
For $x,y \in V(G)$ and $e \in E(G)$, we define $E_G(x)$ to be the set of edges incident to $x$, $E_G(x,y)$ to be the set of edges between $x$ and $y$, $N_G(x)$ to be the set of vertices $z$ such that $E_G(x,z) \neq \0$, and $V(e)$ to be the set of endpoints of $e$.
Furthermore, we define the degree of $x$ to be the number of edges incident to $x$ and the multiplicity of $\set{x, y}$ to be the number of edges between $x$ and $y$, i.e., 
\[\deg_G(x) \defeq |E_G(x)|, \quad \mu(x,y) \defeq |E_G(x, y)|.\] 
The maximum degree and maximum multiplicity of $G$ (denoted $\Delta(G)$ and $\mu(G)$, respectively) are defined to be the maximum values attained above, respectively.

\begin{definition}\label{defn:edge_coloring}
    Given a multigraph $G$, we say that two edges $e, f \in E(G)$ are \emphd{adjacent} if $V(e) \cap V(f) \neq \0$.
    A proper $r$-edge-coloring of $G$ is a function $\phi\,:\,E(G) \to [r]$ such that $\phi(e) \neq \phi(f)$ for any adjacent edges $e, f$.
    The chromatic index of $G$, denoted by $\chi'(G)$, is the minimum $r$ such that $G$ admits a proper $r$-edge-coloring.
\end{definition}

It is easy to see that $\chi'(G) \geq \Delta(G)$.
Shannon and Vizing famously proved upper bounds on $\chi'(G)$ in terms of the maximum degree and maximum multiplicity of the multigraph.
We note that these bounds are optimal as a result of the construction in \cite{vizing1965chromatic}.

\begin{theorem}[Shannon's Theorem \cite{Shannon}]\label{theo:shannon_bound}
    If $G$ is a multigraph of maximum degree $\Delta$, then $\chi'(G) \leq \Sha$.
\end{theorem}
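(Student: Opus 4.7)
The plan is to proceed by induction on $m = |E(G)|$, the base case $m = 0$ being immediate. For the inductive step I would fix any edge $e = uv \in E(G)$ and apply the inductive hypothesis to $G - e$ to obtain a proper $k$-edge-coloring $\phi$ with $k = \Sha$; it then suffices to extend $\phi$ to $G$.

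For each $x \in V(G)$, write $M(x) \subseteq [k]$ for the set of colors absent on the edges at $x$ in $\phi$. Using the identity $\Sha = \Delta + \lfloor \Delta/2 \rfloor$, together with $\deg_{G-e}(u), \deg_{G-e}(v) \leq \Delta - 1$,
\[ |M(u)|,\ |M(v)|\ \geq\ k - (\Delta - 1)\ =\ \lfloor \Delta/2 \rfloor + 1, \]
and $|M(x)| \geq \lfloor \Delta/2 \rfloor$ for every $x \neq u, v$. If $M(u) \cap M(v) \neq \emptyset$, coloring $e$ with any color in the intersection finishes the argument, so I may assume $M(u) \cap M(v) = \emptyset$ and attempt a recoloring of $\phi$.

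The strategy is to build a Vizing-style fan at $v$ and resolve it with a Kempe swap if necessary. Pick $\alpha_1 \in M(u)$; since $\alpha_1 \notin M(v)$, $v$ is incident to a unique edge of color $\alpha_1$, say $vw_1$. Iteratively, given $w_1, \ldots, w_t$, I would pick $\alpha_{t+1} \in M(w_t) \setminus M(v)$ and let $vw_{t+1}$ be the edge at $v$ colored $\alpha_{t+1}$, continuing while the $w_i$ remain distinct and no new missing color at $v$ appears. The fan terminates either when $M(w_t) \cap M(v) \neq \emptyset$, in which case a cyclic fan shift $vw_i \mapsto \alpha_{i+1}$ (for $i < t$) and $vw_t \mapsto \gamma \in M(w_t) \cap M(v)$ frees $\alpha_1$ at $v$, after which I color $e$ with $\alpha_1$; or when the fan stalls, in which case I would perform a Kempe swap along the $\alpha_1\beta$-component through $u$ (for a suitable $\beta \in M(w_t)$) so that after the swap $u$ and $v$ share a missing color. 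Termination of the fan is guaranteed by $\deg(v) \leq \Delta - 1$.

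The main obstacle will be the Kempe step when the naive choice places $u$ and $v$ in the same $\alpha_1\beta$-component, so that swapping does not create a common missing color. In this situation I would iterate through the available pairs $(\alpha, \beta) \in M(u) \times M(w_t)$, or alternate partial fan shifts with Kempe swaps, leaning on the slack $|M(u)|, |M(v)| \geq \lfloor \Delta/2 \rfloor + 1$ to guarantee a valid choice. The Shannon bound $\Sha$ is exactly what makes this slack available, and the bound is tight by the constructions of \cite{vizing1965chromatic}. I expect the bulk of the technical work to lie in this case analysis, closely paralleling the classical Vizing-fan argument for simple graphs but with added care for multi-edges, which can force the fan to revisit a neighbor of $v$.
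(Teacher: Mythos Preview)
Your inductive framework and the computation $|M(u)|,|M(v)|\geq\lfloor\Delta/2\rfloor+1$ are fine, but the recoloring strategy has a genuine gap. A long Vizing fan is the tool tailored to the bound $\Delta+\mu$, not to $\Sha$; when the fan stalls in a multigraph you generally need $|M(w_t)|\geq\mu$ fresh colors to push through the Kempe argument, and with only $\Sha$ colors you have $|M(w_t)|\geq\lfloor\Delta/2\rfloor$, which can be much smaller than $\mu$. Your proposed fix in the ``main obstacle'' paragraph---iterating over pairs $(\alpha,\beta)\in M(u)\times M(w_t)$ or alternating shifts with swaps---is not a plan: nothing in it rules out every such pair placing $u$ and $v$ in the same Kempe component, and nothing forces termination.

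The idea you are missing, and the one the paper uses, is that with $\Sha$ colors you never need a fan longer than two edges. Pick any $\eta\in M(u)$, let $f\in E_G(v)$ be the edge with $\phi(f)=\eta$, and let $z$ be its other endpoint. Then
\[
|M(u)|+|M(v)|+|M(z)|\ \geq\ 2(\lfloor\Delta/2\rfloor+1)+\lfloor\Delta/2\rfloor\ >\ \Sha,
\]
so two of the three sets meet (this is Fact~\ref{fact:sha}). If $M(v)$ meets $M(u)$ or $M(z)$, the fan $(e)$ or $(e,f)$ is already happy. Otherwise $M(u)\cap M(z)\neq\0$; take $\beta$ in that intersection and $\alpha\in M(v)$, and a single $\alpha\beta$-Kempe path from $z$ (or from $u$, if the first one loops back to $v$) succeeds---Lemma~\ref{lemma:first_fan_shannon} makes this precise. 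The three-set pigeonhole is where the specific number $\Sha$ enters; your long-fan outline never invokes it, which is why the argument does not close.
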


\begin{theorem}[Vizing's Theorem \cite{Vizing}]\label{theo:vizing_bound}
    If $G$ is a multigraph of maximum degree $\Delta$ and maximum multiplicity $\mu$, then $\chi'(G) \leq \Delta + \mu$.
\end{theorem}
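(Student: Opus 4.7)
The plan is to prove the theorem by induction on the number of edges, using Vizing's classical \emph{fan} technique adapted from simple graphs to multigraphs. I would remove an edge $e = xy_0$ from $G$, apply the inductive hypothesis to obtain a proper $(\Delta+\mu)$-edge-coloring $\phi$ of $G - e$, and then extend $\phi$ to the missing edge. Throughout, write $M(v) \subseteq [\Delta + \mu]$ for the set of colors missing at vertex $v$; the quantitative ingredient driving the whole argument is that $|M(v)| \geq (\Delta + \mu) - \deg_G(v) \geq \mu$, so there is $\mu$-fold slack at every vertex rather than just a single color as in the Vizing $\Delta+1$ bound for simple graphs.

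If $M(x) \cap M(y_0) \neq \0$, I color $e$ with any common color and finish. Otherwise, the next step is to construct a maximal \emph{Vizing fan} at $x$ rooted at $e$: a sequence of edges $e_0 = e, e_1, \ldots, e_k$ with each $e_i = xy_i$ incident to $x$, such that $\phi(e_i) \in M(y_{i-1})$ for all $i \geq 1$, and which cannot be extended further. In the multigraph setting the fan vertices $y_1, \ldots, y_k$ need not be distinct, since up to $\mu$ parallel edges at $x$ may appear; this is the principal new wrinkle compared to the simple graph proof. If at any stage $M(x) \cap M(y_i) \neq \0$, then I can \emph{shift} the fan: for each $j \le i$ recolor $e_{j-1}$ with $\phi(e_j)$, and then color $e_i$ with a common missing color, producing a valid extension of $\phi$ to all of $G$.

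If no such closure exists on the maximal fan, I would invoke a Kempe chain swap. Pick $\alpha \in M(x)$ and $\beta \in M(y_k)$, and let $P$ be the maximal path or cycle whose edges are alternately colored $\alpha$ and $\beta$ and which contains $y_k$. Swapping $\alpha$ and $\beta$ along $P$ yields a new proper coloring of $G - e$, and a short case analysis on whether $P$ reaches $x$ (and if so, at which fan edge $e_j$ of color $\beta$) shows that after the swap either the entire fan or a truncated prefix becomes closeable. Applying the shifting operation to that prefix then extends the coloring to $e$ and completes the induction step.

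The main obstacle I expect is the case analysis in the Kempe swap, together with the bookkeeping forced by repeated fan vertices. One must verify that $P$ cannot terminate at $x$ in a way that disrupts either the missing-color structure of the remaining fan or the shifting operation, and that when multiple parallel edges $xy_i$ share endpoints the recoloring along the fan remains consistent (in particular, each $y_i$ must still see each color at most once after the shift). The improvement from $\Delta + 1$ to $\Delta + \mu$ is exactly what allows the Kempe chain to always deliver a usable reduction: the slack $|M(y_k)| \geq \mu$ supplies enough candidate colors $\beta$ that a workable choice always exists, and the same slack at each $y_i$ ensures the fan can be built long enough for the argument to apply.
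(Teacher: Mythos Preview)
Your proposal is the classical Vizing fan argument, and it is correct in outline; this is also essentially what underlies the paper's treatment, though the paper does not give a standalone proof of Theorem~\ref{theo:vizing_bound} (it is cited as Vizing's result). The relevant constructive content appears instead in Appendix~\ref{section:vizing}, where Algorithm~\ref{alg:first_vizing_fan} builds the fan and Lemma~\ref{lemma:first_fan_vizing} establishes exactly the dichotomy you describe: either the fan $F$ is $\phi$-happy, or for some prefix $F' = F|j$ and any $\alpha \in M(\phi,x)$, one of $F$, $F'$ is $(\phi,\alpha\beta)$-successful, i.e.\ the associated Kempe chain does not return to $x$. Augmenting along that chain then extends the coloring, which is your induction step.

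One small correction to your intuition: the $\mu$-fold slack is not used to supply ``enough candidate colors $\beta$'' for the Kempe swap (a single $\beta$ suffices, as in the simple-graph proof). Its real role is the one you mention second: when a vertex $y$ recurs as $y_i = y_j$ in the fan (because of parallel edges $xy$), each appearance must consume a fresh color from $M(\phi,y)$ to extend the fan, and $|M(\phi,y)| \geq \mu \geq \mu(x,y)$ guarantees you do not run out before the fan either becomes happy or revisits an edge. The paper handles this bookkeeping explicitly by maintaining the sets $\beta(z)$ with removals (step~\ref{step:beta_empty} of Algorithm~\ref{alg:first_vizing_fan}), and the proof of Lemma~\ref{lemma:first_fan_vizing} uses precisely this removal to rule out the problematic coincidence $y_{j-1} = y_{k-1}$ in the Kempe analysis.
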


It is \textsf{NP}-hard to determine $\chi'(G)$, even when $\Delta(G) = 3$ \cite{Holyer}.
Therefore, it is natural to design algorithms for edge-coloring multigraphs with $\Sha$ or $\Delta + \mu$ colors.
For simple graphs ($\mu = 1$), there has been much work on algorithms for Vizing's bound.
The original proof itself is algorithmic, simplifications of which appeared in \cite{Bollobas, RD, MG}. 
These algorithms run in $O(mn)$ time as they iteratively color a single edge by modifying the colors on $O(n)$ edges.
Gabow, Nishizeki, Kariv, Leven and Terada \cite{GNKLT} defined two recursive algorithms, which run in $O(m\sqrt{n\log n})$ and $O(m\log n)$ time.
There was no improvement until a recent paper of Sinnamon \cite{Sinnamon}, where the author defined a $O(m\sqrt{n})$ time algorithm.
As noted by the authors in \cite{EdgeColoringMonograph}, the problem is considerably harder when $\mu > 1$, which is the main focus of this paper.

We will primarily be considering algorithms for \hyperref[theo:shannon_bound]{Shannon's Theorem} when $\Delta(G) = O(1)$.
The original proof of the theorem yields a $O(m(\Delta + n)) = O(n^2)$ algorithm for $\Sha$-edge-coloring (in the $\Delta = O(1)$ regime, as $m \leq n\Delta$).
To the best of our knowledge, this is the first paper to consider improving the above runtime.
However, there has been progress in edge-coloring multigraphs of special structure, specifically bipartite multigraphs.
Gabow and Kariv exhibited a $\Delta$-edge-coloring algorithm which runs in $O(\min\set{m\log^2 n, n^2\log n})$ time \cite{gabow1982algorithms}.
Schrijver improved upon this by designing a $O(\Delta\,m)$ time algorithm \cite{schrijver1998bipartite}, and Cole, Ost and Schirra further improved to develop one that runs in $O(m\log\Delta)$ time \cite{cole2001edge}.
Furthermore, most of the algorithms mentioned in the previous paragraph can be implemented for \hyperref[theo:vizing_bound]{Vizing's Theorem} for multigraphs with $\mu > 1$, with minor modifications to certain subprocedures.

The first result of this paper defines an efficient deterministic algorithm for $\Sha$-edge-coloring.
The algorithm is inspired by the $O(\Delta\,m\log n)$ algorithm of \cite{GNKLT}.
As we are interested in the regime where $\Delta = O(1)$, we are able to avoid the recursive calls in their algorithm.
This is the key difference in our approach.
As we can now define an iterative algorithm, the analysis is much easier to follow (although more technical due to the additional structure of multigraphs).
Furthermore, while our focus in this paper is on time complexity, we remark that our algorithm is more space efficient as well.
Due to the recursive step, the algorithm in \cite{GNKLT} requires defining $\Theta(\Delta)$ new graphs.
By circumventing recursion, we do not need to define \textbf{any} new graphs during our algorithm.

\begin{theorem}[Deterministic algorithm for Shannon's theorem]\label{theo:shannon_deterministic}
    Let $G$ be an $n$-vertex loopless multigraph of maximum degree $\Delta = O(1)$.
    There is a deterministic sequential algorithm that finds a proper $\Sha$-edge-coloring of $G$ in time $O(n\log n)$.
\end{theorem}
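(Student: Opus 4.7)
The plan is to color the edges of $G$ one at a time, maintaining a proper partial edge-coloring $\phi:E(G)\pto[\Sha]$. Since $\Delta=O(1)$, the number of edges satisfies $m\leq n\Delta/2 = O(n)$, so it suffices to prove that each extension step runs in amortized time $O(\log n)$.

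For each vertex $v$, I would maintain the set $M(v)\subseteq[\Sha]$ of colors missing at $v$ in a balanced data structure supporting $O(\log n)$-time updates, along with, for each color $c$, an auxiliary structure on the matching $\phi^{-1}(c)$ supporting traversal of alternating paths. The core extension subroutine takes an uncolored edge $e=xy$ and proceeds in three stages, mirroring the classical Vizing/Shannon argument. First, if $M(x)\cap M(y)\neq\emptyset$, we simply color $e$ with a common missing color. Otherwise, we construct a Shannon fan at $x$, that is, a sequence $e_0=e, e_1, \ldots, e_t$ of edges incident to $x$ such that $\phi(e_{i+1}) \in M(y_i)$, where $y_i$ is the non-$x$ endpoint of $e_i$. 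Because Shannon's bound forces $|M(v)|\geq\lceil\Delta/2\rceil$ at every vertex, a short combinatorial argument bounds the fan length by $O(\Delta)=O(1)$ and shows that either the fan \emph{closes} (so that shifting colors along it frees a color at $x$ usable on $e$) or there exist colors $\alpha\in M(x)$ and $\beta\in M(y_t)$ such that swapping colors on the maximal $\alpha/\beta$-alternating path starting at $y_t$ reduces the problem to one of the previous cases.

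Constructing and rotating the fan is $O(1)$ per extension; the dominant cost is the Kempe chain swap, which in the worst case touches $\Theta(n)$ edges. The main obstacle is the amortized analysis. Following the framework of Gabow et al., but specialized to $\Delta=O(1)$ so as to avoid any recursion on $\Delta$, the idea is to pick, between two candidate chains ending in different free vertices, the \emph{shorter} one, and to charge each recoloring along a swap to the missing-color set of a tracked endpoint via a potential function. Since each vertex carries at most $\Sha = O(1)$ colors and participates in at most $O(\log n)$ relevant swap events before its missing-set stabilizes, the total work of all Kempe swaps (including the $O(\log n)$ per-edge data-structure updates on $M(\cdot)$ and $\phi^{-1}(\cdot)$) sums to $O(n\log n)$.

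Finally, I would address the multigraph-specific complications, which constitute the main technical deviation from the simple-graph algorithm of Gabow et al. Parallel edges between $x$ and a single neighbor $w$ may cause the fan to revisit $w$ up to $\mu(x,w)$ times, and a Kempe chain may traverse such parallel edges more than once. I would verify that these events can extend the fan by at most $O(\mu)=O(\Delta)$ additional steps (still $O(1)$), and that whenever a Kempe chain meets a parallel edge it either closes immediately or splits into sub-chains that can be swapped independently without disturbing the invariants used in the amortization. Once this multigraph bookkeeping is carried out, the per-edge amortized bound of $O(\log n)$ extends from the simple-graph setting, yielding a total running time of $O(n\log n)$ as claimed.
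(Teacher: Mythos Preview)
Your proposal has a genuine gap: the amortization argument does not go through. You color edges one at a time and claim that, via some potential function, each extension costs $O(\log n)$ amortized, but the crucial step---``each vertex \ldots\ participates in at most $O(\log n)$ relevant swap events before its missing-set stabilizes''---is asserted without justification and is not a known fact. A single Kempe swap can touch $\Theta(n)$ edges, and nothing in your scheme prevents this from happening $\Theta(m)$ times; picking the shorter of two candidate chains does not by itself yield a logarithmic amortized bound. Indeed, the one-edge-at-a-time strategy with a worst-case $\Theta(n)$ Kempe swap is exactly Shannon's original $O(n^2)$ algorithm, and no potential-function argument is known that improves it to $O(n\log n)$ in this form. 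Your reference to Gabow et al.\ is also misapplied: their $O(m\log n)$ algorithm does not work by amortizing individual swaps but rather by a batching/recursive structure.

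The paper's proof takes a completely different route. It never colors edges one at a time. Instead, at each stage it groups the uncolored edges by the color pair $(\alpha,\beta)$ that their Shannon chain would use, picks the largest group $S$, and observes that among the Shannon chains $\{C_e : e\in S\}$ a $1/\mathrm{poly}(\Delta)$ fraction are pairwise \emph{disjoint} (in the sense that they can be augmented simultaneously without interfering). Since all of these chains live in the two-colored subgraph $G(\phi,\alpha\beta)$, the total length of all of them is $O(m)$, so one stage runs in $O(m)$ time (for $\Delta=O(1)$) and colors a constant fraction of the remaining uncolored edges. Hence $O(\log n)$ stages suffice, giving $O(n\log n)$ total. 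The $\log n$ factor comes from the number of batching stages, not from any per-edge amortized cost. If you want to repair your argument, you should abandon the single-edge amortization and move to this disjoint-chain batching idea.
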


The remaining results of this paper concern an extension of recent work of the author and Bernshteyn in algorithms for Vizing's theorem for simple graphs. 
The authors consider randomized sequential algorithms as well as distributed algorithms in the \LOCAL model of computation introduced by Linial in \cite{Linial}.
We refer the reader to the book \cite{BE} for a thorough introduction to this subject.
In the \LOCAL model, a communication network is modeled by a graph $G$. 
Computation proceeds in \emphd{rounds} where in each round the vertices of $G$ perform some local computations and then broadcast their results to their neighbors.
The broadcast step happens synchronously across all vertices.
There are no restrictions on the complexity of the local computations or on the length of the messages being broadcast.
Efficiency of distributed \LOCAL algorithms are measured in worst case number of rounds required to produce the desired output.

\begin{table}[b]
    \centering
    \begin{tabular}{|c|c|c|}
        \hline
        Reference(s) & Number of Colors & Complexity \\
         \hline
        \cite{GPSh} & $2\Delta - 1$ & $O(\Delta^2) + \log^*n$ \\
        \cite{ABI, Luby}$^\star$ & $2\Delta - 1$ & $O(\log n)$ \\
        \cite{GKMU} & $\Sha$ & $\poly(\Delta)\log^8 n$ \\
        \cite{CHLPU}$^\star$ & $\Delta + \sqrt{\Delta}\poly(\log \Delta)$ & $\poly(\Delta, \log\log n)$ \\
        \cite{SV}$^\star$ & $\Delta + 2$ & $\poly(\Delta)\log^3n$ \\
        \cite{VizingChain}$^\star$ & $\Delta + 1$ & $\poly(\Delta)\log^5n$ \\
        \cite{VizingChain} & $\Delta + 1$ & $\poly(\Delta, \log \log n)\log^{11}n$ \\
        \cite{Christ} & $\Delta + 1$ & $\poly(\Delta, \log \log n)\log^6n$ \\
        \cite{fastEdgeColoring}$^\star$ & $\Delta + 1$ & $\poly(\Delta)\log^2n$ \\
        \cite{fastEdgeColoring} & $\Delta + 1$ & $\poly(\Delta, \log \log n)\log^5n$ \\
         \hline
    \end{tabular}
    \caption{Results on distributed algorithms ($\star$ indicates randomized).}
    \label{table:dist}
\end{table}

As the vertices are synchronously running the same algorithm, we need a way to distinguish them from one another.
There are two standard approaches to do so.
\begin{itemize}
    \item In the \emphd{deterministic} version of the \LOCAL model, each vertex is identified by a unique $\Theta(\log n)$-bit ID, where $n \defeq |V(G)|$.
    The algorithm must output a correct solution, regardless of the identifiers.

    \item In the \emphd{randomized} version of the \LOCAL model, each vertex independently generates an arbitrarily long sequence of bits uniformly at random.
    The algorithm must output a correct solution with probability at least $1 - 1/\poly(n)$.
    
\end{itemize}
The study of \LOCAL algorithms for edge-coloring simple graphs has a long history. 
We mention a few results in Table~\ref{table:dist} to provide context for our work and direct the reader to the surveys \cite{CHLPU, GKMU} for a more thorough history.

The $\Sha$-edge-coloring algorithm in \cite{GKMU} can be implemented for multigraphs as well (although the authors don't explicitly state so).
Apart from this, there are no distributed algorithms designed specifically for $\Sha$-edge-colorings of multigraphs to the best of our knowledge. 
However, some of the results in Table~\ref{table:dist} can be extended to the multigraph case with minor modifications to the algorithms (and an additional $\sim\mu$ colors).

We are now ready to state our main results, which extend those of \cite[Theorems 1.6, 1.8]{fastEdgeColoring} to $\Sha$-edge-colorings of multigraphs.

\begin{theorem}[Algorithms for Shannon's theorem]\label{theo:shannon}
    Let $G$ be an $n$-vertex loopless multigraph of maximum degree $\Delta = O(1)$.

    \begin{enumerate}
        \item\label{item:seq_sha} There is a randomized sequential algorithm that finds a proper $\Sha$-edge-coloring of $G$ in time $O(n)$ with probability at least $1 - 1/\Delta^n$.
        
        \item\label{item:dist_rand_sha} There is a randomized \LOCAL algorithm that finds a proper $\Sha$-edge-coloring of $G$ in $O(\log^2 n)$ rounds.
        
        \item\label{item:dist_det_sha} There is a deterministic \LOCAL algorithm that finds a proper $\Sha$-edge-coloring of $G$ in $\tilde O(\log^5 n)$ rounds.

    \end{enumerate}
\end{theorem}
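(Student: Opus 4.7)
The plan is to extend the multi-step Vizing chain and entropy compression framework of \cite{fastEdgeColoring} from simple graphs with $\Delta+1$ colors to multigraphs with $\Sha$ colors. At each uncolored edge $e$ we build, via a short randomized walk, an \emph{augmenting subgraph} $H$ around $e$ such that permuting colors along $H$ properly extends the current partial coloring to include $e$; iterating over all $m = O(n)$ edges produces the final $\Sha$-edge-coloring.

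The heart of the argument is a structural/probabilistic lemma whose statement mirrors its simple-graph counterpart in \cite{fastEdgeColoring}: given a partial proper $\Sha$-edge-coloring of $G$ and an uncolored edge $e$, the randomized procedure outputs an augmenting subgraph whose size exceeds $k$ with probability at most $(1-c)^k$, for a constant $c = c(\Delta) > 0$. The slack afforded by the $\lfloor \Delta/2 \rfloor$ colors in excess of $\Delta$ guarantees that every vertex has at least $\lfloor \Delta/2 \rfloor$ missing colors at all times, which is precisely what drives the geometric tail. The combinatorial backbone is a Shannon-fan variant of the Misra--Gries algorithm \cite{MG}: from $e$, grow a fan of adjacent edges (including parallel copies) at a chosen endpoint using matching missing colors, and close it with a short Kempe chain on a carefully chosen pair of colors.

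Assuming this structural lemma, the three algorithmic results follow along the lines of \cite{fastEdgeColoring}. For \ref{item:seq_sha}, process uncolored edges one at a time: the expected per-edge cost is $O(1)$ since $\Delta = O(1)$, giving $O(n)$ total time in expectation, and a Chernoff-type bound on independent geometric-tail random variables yields the $1 - 1/\Delta^n$ success probability. For \ref{item:dist_rand_sha}, all uncolored edges in parallel launch augmenting walks truncated at length $\Theta(\log n)$; the structural lemma implies that every truncation succeeds with high probability, and non-conflicting augmentations are selected via a standard MIS-type subroutine on the intersection graph of candidate augmenting subgraphs. A shattering-style argument reduces the number of uncolored edges by a constant factor per phase, giving $O(\log n)$ phases of $O(\log n)$ rounds each, for $O(\log^2 n)$ total rounds. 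Statement \ref{item:dist_det_sha} follows by derandomizing via the deterministic network decomposition and palette-sparsification machinery used in \cite{fastEdgeColoring}, which adds a polylogarithmic multiplicative overhead and produces $\tilde O(\log^5 n)$ rounds.

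The main obstacle will be establishing the structural lemma in the multigraph setting. In simple graphs, each neighbor $y$ of a vertex $x$ contributes a single edge, so the notion of ``missing color at $y$'' and the fan-construction bookkeeping are clean. In a multigraph, $x$ and $y$ can be joined by up to $\mu(x,y)$ parallel edges whose colors are strongly correlated in the palette available at $y$, so the definition of a Shannon fan, the Kempe chain manipulation, and the proof that each step of the random walk makes constant-probability progress all have to be re-examined carefully. Verifying that the parallel-edge correlations do not destroy the geometric decay, while keeping the truncation length logarithmic, is the delicate point of the proof.
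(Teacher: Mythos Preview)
Your high-level plan---extend the multi-step chain and entropy-compression framework of \cite{fastEdgeColoring} to $\Sha$ colors on multigraphs---is exactly what the paper does. Two points deserve attention.

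The structural lemma you state is stronger than what the entropy-compression argument actually yields. For a \emph{fixed} uncolored edge $e$, the bound one obtains is $\Pr[\text{chain runs} \geq t \text{ steps}] \le O(m)\cdot\rho^t$ for some $\rho = \rho(\Delta) < 1$, with an unavoidable factor of $m$: the record of the random process together with the terminal edge of the chain determines the input sequence, and that terminal edge ranges over all of $E(G)$. Consequently the expected per-edge cost is not $O(1)$ for a fixed edge, and the per-edge running times $T_i$ are not independent (each depends on the current partial coloring), so your ``Chernoff on independent geometric tails'' does not apply as written. The paper's sequential algorithm repairs this by selecting the next uncolored edge \emph{uniformly at random}: the conditional tail then becomes $O(m/|U_i|)\cdot\rho^{t_i}$, and in the union bound over compositions $\sum_i t_i = t$ the product $\prod_i m/|U_i| = m^m/m!$ is controlled by Stirling. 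This random-edge-selection step is the missing idea in your sketch of \ref{item:seq_sha}.

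Two smaller divergences. With $\Sha$ colors the fan in each Shannon chain can be taken of length at most $2$, by the pigeonhole that three vertices each missing $\ge\lfloor\Delta/2\rfloor+1$ colors cannot have pairwise-disjoint missing sets; this sidesteps most of the parallel-edge bookkeeping you anticipate. And for \ref{item:dist_det_sha} the paper does not use network decomposition or palette sparsification: it applies Harris's deterministic hypergraph approximate-matching algorithm to the hypergraph whose hyperedges are vertex sets of short augmenting subgraphs. The randomized \LOCAL algorithm likewise uses neither shattering nor a full MIS---a one-round random independent set in the conflict graph of chains suffices, once one bounds its expected average degree via the same entropy-compression estimates.
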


The $\tilde O(\cdot)$ in \ref{item:dist_det_sha} hides $\poly(\log \log n)$ factors.
We also remark that \cite[Theorems 1.6, 1.8]{fastEdgeColoring} can be extended to Vizing's theorem for multigraphs.
The extension requires changes to certain subprocedures, details of which are provided in \S\ref{section:vizing} where we prove the following theorem.

\begin{theorem}[Algorithms for Vizing's theorem]\label{theo:vizing}
    Let $G$ be an $n$-vertex loopless multigraph of maximum degree $\Delta = O(1)$ and maximum multiplicity $\mu$.

    \begin{enumerate}
        \item\label{item:seq_viz} There is a randomized sequential algorithm that finds a proper $(\Delta + \mu)$-edge-coloring of $G$ in time $O(n)$ with probability at least $1 - 1/\Delta^n$.
        
        \item\label{item:dist_rand_viz} There is a randomized \LOCAL algorithm that finds a proper $(\Delta + \mu)$-edge-coloring of $G$ in $O(\log^2 n)$ rounds.
        
        \item\label{item:dist_det_viz} There is a deterministic \LOCAL algorithm that finds a proper $(\Delta + \mu)$-edge-coloring of $G$ in $\tilde O(\log^5 n)$ rounds.
        
    \end{enumerate}
\end{theorem}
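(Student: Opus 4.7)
The plan is to lift the three algorithms of \cite{fastEdgeColoring} from simple graphs with bound $\Delta+1$ to loopless multigraphs with bound $\Delta+\mu$, by generalizing the single key subroutine on which all three results rest: given a proper partial $(\Delta+\mu)$-edge-coloring of $G$ that leaves an edge $e = xy$ uncolored, find a small \emphd{augmenting subgraph} $H \subseteq G$ containing $e$ such that modifying the coloring only on $H$ produces a proper partial coloring with $e$ included. In \cite{fastEdgeColoring}, this is done via a randomized construction of a multistep Vizing chain whose size is $O(\poly(\Delta)\log n)$ with probability $1 - 1/\poly(n)$, and once this subroutine is available, Theorems~1.6 and 1.8 there follow by (i) iterating it $n$ times for the sequential statement, and (ii) running the constructions for many uncolored edges in parallel and resolving conflicts via standard defective coloring / network decomposition tools for \LOCAL. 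The first two steps of my plan are therefore purely structural: design the multigraph analogs of Vizing fans and Vizing chains, then bolt the existing high-level machinery onto them.

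First, I would generalize the Vizing fan. With $\Delta+\mu$ colors, every vertex $v$ has at least $\mu$ missing colors; in particular, whenever $xy$ is uncolored and $z \in N_G(x)$, the set of colors missing at $z$ has size at least $\mu \geq \mu_G(x,z)$, which is exactly what is needed to continue the fan across a bunch of parallel edges. Concretely, a fan rooted at $x$ with pivot $y$ becomes a sequence of edges $e_0=xy, e_1, e_2, \dots$ where each $e_i$ joins $x$ to some $y_i$ and whose color is missing at $y_{i-1}$; when $y_{i-1}=y_i$ (parallel edges) we use the slack provided by $\mu$. One verifies, mimicking the classical Vizing argument for multigraphs, that such a fan either closes directly or seeds a two-colored alternating path (the Vizing chain) of length $O(n)$ whose flipping and rotation augment the coloring. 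This gives the deterministic building block.

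Second, I would redo the \emph{random} multistep Vizing chain of \cite{fastEdgeColoring} in this setting. The construction builds a sequence of chains, at each step picking a random uniform missing color at the current endpoint and checking whether the resulting alternating path terminates before returning near $x$; the crucial probabilistic estimate is that, because every vertex has $\Omega(1)$ missing colors out of $\Delta+\mu = O(1)$, the chance of the chain failing to terminate at any given step is bounded away from $1$, yielding a geometric tail on the chain length. In the multigraph case the number of missing colors at each vertex is still $\Theta(\Delta)$ (since $\Delta = O(1)$ and $\mu \leq \Delta$), so exactly the same entropy-compression / stopping-time analysis applies, giving an augmenting subgraph of size $O(\log n)$ with probability $1-1/\poly(n)$ and constant expected size. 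This immediately yields part~\ref{item:seq_viz} by iterating $n$ times and taking a union bound, and part~\ref{item:dist_rand_viz} by running independent chain constructions from each uncolored edge and invoking the parallelization argument of \cite{fastEdgeColoring}.

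Third, for the deterministic \LOCAL statement \ref{item:dist_det_viz}, I would follow the derandomization scheme of \cite{fastEdgeColoring}: replace the random choices by those produced from a network-decomposition-based deterministic algorithm for a suitable rounding/hypergraph-coloring problem, and use the fact that the augmenting subgraphs have radius $O(\poly(\log n))$ around each uncolored edge. The only place that is genuinely sensitive to the simple-versus-multigraph distinction is the local ``fan plus chain'' computation, which was handled by the first two steps above; the global combinatorial reduction is color-count agnostic. I expect the main obstacle to be a clean statement of the fan construction when many parallel edges share endpoints, since the classical Vizing-for-multigraphs fan contains several case distinctions that must be threaded through the random stopping rule without degrading the geometric tail bound; getting the constants right here is where most of the technical work in \S\ref{section:vizing} will live.
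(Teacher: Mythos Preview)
Your proposal is correct and takes essentially the same approach as the paper: the paper's proof in \S\ref{section:vizing} modifies only the fan subroutines (First Vizing Fan and Next Vizing Fan) to handle parallel edges, exploiting precisely the slack $|M(\phi,z)| \geq \mu$ that you identify, and then declares the rest of the argument identical to \cite{fastEdgeColoring} \textit{mutatis mutandis}. Your anticipated technical obstacle---threading the multigraph fan's case distinctions through the random stopping rule---is exactly where the paper concentrates its effort (Lemmas~\ref{lemma:first_fan_vizing} and~\ref{lemma:next_fan_vizing}).
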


We remark that the constant factors in the $O(\cdot)$ for our theorems contain $\poly(\Delta)$ factors (the explicit exponents are provided in our proofs).
It follows that our results on sequential algorithms are an improvement in the $\Delta = n^{o(1)}$ regime, and our distributed algorithms improve in the $\Delta = (\log n)^{o(1)}$ regime as well.
Since the proofs of Theorems~\ref{theo:shannon_deterministic} and \ref{theo:shannon} are rather technical, we will provide an overview in the next subsection.
In \S\ref{sec:notation}, we will introduce some notation and background facts that will be used in our proofs.
In \S\ref{section:shannon_det}, we will prove Theorem~\ref{theo:shannon_deterministic}.
In \S\ref{section:MSSA}, we will describe a key subprocedure used in the algorithms in Theorem~\ref{theo:shannon}, which we will prove in \S\ref{section:shannon_proof}.

\subsection{Proof Overview}\label{subsection:overview}

Throughout the remainder of the paper, we fix a multigraph $G$ of maximum degree $\Delta$ and maximum multiplicity $\mu$.
We call a function $\phi \,:\, E(G) \to [r] \cup \set{\blank}$ a partial $r$-edge-coloring, where $\phi(e) = \blank$ indicates an uncolored edge.
The \emphd{domain} of $\phi$ is the set of edges colored under $\phi$, i.e., $\dom(\phi) \defeq \set{e\,:\, \phi(e) \neq \blank}$.
A standard technique in algorithms for edge-colorings is to find so-called \emphd{augmenting subgraphs} with respect to a partial coloring.

\begin{definition}[Augmenting subgraphs for partial colorings]\label{defn:aug}
    Given a partial $r$-edge-coloring $\phi$, a subgraph $H \subseteq G$ is called $S$-augmenting for a set of uncolored edges $S$ if there is a proper partial coloring $\phi'$ such that $\dom(\phi') = \dom(\phi) \cup S$, and $\phi'(f) = \phi(f)$ for all $f \notin E(H)$ (if $S = \set{e}$, we say $H$ is $e$-augmenting).
    In particular, we can modify the colors of the edges in $H$ to define a new coloring under which every edge in $S$ is now colored.
    We say $\phi'$ is constructed from $\phi$ by \emphd{augmenting} $H$.
\end{definition}

This yields an algorithmic framework for sequential edge-coloring, i.e., iteratively choose an edge $e$ (deterministically or randomly) and color it by finding an $e$-augmenting subgraph.
The original proof of \hyperref[theo:shannon_bound]{Shannon's Theorem} describes a procedure to find an $e$-augmenting subgraph that we call a \emphd{Shannon chain} (see Definition~\ref{def:shannon_chain} and Fig.~\ref{subfig:shannon_chain}).
The colors $\alpha$ and $\beta$ in Fig.~\ref{subfig:shannon_chain} satisfy certain conditions described in Lemma~\ref{lemma:first_fan_shannon}.
The need for $\Sha$ colors is essential in defining $\beta$ (see Fact~\ref{fact:sha}).
This procedure yields a $O(n^2)$ algorithm as it takes $O(\length(C))$ time to augment a chain $C$, and in the worst case a Shannon chain has length $\Theta(n)$.

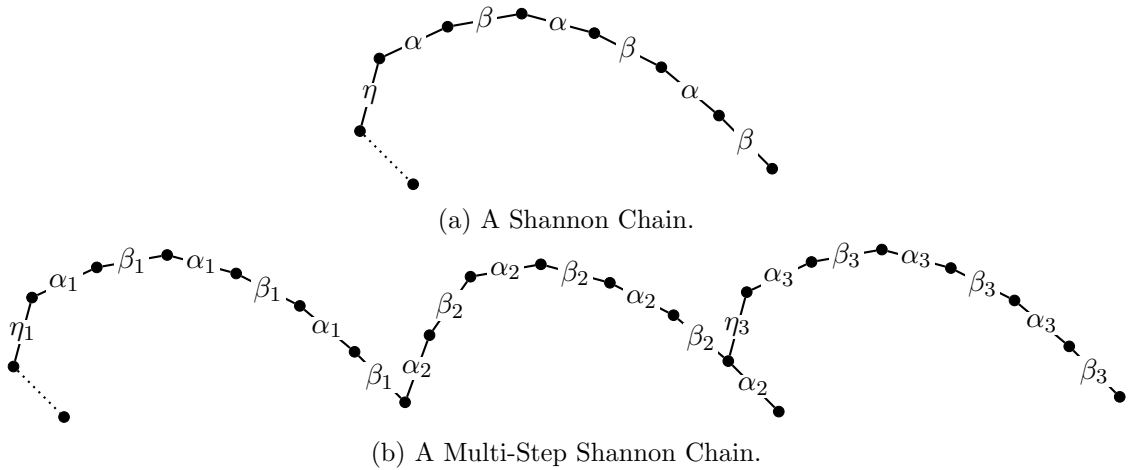
\begin{figure}[h]
    \begin{subfigure}[t]{\textwidth}
        \centering
    	\begin{tikzpicture}
    	    \node[circle,fill=black,draw,inner sep=0pt,minimum size=4pt] (a) at (0,0) {};
        	\path (a) ++(-45:1) node[circle,fill=black,draw,inner sep=0pt,minimum size=4pt] (b) {};
        	\path (a) ++(75:1) node[circle,fill=black,draw,inner sep=0pt,minimum size=4pt] (c) {};

                \path (c) ++(25:1) node[circle,fill=black,draw,inner sep=0pt,minimum size=4pt] (d) {};
                \path (d) ++(10:1) node[circle,fill=black,draw,inner sep=0pt,minimum size=4pt] (e) {};
                \path (e) ++(-15:1) node[circle,fill=black,draw,inner sep=0pt,minimum size=4pt] (f) {};
                \path (f) ++(-27:1) node[circle,fill=black,draw,inner sep=0pt,minimum size=4pt] (g) {};
                \path (g) ++(-40:1) node[circle,fill=black,draw,inner sep=0pt,minimum size=4pt] (h) {};
                \path (h) ++(-45:1) node[circle,fill=black,draw,inner sep=0pt,minimum size=4pt] (i) {};

        	\draw[thick,dotted] (a) -- (b);
        	\draw[thick] (a) to node[midway,inner sep=1pt,outer sep=1pt,minimum size=4pt,fill=white] {$\eta$} (c) to node[midway,inner sep=1pt,outer sep=1pt,minimum size=4pt,fill=white] {$\alpha$} (d) to node[midway,inner sep=1pt,outer sep=1pt,minimum size=4pt,fill=white] {$\beta$} (e) to node[midway,inner sep=1pt,outer sep=1pt,minimum size=4pt,fill=white] {$\alpha$} (f) to node[midway,inner sep=1pt,outer sep=1pt,minimum size=4pt,fill=white] {$\beta$} (g) to node[midway,inner sep=1pt,outer sep=1pt,minimum size=4pt,fill=white] {$\alpha$} (h) to node[midway,inner sep=1pt,outer sep=1pt,minimum size=4pt,fill=white] {$\beta$} (i);
    	\end{tikzpicture}
    	\caption{A Shannon Chain.}
            \label{subfig:shannon_chain}
    \end{subfigure}
    \begin{subfigure}[t]{\textwidth}
        \centering
    	\begin{tikzpicture}[xscale=0.95,yscale=0.95]
                \node[circle,fill=black,draw,inner sep=0pt,minimum size=4pt] (a) at (0,0) {};
        	\path (a) ++(-45:1) node[circle,fill=black,draw,inner sep=0pt,minimum size=4pt] (b) {};
        	\path (a) ++(75:1) node[circle,fill=black,draw,inner sep=0pt,minimum size=4pt] (c) {};

                \path (c) ++(25:1) node[circle,fill=black,draw,inner sep=0pt,minimum size=4pt] (d) {};
                \path (d) ++(10:1) node[circle,fill=black,draw,inner sep=0pt,minimum size=4pt] (e) {};
                \path (e) ++(-15:1) node[circle,fill=black,draw,inner sep=0pt,minimum size=4pt] (f) {};
                \path (f) ++(-27:1) node[circle,fill=black,draw,inner sep=0pt,minimum size=4pt] (g) {};
                \path (g) ++(-40:1) node[circle,fill=black,draw,inner sep=0pt,minimum size=4pt] (h) {};
                \path (h) ++(-45:1) node[circle,fill=black,draw,inner sep=0pt,minimum size=4pt] (i) {};

        	\draw[thick,dotted] (a) -- (b);
        	\draw[thick] (a) to node[midway,inner sep=1pt,outer sep=1pt,minimum size=4pt,fill=white] {$\eta_1$} (c) to node[midway,inner sep=1pt,outer sep=1pt,minimum size=4pt,fill=white] {$\alpha_1$} (d) to node[midway,inner sep=1pt,outer sep=1pt,minimum size=4pt,fill=white] {$\beta_1$} (e) to node[midway,inner sep=1pt,outer sep=1pt,minimum size=4pt,fill=white] {$\alpha_1$} (f) to node[midway,inner sep=1pt,outer sep=1pt,minimum size=4pt,fill=white] {$\beta_1$} (g) to node[midway,inner sep=1pt,outer sep=1pt,minimum size=4pt,fill=white] {$\alpha_1$} (h) to node[midway,inner sep=1pt,outer sep=1pt,minimum size=4pt,fill=white] {$\beta_1$} (i);

                \path (i) ++(70:1) node[circle,fill=black,draw,inner sep=0pt,minimum size=4pt] (j) {};
                \path (j) ++(55:1) node[circle,fill=black,draw,inner sep=0pt,minimum size=4pt] (k) {};
                \path (k) ++(10:1) node[circle,fill=black,draw,inner sep=0pt,minimum size=4pt] (l) {};
                \path (l) ++(-15:1) node[circle,fill=black,draw,inner sep=0pt,minimum size=4pt] (m) {};
                \path (m) ++(-27:1) node[circle,fill=black,draw,inner sep=0pt,minimum size=4pt] (n) {};
                \path (n) ++(-40:1) node[circle,fill=black,draw,inner sep=0pt,minimum size=4pt] (o) {};
                \path (o) ++(-45:1) node[circle,fill=black,draw,inner sep=0pt,minimum size=4pt] (p) {};

                \draw[thick] (i) to node[midway,inner sep=1pt,outer sep=1pt,minimum size=4pt,fill=white] {$\alpha_2$} (j) to node[midway,inner sep=1pt,outer sep=1pt,minimum size=4pt,fill=white] {$\beta_2$} (k) to node[midway,inner sep=1pt,outer sep=1pt,minimum size=4pt,fill=white] {$\alpha_2$} (l) to node[midway,inner sep=1pt,outer sep=1pt,minimum size=4pt,fill=white] {$\beta_2$} (m) to node[midway,inner sep=1pt,outer sep=1pt,minimum size=4pt,fill=white] {$\alpha_2$} (n) to node[midway,inner sep=1pt,outer sep=1pt,minimum size=4pt,fill=white] {$\beta_2$} (o) to node[midway,inner sep=1pt,outer sep=1pt,minimum size=4pt,fill=white] {$\alpha_2$} (p);

                \path (o) ++(75:1) node[circle,fill=black,draw,inner sep=0pt,minimum size=4pt] (r) {};

                \path (r) ++(25:1) node[circle,fill=black,draw,inner sep=0pt,minimum size=4pt] (s) {};
                \path (s) ++(10:1) node[circle,fill=black,draw,inner sep=0pt,minimum size=4pt] (t) {};
                \path (t) ++(-15:1) node[circle,fill=black,draw,inner sep=0pt,minimum size=4pt] (u) {};
                \path (u) ++(-27:1) node[circle,fill=black,draw,inner sep=0pt,minimum size=4pt] (v) {};
                \path (v) ++(-40:1) node[circle,fill=black,draw,inner sep=0pt,minimum size=4pt] (w) {};
                \path (w) ++(-45:1) node[circle,fill=black,draw,inner sep=0pt,minimum size=4pt] (x) {};

                \draw[thick] (o) to node[midway,inner sep=1pt,outer sep=1pt,minimum size=4pt,fill=white] {$\eta_3$} (r) to node[midway,inner sep=1pt,outer sep=1pt,minimum size=4pt,fill=white] {$\alpha_3$} (s) to node[midway,inner sep=1pt,outer sep=1pt,minimum size=4pt,fill=white] {$\beta_3$} (t) to node[midway,inner sep=1pt,outer sep=1pt,minimum size=4pt,fill=white] {$\alpha_3$} (u) to node[midway,inner sep=1pt,outer sep=1pt,minimum size=4pt,fill=white] {$\beta_3$} (v) to node[midway,inner sep=1pt,outer sep=1pt,minimum size=4pt,fill=white] {$\alpha_3$} (w) to node[midway,inner sep=1pt,outer sep=1pt,minimum size=4pt,fill=white] {$\beta_3$} (x);
    	\end{tikzpicture}
    	\caption{A Multi-Step Shannon Chain.}
     \label{subfig:multi_shannon_chain}
    \end{subfigure}
    \caption{Augmenting subgraphs (Greek letters represent colors).}
    \label{fig:aug_sub}
\end{figure}

\paragraph{Disjoint Shannon chains.}
The algorithm for Theorem~\ref{theo:shannon_deterministic} uses the following observation made by \cite{GNKLT} in algorithms for Vizing's theorem for simple graphs: while the worst case length of a Vizing chain is $\Theta(n)$, we can find a large subset of \emphd{disjoint Vizing chains} having a shorter length. 
We extend this idea to Shannon chains.
The precise definition of disjoint is provided in Definition~\ref{defn:disjoint}.
For now, we may assume the chains are edge disjoint from where it can be shown that we can augment a set of disjoint chains in $O(m)$ time.
In \S\ref{section:shannon_det}, we will describe an algorithm that, given a partial $\Sha$-edge-coloring $\phi$, finds and augments a set of disjoint Shannon chains for a large subset $S$ of the uncolored edges.
Following the terminology of Definition~\ref{defn:aug}, the subgraph $H$ induced by the union of these chains is $S$-augmenting.
We show that we can always find $S$ such that $|S|$ is at least a $\poly(\Delta)$ fraction of the uncolored edges.
It follows that the graph can be colored in $\poly(\Delta)\,\log n$ applications of this procedure, which would complete the proof of Theorem~\ref{theo:shannon_deterministic}.

\paragraph{Multi-Step Shannon Chains.}
To improve upon the $\poly(\Delta)\,n\log n$ bound, we need to construct smaller augmenting subgraphs.
In \cite{CHLPU}, the authors show that there may not exist an augmenting subgraph having fewer than $\poly(\Delta) \log n$ edges (which is still much lower than the $O(n)$ bound).
In \cite{Christ}, Christiansen proved the existence of augmenting subgraphs for Vizing's theorem for simple graphs containing at most $\poly(\Delta) \log n$ edges and in \cite{fastEdgeColoring}, the author and Bernshteyn provided an explicit construction for such augmenting subgraphs called multi-step Vizing chains.
One of the main contributions of this paper is the introduction of the analogously defined
\emphd{multi-step Shannon chains} (see Definition~\ref{def:multi_shannon_chain} and Fig.~\ref{subfig:multi_shannon_chain}).
In \S\ref{section:MSSA}, we describe the randomized \hyperref[alg:multi_shannon_chain]{Multi-Step Shannon Algorithm} (MSSA for short) to construct short multi-step Shannon chains.
We will provide an overview of this algorithm below, but let us first describe how our algorithms in Theorem~\ref{theo:shannon} use the MSSA as a subprocedure.
\begin{itemize}
    \item Our randomized sequential algorithm iteratively picks a random uncolored edge and colors it by applying the MSSA.
    We are able to show that while the worst-case length of a multi-step Shannon chain is $\poly(\Delta)\log n$, the average case is $\poly(\Delta) = O(1)$, which proves the desired runtime.

    \item The randomized distributed algorithm runs in stages. 
    At each stage, the MSSA is run concurrently for each uncolored edge.
    We show that at least a $\poly(\Delta)$ fraction of the chains computed (in expectation) are vertex-disjoint, implying they can be simultaneously augmented.
    Each stage runs in $\poly(\Delta)\log n$ rounds and as we are coloring a $\poly(\Delta)$ fraction of the uncolored edges (in expectation) per stage, it follows that the multigraph will be colored in $\poly(\Delta) \log n$ stages.
    Therefore, the expected runtime is at most $\poly(\Delta) \log^2n$, which we show can be achieved with high probability.

    \item Finally, our deterministic distributed algorithm follows by derandomizing the randomized version by applying Harris's algorithm for hypergraph maximal matching \cite{harrisdistributed}, which has been applied similarly in \cite{VizingChain, Christ, fastEdgeColoring}.
    
\end{itemize}

\paragraph{Overview of the MSSA.}
Before we provide a sketch of the algorithm, we remark that the actual algorithm is rather technical, and so the steps followed are not identical to what we describe here.
However, this overview provides a general framework for the procedure and outlines the main ideas used in the analysis.
The main shortcoming of Shannon chains is that they can be ``long'', i.e., it could take $O(n)$ time to augment such a chain.
To remedy this, we consider a multi-step Shannon chain $C \defeq C_0 + \cdots + C_{k-1}$ such that $\length(C_i) \leq \ell$ for some parameter $\ell$.
When attaching a new chain $C_k$, we check whether $\length(C_k) \leq \ell$.
If it is, then $C_0 + \cdots + C_k$ is augmenting and short.
If not, we consider an \emphd{initial segment} of $C_k$, i.e., we pick some $\ell' \leq \ell$ and consider $C_k'$ to be the first $\ell'$ edges of $C_k$.
We then continue on to the next iteration with $C' \defeq C_0 + \cdots + C_k'$ (see Fig.~\ref{fig:forward_iteration}).

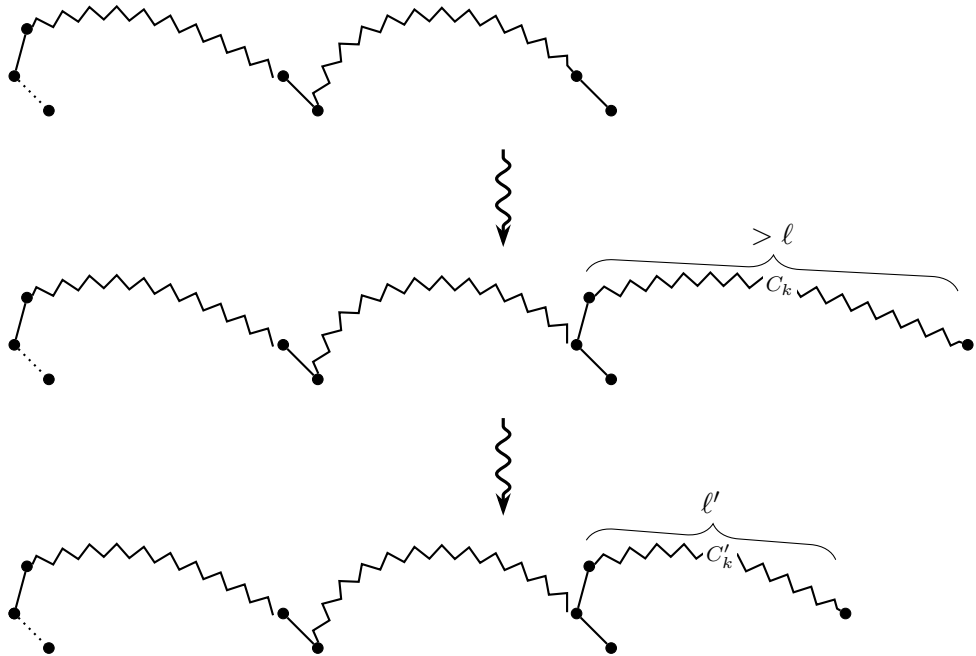
\begin{figure}[t]
    \centering
        \begin{tikzpicture}[xscale = 0.65,yscale=0.65]
            \node[circle,fill=black,draw,inner sep=0pt,minimum size=4pt] (a) at (0,0) {};
        	\path (a) ++(-45:1) node[circle,fill=black,draw,inner sep=0pt,minimum size=4pt] (b) {};
        	\path (a) ++(75:1) node[circle,fill=black,draw,inner sep=0pt,minimum size=4pt] (c) {};

                \path (a) ++(0:5.5) node[circle,fill=black,draw,inner sep=0pt,minimum size=4pt] (d) {};
            
        	\path (d) ++(-45:1) node[circle,fill=black,draw,inner sep=0pt,minimum size=4pt] (e) {};
        	
        	\path (d) ++(0:6) node[circle,fill=black,draw,inner sep=0pt,minimum size=4pt] (g) {};

                \path (g) ++(-45:1) node[circle,fill=black,draw,inner sep=0pt,minimum size=4pt] (h) {};
                \path (g) ++(75:1) node[circle,fill=black,draw,inner sep=0pt,minimum size=4pt] (i) {};
                \path (g) ++(0:8) node[circle,fill=black,draw,inner sep=0pt,minimum size=4pt] (j) {};

               \draw[thick,dotted] (a) -- (b);

               \draw[decoration={brace,amplitude=10pt},decorate] (11.7, 1.5) -- node [midway,above,yshift=10pt,xshift=0pt] {$> \ell$} (19.3,1.1);
        	
        	\draw[thick, decorate,decoration=zigzag] (c) to[out=20,in=150] (d) (e) to[out=85,in=130, looseness = 1.1] (g) (i) to[out=16,in=160] node[font=\fontsize{8}{8},midway,inner sep=1pt,outer sep=1pt,minimum size=4pt,fill=white] {$C_k$} (j);
        	
        	\draw[thick] (a) -- (c) (d) -- (e) (h) -- (g) -- (i);

        \begin{scope}[yshift=5.5cm]
            \node[circle,fill=black,draw,inner sep=0pt,minimum size=4pt] (a) at (0,0) {};
        	\path (a) ++(-45:1) node[circle,fill=black,draw,inner sep=0pt,minimum size=4pt] (b) {};
        	\path (a) ++(75:1) node[circle,fill=black,draw,inner sep=0pt,minimum size=4pt] (c) {};

                \path (a) ++(0:5.5) node[circle,fill=black,draw,inner sep=0pt,minimum size=4pt] (d) {};
            
        	\path (d) ++(-45:1) node[circle,fill=black,draw,inner sep=0pt,minimum size=4pt] (e) {};
        	
        	\path (d) ++(0:6) node[circle,fill=black,draw,inner sep=0pt,minimum size=4pt] (g) {};

                \path (g) ++(-45:1) node[circle,fill=black,draw,inner sep=0pt,minimum size=4pt] (h) {};

               \draw[thick,dotted] (a) -- (b);
               \draw[thick, decorate,decoration=zigzag] (c) to[out=20,in=150] (d) (e) to[out=85,in=130, looseness = 1.1] (g);
        	
        	\draw[thick] (a) -- (c) (d) -- (e) (h) -- (g);
        \end{scope}

        \begin{scope}[yshift=3cm]
            \draw[-{Stealth[length=3mm,width=2mm]},very thick,decoration = {snake,pre length=3pt,post length=7pt,},decorate] (10,1) -- (10,-1);
        \end{scope}

        \begin{scope}[yshift=-5.5cm]
            \node[circle,fill=black,draw,inner sep=0pt,minimum size=4pt] (a) at (0,0) {};
        	\node[circle,fill=black,draw,inner sep=0pt,minimum size=4pt] (a) at (0,0) {};
        	\path (a) ++(-45:1) node[circle,fill=black,draw,inner sep=0pt,minimum size=4pt] (b) {};
        	\path (a) ++(75:1) node[circle,fill=black,draw,inner sep=0pt,minimum size=4pt] (c) {};

                \path (a) ++(0:5.5) node[circle,fill=black,draw,inner sep=0pt,minimum size=4pt] (d) {};
            
        	\path (d) ++(-45:1) node[circle,fill=black,draw,inner sep=0pt,minimum size=4pt] (e) {};
        	
        	\path (d) ++(0:6) node[circle,fill=black,draw,inner sep=0pt,minimum size=4pt] (g) {};

                \path (g) ++(-45:1) node[circle,fill=black,draw,inner sep=0pt,minimum size=4pt] (h) {};
                \path (g) ++(75:1) node[circle,fill=black,draw,inner sep=0pt,minimum size=4pt] (i) {};
                \path (g) ++(0:5.5) node[circle,fill=black,draw,inner sep=0pt,minimum size=4pt] (j) {};

               \draw[thick,dotted] (a) -- (b);
               \draw[decoration={brace,amplitude=10pt},decorate] (11.7, 1.5) -- node [midway,above,yshift=10pt,xshift=0pt] {$\ell'$} (16.8,1.1);
               \draw[thick, decorate,decoration=zigzag] (c) to[out=20,in=150] (d) (e) to[out=85,in=130, looseness = 1.1] (g) (i) to[out=20,in=150] node[font=\fontsize{8}{8},midway,inner sep=1pt,outer sep=1pt,minimum size=4pt,fill=white] {$C_k'$} (j);
        	
        	\draw[thick] (a) -- (c) (d) -- (e) (h) -- (g) -- (i);
        \end{scope}

        \begin{scope}[yshift=-2.5cm]
            \draw[-{Stealth[length=3mm,width=2mm]},very thick,decoration = {snake,pre length=3pt,post length=7pt,},decorate] (10,1) -- (10,-1);
        \end{scope}
        	
        \end{tikzpicture}
    \caption{An iteration where $\length(C_k) > \ell$.}
    \label{fig:forward_iteration}
\end{figure}

However, it turns out this is not enough. 
We need to additionally ensure the multi-step Shannon chain is \emphd{non-intersecting}.
The precise definition of non-intersecting (provided in Definition~\ref{defn:non-int}) is rather technical.
For now, we may assume the $C_i$'s are edge disjoint.
Now we must add an additional step in each iteration to ensure this property holds.
If $C_k$ intersects $C_0 + \cdots + C_{k-1}$ at some chain $C_j$, we truncate $C$ to $C' \defeq C_0 + \cdots + C_{j-1}$ and continue to the next iteration (see Fig.~\ref{fig:backward_iteration}).

\begin{figure}[h]
    \centering
        \begin{tikzpicture}[xscale = 0.65,yscale=0.65]
        \clip (-0.5, -7) rectangle (20, 7.5);
            \node[circle,fill=black,draw,inner sep=0pt,minimum size=4pt] (a) at (0,0) {};
        	\path (a) ++(-45:1) node[circle,fill=black,draw,inner sep=0pt,minimum size=4pt] (b) {};
        	\path (a) ++(75:1) node[circle,fill=black,draw,inner sep=0pt,minimum size=4pt] (c) {};

                \path (a) ++(0:5.5) node[circle,fill=black,draw,inner sep=0pt,minimum size=4pt] (d) {};
            
        	\path (d) ++(-45:1) node[circle,fill=black,draw,inner sep=0pt,minimum size=4pt] (e) {};

                \node[circle, fill=gray,inner sep=1pt] (z) at (8, 1.2) {\textbf{!}};
        	
        	\path (d) ++(0:6) node[circle,fill=black,draw,inner sep=0pt,minimum size=4pt] (g) {};

                \path (g) ++(-45:1) node[circle,fill=black,draw,inner sep=0pt,minimum size=4pt] (h) {};
                \path (g) ++(75:1) node[circle,fill=black,draw,inner sep=0pt,minimum size=4pt] (i) {};
                \path (g) ++(0:5.5) node[circle,fill=black,draw,inner sep=0pt,minimum size=4pt] (j) {};

                \path (j) ++(-45:1) node[circle,fill=black,draw,inner sep=0pt,minimum size=4pt] (k) {};
                \path (j) ++(75:1) node[circle,fill=black,draw,inner sep=0pt,minimum size=4pt] (l) {};
                \path (z) ++(90:1) node[circle,fill=black,draw,inner sep=0pt,minimum size=4pt] (m) {};
        	
        	\draw[thick,dotted] (a) -- (b);
        	
        	\draw[thick, decorate,decoration=zigzag] (c) to[out=20,in=150] (d) (e) to[out=85,in=180] (z) to[out=0,in=130] node[font=\fontsize{8}{8},midway,inner sep=1pt,outer sep=1pt,minimum size=4pt,fill=white] {$C_j$} (g) (i) to[out=20,in=150] (j) (l) to[out=-10,in=-70, looseness=2] node[font=\fontsize{8}{8},midway,inner sep=1pt,outer sep=1pt,minimum size=4pt,fill=white] {$C_k$} (z) -- (m);
        	
        	\draw[thick] (a) -- (c) (d) -- (e) (h) -- (g) -- (i) (k) -- (j) -- (l);

        \begin{scope}[yshift=6cm]
            \node[circle,fill=black,draw,inner sep=0pt,minimum size=4pt] (a) at (0,0) {};
        	\path (a) ++(-45:1) node[circle,fill=black,draw,inner sep=0pt,minimum size=4pt] (b) {};
        	\path (a) ++(75:1) node[circle,fill=black,draw,inner sep=0pt,minimum size=4pt] (c) {};

                \path (a) ++(0:5.5) node[circle,fill=black,draw,inner sep=0pt,minimum size=4pt] (d) {};
            
        	\path (d) ++(-45:1) node[circle,fill=black,draw,inner sep=0pt,minimum size=4pt] (e) {};
        	
        	\path (d) ++(0:6) node[circle,fill=black,draw,inner sep=0pt,minimum size=4pt] (g) {};

                \path (g) ++(-45:1) node[circle,fill=black,draw,inner sep=0pt,minimum size=4pt] (h) {};
                \path (g) ++(75:1) node[circle,fill=black,draw,inner sep=0pt,minimum size=4pt] (i) {};
                \path (g) ++(0:5.5) node[circle,fill=black,draw,inner sep=0pt,minimum size=4pt] (j) {};

                \path (j) ++(-45:1) node[circle,fill=black,draw,inner sep=0pt,minimum size=4pt] (k) {};

               \draw[thick,dotted] (a) -- (b);
               \draw[thick, decorate,decoration=zigzag] (c) to[out=20,in=150] (d) (e) to[out=85,in=130, looseness = 1.1] (g) (i) to[out=20,in=150] (j);
        	
        	\draw[thick] (a) -- (c) (d) -- (e) (h) -- (g) -- (i) (k) -- (j);
        \end{scope}

        \begin{scope}[yshift=3.5cm]
            \draw[-{Stealth[length=3mm,width=2mm]},very thick,decoration = {snake,pre length=3pt,post length=7pt,},decorate] (10,1) -- (10,-1);
        \end{scope}

        \begin{scope}[yshift=-6cm]
            \node[circle,fill=black,draw,inner sep=0pt,minimum size=4pt] (a) at (0,0) {};
        	\node[circle,fill=black,draw,inner sep=0pt,minimum size=4pt] (a) at (0,0) {};
        	\path (a) ++(-45:1) node[circle,fill=black,draw,inner sep=0pt,minimum size=4pt] (b) {};
        	\path (a) ++(75:1) node[circle,fill=black,draw,inner sep=0pt,minimum size=4pt] (c) {};

                \path (a) ++(0:5.5) node[circle,fill=black,draw,inner sep=0pt,minimum size=4pt] (d) {};
            
        	\path (d) ++(-45:1) node[circle,fill=black,draw,inner sep=0pt,minimum size=4pt] (e) {};
        	


               \draw[thick,dotted] (a) -- (b);
               \draw[thick, decorate,decoration=zigzag] (c) to[out=20,in=150] node[font=\fontsize{8}{8},midway,inner sep=1pt,outer sep=1pt,minimum size=4pt,fill=white] {$C_{j-1}$} (d);
        	
        	\draw[thick] (a) -- (c) (d) -- (e);
        \end{scope}

        \begin{scope}[yshift=-3.5cm]
            \draw[-{Stealth[length=3mm,width=2mm]},very thick,decoration = {snake,pre length=3pt,post length=7pt,},decorate] (10,1) -- (10,-1);
        \end{scope}
        	
        \end{tikzpicture}
    \caption{An iteration where $C_k$ intersects $C_j$.}
    \label{fig:backward_iteration}
\end{figure}
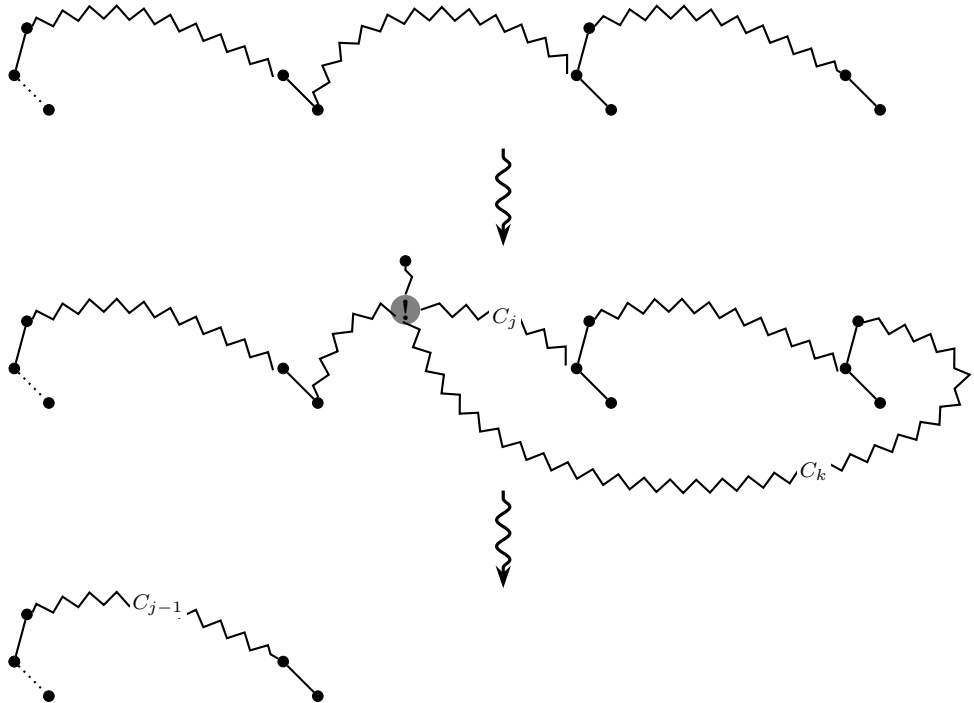

The crucial result of this paper is in Theorem~\ref{theo:MSSA}, which bounds the probability the MSSA lasts at least $t$ iterations under the assumption $\ell = \poly(\Delta)$.
As each iteration involves adding at most $\ell$ edges to the chain, it follows that the length of the chain output is at most $\poly(\Delta)\,t$.
As a corollary, we can show that $t = O(\log n)$ with probability at least $1 - 1/\poly(n)$.
Note that this implies the existence of an augmenting subgraph containing at most $\poly(\Delta)\log n$ edges, which matches the threshold in \cite{CHLPU}.
In order to analyse this algorithm (stated formally in Algorithm~\ref{alg:multi_shannon_chain}), we employ the entropy compression method in a similar way to \cite{fastEdgeColoring}.
The method was introduced by Moser and Tardos in their paper proving an algorithmic version of the Lov\'asz Local Lemma \cite{MT} (the term \textit{entropy compression} was coined by Tao in a blog post \cite{Tao}).
In \cite{Grytczuk}, the authors discovered that the method can circumvent the use of the Lov\'asz Local Lemma leading to improved combinatorial results.
We remark that in most applications of the method there is usually an alternate Lov\'asz Local Lemma argument that achieves a weaker bound, however,
as noted by the author and Bernshteyn in \cite{fastEdgeColoring}, it is unclear how a Lov\'asz Local Lemma argument would work in this setting, which makes the approach somewhat atypical.

The use of the entropy compression method in constructing multi-step augmenting subgraphs is a powerful new technique.
The fact that it extends to multigraphs is testament to its versatility.
The method is used to bound the probability a random process terminates within some specified time.
To do so, one encodes a \emphd{record} of the execution of the process in such a way that the original randomness in the execution can be recovered from the record.
In our application of the method, we encode the following: $+1$ if we are in the situation of Fig.~\ref{fig:forward_iteration}, and $j-k$ if we are in the situation of Fig.~\ref{fig:backward_iteration}.
Additionally, we record the final edge on the chain $C'$ at the end of the iteration.
We remark that this record does not recover the randomness of the process. However, we are able to provide a bound on the number of different processes that can result in a specific record.
This turns out to be enough for our probability bound. 
The details of the analysis are presented in \S\ref{subsection:analysis_MSSA}.

\paragraph{Optimality.}
We conclude this section with a brief discussion on the significance of our results.
First, we note that any sequential edge-coloring algorithm must have running time $\Omega(m)$ as we need to assign a color to each edge. 
With this observation, it follows that our randomized sequential algorithm is optimal (in the $\Delta = O(1)$ regime).
Furthermore, we stress that this is the \textbf{first} improvement in $\Sha$-edge-coloring algorithms for multigraphs since the original proof of \hyperref[theo:shannon_bound]{Shannon's theorem}.
Our randomized distributed algorithm is also optimal with respect to the augmenting subgraph approach.
As a result of \cite[Theorem~12]{CHLPU}, we cannot guarantee a running time better than $O(\log n)$ rounds for each stage.
We claim that $\Theta(\log n)$ stages are required as well, i.e., we cannot expect to color more than a constant fraction of the uncolored edges at most stages.
Consider the blank coloring of a $\Delta$-regular graph for example.
As we require augmenting chains to be vertex disjoint, we can at best color a matching in the first stage (which would contain at most $m/\Delta$ edges).
While the coloring is sparse, similar arguments hold.
In particular, we wouldn't expect to do better than a constant fraction until we are left with $o(n)$ edges, i.e., after $\Theta(\log n)$ stages.
Finally, we stress that our deterministic distributed algorithm improves upon the one in \cite{GKMU} by a factor of $\log^3n$.
This is also the first distributed algorithm for $\Sha$-edge-coloring to apply the augmenting subgraph approach.

\section{Notation and Preliminaries on Augmenting Chains}\label{sec:notation}

This section is similar to \cite[\S 3]{fastEdgeColoring}, however, certain definitions differ due to the structure of multigraphs.
We additionally introduce Shannon chains and multi-step Shannon chains.

\subsection{Chains: general definitions}

We will assume $\Delta \geq 2$ as the problem is trivial when $G$ is a matching.
By considering $G$ to be an attribute of the coloring $\phi$, we may treat $G$ and $\Delta$ as global variables for all our algorithms.
Given a partial $r$-edge-coloring $\phi$ and $x\in V(G)$, we let
\[M(\phi, x) \defeq [r]\setminus\{\phi(e)\,:\, e \in E_G(x)\}\] 
be the set of all the \emphd{missing} colors at $x$ under the coloring $\phi$.
For $r = \Sha$, we have $|M(\phi, x)| \geq \lfloor \Delta/2\rfloor \geq 1$, and for $r = \Delta + \mu$, we have $|M(\phi, x)| \geq \mu \geq 1$. 
An uncolored edge $e \in E_G(x,y)$ is \emphd{$\phi$-happy} if $M(\phi, x)\cap M(\phi, y)\neq \0$. 
If $e \in E_G(x,y)$ is $\phi$-happy, we can extend the coloring $\phi$ by assigning any color in $M(\phi, x)\cap M(\phi, y)$ to $e$.

\begin{figure}[t]
	\centering
	\begin{tikzpicture}
	\begin{scope}
	\node[circle,fill=black,draw,inner sep=0pt,minimum size=4pt] (a) at (0,0) {};
	\node[circle,fill=black,draw,inner sep=0pt,minimum size=4pt] (b) at (-1.25,0.625) {};
	\node[circle,fill=black,draw,inner sep=0pt,minimum size=4pt] (c) at (0,1.625) {};
	\node[circle,fill=black,draw,inner sep=0pt,minimum size=4pt] (d) at (1.25,0.625) {};
	\node[circle,fill=black,draw,inner sep=0pt,minimum size=4pt] (e) at (2.5,0) {};
	\node[circle,fill=black,draw,inner sep=0pt,minimum size=4pt] (f) at (3.75,0.625) {};
	\node[circle,fill=black,draw,inner sep=0pt,minimum size=4pt] (g) at (3.75,-1) {};
	\node[circle,fill=black,draw,inner sep=0pt,minimum size=4pt] (h) at (5,0) {};
	
	\draw[thick,dotted] (a) to node[midway,inner sep=0pt,minimum size=4pt] (i) {} (b);
	\draw[thick] (a) to node[midway,inner sep=1pt,outer sep=1pt,minimum size=4pt,fill=white] (j) {$\alpha$} (c);
	\draw[thick] (a) to node[midway,inner sep=1pt,outer sep=1pt,minimum size=4pt,fill=white] (k) {$\beta$} (d);
	\draw[thick] (d) to node[midway,inner sep=1pt,outer sep=1pt,minimum size=4pt,fill=white] (l) {$\gamma$} (e);
	\draw[thick] (f) to node[midway,inner sep=1pt,outer sep=1pt,minimum size=4pt,fill=white] (m) {$\delta$} (e);
	\draw[thick] (f) to node[midway,inner sep=1pt,outer sep=1pt,minimum size=4pt,fill=white] (n) {$\epsilon$} (g);
	\draw[thick] (f) to node[midway,inner sep=1pt,outer sep=1pt,minimum size=4pt,fill=white] (o) {$\zeta$} (h);
	
	\draw[-{Stealth[length=1.6mm]}] (j) to[bend right] (i);
	\draw[-{Stealth[length=1.6mm]}] (k) to[bend right] (j);
	\draw[-{Stealth[length=1.6mm]}] (l) to[bend left] (k);
	\draw[-{Stealth[length=1.6mm]}] (m) to[bend right] (l);
	\draw[-{Stealth[length=1.6mm]}] (n) to[bend left] (m);
	\draw[-{Stealth[length=1.6mm]}] (o) to[bend left] (n);
	\end{scope}
	
	\draw[-{Stealth[length=1.6mm]},very thick,decoration = {snake,pre length=3pt,post length=7pt,},decorate] (5.4,0.3125) -- (6.35,0.3125);
	
	\begin{scope}[xshift=8cm]
	\node[circle,fill=black,draw,inner sep=0pt,minimum size=4pt] (a) at (0,0) {};
	\node[circle,fill=black,draw,inner sep=0pt,minimum size=4pt] (b) at (-1.25,0.625) {};
	\node[circle,fill=black,draw,inner sep=0pt,minimum size=4pt] (c) at (0,1.625) {};
	\node[circle,fill=black,draw,inner sep=0pt,minimum size=4pt] (d) at (1.25,0.625) {};
	\node[circle,fill=black,draw,inner sep=0pt,minimum size=4pt] (e) at (2.5,0) {};
	\node[circle,fill=black,draw,inner sep=0pt,minimum size=4pt] (f) at (3.75,0.625) {};
	\node[circle,fill=black,draw,inner sep=0pt,minimum size=4pt] (g) at (3.75,-1) {};
	\node[circle,fill=black,draw,inner sep=0pt,minimum size=4pt] (h) at (5,0) {};
	
	\draw[thick] (a) to node[midway,inner sep=1pt,outer sep=1pt,minimum size=4pt,fill=white] (i) {$\alpha$} (b);
	\draw[thick] (a) to node[midway,inner sep=1pt,outer sep=1pt,minimum size=4pt,fill=white] (j) {$\beta$} (c);
	\draw[thick] (a) to node[midway,inner sep=1pt,outer sep=1pt,minimum size=4pt,fill=white] (k) {$\gamma$} (d);
	\draw[thick] (d) to node[midway,inner sep=1pt,outer sep=1pt,minimum size=4pt,fill=white] (l) {$\delta$} (e);
	\draw[thick] (f) to node[midway,inner sep=1pt,outer sep=1pt,minimum size=4pt,fill=white] (m) {$\epsilon$} (e);
	\draw[thick] (f) to node[midway,inner sep=1pt,outer sep=1pt,minimum size=4pt,fill=white] (n) {$\zeta$} (g);
	\draw[thick,dotted] (f) to node[midway,inner sep=0pt,minimum size=4pt] (o) {} (h);
	\end{scope}
	\end{tikzpicture}
	\caption{Shifting a coloring along a chain (Greek letters represent colors).}\label{fig:shift}
\end{figure}
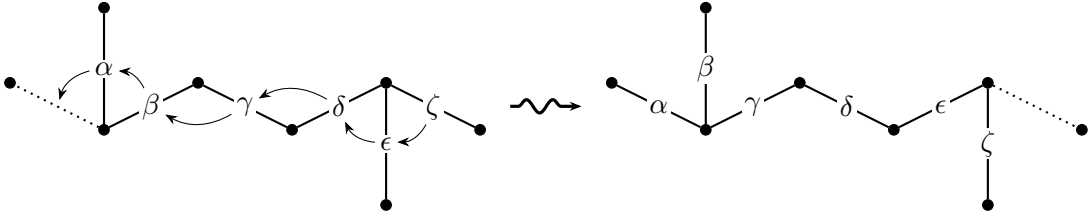

Given a proper partial coloring, we wish to modify it in order to create a partial coloring with a happy edge.
We will do so by finding small augmenting subgraphs called chains.
A \emphd{chain} of length $k$ is a sequence of distinct edges $C = (e_0, \ldots, e_{k-1})$ such that $|V(e_i)\cap V(e_{i+1})| = 1$ for each $0 \leq i < k-1$. 
Let $\Start(C) \defeq e_0$ and $\End(C) \defeq e_{k-1}$ denote the first and the last edges of $C$ respectively and let $\length(C) \defeq k$ be the length of $C$. 
We also let $E(C) \defeq \set{e_0, \ldots, e_{k-1}}$ be the edge set of $C$ and $V(C)$ be the set of vertices incident to the edges in $E(C)$.
To assist with defining our augmenting process, we will define the $\Shift$ operation (see Fig.~\ref{fig:shift}). 
Given a chain $C = (e_0, \ldots, e_{k-1})$, we define the coloring $\Shift(\phi, C)$ as follows:
\[\Shift(\phi, C)(e) \defeq \left\{\begin{array}{cc}
   \phi(e_{i+1})  & e = e_i, \, 0 \leq i < k-1; \\
    \blank & e = e_{k-1}; \\
    \phi(e) & \text{otherwise.}
\end{array}\right.\]
In other words, $\Shift(\phi, C)$ ``shifts'' the color from $e_{i+1}$ to $e_i$, leaves $e_{k-1}$ uncolored, and keeps the coloring on the other edges unchanged.
We call $C$ a \emphd{$\phi$-shiftable chain} if $\phi(e_0) = \blank$ and the coloring $\Shift(\phi, C)$ is proper.
For $C = (e_0, \ldots, e_{k-1})$, we let $C^* = (e_{k-1}, \ldots, e_0)$. 
It is clear that if $C$ is $\phi$-shiftable, then $C^*$ is $\Shift(\phi, C)$-shiftable and $\Shift(\Shift(\phi, C), C^*) = \phi$.

The next definition captures the class of chains that can be used to create a happy edge:

\begin{definition}[Happy chains]
    We say that a chain $C$ is \emphd{$\phi$-happy} for a partial coloring $\phi$ if it is $\phi$-shiftable and the edge $\End(C)$ is $\Shift(\phi, C)$-happy.
\end{definition}

Using the terminology of Definition~\ref{defn:aug}, we observe that a $\phi$-happy chain with $\Start(C) = e$ forms a connected $e$-augmenting subgraph of $G$ with respect to the partial coloring $\phi$. 
For a $\phi$-happy chain $C$, we let $\aug(\phi, C)$ be a proper coloring obtained from $\Shift(\phi, C)$ by assigning to $\End(C)$ some valid color. 
Our aim now becomes to develop algorithms for constructing $\phi$-happy chains for a given partial coloring $\phi$.
Before we describe our chains of interest, we define initial segments and sums of chains.

For a chain $C = (e_0, \ldots, e_{k-1})$ and $1 \leq j \leq k$, we define the \emphd{initial segment} of $C$ as
\[C|j \,\defeq\, (e_0, \ldots, e_{j-1}).\]
Given two chains $C = (e_0, \ldots, e_{k-1})$ and $C' = (f_0, \ldots, f_{j-1})$ with $e_{k-1} = f_0$, we define their \emphd{sum} as follows:
\[C+C' \,\defeq\, (e_0, \ldots, e_{k-1} = f_0, \ldots, f_{j-1}).\]
Note that if $C$ is $\phi$-shiftable and $C'$ is $\Shift(\phi, C)$-shiftable, then $C+C'$ is $\phi$-shiftable.

With these definitions in mind, we are now ready to describe the types of chains that will be used as building blocks in our algorithms.

\subsection{Path chains}\label{subsec:pathchains}

The first special type of chains we will consider are path chains. 

\begin{definition}
    A chain $P = (e_0, \ldots, e_{k-1})$ is a \emphd{path chain} if the edges $e_1$, \ldots, $e_{k-1}$ form a path in $G$, i.e., if there exist distinct vertices $x_1$, \ldots, $x_k$ such that $e_i \in E_G(x_i,x_{i+1})$ for all $1 \leq i \leq k-1$. We let $x_0 \in V(e_0)$ be distinct from $x_1$ and let $\vstart(P) \defeq x_0$, $\vend(P) \defeq x_k$ denote the first and last vertices on the path chain respectively. Note that the vertex $x_0$ may coincide with $x_i$ for some $3 \leq i \leq k$; see Fig.~\ref{subfig:unsucc} for an example. 
    Furthermore, the vertices $\vstart(P)$ and $\vend(P)$ are uniquely determined unless $P$ is a single edge.
\end{definition}

\begin{definition}[{Internal vertices and edges \cite[Definition 3.3]{fastEdgeColoring}}]\label{defn:internal}
    An edge of a path chain $P$ is \emphd{internal} if it is distinct from $\Start(P)$ and $\End(P)$. We let $\IE(P)$ denote the set of all internal edges of $P$. 
    A vertex of $P$ is \emphd{internal} if it is not incident to $\Start(P)$ or $\End(P)$. We let $\IV(P)$ denote the set of all internal vertices of $P$.
\end{definition}

\begin{figure}[t]
	\centering
        \begin{subfigure}[t]{\textwidth}
		\centering
		\begin{tikzpicture}
		\node[circle,fill=black,draw,inner sep=0pt,minimum size=4pt] (x) at (0,0) {};
            \node[circle,fill=black,draw,inner sep=0pt,minimum size=4pt] (y) at (1,0) {};
            \node[circle,fill=black,draw,inner sep=0pt,minimum size=4pt] (a) at (2,0) {};
            \node[circle,fill=black,draw,inner sep=0pt,minimum size=4pt] (b) at (3,0) {};
            \node[circle,fill=black,draw,inner sep=0pt,minimum size=4pt] (c) at (5,0) {};
            \node[circle,fill=black,draw,inner sep=0pt,minimum size=4pt] (d) at (6,0) {};
            \node[circle,fill=black,draw,inner sep=0pt,minimum size=4pt] (e) at (7,0) {};
            \node[circle,fill=black,draw,inner sep=0pt,minimum size=4pt] (f) at (-1,0) {};
            \node[circle,fill=black,draw,inner sep=0pt,minimum size=4pt] (g) at (-2,0) {};
            \node[circle,fill=black,draw,inner sep=0pt,minimum size=4pt] (h) at (-4,0) {};
            \node[circle,fill=black,draw,inner sep=0pt,minimum size=4pt] (i) at (-5,0) {};
            \node[circle,fill=black,draw,inner sep=0pt,minimum size=4pt] (j) at (-6,0) {};
		
		
		\draw[thick,dotted] (x) to node[midway,below,inner sep=1pt,outer sep=1pt,minimum size=4pt,fill=white] {$e$} (y);
		\draw[thick] (y) to node[midway,inner sep=1pt,outer sep=1pt,minimum size=4pt,fill=white] {$\alpha$} (a) to node[midway,inner sep=1pt,outer sep=1pt,minimum size=4pt,fill=white] {$\beta$} (b) (c) to node[midway,inner sep=1pt,outer sep=1pt,minimum size=4pt,fill=white] {$\beta$} (d) to node[midway,inner sep=1pt,outer sep=1pt,minimum size=4pt,fill=white] {$\alpha$} (e) 
        (x) to node[midway,inner sep=1pt,outer sep=1pt,minimum size=4pt,fill=white] {$\beta$} (f) to node[midway,inner sep=1pt,outer sep=1pt,minimum size=4pt,fill=white] {$\alpha$} (g) (h) to node[midway,inner sep=1pt,outer sep=1pt,minimum size=4pt,fill=white] {$\alpha$} (i) to node[midway,inner sep=1pt,outer sep=1pt,minimum size=4pt,fill=white] {$\beta$} (j);
        \draw[thick, snake=zigzag] (b) -- (c) (g) -- (h);

        \draw[decoration={brace,amplitude=10pt},decorate] (-6,0.2) -- node [midway,above,yshift=10pt,xshift=0pt] {$P(e; \phi, \beta\alpha)$} (1,0.2);
        \draw[decoration={brace,amplitude=10pt,mirror},decorate] (0,-0.2) -- node [midway,above,yshift=-28pt,xshift=0pt] {$P(e; \phi, \alpha\beta)$} (7,-0.2);
		
		\end{tikzpicture}
		\caption{$P(e; \phi, \alpha\beta)$ and $P(e; \phi, \beta\alpha)$.}\label{subfig:diff_paths}
	\end{subfigure}%
        \vspace{10pt}
	\begin{subfigure}[t]{.47\textwidth}
		\centering
		\begin{tikzpicture}
		\node[draw=none,minimum size=2.5cm,regular polygon,regular polygon sides=7] (P) {};

		\node[circle,fill=black,draw,inner sep=0pt,minimum size=4pt] (x) at (P.corner 4) {};
		\node[circle,fill=black,draw,inner sep=0pt,minimum size=4pt] (y) at (P.corner 5) {};
		\node[circle,fill=black,draw,inner sep=0pt,minimum size=4pt] (a) at (P.corner 6) {};
		\node[circle,fill=black,draw,inner sep=0pt,minimum size=4pt] (b) at (P.corner 7) {};
		\node[circle,fill=black,draw,inner sep=0pt,minimum size=4pt] (c) at (P.corner 1) {};
		\node[circle,fill=black,draw,inner sep=0pt,minimum size=4pt] (d) at (P.corner 2) {};
		\node[circle,fill=black,draw,inner sep=0pt,minimum size=4pt] (e) at (P.corner 3) {};
		
		\node[anchor=north] at (x) {$x$};
		\node[anchor=north] at (y) {$y$};
		
		\draw[thick,dotted] (x) to node[midway,below,inner sep=1pt,outer sep=1pt,minimum size=4pt,fill=white] {$e$} (y);
		\draw[thick] (y) to node[midway,inner sep=1pt,outer sep=1pt,minimum size=4pt,fill=white] {$\alpha$} (a);
		\draw[thick] (a) to node[midway,inner sep=1pt,outer sep=1pt,minimum size=4pt,fill=white] {$\beta$} (b);
		\draw[thick] (b) to node[midway,inner sep=1pt,outer sep=1pt,minimum size=4pt,fill=white] {$\alpha$} (c);
		\draw[thick] (c) to node[midway,inner sep=1pt,outer sep=1pt,minimum size=4pt,fill=white] {$\beta$} (d);
		\draw[thick] (d) to node[midway,inner sep=1pt,outer sep=1pt,minimum size=4pt,fill=white] {$\alpha$} (e);
		\draw[thick] (e) to node[midway,inner sep=1pt,outer sep=1pt,minimum size=4pt,fill=white] {$\beta$} (x);
		\end{tikzpicture}
		\caption{The edge $e \in E_G(x,y)$ is not $(\phi, \alpha \beta)$-successful.}\label{subfig:unsucc}
	\end{subfigure}%
	\qquad%
	\begin{subfigure}[t]{.47\textwidth}
		\centering
		\begin{tikzpicture}
		\node[draw=none,minimum size=2.5cm,regular polygon,regular polygon sides=7] (P) {};

		\node[circle,fill=black,draw,inner sep=0pt,minimum size=4pt] (x) at (P.corner 4) {};
		\node[circle,fill=black,draw,inner sep=0pt,minimum size=4pt] (y) at (P.corner 5) {};
		\node[circle,fill=black,draw,inner sep=0pt,minimum size=4pt] (a) at (P.corner 6) {};
		\node[circle,fill=black,draw,inner sep=0pt,minimum size=4pt] (b) at (P.corner 7) {};
		\node[circle,fill=black,draw,inner sep=0pt,minimum size=4pt] (c) at (P.corner 1) {};
		\node[circle,fill=black,draw,inner sep=0pt,minimum size=4pt] (d) at (P.corner 2) {};
		\node[circle,fill=black,draw,inner sep=0pt,minimum size=4pt] (e) at (P.corner 3) {};
		\node[circle,fill=black,draw,inner sep=0pt,minimum size=4pt] (f) at (-2.2,0) {}; 
		
		\node[anchor=north] at (x) {$x$};
		\node[anchor=north] at (y) {$y$};
		
		\draw[thick,dotted] (x) to node[midway,below,inner sep=1pt,outer sep=1pt,minimum size=4pt,fill=white] {$e$} (y);
		\draw[thick] (y) to node[midway,inner sep=1pt,outer sep=1pt,minimum size=4pt,fill=white] {$\alpha$} (a);
		\draw[thick] (a) to node[midway,inner sep=1pt,outer sep=1pt,minimum size=4pt,fill=white] {$\beta$} (b);
		\draw[thick] (b) to node[midway,inner sep=1pt,outer sep=1pt,minimum size=4pt,fill=white] {$\alpha$} (c);
		\draw[thick] (c) to node[midway,inner sep=1pt,outer sep=1pt,minimum size=4pt,fill=white] {$\beta$} (d);
		\draw[thick] (d) to node[midway,inner sep=1pt,outer sep=1pt,minimum size=4pt,fill=white] {$\alpha$} (e);
		\draw[thick] (e) to node[midway,inner sep=1pt,outer sep=1pt,minimum size=4pt,fill=white] {$\beta$} (f);
		\end{tikzpicture}
		\caption{The edge $e \in E_G(x,y)$ is $(\phi, \alpha \beta)$-successful.}
	\end{subfigure}
	\caption{Path chains.}
        \label{fig:path}
\end{figure}
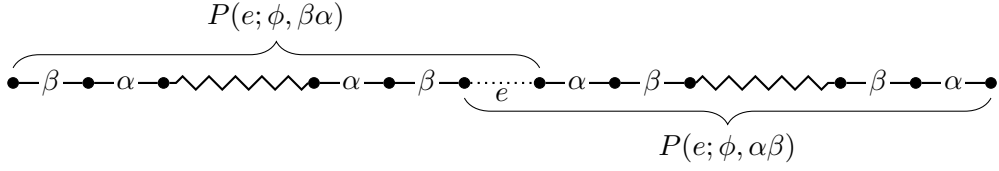
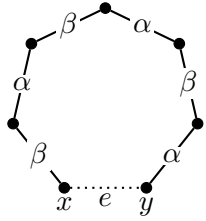
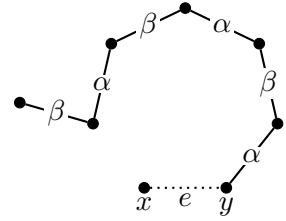

We are particularly interested in bicolored path chains, i.e., path chains containing at most $2$ colors, except possibly at their first edge. Specifically, given a proper partial coloring $\phi$ and $\alpha$, $\beta \in [r]$, we say that a path chain $P$ is an \emphd{$\alpha\beta$-path} under $\phi$ if all edges of $P$ except possibly $\Start(P)$ are colored $\alpha$ or $\beta$. To simplify working with $\alpha\beta$-paths, we introduce the following notation. 
Let $G(\phi, \alpha\beta)$ be the spanning subgraph of $G$ with 
\[E(G(\phi, \alpha\beta)) \,\defeq\, \{e\in E(G)\,:\, \phi(e) \in \{\alpha, \beta\}\}.\]
Since $\phi$ is proper, the maximum degree of $G(\phi, \alpha\beta)$ is at most $2$. Hence, the connected components of $G(\phi, \alpha\beta)$ are paths or cycles (an isolated vertex is a path of length $0$).
Note that there can be cycles of length $2$.
For $x\in V(G)$, let $G(x;\phi, \alpha\beta)$ denote the connected component of $G(\phi, \alpha\beta)$ containing $x$ and $\deg(x; \phi, \alpha\beta)$ denote the degree of $x$ in $G(\phi, \alpha\beta)$. We say that $x$, $y \in V(G)$ are \emphd{$(\phi, \alpha\beta)$-related} if $G(x;\phi, \alpha\beta) = G(y;\phi, \alpha\beta)$, i.e., if $y$ is reachable from $x$ by a path in $G(\phi, \alpha\beta)$.

\begin{definition}[{Hopeful and successful edges \cite[Definition 3.4]{fastEdgeColoring}}]\label{defn:hse}
    Let $\phi$ be a proper partial coloring of $G$ and let $\alpha$, $\beta \in [r]$. Let $x, y \in V(G)$ and let $e \in E_G(x, y)$ be an uncolored edge. 
    We say that $e$ is \emphd{$(\phi, \alpha\beta)$-hopeful} if $\deg(x;\phi, \alpha\beta) < 2$ and $\deg(y;\phi, \alpha\beta) < 2$. Further, we say that $e$ is \emphd{$(\phi, \alpha\beta)$-successful} if it is $(\phi, \alpha\beta)$-hopeful and $x$ and $y$ are not $(\phi, \alpha\beta)$-related.
\end{definition}

Let $\phi$ be a proper partial coloring and let $\alpha$, $\beta \in [r]$. Consider a $(\phi, \alpha\beta)$-hopeful edge $e \in E_G(x,y)$. 
We define the path chains $P(e; \phi, \alpha\beta)$ and $P(e; \phi, \beta\alpha)$ by
\[
    P(e; \phi, \alpha\beta) \,\defeq\, (e_0=e, e_1, \ldots, e_k), \quad P(e; \phi, \beta\alpha) \,\defeq\, (f_0=e, f_1, \ldots, f_l),
\]
where $(e_1, \ldots, e_k)$ is the maximal path in $G(\phi, \alpha\beta)$ such that $\phi(e_1) = \alpha$, and $(f_1, \ldots, f_l)$ is the maximal path in $G(\phi, \alpha\beta)$ such that $\phi(f_1) = \beta$.
It was shown in \cite[Fact 4.4,4.5]{VizingChain} that $P(e; \phi, \alpha\beta)$ is $\phi$-shiftable, and if $e$ is $(\phi, \alpha\beta)$-successful, then $P(e; \phi, \alpha\beta)$ is $\phi$-happy for simple graphs and $r = \Delta + 1$. 
An identical argument (\textit{mutatis mutandis}) works in the multigraph case for $r \in \set{\Sha, \Delta + \mu}$ as well.
See 
Fig.~\ref{fig:path} for an illustration.

\subsection{Fan chains}

The second type of chains we will be working with are fans:

\begin{definition}[Fans]\label{defn:fan}
    A \emphd{fan} is a chain of the form $F = (e_0, \ldots, e_{k-1})$ such that all edges contain a common vertex $x$, referred to as the \emphd{pivot} of the fan.
    With $y_i$ defined as the vertex in $V(e_i)$ distinct from $x$,
    we let $\Pivot(F) \defeq x$, $\vstart(F) \defeq y_0$ and $\vend(F) \defeq y_{k-1}$ denote the pivot, start, and end vertices of a fan $F$. 
    This notation is uniquely determined unless $F$ is a single edge, in which case we let $\vend(F) = \vstart(F)$.
    We note that it may be the case that $y_i = y_j$ for $i < j$ as $G$ is a multigraph.
    This is the only way in which this definition differs from \cite[Definition 3.5]{fastEdgeColoring}.
\end{definition}

The process of shifting a fan where $y_i = y_j$ is illustrated in Fig.~\ref{fig:fan}.

\begin{figure}[t]
	\centering
	\begin{tikzpicture}[xscale=1.1]
	\begin{scope}
	\node[circle,fill=black,draw,inner sep=0pt,minimum size=4pt] (x) at (0,0) {};
	\node[anchor=north] at (x) {$x$};
	
	\coordinate (O) at (0,0);
	\def\radius{2.6cm}
	
	\node[circle,fill=black,draw,inner sep=0pt,minimum size=4pt] (y0) at (190:\radius) {};
	\node at (190:2.9) {$y_0$};
	
	\node[circle,fill=black,draw,inner sep=0pt,minimum size=4pt] (y1) at (165:\radius) {};
	\node at (165:2.9) {$y_1$};
	
	\node[circle,fill=black,draw,inner sep=0pt,minimum size=4pt] (y2) at (140:\radius) {};
	\node at (140:2.9) {$y_2$};
	
	\node[circle,fill=black,draw,inner sep=0pt,minimum size=4pt] (y4) at (90:\radius) {};
	\node at (90:2.9) {$y_{i-1}$};
	
	\node[circle,fill=black,draw,inner sep=0pt,minimum size=4pt] (y5) at (62:\radius) {};
	\node at (62:2.9) {$y_i = y_j$};
	
	\node[circle,fill=black,draw,inner sep=0pt,minimum size=4pt] (y6) at (30:\radius) {};
	\node at (30:3) {$y_{i+1}$};
	
	\node[circle,fill=black,draw,inner sep=0pt,minimum size=4pt] (y8) at (-10:\radius) {};
	\node at (-10:3.1) {$y_{k-1}$};
	
	\node[circle,inner sep=0pt,minimum size=4pt] at (115:2) {$\ldots$}; 
	\node[circle,inner sep=0pt,minimum size=4pt] at (12:2) {$\ldots$}; 
	
	\draw[thick,dotted] (x) to (y0);
	\draw[thick] (x) to node[midway,inner sep=1pt,outer sep=1pt,minimum size=4pt,fill=white] {$\beta_0$} (y1);
	\draw[thick] (x) to node[midway,inner sep=1pt,outer sep=1pt,minimum size=4pt,fill=white] {$\beta_1$} (y2);
	
	\draw[thick] (x) to node[pos=0.55,inner sep=1pt,outer sep=1pt,minimum size=4pt,fill=white] {$\beta_{i-2}$} (y4);
	\draw[thick] (x) to[out=80, in=-140] node[pos=0.75,inner sep=1pt,outer sep=1pt,minimum size=4pt,fill=white] {$\beta_{i-1}$} (y5);
        \draw[thick] (x) to[out=40, in=-100] node[pos=0.6,inner sep=1pt,outer sep=1pt,minimum size=4pt,fill=white] {$\beta_{j-1}$} (y5);
	\draw[thick] (x) to node[midway,inner sep=1pt,outer sep=1pt,minimum size=4pt,fill=white] {$\beta_i$} (y6);
	
	\draw[thick] (x) to node[midway,inner sep=1pt,outer sep=1pt,minimum size=4pt,fill=white] {$\beta_{k-2}$} (y8);
	\end{scope}
	
	\draw[-{Stealth[length=1.6mm]},very thick,decoration = {snake,pre length=3pt,post length=7pt,},decorate] (2.9,1) to node[midway,anchor=south]{$\Shift$} (5,1);
	
	\begin{scope}[xshift=8.3cm]
	\node[circle,fill=black,draw,inner sep=0pt,minimum size=4pt] (x) at (0,0) {};
	\node[anchor=north] at (x) {$x$};
	
	\coordinate (O) at (0,0);
	\def\radius{2.6cm}
	
	\node[circle,fill=black,draw,inner sep=0pt,minimum size=4pt] (y0) at (190:\radius) {};
	\node at (190:2.9) {$y_0$};
	
	\node[circle,fill=black,draw,inner sep=0pt,minimum size=4pt] (y1) at (165:\radius) {};
	\node at (165:2.9) {$y_1$};
	
	\node[circle,fill=black,draw,inner sep=0pt,minimum size=4pt] (y2) at (140:\radius) {};
	\node at (140:2.9) {$y_2$};
	
	\node[circle,fill=black,draw,inner sep=0pt,minimum size=4pt] (y4) at (90:\radius) {};
	\node at (90:2.9) {$y_{i-1}$};
	
	\node[circle,fill=black,draw,inner sep=0pt,minimum size=4pt] (y5) at (62:\radius) {};
	\node at (62:2.9) {$y_i = y_j$};
	
	\node[circle,fill=black,draw,inner sep=0pt,minimum size=4pt] (y6) at (30:\radius) {};
	\node at (30:3) {$y_{i+1}$};
	
	\node[circle,fill=black,draw,inner sep=0pt,minimum size=4pt] (y8) at (-10:\radius) {};
	\node at (-10:3.1) {$y_{k-1}$};
	
	\node[circle,inner sep=0pt,minimum size=4pt] at (115:2) {$\ldots$}; 
	\node[circle,inner sep=0pt,minimum size=4pt] at (12:2) {$\ldots$}; 
	
	\draw[thick] (x) to node[midway,inner sep=1pt,outer sep=1pt,minimum size=4pt,fill=white] {$\beta_0$} (y0);
	\draw[thick] (x) to node[midway,inner sep=1pt,outer sep=1pt,minimum size=4pt,fill=white] {$\beta_1$} (y1);
	\draw[thick] (x) to node[midway,inner sep=1pt,outer sep=1pt,minimum size=4pt,fill=white] {$\beta_2$} (y2);
	
	\draw[thick] (x) to node[pos=0.55,inner sep=1pt,outer sep=1pt,minimum size=4pt,fill=white] {$\beta_{i-1}$} (y4);
	\draw[thick] (x) to[out=80, in=-140] node[pos=0.75,inner sep=1pt,outer sep=1pt,minimum size=4pt,fill=white] {$\beta_{i}$} (y5);
        \draw[thick] (x) to[out=40, in=-100] node[pos=0.6,inner sep=1pt,outer sep=1pt,minimum size=4pt,fill=white] {$\beta_{j}$} (y5);
	\draw[thick] (x) to node[midway,inner sep=1pt,outer sep=1pt,minimum size=4pt,fill=white] {$\beta_{i+1}$} (y6);
	
	\draw[thick, dotted] (x) to (y8);
	\end{scope}
	\end{tikzpicture}
	\caption{The process of shifting a fan.}\label{fig:fan}
\end{figure}
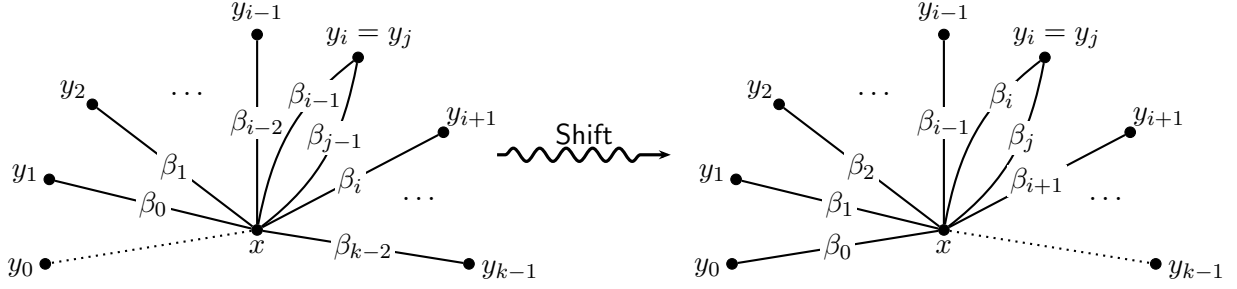


\begin{definition}[{Hopeful, successful, and disappointed fans \cite[Definition 3.6]{fastEdgeColoring}}]\label{defn:hsf}
    Let $\phi$ be a proper partial coloring and let $\alpha$, $\beta \in [r]$. Let $F$ be a $\phi$-shiftable fan with $x \defeq \Pivot(F)$ and $y \defeq \vend(F)$ and suppose that $F$ is not $\phi$-happy (which means that the edge $\End(F) \in E_G(x,y)$ is not $\Shift(\phi, F)$-happy). We say that $F$ is:
    \begin{itemize}
        \item \emphd{$(\phi, \alpha\beta)$-hopeful} if $\deg(x;\phi, \alpha\beta) < 2$ and $\deg(y;\phi, \alpha\beta) < 2$;
        \item \emphd{$(\phi, \alpha\beta)$-successful} if it is $(\phi, \alpha\beta)$-hopeful and $x$ and $y$ are not $(\Shift(\phi, F), \alpha\beta)$-related;
        \item \emphd{$(\phi, \alpha\beta)$-disappointed} if it is $(\phi, \alpha\beta)$-hopeful but not $(\phi, \alpha\beta)$-successful.
    \end{itemize}
\end{definition}

The following Fact will assist with our proofs.

\begin{fact}\label{fact:fan}
    Let $\phi$ be a proper partial coloring, $F = (e_0, \ldots, e_{k-1})$ be a $\phi$-shiftable fan and set $x \defeq \Pivot(F)$, $y \defeq \vend(F)$. 
    Define $\psi\defeq \Shift(\phi, F)$ and let $I = \{0 \leq i < k-1\,:\, e_i \in E_G(x, y)\}$.
    If $F$ is $(\phi, \alpha\beta)$-hopeful (resp. successful) and $\phi(e_{i+1}) \notin \set{\alpha, \beta}$ for each $i \in I$, then the edge $\End(F)$ is $(\psi, \alpha\beta)$-hopeful (resp. successful).
\end{fact}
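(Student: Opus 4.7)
The plan is to directly compute how the $\alpha\beta$-degrees at $x$ and at $y$ transform under the shift from $\phi$ to $\psi$, and then read off both conclusions from the definitions.

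At the pivot $x$, every fan edge is incident, and the shift merely relabels which color sits on which edge. Under $\phi$, the colored fan edges at $x$ are $e_1,\dots,e_{k-1}$ carrying the multiset $\{\phi(e_1),\dots,\phi(e_{k-1})\}$, whereas under $\psi$ the colored fan edges at $x$ are $e_0,\dots,e_{k-2}$ carrying $\{\psi(e_0),\dots,\psi(e_{k-2})\}=\{\phi(e_1),\dots,\phi(e_{k-1})\}$. Since non-fan edges at $x$ are untouched by the shift, the number of $\alpha\beta$-colored edges at $x$ is identical under $\phi$ and $\psi$, so $\deg(x;\psi,\alpha\beta)=\deg(x;\phi,\alpha\beta)<2$.

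At the endpoint $y$, the fan edges incident to $y$ are exactly $\{e_i:i\in I\}\cup\{e_{k-1}\}$. In $\psi$, the edge $e_{k-1}$ is uncolored, and for each $i\in I$ we have $\psi(e_i)=\phi(e_{i+1})\notin\{\alpha,\beta\}$ by hypothesis. Thus no fan edge at $y$ contributes to the $\alpha\beta$-degree in $\psi$, and this degree is therefore bounded above by the number of non-fan edges at $y$ colored $\alpha$ or $\beta$, which is itself at most $\deg(y;\phi,\alpha\beta)<2$. Combined with the previous paragraph, this shows that the edge $\End(F)\in E_G(x,y)$ is $(\psi,\alpha\beta)$-hopeful, which handles the hopeful case.

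For the successful case, Definition~\ref{defn:hse} imposes exactly one extra requirement on $\End(F)$: that its endpoints $x$ and $y$ not be $(\psi,\alpha\beta)$-related. But this is literally the extra clause in the definition of a $(\phi,\alpha\beta)$-successful fan given in Definition~\ref{defn:hsf}, so the conclusion is immediate. The only genuinely new subtlety—and what the hypothesis $\phi(e_{i+1})\notin\{\alpha,\beta\}$ for $i\in I$ is designed to preclude—is that in a multigraph a fan can contain edges parallel to $\End(F)$, and after shifting such an edge could carry $\alpha$ or $\beta$ at $y$ and thereby destroy hopefulness; this is the only place where the multigraph structure materially complicates the simple-graph argument of \cite{fastEdgeColoring}.
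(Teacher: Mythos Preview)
Your argument is correct and follows essentially the same route as the paper: you show the multiset of colors incident to $x$ is unchanged by the shift (so $\deg(x;\psi,\alpha\beta)=\deg(x;\phi,\alpha\beta)$), and that under the hypothesis no fan edge at $y$ carries $\alpha$ or $\beta$ after shifting (so $\deg(y;\psi,\alpha\beta)\le\deg(y;\phi,\alpha\beta)$), which is exactly what the paper does via the missing-set computation $M(\psi,y)=M(\phi,y)\cup\{\phi(e_i):i\in I\}\setminus\{\phi(e_{i+1}):i\in I\}$. Your observation that the successful clause in Definition~\ref{defn:hsf} already refers to $\psi=\Shift(\phi,F)$, so the ``not related'' condition transfers verbatim, is also how the paper concludes.
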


\begin{proof}
    Note that 
    \[M(\psi, x) = M(\phi, x), \quad M(\psi, y) = M(\phi, y) \cup \set{\phi(e_i)\,:\,i \in I} \setminus \set{\phi(e_{i+1})\,:\, i \in I}.\]
    In particular, as $\phi(e_{i+1}) \notin \set{\alpha,\beta}$ for $i \in I$, we must have
    \[\deg(x;\psi, \alpha\beta) = \deg(x;\phi, \alpha\beta) < 2, \quad \deg(y;\psi, \alpha\beta) \leq \deg(y;\phi, \alpha\beta) < 2.\]
    The claim now follows by Definitions~\ref{defn:hse} and~\ref{defn:hsf}.
\end{proof}


\subsection{Shannon chains}\label{subsec:Sha_def}

The third type of chain we will be working with are Shannon chains.
For $\Sha$-edge-colorings, a Shannon Chain is formed by combining a fan of length at most $2$ and a bicolored path.

\begin{definition}[Shannon chains]\label{def:shannon_chain}
    A \emphd{Shannon chain} in a proper partial coloring $\phi$ is a chain of the form $F + P$, where $F$ is a $(\phi, \alpha\beta)$-hopeful fan for some $\alpha$, $\beta \in [\Sha]$, $\length(F) \leq 2$, and $P$ is an initial segment of the path chain $P(\End(F);  \Shift(\phi, F), \alpha\beta)$ such that $\vstart(P) = \Pivot(F)$.
    In particular, letting $x \defeq \Pivot(F)$ and $y \defeq \vend(F)$, the chain $P$ consists of the edge $\End(F) \in E_G(x,y)$ followed by a (not necessarily maximal) path starting at $y$ whose edges are colored $\alpha$ and $\beta$ under the coloring $\Shift(\phi, F)$ such that $\beta \in M(\Shift(\phi, F), y)$. See Fig.~\ref{fig:shannon_chains} for an illustration. 
    Note that we allow $P$ to comprise only the single edge $\End(F)$, in which case the Shannon chain coincides with the fan $F$.
    Similarly, if $\length(F) = 1$, the Shannon chain coincides with the path $P$.
\end{definition}

\begin{figure}[t]
    \centering
        \begin{tikzpicture}
            \node[circle,fill=black,draw,inner sep=0pt,minimum size=4pt] (a) at (0,0) {};
        	\path (a) ++(-45:1) node[circle,fill=black,draw,inner sep=0pt,minimum size=4pt] (b) {};
        	\path (a) ++(75:1) node[circle,fill=black,draw,inner sep=0pt,minimum size=4pt] (c) {};

                \path (c) ++(25:1) node[circle,fill=black,draw,inner sep=0pt,minimum size=4pt] (d) {};
                \path (d) ++(10:1) node[circle,fill=black,draw,inner sep=0pt,minimum size=4pt] (e) {};
                \path (a) ++(0:5.5) node[circle,fill=black,draw,inner sep=0pt,minimum size=4pt] (g) {};
                \path (g) ++(140:1) node[circle,fill=black,draw,inner sep=0pt,minimum size=4pt] (f) {};
                \path (g) ++(-45:1) node[circle,fill=black,draw,inner sep=0pt,minimum size=4pt] (h) {};
        	
        	\draw[thick,dotted] (h) -- (g);
        	
        	\draw[thick, decorate,decoration=zigzag] (e) to[out=0,in=150] (f);
        	
        	\draw[thick] (b) to node[midway,inner sep=1pt,outer sep=1pt,minimum size=4pt,fill=white] {$\eta$} (a) to node[midway,inner sep=1pt,outer sep=1pt,minimum size=4pt,fill=white] {$\alpha$} (c) to node[midway,inner sep=1pt,outer sep=1pt,minimum size=4pt,fill=white] {$\beta$} (d) to node[midway,inner sep=1pt,outer sep=1pt,minimum size=4pt,fill=white] {$\alpha$} (e) (f) to node[midway,inner sep=1pt,outer sep=1pt,minimum size=4pt,fill=white] {$\alpha$} (g);

        \begin{scope}[yshift=3.3cm]
            \node[circle,fill=black,draw,inner sep=0pt,minimum size=4pt] (a) at (0,0) {};
        	\path (a) ++(-45:1) node[circle,fill=black,draw,inner sep=0pt,minimum size=4pt] (b) {};
        	\path (a) ++(75:1) node[circle,fill=black,draw,inner sep=0pt,minimum size=4pt] (c) {};

                \path (c) ++(25:1) node[circle,fill=black,draw,inner sep=0pt,minimum size=4pt] (d) {};
                \path (d) ++(10:1) node[circle,fill=black,draw,inner sep=0pt,minimum size=4pt] (e) {};
                \path (a) ++(0:5.5) node[circle,fill=black,draw,inner sep=0pt,minimum size=4pt] (g) {};
                \path (g) ++(140:1) node[circle,fill=black,draw,inner sep=0pt,minimum size=4pt] (f) {};
                \path (g) ++(-45:1) node[circle,fill=black,draw,inner sep=0pt,minimum size=4pt] (h) {};

                \draw[decoration={brace,amplitude=10pt},decorate] (0.4,0.95) -- node [midway,above,yshift=-7pt,xshift=15pt] {$F$} (0.85,-0.69);
                \draw[decoration={brace,amplitude=10pt},decorate] (0,1.6) -- node [midway,above,yshift=10pt,xshift=0pt] {$P$} (6.2,1.6);
        	
        	\draw[thick,dotted] (a) -- (b);
        	
        	\draw[thick, decorate,decoration=zigzag] (e) to[out=0,in=150] (f);
        	
        	\draw[thick] (a) to node[midway,inner sep=1pt,outer sep=1pt,minimum size=4pt,fill=white] {$\eta$} (c) to node[midway,inner sep=1pt,outer sep=1pt,minimum size=4pt,fill=white] {$\alpha$} (d) to node[midway,inner sep=1pt,outer sep=1pt,minimum size=4pt,fill=white] {$\beta$} (e) (f) to node[midway,inner sep=1pt,outer sep=1pt,minimum size=4pt,fill=white] {$\beta$} (g) to node[midway,inner sep=1pt,outer sep=1pt,minimum size=4pt,fill=white] {$\alpha$} (h);
        \end{scope}

        \begin{scope}[yshift=3cm]
            \draw[-{Stealth[length=3mm,width=2mm]},very thick,decoration = {snake,pre length=3pt,post length=7pt,},decorate] (3,0) -- (3,-1);
        \end{scope}
        	
        \end{tikzpicture}
    \caption{Shifting a Shannon chain $C = F+P$.}
    \label{fig:shannon_chains}
\end{figure}
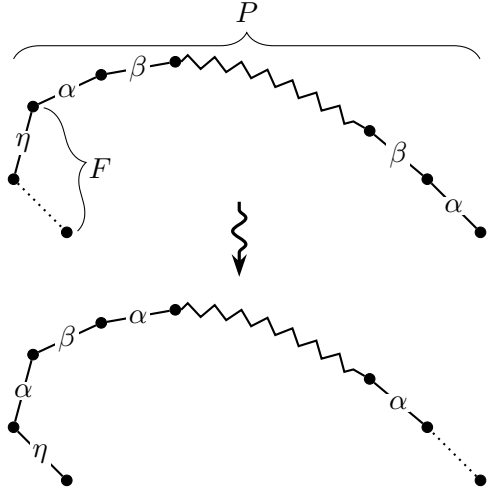

It follows from standard proofs of \hyperref[theo:Shannon]{Shannon's theorem} that for any uncolored edge $e$, one can find a $\phi$-happy chain $C$ with $\Start(C) = e$ such that $C$ is a Shannon chain $F+P$ (see Algorithm~\ref{alg:shannon_chain} for a construction).
We provide such a proof in Corollary~\ref{corl:shannon}.

Combining several Shannon chains yields a multi-step Shannon chain.

\begin{definition}[Multi-step Shannon chains]\label{def:multi_shannon_chain}
    A \emphd{$k$-step Shannon chain} is a chain of the form \[C \,=\, C_0 + \cdots + C_{k-1},\] where $C_i = F_i + P_i$ is a Shannon chain in the coloring $\Shift(\phi, C_0 + \cdots + C_{i-1})$ such that $\vend(P_i) = \vstart(F_{i+1})$ for all $0 \leq i < k-1$. 
    See Fig.~\ref{fig:multi_shannon_chains} for an illustration.
\end{definition}

\begin{figure}[!b]
    \centering
        \begin{tikzpicture}[xscale=0.8,yscale=0.8]
            \node[circle,fill=black,draw,inner sep=0pt,minimum size=4pt] (a) at (0,0) {};
        	\path (a) ++(-45:1) node[circle,fill=black,draw,inner sep=0pt,minimum size=4pt] (b) {};
        	\path (a) ++(75:1) node[circle,fill=black,draw,inner sep=0pt,minimum size=4pt] (c) {};

                \path (c) ++(25:1) node[circle,fill=black,draw,inner sep=0pt,minimum size=4pt] (d) {};
                \path (d) ++(10:1) node[circle,fill=black,draw,inner sep=0pt,minimum size=4pt] (e) {};
                \path (a) ++(0:5.5) node[circle,fill=black,draw,inner sep=0pt,minimum size=4pt] (g) {};
                \path (g) ++(140:1) node[circle,fill=black,draw,inner sep=0pt,minimum size=4pt] (f) {};
                \path (g) ++(-45:1) node[circle,fill=black,draw,inner sep=0pt,minimum size=4pt] (h) {};

                \path (h) ++(70:1) node[circle,fill=black,draw,inner sep=0pt,minimum size=4pt] (i) {};

                \path (i) ++(55:1) node[circle,fill=black,draw,inner sep=0pt,minimum size=4pt] (j) {};
                \path (g) ++(0:6) node[circle,fill=black,draw,inner sep=0pt,minimum size=4pt] (m) {};
                \path (m) ++(140:1) node[circle,fill=black,draw,inner sep=0pt,minimum size=4pt] (l) {};
                \path (m) ++(-45:1) node[circle,fill=black,draw,inner sep=0pt,minimum size=4pt] (n) {};

                \path (m) ++(75:1) node[circle,fill=black,draw,inner sep=0pt,minimum size=4pt] (o) {};

                \path (o) ++(25:1) node[circle,fill=black,draw,inner sep=0pt,minimum size=4pt] (p) {};
                \path (p) ++(10:1) node[circle,fill=black,draw,inner sep=0pt,minimum size=4pt] (q) {};
                \path (m) ++(0:5.5) node[circle,fill=black,draw,inner sep=0pt,minimum size=4pt] (s) {};
                \path (s) ++(140:1) node[circle,fill=black,draw,inner sep=0pt,minimum size=4pt] (r) {};
                \path (s) ++(-45:1) node[circle,fill=black,draw,inner sep=0pt,minimum size=4pt] (t) {};
        	
        	\draw[thick,dotted] (s) -- (t);
        	
        	\draw[thick, decorate,decoration=zigzag] (e) to[out=0,in=150] (f) (q) to[out=0,in=150] (r) (j) to[out=35,in=150] (l);
        	
        	\draw[thick] (b) -- (a) to node[midway,inner sep=1pt,outer sep=1pt,minimum size=4pt,fill=white] {$\alpha_1$} (c) to node[midway,inner sep=1pt,outer sep=1pt,minimum size=4pt,fill=white] {$\beta_1$} (d) to node[midway,inner sep=1pt,outer sep=1pt,minimum size=4pt,fill=white] {$\alpha_1$} (e) (f) to node[midway,inner sep=1pt,outer sep=1pt,minimum size=4pt,fill=white] {$\alpha_1$} (g)  to node[midway,inner sep=1pt,outer sep=1pt,minimum size=4pt,fill=white] {$\alpha_2$} (h) to node[midway,inner sep=1pt,outer sep=1pt,minimum size=4pt,fill=white] {$\beta_2$} (i) to node[midway,inner sep=1pt,outer sep=1pt,minimum size=4pt,fill=white] {$\alpha_2$} (j) (l) to node[midway,inner sep=1pt,outer sep=1pt,minimum size=4pt,fill=white] {$\alpha_2$} (m) -- (n)
                (m)  to node[midway,inner sep=1pt,outer sep=1pt,minimum size=4pt,fill=white] {$\alpha_3$} (o) to node[midway,inner sep=1pt,outer sep=1pt,minimum size=4pt,fill=white] {$\beta_3$} (p) to node[midway,inner sep=1pt,outer sep=1pt,minimum size=4pt,fill=white] {$\alpha_3$} (q) (r) to node[midway,inner sep=1pt,outer sep=1pt,minimum size=4pt,fill=white] {$\alpha_3$} (s)
            ;

        \begin{scope}[yshift=3.5cm]
            \node[circle,fill=black,draw,inner sep=0pt,minimum size=4pt] (a) at (0,0) {};
        	\path (a) ++(-45:1) node[circle,fill=black,draw,inner sep=0pt,minimum size=4pt] (b) {};
        	\path (a) ++(75:1) node[circle,fill=black,draw,inner sep=0pt,minimum size=4pt] (c) {};

                \path (c) ++(25:1) node[circle,fill=black,draw,inner sep=0pt,minimum size=4pt] (d) {};
                \path (d) ++(10:1) node[circle,fill=black,draw,inner sep=0pt,minimum size=4pt] (e) {};
                \path (a) ++(0:5.5) node[circle,fill=black,draw,inner sep=0pt,minimum size=4pt] (g) {};
                \path (g) ++(140:1) node[circle,fill=black,draw,inner sep=0pt,minimum size=4pt] (f) {};
                \path (g) ++(-45:1) node[circle,fill=black,draw,inner sep=0pt,minimum size=4pt] (h) {};

                \path (h) ++(70:1) node[circle,fill=black,draw,inner sep=0pt,minimum size=4pt] (i) {};

                \path (i) ++(55:1) node[circle,fill=black,draw,inner sep=0pt,minimum size=4pt] (j) {};
                \path (g) ++(0:6) node[circle,fill=black,draw,inner sep=0pt,minimum size=4pt] (m) {};
                \path (m) ++(140:1) node[circle,fill=black,draw,inner sep=0pt,minimum size=4pt] (l) {};
                \path (m) ++(-45:1) node[circle,fill=black,draw,inner sep=0pt,minimum size=4pt] (n) {};

                \path (m) ++(75:1) node[circle,fill=black,draw,inner sep=0pt,minimum size=4pt] (o) {};

                \path (o) ++(25:1) node[circle,fill=black,draw,inner sep=0pt,minimum size=4pt] (p) {};
                \path (p) ++(10:1) node[circle,fill=black,draw,inner sep=0pt,minimum size=4pt] (q) {};
                \path (m) ++(0:5.5) node[circle,fill=black,draw,inner sep=0pt,minimum size=4pt] (s) {};
                \path (s) ++(140:1) node[circle,fill=black,draw,inner sep=0pt,minimum size=4pt] (r) {};
                \path (s) ++(-45:1) node[circle,fill=black,draw,inner sep=0pt,minimum size=4pt] (t) {};

                \draw[thick,dotted] (a) -- (b);
        	
        	\draw[thick, decorate,decoration=zigzag] (e) to[out=0,in=150] (f) (q) to[out=0,in=150] (r) (j) to[out=35,in=150] (l);
        	
        	\draw[thick] (a) -- (c) to node[midway,inner sep=1pt,outer sep=1pt,minimum size=4pt,fill=white] {$\alpha_1$} (d) to node[midway,inner sep=1pt,outer sep=1pt,minimum size=4pt,fill=white] {$\beta_1$} (e) (f) to node[midway,inner sep=1pt,outer sep=1pt,minimum size=4pt,fill=white] {$\beta_1$} (g) to node[midway,inner sep=1pt,outer sep=1pt,minimum size=4pt,fill=white] {$\alpha_1$} (h) to node[midway,inner sep=1pt,outer sep=1pt,minimum size=4pt,fill=white] {$\alpha_2$} (i) to node[midway,inner sep=1pt,outer sep=1pt,minimum size=4pt,fill=white] {$\beta_2$} (j) (l) to node[midway,inner sep=1pt,outer sep=1pt,minimum size=4pt,fill=white] {$\beta_2$} (m) to node[midway,inner sep=1pt,outer sep=1pt,minimum size=4pt,fill=white] {$\alpha_2$} (n)
                (m) -- (o) to node[midway,inner sep=1pt,outer sep=1pt,minimum size=4pt,fill=white] {$\alpha_3$} (p) to node[midway,inner sep=1pt,outer sep=1pt,minimum size=4pt,fill=white] {$\beta_3$} (q) (r) to node[midway,inner sep=1pt,outer sep=1pt,minimum size=4pt,fill=white] {$\beta_3$} (s) to node[midway,inner sep=1pt,outer sep=1pt,minimum size=4pt,fill=white] {$\alpha_3$} (t)
            ;
        \end{scope}

        \begin{scope}[yshift=3cm]
            \draw[-{Stealth[length=3mm,width=2mm]},very thick,decoration = {snake,pre length=3pt,post length=7pt,},decorate] (8.5,0) -- (8.5,-1);
        \end{scope}
        	
        \end{tikzpicture}
    \caption{Shifting a $3$-step Shannon chain $C$.}
    \label{fig:multi_shannon_chains}
\end{figure}
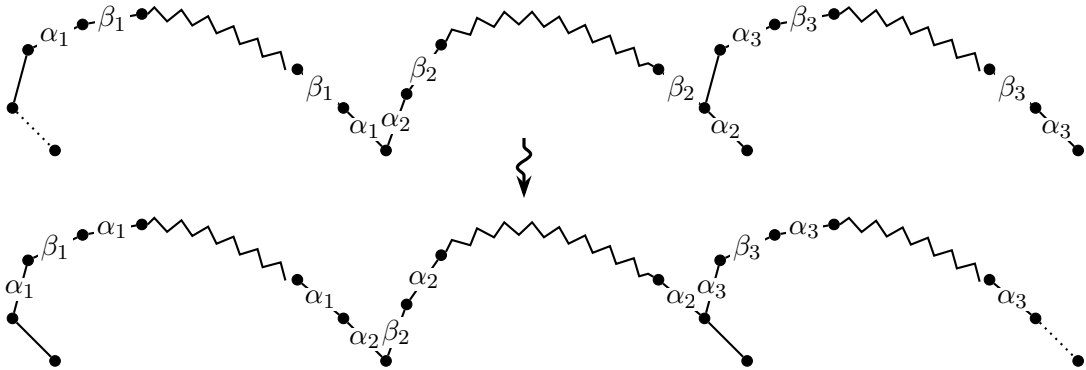

\subsection{Fan Algorithms}

We conclude this section by outlining two procedures to construct fan chains, which will be used as subroutines in our main algorithms for \hyperref[theo:shannon_bound]{Shannon's Theorem}.
For the remainder of the paper (apart from \S\ref{section:vizing}) we will be considering $\Sha$-edge-colorings.

We start with the \hyperref[alg:first_shannon_fan]{First Shannon Fan Algorithm}, which is described formally in Algorithm~\ref{alg:first_shannon_fan}. The algorithm takes as input a proper partial coloring $\phi$, an uncolored edge $e \in E(G)$, and a choice of a pivot vertex $x \in V(e)$.
The output is a tuple $(F, \alpha, \beta)$ such that:
\begin{itemize}
    \item $F$ is a fan with $\Start(F) = e$, $\Pivot(F) = x$ and $\length(F) \leq 2$, 
    \item $\alpha \in [\Sha]$ is a color in $M(\phi, x)$, and
    \item $\beta \in [\Sha]$ is a color such that either $\beta$ is missing at both $\vend(F)$ and $\Pivot(F)$, or $\beta$ is missing at both $\vend(F)$ and $\vstart(F)$.
\end{itemize}
We first check whether $e$ is $\phi$-happy.
If so, we return $((e), \beta, \beta)$, where $\beta \in M(\phi, x) \cap M(\phi, y)$ for $y \in V(e)$ distinct from $x$.
We may now assume $M(\phi, x) \cap M(\phi, y) = \0$.
Let $\eta \in M(\phi, y)$ be arbitrary and let $f \in E_G(x)$ be such that $\phi(f) = \eta$.
The fan we will return is $F = (e,f)$.
It remains to determine $\alpha$ and $\beta$.
Let $z \defeq \vend (F)$ be the vertex in $V(f)$ distinct from $x$. 
Note that $z \neq y$ as $\eta \in M(\phi, y)$.
If $M(\phi, x) \cap M(\phi, z) \neq \0$, we let $\alpha = \beta$ be an arbitrary color missing at both $x$ and $z$. 
If not, we let $\alpha \in M(\phi, x)$ be arbitrary.
Furthermore, we claim that $M(\phi, y) \cap M(\phi, z) \neq \0$ and let $\beta$ be an arbitrary color missing at both $y$ and $z$.

\begin{fact}\label{fact:sha}
    Let $\phi$ be a partial coloring and $e \in E_G(x,y)$ be an uncolored edge. 
    Let $F = (e, f)$ be a fan such that $M(\phi,x) \cap M(\phi, y) = M(\phi,x) \cap M(\phi, z) = \0$, where $z$ is the vertex in $V(f)$ distinct from $x$.
    Then, $M(\phi,y) \cap M(\phi, z) \neq \0$.
\end{fact}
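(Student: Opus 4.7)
The plan is a short pigeonhole argument on the color palette $[\Sha]$. The key observation is that since $e$ is uncolored and incident to both $x$ and $y$, the number of colors actually used by edges at $x$ (respectively $y$) is at most $\deg(x) - 1 \leq \Delta - 1$ (respectively $\deg(y) - 1 \leq \Delta - 1$). Therefore
\[
|M(\phi, x)| \;\geq\; \Sha - \Delta + 1, \qquad |M(\phi, y)| \;\geq\; \Sha - \Delta + 1.
\]
The vertex $z$ is incident to the colored edge $f$, so I only get the weaker bound $|M(\phi, z)| \geq \Sha - \Delta = \lfloor \Delta/2\rfloor$.

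Next I would use the two disjointness hypotheses in the statement. Since $M(\phi, x)$ is disjoint from both $M(\phi, y)$ and $M(\phi, z)$, both of these sets are contained in the complement $A \defeq [\Sha] \setminus M(\phi, x)$, which has size
\[
|A| \;=\; \Sha - |M(\phi, x)| \;\leq\; \Delta - 1.
\]
Applying inclusion--exclusion inside $A$ yields
\[
|M(\phi, y) \cap M(\phi, z)| \;\geq\; |M(\phi, y)| + |M(\phi, z)| - |A| \;\geq\; (\Sha - \Delta + 1) + (\Sha - \Delta) - (\Delta - 1).
\]
Simplifying the right-hand side gives $2\lfloor 3\Delta/2 \rfloor - 3\Delta + 2$, which equals $2$ if $\Delta$ is even and $1$ if $\Delta$ is odd. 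In either case this is strictly positive, so $M(\phi, y) \cap M(\phi, z) \neq \0$, as required.

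I do not anticipate any real obstacle: this is exactly the pigeonhole argument that underlies the $\Sha$ upper bound in Shannon's theorem. The only point that merits attention is correctly tracking the ``$+1$'' improvement for $x$ and $y$ coming from the fact that $e$ is uncolored, versus the absence of such an improvement at $z$, since $f$ is colored; this asymmetry is what makes the inequality tight yet still sufficient.
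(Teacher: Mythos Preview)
Your proposal is correct and is essentially the same pigeonhole argument as the paper's. The paper phrases it as a proof by contradiction (assume all three missing sets are pairwise disjoint and conclude their union exceeds $\Sha$), while you phrase it as a direct inclusion--exclusion bound inside the complement of $M(\phi,x)$; the arithmetic is identical.
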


\begin{proof}
    Note by the definition of a chain $|V(e) \cap V(f)| = 1$ and so we must have $y \neq z$.
    As $\phi(e) = \blank$, we have
    \[|M(\phi, x)| \geq \Sha - (\Delta - 1) = \lfloor\Delta/2\rfloor + 1.\]
    The same holds for $|M(\phi, y)|$.
    Suppose all three sets are disjoint.
    Then, we have
    \[|M(\phi, x) \cup M(\phi, y) \cup M(\phi, z)| = |M(\phi, x)| + |M(\phi, y)| + |M(\phi, z)| \geq 3\lfloor\Delta/2\rfloor + 2 > \Sha,\]
    a contradiction.
    So $M(\phi,y) \cap M(\phi, z)$ must be nonempty, as desired.
\end{proof}

\begin{algorithm}[h]\small
\caption{First Shannon Fan}\label{alg:first_shannon_fan}
\begin{flushleft}
\textbf{Input}: A proper partial $\Sha$-edge-coloring $\phi$, an uncolored edge $e \in E(G)$, and a vertex $x \in V(e)$. \\
\textbf{Output}: A fan $F$ with $\Start(F) = e$ and $\Pivot(F) = x$, and a pair of colors $\alpha, \beta \in [\Sha]$ such that $\alpha \in M(\phi, x)$, and either $\beta \in M(\phi, \vend(F))\cap M(\phi, \Pivot(F))$ or $\beta \in M(\phi, \vend(F))\cap M(\phi, \vstart(F))$.
\end{flushleft}
\begin{algorithmic}[1]
    \State Let $y \in V(e)$ be distinct from $x$.
    \If{$M(\phi, x) \cap M(\phi, y) \neq \0$}
        \State $\beta \gets \min M(\phi, x) \cap M(\phi, y)$
        \State \Return $((e), \beta, \beta)$ \label{step:first_happy_edge}
    \EndIf
    \State $\eta \gets \min M(\phi, y)$
    \State Let $f \in E_G(x)$ be such that $\phi(f) = \eta$ and let $z \in V(f)$ be distinct from $x$.
    \If{$M(\phi, x) \cap M(\phi, z) \neq \0$}
        \State $\beta \gets \min M(\phi, x) \cap M(\phi, z)$
        \State \Return $((e, f), \beta, \beta)$ \label{step:first_happy_fan_length_2}
    \EndIf
    \State $\beta \gets \min M(\phi, y) \cap M(\phi, z)$
    \State $\alpha \gets \min M(\phi, x)$
    \State \Return $((e, f), \alpha, \beta)$ \label{step:first_vstart_vend}
\end{algorithmic}
\end{algorithm}

The following lemma proves certain properties of the output of Algorithm~\ref{alg:first_shannon_fan}.

\begin{lemma}\label{lemma:first_fan_shannon}
    Let $(F, \alpha, \beta)$ be the output of Algorithm~\ref{alg:first_shannon_fan} on input $(\phi, e, x)$. 
    Then no edge in $F$ is colored $\alpha$ or $\beta$ and we have
    \begin{itemize}
        \item either $\beta \in M(\phi, x)$ and $F$ is $\phi$-happy, or
        \item $\length(F) = 2$ and $F$ is $(\phi, \alpha\beta)$-successful, or
        \item $e$ is $(\phi, \alpha\beta)$-successful.
    \end{itemize}
\end{lemma}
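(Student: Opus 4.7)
The plan is to argue by case analysis on which of the three return statements in Algorithm~\ref{alg:first_shannon_fan} is executed. The first two cases are easy and each satisfies the first bullet. For the return at step~\ref{step:first_happy_edge}, the fan $F = (e)$ contains only the uncolored edge $e$, so no edge of $F$ carries any color; since $\beta \in M(\phi,x) \cap M(\phi,y)$, the edge $e$ is $\phi$-happy. For the return at step~\ref{step:first_happy_fan_length_2}, write $F = (e,f)$ with $\phi(f) = \eta \in M(\phi, y)$ and $\alpha = \beta \in M(\phi,x) \cap M(\phi,z)$. The disjointness $M(\phi,x) \cap M(\phi,y) = \0$ forces $\beta \neq \eta$, so no edge of $F$ is colored $\beta$. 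Shifting $\eta$ from $f$ to $e$ is legal because $\eta \in M(\phi, y)$, so $F$ is $\phi$-shiftable; and since $\beta$ survives in both $M(\Shift(\phi,F),x)$ and $M(\Shift(\phi,F),z)$, the edge $\End(F) = f$ is $\Shift(\phi,F)$-happy. Thus $F$ is $\phi$-happy.

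The interesting case is the return at step~\ref{step:first_vstart_vend}, where $F = (e,f)$, $\phi(f) = \eta$, $\alpha \in M(\phi, x)$, and $\beta \in M(\phi, y) \cap M(\phi, z)$. I would first use the branch conditions $M(\phi,x) \cap M(\phi,y) = M(\phi,x) \cap M(\phi, z) = \0$ together with $\eta \in M(\phi,y)$ and $\eta \notin M(\phi,x)$ to conclude that $\alpha$, $\beta$, $\eta$ are pairwise distinct, so no edge of $F$ is colored $\alpha$ or $\beta$. The key observation I would then exploit is that the only edges whose color changes in passing from $\phi$ to $\Shift(\phi, F)$ are $e$ and $f$, neither of which is colored in $\set{\alpha,\beta}$ under either coloring; consequently $G(\phi, \alpha\beta) = G(\Shift(\phi, F), \alpha\beta)$, so being $(\phi,\alpha\beta)$-related and being $(\Shift(\phi,F),\alpha\beta)$-related coincide.

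Next I would verify the ``hopeful'' conditions. From $\alpha \in M(\phi,x)$ and $\beta \in M(\phi,y) \cap M(\phi,z)$ one obtains $\deg(x;\phi,\alpha\beta), \deg(y;\phi,\alpha\beta), \deg(z;\phi,\alpha\beta) \leq 1$, so $e$ is $(\phi,\alpha\beta)$-hopeful; that $F$ is $\phi$-shiftable (already checked) and not $\phi$-happy (since $M(\phi, x) \cap M(\phi, z) = \0$) yields that $F$ is a $(\phi, \alpha\beta)$-hopeful fan. To close the case, suppose $e$ fails to be $(\phi,\alpha\beta)$-successful, so $x$ and $y$ are $(\phi,\alpha\beta)$-related. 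Since both have degree at most $1$ in $G(\phi,\alpha\beta)$, they must be the two endpoints of a common $\alpha\beta$-path $P$. Now $z$ is distinct from $x$ and from $y$ (the latter because $\phi(f) = \eta \in M(\phi,y)$ while $f$ is incident to $z$), and $z$ cannot be an internal vertex of $P$ because $\deg(z;\phi,\alpha\beta) \leq 1 < 2$; hence $z$ lies in a different component of $G(\phi, \alpha\beta)$. Thus $x$ and $z$ are not $(\phi,\alpha\beta)$-related, equivalently (by the identification above) not $(\Shift(\phi,F),\alpha\beta)$-related, and $F$ is $(\phi,\alpha\beta)$-successful.

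The main obstacle I anticipate is the last step of the third case: coordinating the two notions of ``related'' and pinning down precisely where $z$ can sit in the $\alpha\beta$-subgraph. The observation $G(\phi,\alpha\beta) = G(\Shift(\phi,F),\alpha\beta)$ is what reduces everything to a single graph, after which the remainder is a routine degree-plus-endpoint argument.
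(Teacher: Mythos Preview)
Your proof is correct and follows essentially the same case analysis as the paper. The only difference is in the final step of the third case: the paper assumes $F$ is $(\phi,\alpha\beta)$-disappointed and deduces that $e$ is $(\phi,\alpha\beta)$-successful, whereas you assume $e$ is not $(\phi,\alpha\beta)$-successful and deduce that $F$ is $(\phi,\alpha\beta)$-successful. Your observation that $G(\phi,\alpha\beta) = G(\Shift(\phi,F),\alpha\beta)$ (since no edge of $F$ is colored $\alpha$ or $\beta$) is a clean way to identify the two notions of relatedness; the paper instead traces the specific $\alpha\beta$-path chains $P(f;\psi,\alpha\beta)$ and $P(e;\phi,\beta\alpha)$ to reach the same conclusion.
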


\begin{figure}[t]
    \centering
    \begin{subfigure}[t]{0.45\textwidth}
        \centering
    	\begin{tikzpicture}
    	    \node[circle,fill=black,draw,inner sep=0pt,minimum size=4pt] (a) at (0,0) {};
    		\node[circle,fill=black,draw,inner sep=0pt,minimum size=4pt] (b) at (1,0) {};
    		\draw[thick, dotted] (a) -- (b);  

                \node[anchor=north] at (b) {$x$};
                \node[anchor=north] at (a) {$y$};
    		
    	\end{tikzpicture}
    	\caption{$F = (e)$ is $\phi$-happy (step~\ref{step:first_happy_edge}).}
    \end{subfigure}
    \begin{subfigure}[t]{0.45\textwidth}
        \centering
    	\begin{tikzpicture}
    	    \node[circle,fill=black,draw,inner sep=0pt,minimum size=4pt] (a) at (0,0) {};
    		\node[circle,fill=black,draw,inner sep=0pt,minimum size=4pt] (b) at (1,0) {};
                \path (b) ++(75:1) node[circle,fill=black,draw,inner sep=0pt,minimum size=4pt] (c) {};
    		\draw[thick, dotted] (a) -- (b);
                \draw[thick] (b) -- (c);

                \node[anchor=north] at (b) {$x$};
                \node[anchor=north] at (a) {$y$};
                \node[anchor=south] at (c) {$z$};
    		
    	\end{tikzpicture}
    	\caption{$F = (e,f)$ is $\phi$-happy (step~\ref{step:first_happy_fan_length_2}).}
    \end{subfigure}
    
    \begin{subfigure}[t]{0.45\textwidth}
        \centering
    	\begin{tikzpicture}
                \node[circle,fill=black,draw,inner sep=0pt,minimum size=4pt] (a) at (0,0) {};
    		\node[circle,fill=black,draw,inner sep=0pt,minimum size=4pt] (b) at (1,0) {};
                \path (b) ++(75:1) node[circle,fill=black,draw,inner sep=0pt,minimum size=4pt] (c) {};
                \path (c) ++(30:1) node[circle,fill=black,draw,inner sep=0pt,minimum size=4pt] (d) {};
                \path (d) ++(0:1) node[circle,fill=black,draw,inner sep=0pt,minimum size=4pt] (e) {};
                \path (e) ++(0:2) node[circle,fill=black,draw,inner sep=0pt,minimum size=4pt] (f) {};
                \path (f) ++(0:1) node[circle,fill=black,draw,inner sep=0pt,minimum size=4pt] (g) {};
    		\draw[thick, dotted] (a) -- (b);
                \draw[thick, decorate,decoration=zigzag] (e) -- (f);
                \draw[thick] (b) -- (c) to node[font=\fontsize{8}{8},midway,inner sep=1pt,outer sep=1pt,minimum size=4pt,fill=white] {$\alpha$} (d) to node[font=\fontsize{8}{8},midway,inner sep=1pt,outer sep=1pt,minimum size=4pt,fill=white] {$\beta$} (e) (f) to node[font=\fontsize{8}{8},midway,inner sep=1pt,outer sep=1pt,minimum size=4pt,fill=white] {$\beta$} (g);

                \node[anchor=north] at (b) {$x$};
                \node[anchor=north] at (a) {$y$};
                \node[anchor=south] at (c) {$z$};
    		
    	\end{tikzpicture}
    	\caption{$F = (e,f)$ is $(\phi, \alpha\beta)$-successful (step~\ref{step:first_vstart_vend}).}
    \end{subfigure}
    \begin{subfigure}[t]{0.45\textwidth}
        \centering
    	\begin{tikzpicture}
                \clip (-1, 2) rectangle (4.5, -1.5);
                \node[circle,fill=black,draw,inner sep=0pt,minimum size=4pt] (a) at (0,0) {};
    		\node[circle,fill=black,draw,inner sep=0pt,minimum size=4pt] (b) at (1,0) {};
                \path (b) ++(75:1) node[circle,fill=black,draw,inner sep=0pt,minimum size=4pt] (c) {};
                \path (c) ++(30:1) node[circle,fill=black,draw,inner sep=0pt,minimum size=4pt] (d) {};
                \path (d) ++(0:1) node[circle,fill=black,draw,inner sep=0pt,minimum size=4pt] (e) {};
                \path (b) ++(-30:1) node[circle,fill=black,draw,inner sep=0pt,minimum size=4pt] (f) {};

                \path (a) ++(-135:1) node[circle,fill=black,draw,inner sep=0pt,minimum size=4pt] (g) {};
                \path (g) ++(-45:1) node[circle,fill=black,draw,inner sep=0pt,minimum size=4pt] (h) {};
                \path (h) ++(0:4) node[circle,fill=black,draw,inner sep=0pt,minimum size=4pt] (i) {};

                \draw[thick, dotted] (a) -- (b);
                \draw[thick, decorate,decoration=zigzag] (e) to[out=0, in=-15, looseness=2] (f) (h) -- (i);
                \draw[thick] (b) -- (c) to node[font=\fontsize{8}{8},midway,inner sep=1pt,outer sep=1pt,minimum size=4pt,fill=white] {$\alpha$} (d) to node[font=\fontsize{8}{8},midway,inner sep=1pt,outer sep=1pt,minimum size=4pt,fill=white] {$\beta$} (e) (b) to node[font=\fontsize{8}{8},midway,inner sep=1pt,outer sep=1pt,minimum size=4pt,fill=white] {$\beta$} (f) (a) to node[font=\fontsize{8}{8},midway,inner sep=1pt,outer sep=1pt,minimum size=4pt,fill=white] {$\alpha$} (g) to node[font=\fontsize{8}{8},midway,inner sep=1pt,outer sep=1pt,minimum size=4pt,fill=white] {$\beta$} (h);    

                \node[anchor=north] at (b) {$x$};
                \node[anchor=north] at (a) {$y$};
                \node[anchor=south] at (c) {$z$};
    		
    	\end{tikzpicture}
    	\caption{$F = (e, f)$ is $(\phi, \alpha\beta)$-disappointed and $e$ is $(\phi, \alpha\beta)$-successful (step~\ref{step:first_vstart_vend}).}
    \end{subfigure}
    \caption{Possible outputs $F$ of Algorithm~\ref{alg:first_shannon_fan}.}
    \label{fig:first_fan_cases}
\end{figure}
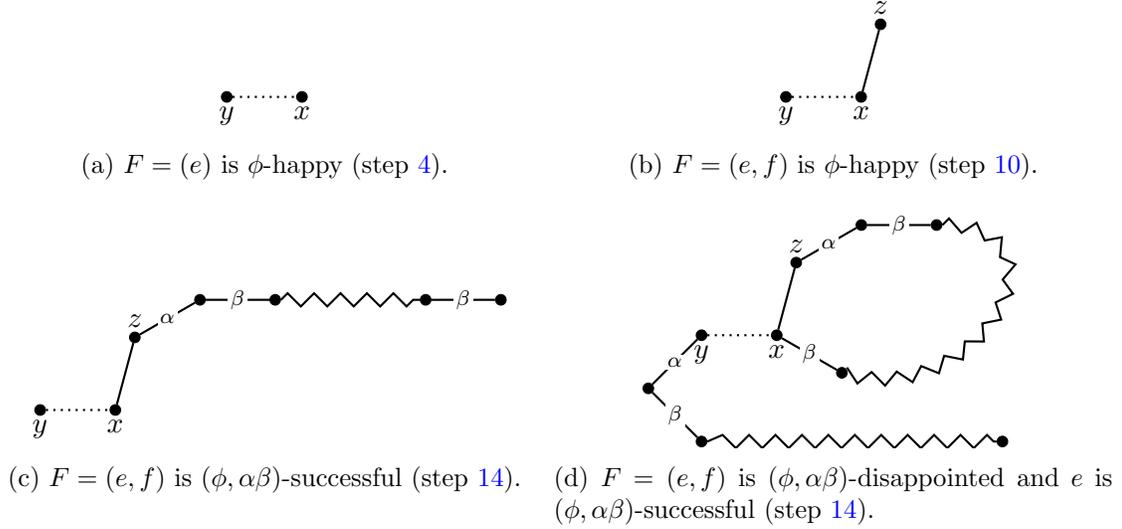

\begin{proof}
    Fig.~\ref{fig:first_fan_cases} describes each possible output.
    Clearly, no edge in $F$ is colored $\alpha$ or $\beta$.
    If $\beta \in M(\phi, x)$, then we are either at step~\ref{step:first_happy_edge} or~\ref{step:first_happy_fan_length_2}. 
    In either case, the fan $F$ is $\phi$-happy.
    Let $F = (e, f)$ and $y, z \in V(G)$ such that $e \in E_G(x,y)$ and $f \in E_G(x,z)$.

    Now, we must have $\beta \in M(\phi, y) \cap M(\phi, z)$ and $\alpha \in M(\phi, x)$.
    As we reach step~\ref{step:first_vstart_vend}, $\alpha \notin M(\phi, y)$ and $ \alpha \notin M(\phi, z)$.
    In particular, we have $\deg(v; \phi, \alpha\beta) < 2$ for $v \in \set{x,y,z}$.
    According to Definition~\ref{defn:hsf}, $F$ is a $(\phi, \alpha\beta)$-hopeful fan of length $2$.
    Suppose it is not successful, i.e., $x, z$ are $(\Shift(\phi, F), \alpha\beta)$-related.
    It remains to show that $x,y$ are not $(\phi, \alpha\beta)$-related.
    As $\phi(f) \notin \set{\alpha, \beta}$, $P(f;\Shift(\phi, F), \alpha\beta) = P(f;\phi, \alpha\beta)$.
    Let $P \defeq P(f;\phi, \alpha\beta),\, P' \defeq P(e; \phi, \beta\alpha)$ and define $Q,\,Q'$ by removing the first edges on $P,\,P'$ respectively.
    Note that $Q,\,Q'$ are path components of $G(\phi, \alpha\beta)$.
    We claim that $Q' = Q^*$.
    As $\vend(Q) = x$ and $\alpha \in M(\phi, x)$, we must have $\phi(\End(Q)) = \beta$.
    Similarly, by definition of $P'$, we must have $\phi(\Start(Q')) = \beta$.
    The claim now follows as $\phi$ is proper and $\beta \in M(\phi, y)$.
    In particular, we have shown that $\vend(Q') = z \neq y$.
    As $\beta \in M(\phi, y)$, if $x$ and $y$ are $(\phi, \alpha\beta)$-related, then $y = \vend(Q') = z$, a contradiction.
\end{proof}

By storing $M(\phi, z)$ as a hash map with key set $V$, we can retrieve $M(\phi, z)$ in amortized constant time.
Each \textsf{if} statement takes $O(\Delta)$ time and computing $\eta$ and $f$ also takes $O(\Delta)$ time.
It follows that Algorithm~\ref{alg:first_shannon_fan} runs in $O(\Delta)$ time.

The next algorithm we describe is the \hyperref[alg:next_shannon_fan]{Next Shannon Fan Algorithm}, Algorithm~\ref{alg:next_shannon_fan}. It will be used to construct the next fan on every iteration of our procedure for building multi-step Shannon chains;
see \S\ref{section:MSSA} for details. 
This algorithm is inspired by the paper \cite{GP} by Greb\'ik and Pikhurko. 
Algorithm~\ref{alg:next_shannon_fan} takes as input a proper partial edge-coloring $\phi$, an uncolored edge $e \in E(G)$, a vertex $x \in V(e)$, and a pair of colors $\alpha, \beta \in [\Sha]$ such that for $y \in V(e)$ distinct from $x$:
\begin{itemize}
    \item $\alpha \in M(\phi, x) \setminus M(\phi, y)$, and
    \item $\beta \in M(\phi, y)$.
\end{itemize}
Note that we must have $M(\phi, x) \setminus M(\phi, y) \neq \0$ .
The output is a tuple $(F, \gamma, \delta)$ such that:
\begin{itemize}
    \item $F$ is a fan with $\Start(F) = e$, $\Pivot(F) = x$ and $\length(F) \leq 2$,
    \item $\gamma \in [\Sha]$ is a color missing at $x$, and
    \item $\delta \in [\Sha]$ is a color such that either $\delta$ is missing at both $\vend(F)$ and $\Pivot(F)$, or $\delta$ is missing at both $\vend(F)$ and $\vstart(F)$.
\end{itemize}
The algorithm is similar to Algorithm~\ref{alg:first_shannon_fan} with a few differences including the restriction on the choice of $\eta$, the added case on the value of $\delta$, and the restriction on the choice of $\gamma$.

\begin{algorithm}[h]\small
\caption{Next Shannon Fan}\label{alg:next_shannon_fan}
\begin{flushleft}
\textbf{Input}: A proper partial $\Sha$-edge-coloring $\phi$, an uncolored edge $e \in E(G)$, a vertex $x \in V(e)$, and a pair of colors $\alpha, \beta \in [\Sha]$ such that $\alpha \in M(\phi, x) \setminus M(\phi, y)$ and $\beta \in M(\phi, y)$ where $y \in V(e)$ is distinct from $x$. \\
\textbf{Output}: A fan $F$ with $\Start(F) = e$ and $\Pivot(F) = x$, and a pair of colors $\gamma, \delta \in [\Sha]$ such that $\gamma \in M(\phi, x)$, and either $\delta \in M(\phi, \vend(F))\cap M(\phi, \Pivot(F))$ or $\delta \in M(\phi, \vend(F))\cap M(\phi, \vstart(F))$.
\end{flushleft}
\begin{algorithmic}[1]
    \State Let $y \in V(e)$ be distinct from $x$.
    \If{$M(\phi, x) \cap M(\phi, y) \neq \0$}
        \State $\delta \gets \min M(\phi, x) \cap M(\phi, y)$
        \State \Return $((e), \delta, \delta)$ \label{step:next_happy_edge}
    \EndIf
    \State $\eta \gets \min M(\phi, y)\setminus \set{\beta}$
    \State Let $f \in E_G(x)$ be such that $\phi(f) = \eta$ and let $z \in V(f)$ be distinct from $x$.
    \If{$M(\phi, x) \cap M(\phi, z) \neq \0$}
        \State $\delta \gets \min M(\phi, x) \cap M(\phi, z)$
        \State \Return $((e, f), \delta, \delta)$ \label{step:next_happy_fan_length_2}
    \EndIf
    \State $\delta \gets \min M(\phi, y) \cap M(\phi, z)$
    \If{$\delta = \beta$}
        \State \Return $((e, f), \alpha, \beta)$
    \EndIf
    \State $\gamma \gets \min M(\phi, x) \setminus\set{\alpha}$ \label{step:gamma_neq_alpha}
    \State \Return $((e, f), \gamma, \delta)$
\end{algorithmic}
\end{algorithm}

Let us now prove certain properties about the output of Algorithm~\ref{alg:next_shannon_fan}.

\begin{lemma}\label{lemma:next_shannon_fan}
    Let $(F, \gamma, \delta)$ be the output of Algorithm~\ref{alg:next_shannon_fan} on input $(\phi, e, x, \alpha, \beta)$. 
    Then, no edge in $F$ is colored $\alpha,\,\beta,\,\gamma$ or $\delta$, and one of the following holds:
    \begin{itemize}
        \item either $\delta \in M(\phi, x)$ and $F$ is $\phi$-happy, or
        \item $\length(F) = 2$, $\delta = \beta$ and $F$ is $(\phi, \alpha\beta)$-hopeful, or
        \item $\length(F) = 2$, $\delta \neq \beta$ and $F$ is $(\phi, \gamma\delta)$-successful, or 
        \item $\delta \neq \beta$ and $e$ is $(\phi, \gamma\delta)$-successful.
    \end{itemize}
\end{lemma}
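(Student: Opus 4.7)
The plan is to argue by case analysis on the four exit points of Algorithm~\ref{alg:next_shannon_fan}, mirroring the structure of Lemma~\ref{lemma:first_fan_shannon}. Throughout I let $y \in V(e)\setminus\{x\}$ and, whenever the algorithm reaches the $\eta$-selection step, I let $f \in E_G(x)$ with $\phi(f) = \eta$ and $z \in V(f)\setminus\{x\}$; note $y \neq z$, since $\phi(f) = \eta \in M(\phi, y)$ forbids $f \in E_G(x,y)$. I will first dispose of the color-exclusion claim $\phi(f) = \eta \notin \{\alpha, \beta, \gamma, \delta\}$ uniformly: $\eta \neq \beta$ by the choice $\eta \in M(\phi, y)\setminus\{\beta\}$; $\eta \neq \alpha$ since $\alpha \notin M(\phi, y)$ while $\eta \in M(\phi, y)$; and $\eta \notin M(\phi, x) \cup M(\phi, z)$, since $f$ is an $\eta$-colored edge at both $x$ and $z$, which precludes $\eta$ from matching $\gamma \in M(\phi, x)$ or $\delta \in M(\phi, y) \cap M(\phi, z) \subseteq M(\phi, z)$ whenever these are defined.

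For the two happy exits (steps~\ref{step:next_happy_edge} and \ref{step:next_happy_fan_length_2}), the conclusion that $\delta \in M(\phi, x)$ and $F$ is $\phi$-happy is immediate from the construction: in the former, $\delta \in M(\phi, x) \cap M(\phi, y)$ so $F = (e)$ is happy; in the latter, $\delta \in M(\phi, x) \cap M(\phi, z) \subseteq M(\Shift(\phi, F), x) \cap M(\Shift(\phi, F), z)$ makes $\End(F) = f$ happy after shifting. For the $\delta = \beta$ branch, which returns $(F, \gamma, \delta) = ((e, f), \alpha, \beta)$, I need to verify $(\phi, \alpha\beta)$-hopefulness of $F$: shiftability is immediate (after shifting, $e$ receives $\eta \in M(\phi, y)$ and $f$ vacates $x$); the non-happiness of $\End(F) = f$ at $\Shift(\phi, F)$ follows from $M(\Shift(\phi, F), x) = M(\phi, x)$ and $M(\Shift(\phi, F), z) = M(\phi, z) \cup \{\eta\}$, combined with the branch hypothesis $M(\phi, x) \cap M(\phi, z) = \0$ and $\eta \notin M(\phi, x)$; and the degree bounds $\deg(x; \phi, \alpha\beta),\, \deg(z; \phi, \alpha\beta) \leq 1$ follow from $\alpha \in M(\phi, x)$ and $\delta = \beta \in M(\phi, z)$.

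The main obstacle is the final branch, $\delta \neq \beta$. The $(\phi, \gamma\delta)$-hopefulness of $F$ proceeds exactly as in the previous paragraph with $(\gamma, \delta)$ in place of $(\alpha, \beta)$, so the third bullet holds whenever $F$ is $(\phi, \gamma\delta)$-successful. The nontrivial task is to show that if $F$ is instead $(\phi, \gamma\delta)$-disappointed, then $e$ is $(\phi, \gamma\delta)$-successful. The key identification is
\[G(\phi, \gamma\delta) \;=\; G(\Shift(\phi, F), \gamma\delta),\]
which holds because the shift only rewrites colors on $e$ and $f$ between $\blank$ and $\eta$, and $\eta \notin \{\gamma, \delta\}$ by the color-exclusion claim. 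Hence the assumed $(\Shift(\phi, F), \gamma\delta)$-relation between $x$ and $z$ transfers to a $(\phi, \gamma\delta)$-relation. Combined with $\deg(x; \phi, \gamma\delta),\, \deg(z; \phi, \gamma\delta) \leq 1$ (from $\gamma \in M(\phi, x)$ and $\delta \in M(\phi, z)$), the common component $G(x; \phi, \gamma\delta) = G(z; \phi, \gamma\delta)$ must be a path with endpoints $x$ and $z$. Since $\delta \in M(\phi, y)$ also gives $\deg(y; \phi, \gamma\delta) \leq 1$ and $y \notin \{x, z\}$, the vertex $y$ cannot lie on this $xz$-path, so $x, y$ are not $(\phi, \gamma\delta)$-related, yielding $(\phi, \gamma\delta)$-successfulness of $e$ via Definition~\ref{defn:hse}.
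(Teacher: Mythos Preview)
Your proof is correct and follows essentially the same approach as the paper's. The paper's proof is terser—it defers the happy exits and the $\delta \neq \beta$ case to the argument already given for Lemma~\ref{lemma:first_fan_shannon}—while you spell out the details directly, including the explicit verification of non-happiness (required by Definition~\ref{defn:hsf}) and the clean invariance $G(\phi,\gamma\delta) = G(\Shift(\phi,F),\gamma\delta)$, which is the same mechanism underlying the paper's path-chain computation in Lemma~\ref{lemma:first_fan_shannon}.
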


\begin{proof}
    We note that as we assume $\Delta \geq 2$ and since $\phi(e) = \blank$, we have $|M(\phi, x)|, |M(\phi, y)| \geq 2$.
    In particular, $\eta$ and $\gamma$ are well defined.
    As in the proof of Lemma~\ref{lemma:first_fan_shannon}, if we reach step~\ref{step:next_happy_edge} or~\ref{step:next_happy_fan_length_2}, then $F$ is $\phi$-happy.
    Let $F = (e, f)$ and $y, z \in V(G)$ such that $e \in E_G(x,y)$ and $f \in E_G(x,z)$.
    
    By our restriction on the value of $\eta$, $\phi(f) \neq \beta$ and so no edge in $F$ is colored $\alpha$ or $\beta$.
    Furthermore, it is clear that no edge in $F$ is colored $\gamma$ or $\delta$.
    We now consider two cases:
    \begin{enumerate}[label=\ep{\normalfont{}\textbf{Case\arabic*}},labelindent=0pt,leftmargin=*]
        \item $\delta = \beta$. As $\alpha \in M(\phi, x)$ and $\delta \in M(\phi, z)$, it follows that $F$ is a $(\phi, \alpha\beta)$-hopeful fan of length $2$.

        \item $\delta \neq \beta$. Following an identical argument to that in Lemma~\ref{lemma:first_fan_shannon}, we can show that either $F$ or $e$ is $(\phi, \gamma\delta)$-successful.
    \end{enumerate}
    This covers all cases and completes the proof.
\end{proof}

As for Algorithm~\ref{alg:first_shannon_fan}, one can easily bound the runtime of Algorithm~\ref{alg:next_shannon_fan} by $O(\Delta)$.
As an illustration, we can now use Lemma~\ref{lemma:first_fan_shannon} to prove \hyperref[theo:shannon_bound]{Shannon's Theorem}.

\begin{corollary}[to Lemma~\ref{lemma:first_fan_shannon}]\label{corl:shannon}
    For every proper partial $\Sha$-edge-coloring $\phi$ and every uncolored edge $e$, there is a connected $e$-augmenting subgraph $H \subseteq G$.
\end{corollary}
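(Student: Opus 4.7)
The plan is to construct an explicit $\phi$-happy chain starting at $e$ by invoking Algorithm~\ref{alg:first_shannon_fan}; the connected subgraph $H$ will then be the one induced by the edges of this chain. Fix an arbitrary endpoint $x \in V(e)$ and run Algorithm~\ref{alg:first_shannon_fan} on input $(\phi,e,x)$, obtaining a triple $(F,\alpha,\beta)$. By Lemma~\ref{lemma:first_fan_shannon}, exactly one of the following three situations arises, and I will handle each in turn.

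In the first case, $\beta \in M(\phi,x)$ and $F$ itself is $\phi$-happy; then the single chain $F$ (with $\Start(F) = e$) already witnesses the corollary, since a $\phi$-happy chain beginning at $e$ is by definition $e$-augmenting, and its edge set induces a connected subgraph. In the second case, $\length(F) = 2$ and $F$ is $(\phi,\alpha\beta)$-successful; here I will form the Shannon chain $C \defeq F + P(\End(F);\,\psi,\alpha\beta)$, where $\psi \defeq \Shift(\phi,F)$, and verify that $C$ is $\phi$-happy. In the third case, $e$ is $(\phi,\alpha\beta)$-successful; here the shorter Shannon chain $C \defeq P(e;\phi,\alpha\beta)$ will serve, using the fact recalled in \S\ref{subsec:pathchains} that a successful edge yields a $\phi$-happy $\alpha\beta$-path.

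The main work lies in the second case, where I need to justify that $\End(F)$ remains $(\psi,\alpha\beta)$-successful after shifting the two-edge fan. This is precisely where Fact~\ref{fact:fan} enters: writing $F = (e,f)$ with $e \in E_G(x,y)$ and $f \in E_G(x,z)$, Algorithm~\ref{alg:first_shannon_fan} guarantees $z \neq y$, so the index set $I = \{\,i : e_i \in E_G(x,y)\,\}$ equals $\{0\}$, and the hypothesis $\phi(e_{i+1}) \notin \{\alpha,\beta\}$ of the fact is satisfied because Lemma~\ref{lemma:first_fan_shannon} asserts that no edge of $F$ carries color $\alpha$ or $\beta$. Therefore Fact~\ref{fact:fan} applies and $\End(F)$ is $(\psi,\alpha\beta)$-successful. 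The path chain statement from \cite{VizingChain} (cited after Definition~\ref{defn:hse}, whose proof transfers verbatim to the multigraph setting with $r = \Sha$) then implies that $P(\End(F);\psi,\alpha\beta)$ is $\psi$-happy, so $C = F + P(\End(F);\psi,\alpha\beta)$ is $\phi$-shiftable and $\End(C)$ is $\aug$-able under $\Shift(\phi,C)$, i.e., $C$ is $\phi$-happy.

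Finally, in each of the three cases the constructed chain $C$ has $\Start(C) = e$ and is $\phi$-happy, so letting $H$ be the spanning subgraph of $G$ with edge set $E(C)$ yields a connected $e$-augmenting subgraph in the sense of Definition~\ref{defn:aug}, with augmentation $\aug(\phi,C)$. The only subtlety beyond routine bookkeeping is confirming the hypothesis of Fact~\ref{fact:fan} in case two; every other step follows directly from the lemmas and definitions already established.
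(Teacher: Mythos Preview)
Your proof is correct and follows essentially the same three-case approach as the paper: run Algorithm~\ref{alg:first_shannon_fan}, apply Lemma~\ref{lemma:first_fan_shannon}, and in each case exhibit the corresponding $\phi$-happy chain ($F$, or $F + P(\End(F);\Shift(\phi,F),\alpha\beta)$, or $P(e;\phi,\alpha\beta)$). The paper's proof is terser and does not spell out the Fact~\ref{fact:fan} justification in the second case, but your added detail is fine; one small slip is that in Fact~\ref{fact:fan} the set $I$ is defined relative to $\vend(F) = z$, not $y$, so here $I = \emptyset$ (not $\{0\}$), making the hypothesis vacuously true rather than requiring the check you wrote---either way the conclusion stands.
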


\begin{proof}
    Let $x, y \in V(G)$ such that $e \in E_G(x,y)$.
    It is enough to find a $\phi$-happy chain $C$ with $\Start(C) = e$ and set $H = G[C]$.
    To this end, let $(F, \alpha, \beta)$ be the output of Algorithm~\ref{alg:first_shannon_fan} on input $(\phi, e, x)$. 
    By Lemma~\ref{lemma:first_fan_shannon}, we have:
    \begin{itemize}
        \item either $\beta \in M(\phi, x)$ and $F$ is $\phi$-happy, or
        \item $\length(F) = 2$ and $F$ is $(\phi, \alpha\beta)$-successful, or
        \item $e$ is $(\phi, \alpha\beta)$-successful.
    \end{itemize}
    In the first case, we let $C = F$, in the second we let $C = F + P(\End(F); \Shift(\phi, F), \alpha\beta)$ and in the last case, we let $C = P(e; \phi, \alpha\beta)$.
\end{proof}

\section{Proof of Theorem~\ref{theo:shannon_deterministic}}\label{section:shannon_det}

In this section, we will prove Theorem~\ref{theo:shannon_deterministic}.
We will describe an algorithm that takes as input a multigraph $G$ of maximum degree $\Delta$ and returns a proper $\Sha$-edge-coloring of $G$.
As mentioned in \S\ref{subsection:overview}, we will describe a procedure to construct and augment Shannon chains for a large subset of the uncolored edges.
Our algorithm will iteratively apply this procedure until all edges in $G$ are colored.
This subset of chains will satisfy a special property we will now define.

\begin{definition}[Disjoint Shannon Chains]\label{defn:disjoint}
    Let $C_1 = F_1+P_1$ and $C_2 = F_2+P_2$ be distinct Shannon chains.
    We say $C_1$ and $C_2$ are \emphd{disjoint} if:
    \begin{enumerate}[label = \ep{\textbf{Disjoint\arabic*}}, wide]
        \item\label{disjoint:edge} $E(C_1) \cap E(C_2) = \0$, and
        \item\label{disjoint:vertex} $(V(F_1) \cup \set{\vend(P_1)}) \cap (V(F_2) \cup \set{\vend(P_2)}) = \0$.
    \end{enumerate}
\end{definition}

Before we describe how we take advantage of disjoint chains, we provide an overview of the \hyperref[alg:shannon_chain]{Shannon Chain Algorithm}, which is formally presented as Algorithm~\ref{alg:shannon_chain}.
It takes as input a proper partial edge-coloring $\phi$, an uncolored edge $e \in E(G)$ and a pivot vertex $x\in V(e)$, and outputs a fan $F$ and a path $P$ such that $\Start(F) = e$, $\Pivot(F) = x$ and $F+P$ is a $\phi$-happy Shannon chain.
We start by applying the \hyperref[alg:first_shannon_fan]{First Shannon Fan Algorithm} (Algorithm~\ref{alg:first_shannon_fan}) as a subprocedure. Let $(F, \alpha, \beta)$ be its output.
If $\beta \in M(\phi, x)$, we return the $\phi$-happy fan $F$.
If not, by Lemma~\ref{lemma:first_fan_shannon}, either $F$ or $e$ is $(\phi, \alpha\beta)$-successful.
Let 
\[P \,\defeq\, P(\End(F); \Shift(\phi, F), \alpha\beta) \quad \text{and} \quad P' \,\defeq\, P(e; \phi, \alpha\beta)\] 
(this notation is defined in \S\ref{subsec:pathchains}). 
If $\vend(P) \neq x$, we return $F, P$; otherwise, we return $(e),P'$.

\begin{algorithm}[h]\small
\caption{Shannon Chain}\label{alg:shannon_chain}
\begin{flushleft}
\textbf{Input}: A proper partial $\Sha$-edge-coloring $\phi$, an uncolored edge $e \in E(G)$, and a vertex $x \in V(e)$. \\
\textbf{Output}: A fan $F$ with $\Start(F) = e$ and $\Pivot(F) = x$, and a path $P$ with $\Start(P) = \End(F)$ and $\vstart(P) = \Pivot(F) = x$ such that $F+P$ is a $\phi$-happy Shannon chain. 
\end{flushleft}
\begin{algorithmic}[1]
    \State $(F, \alpha, \beta) \gets \hyperref[alg:first_shannon_fan]{\mathsf{FirstFan}}(\phi, e, x)$ \Comment{Algorithm~\ref{alg:first_shannon_fan}} \label{step:first_fan}
    \If{$\beta \in M(\phi, x)$}
        \State \Return $F$, $(\End(F))$ \label{step:happy_fan}
    \EndIf
    \State $P \gets P(\End(F);\, \Shift(\phi, F),\, \alpha\beta), \quad P' \gets P(e; \, \phi,\, \alpha\beta)$
    \If{$\vend(P) \neq x$}
        \State \Return $F$, $P$
    \Else
        \State \Return $(e)$, $P'$
    \EndIf
\end{algorithmic}
\end{algorithm}

The correctness of Algorithm~\ref{alg:shannon_chain} follows from Lemma~\ref{lemma:first_fan_shannon}.
As we can compute $P,\,P'$ in time $O(\Delta\length(P))$ and $O(\Delta \length (P'))$ respectively, and since the runtime of Algorithm~\ref{alg:first_shannon_fan} is $O(\Delta)$, it follows that the runtime of Algorithm~\ref{alg:shannon_chain} is
\begin{align}\label{eqn:runtime_shannon_chain}
    O(\Delta + \Delta\length(P) + \Delta\length(P')).
\end{align}

Let $(F, P)$ be the output of Algorithm~\ref{alg:shannon_chain} on input $(\phi, e, x)$ such that $(F', \alpha,\beta)$ was the output of Algorithm~\ref{alg:first_shannon_fan} at step~\ref{step:first_fan}.
When augmenting $\phi$ with $F+P$, we color the $\Shift(\phi, F+P)$-happy edge $\End(P)$ either $\alpha$ or $\beta$ (whichever is valid).
Let us now describe the importance of disjoint chains with respect to Algorithm~\ref{alg:shannon_chain}.

\begin{lemma}\label{lemma:disjoint_chains}
    Let $\phi$ be a proper partial $\Sha$-edge-coloring and let $C_1 = F_1 + P_1$, $C_2 = F_2 + P_2$ be disjoint $\phi$-happy Shannon chains constructed using Algorithm~\ref{alg:shannon_chain}. 
    Then $C_1$ is $\aug(\phi, C_2)$-happy and vice-versa.
\end{lemma}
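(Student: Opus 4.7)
The plan is to set $\psi \defeq \aug(\phi, C_2)$ and verify the three conditions for $C_1$ to be $\psi$-happy: (i) $\psi(\Start(C_1)) = \blank$; (ii) $\Shift(\psi, C_1)$ is proper; and (iii) $\End(C_1)$ is $\Shift(\psi, C_1)$-happy. The reverse direction will then follow symmetrically. By~\ref{disjoint:edge}, $\psi$ and $\phi$ agree on every edge outside $E(C_2)$, so in particular on $E(C_1)$; this gives (i) immediately and ensures $\Shift(\psi, C_1)$ equals $\Shift(\phi, C_1)$ on $E(C_1)$. The two shifted colorings can therefore differ only on the edges of $E(C_2)$.

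For (ii) and (iii), I would fix a vertex $v$ and write $E_i(v) \defeq E(C_i) \cap E_G(v)$ for $i=1,2$. Because $\Shift(\phi, C_1)$ and $\psi$ are both proper, the only distinctness left to check at $v$ is between the $\Shift(\phi, C_1)$-colors on $E_1(v)$ and the $\psi$-colors on $E_2(v)$; likewise, the only threat to a color $c$ being missing at $v$ under $\Shift(\psi, C_1)$ comes from the $\psi$-recolored edges of $E_2(v)$. Two cases arise. \textbf{Case A:} $v \in V(F_1) \cup \set{\vend(P_1)}$. Condition~\ref{disjoint:vertex} together with $\vstart(P_2) = \Pivot(F_2) \in V(F_2)$ forces $v$ to either lie outside $V(C_2)$ or belong to $V(P_2) \setminus (V(F_2) \cup \set{\vend(P_2)})$; in the latter case $E_2(v)$ consists of two consecutive $P_2$-edges whose color multiset $\set{\alpha_2, \beta_2}$ is merely swapped by augmenting $C_2$. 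Hence the total multiset of colors at $v$ is invariant under $\phi \mapsto \psi$, transferring both properness and missingness from $\Shift(\phi, C_1)$ to $\Shift(\psi, C_1)$.

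\textbf{Case B:} $v \in V(P_1) \setminus (V(F_1) \cup \set{\vend(P_1)})$. Then $v$ is incident to exactly two $P_1$-edges whose $\phi$-colors form the multiset $\set{\alpha_1, \beta_1}$, and shifting $C_1$ keeps the colors on these edges inside $\set{\alpha_1, \beta_1}$ (one of them may be erased when $v$ is an endpoint of $\End(C_1)$, but no new color appears on $E_1(v)$). Since these edges lie outside $E(C_2)$, they retain their $\phi$-colors under $\psi$, so properness of $\psi$ at $v$ forces $\psi(e) \notin \set{\alpha_1, \beta_1}$ for every $e \in E_2(v)$, giving cross-distinctness at $v$ under $\Shift(\psi, C_1)$. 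Combining Case A and Case B across all $v$ yields (ii). For (iii), observe that the endpoints of $\End(C_1)$ are $\vend(P_1)$ (Case A) and a second endpoint that is either in $V(F_1)$ (Case A) or a vertex in $V(P_1) \setminus (V(F_1) \cup \set{\vend(P_1)})$ (Case B).

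\textbf{The main obstacle} is finishing (iii) in Case B, where the happy color could a priori be disturbed by the augmentation of $C_2$. I would resolve this by inspecting Algorithm~\ref{alg:shannon_chain}: whenever a nontrivial bicolored path is attached, the color that makes $\End(C_1)$ happy necessarily lies in $\set{\alpha_1, \beta_1}$, being the color opposite to $\phi(\End(P_1))$ that is missing at $\vend(P_1)$ where the maximal $\alpha_1\beta_1$-path terminates. Case B precisely tells us that $\psi$ assigns no color from $\set{\alpha_1, \beta_1}$ to any edge of $E_2$ at the Case B endpoint, so this happy color remains missing at that endpoint under $\Shift(\psi, C_1)$, completing (iii).
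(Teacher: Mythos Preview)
Your proof is correct and follows essentially the same approach as the paper's: both arguments use \ref{disjoint:vertex} to show that missing-color sets at the vertices of $V(F_1)\cup\{\vend(P_1)\}$ are unchanged when passing from $\phi$ to $\aug(\phi,C_2)$, and use \ref{disjoint:edge} together with the bicolor structure of $P_1$ to control the remaining internal path vertices. The paper packages this as ``$F_1$ is $\psi$-shiftable and $P_1 = P(\End(F_1);\Shift(\psi,F_1),\alpha\beta)$,'' whereas you verify properness and happiness of $\Shift(\psi,C_1)$ directly vertex by vertex, but the substance is identical.
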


\begin{proof}
    By symmetry, it is enough to show that $C_1$ is $\aug(\phi, C_2)$-happy.
    Let $\psi \defeq \aug(\phi, C_2)$ and let $\alpha, \beta, \gamma, \delta \in [\Sha]$ be such that:
    \[P_1 = P(\End(F_1); \Shift(\phi, F_1), \alpha\beta), \quad P_2 = P(\End(F_2); \Shift(\phi, F_2), \gamma\delta).\]
    We claim the following:
    \begin{enumerate}
        \item\label{fan_shiftable} $F_1$ is $\psi$-shiftable, and
        \item\label{path_equivalence} $P_1 = P(\End(F_1); \Shift(\psi, F_1), \alpha\beta)$.
    \end{enumerate}
    Clearly if these conditions hold, then $C_1$ is $\psi$-happy.

    Let us first show \ref{fan_shiftable}. 
    We claim that $M(\psi, z) = M(\phi, z)$ for each $z \in V(F_1)$.
    If this holds, then \ref{fan_shiftable} follows. 
    Let $z \in V(F_1)$ such that $M(\psi, z) \neq M(\phi, z)$. 
    Then, $z \in V(C_1)$.
    By \ref{disjoint:vertex}, it must be the case that 
    \begin{itemize}
        \item either $z\in \IV(P_2) \setminus V(F_2)$, or 
        \item $z \in V(\End(P_2))$ and $z \neq \vend(P_2)$.
    \end{itemize}
    In either case, the only changes on the neighborhood of $z$ are that the edges colored $\gamma$ and $\delta$ swap values, i.e., the missing set remains unchanged, as desired.

    Now, let us show \ref{path_equivalence}.
    If $P_1 = (\End(F_1))$, we are done.
    Let assume $\length(P_1) \geq 2$.
    Note that by \ref{disjoint:edge}, the edges on $C_1$ are colored the same under $\phi$ and $\psi$.
    Furthermore, an identical argument to the previous paragraph shows that $M(\psi, \vend(P_1)) = M(\phi, \vend(P_1))$. 
    In particular, we have
    \[\deg(\vend(F_1); \psi, \alpha\beta) < 2, \quad \deg(\vend(P_1); \psi, \alpha\beta) < 2.\]
    As no edge in $F_1$ is colored $\alpha$ or $\beta$ by Lemma~\ref{lemma:first_fan_shannon}, the above holds with $\psi$ replaced by $\Shift(\psi, F_1)$.
    It remains to show that $\vend(F_1)$ and $\vend(P_1)$ are $(\Shift(\psi, F_1), \alpha\beta)$-related.
    This follows by \ref{disjoint:edge} and since no edge in $F_1$ is colored $\alpha$ or $\beta$.
\end{proof}

The goal now is to augment a large set of pairwise disjoint chains at each iteration.
To this end, we will make a few definitions.
Let $\phi$ be a proper partial $\Sha$-edge-coloring of $G$ and let $U_\phi$ be the set of uncolored edges under $\phi$.
For the rest of this section, we will assume the vertex set $V(G)$ is ordered.
Let $F_e,\,P_e$ be the output after running Algorithm~\ref{alg:shannon_chain} on input $(\phi, e, \min V(e))$ for some $e \in U_\phi$.
Let $\alpha(e), \beta(e)$ be the colors on the path $P_e$ ($\alpha(e) = \beta(e)$ if we reach step~\ref{step:happy_fan}).
We define the following sets for $\alpha,\beta \in [\Sha]$:
\[\Gamma_{\alpha, \beta}(\phi) \defeq \set{e \in U_\phi\,:\, \set{\alpha(e), \beta(e)} = \set{\alpha, \beta}}.\]

Let us now describe how to augment a large fraction of edges in such a set.
The details are provided in Algorithm~\ref{alg:augment_large}, but we will first give an informal overview.
The algorithm takes as input a proper partial $\Sha$-edge-coloring $\phi$ and a set $S$, where $S = \Gamma_{\alpha, \beta}(\phi)$ for some $\alpha, \beta \in [\Sha]$.
First, we apply Algorithm~\ref{alg:shannon_chain} for each edge $e \in S$ to construct a collection of chains $C_e = F_e + P_e$.
As a result of Lemma~\ref{lemma:disjoint_chains}, we may augment any subset of disjoint chains from this collection.
In order to track intersections between chains, we define an array $\visited$ which satisfies the following:
\[\visited(v) = 1 \iff v \in V(F_e) \cup \set{\vend(P_e)}, \quad \text{for some } F_e + P_e \text{ augmented so far},\]
i.e., before augmenting a chain, $\visited$ ensures~\ref{disjoint:vertex} holds.
The following lemma proves that this is enough.

\begin{lemma}\label{lemma:disjoint_alpha_beta}
    Let $\phi$ be a proper partial $\Sha$-edge-coloring and let $C_1 = F_1+P_1$, $C_2 = F_2+P_2$ be Shannon chains constructed using Algorithm~\ref{alg:shannon_chain} such that 
    \begin{enumerate}
        \item $P_1,\,P_2$ are $\alpha\beta$-paths for some $\alpha,\beta \in [\Sha]$, and 
        \item\label{item:disjoint}~\ref{disjoint:vertex} holds.
    \end{enumerate}
    Then,~\ref{disjoint:edge} holds as well.
\end{lemma}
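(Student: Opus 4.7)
The plan is to argue by contradiction: suppose there exists $e^* \in E(C_1) \cap E(C_2)$ and derive a violation of \ref{disjoint:vertex}. I would split the analysis according to which pieces of each chain contain $e^*$, noting that $E(C_i) = E(F_i) \cup E(P_i)$ with $E(F_i) \cap E(P_i) = \{\End(F_i)\}$.

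The key color-parity observation, inherited from Lemma~\ref{lemma:first_fan_shannon}, is that every edge in $F_i$ is either uncolored (namely $\Start(F_i)$) or colored by some $\eta_i \notin \{\alpha, \beta\}$ under $\phi$, whereas every edge in $E(P_i) \setminus \{\Start(P_i)\}$ is colored $\alpha$ or $\beta$ under $\phi$. This single fact immediately rules out the two cross cases $e^* \in E(F_i) \cap (E(P_j) \setminus \{\End(F_j)\})$ for $\{i,j\} = \{1,2\}$. The case $e^* \in E(F_1) \cap E(F_2)$ is then handled directly: both endpoints of $e^*$ lie in $V(F_1) \cap V(F_2)$, which contradicts \ref{disjoint:vertex}.

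The remaining, and main, case is $e^* \in (E(P_1) \setminus \{\End(F_1)\}) \cap (E(P_2) \setminus \{\End(F_2)\})$, so that $e^*$ is colored $\alpha$ or $\beta$ under $\phi$. Two structural observations drive the contradiction. First, since no edge of $F_i$ is colored $\alpha$ or $\beta$ under $\phi$, shifting $F_i$ does not alter the subgraph $G(\phi, \alpha\beta)$, and Algorithm~\ref{alg:shannon_chain} always produces a \emph{maximal} $\alpha\beta$-path tail, so $P_i$ with its first edge removed traces out the entire path component of $\vend(F_i)$ in $G(\phi, \alpha\beta)$. Second, by the hopefulness guaranteed by Lemma~\ref{lemma:first_fan_shannon} (successful implies hopeful, applicable in both the ``return $F,P$'' and ``return $(e),P'$'' branches of Algorithm~\ref{alg:shannon_chain}), we have $\deg(\vend(F_i); \phi, \alpha\beta) \leq 1$, so $\vend(F_i)$ sits as a degree-$\leq 1$ endpoint of its component; in particular the component is a path, not a cycle. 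Since $e^*$ lies in both maximal tails, $\vend(F_1)$ and $\vend(F_2)$ share a path component of $G(\phi, \alpha\beta)$. Because $\vend(F_1) \neq \vend(F_2)$ (else $V(F_1) \cap V(F_2) \neq \emptyset$), they must be the two endpoints of this path, forcing $\vend(P_1) = \vend(F_2) \in V(F_2)$, which contradicts \ref{disjoint:vertex}.

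The main obstacle I expect is the bookkeeping required to justify the two observations uniformly across the branches of Algorithm~\ref{alg:shannon_chain}, particularly the ``return $(e), P'$'' branch where the fan degenerates to the single uncolored edge and one must check that $\vend(F_i)$ is still a degree-$\leq 1$ vertex from which the maximal $\alpha\beta$-path emanates (using $\alpha \in M(\phi, x) \setminus M(\phi, y)$ to place the $\alpha$-edge at $y = \vend(F_i)$). Once these facts are in place, the endpoint-counting argument in the final case closes the proof.
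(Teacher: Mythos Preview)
Your proposal is correct and follows essentially the same line as the paper's proof. Both arguments hinge on the same two facts: that no edge of $F_i$ carries the color $\alpha$ or $\beta$ (so $E(F_i)$ cannot meet the $\alpha\beta$-colored tail of the other chain), and that $Q_i \defeq P_i$ with its first edge removed is the full path component of $\vend(F_i)$ in $G(\phi,\alpha\beta)$ (so if the tails share an edge, their endpoints are forced to coincide in a way that violates \ref{disjoint:vertex}). The only cosmetic difference is that the paper phrases the path-component argument as the trichotomy ``$Q_1 = Q_2$, $Q_1 = Q_2^*$, or $V(Q_1)\cap V(Q_2)=\varnothing$'' and then handles the residual $E(F_1)\cap E(F_2)$ possibility, whereas you organize the same content as a case split on which piece of each chain contains $e^*$; the substance and the key lemma invoked (Lemma~\ref{lemma:first_fan_shannon}) are identical.
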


\begin{proof}
    Suppose~\ref{disjoint:edge} does not hold.
    Let $Q_1,\, Q_2$ be obtained from $P_1,\,P_2$ by removing the first edge on the respective path.
    As $P_1,\, P_2$ are maximal $\alpha\beta$-paths and $\phi$ is a proper coloring, either $Q_1 = Q_2$, $Q_1 = Q_2^*$ or $V(Q_1) \cap V(Q_2) = \0$.

    If $Q_1 = Q_2$ or $Q_1 = Q_2^*$, either $\vend(P_1) = \vend(P_2)$ or $\vend(P_1) = \vend(F_2)$, implying~\ref{disjoint:vertex} is violated, contradicting~\ref{item:disjoint}.

    Let us assume $V(Q_1) \cap V(Q_2) = \0$. 
    We note that no edge in $F_1$ or $F_2$ is colored $\alpha$ or $\beta$ by Lemma~\ref{lemma:first_fan_shannon} and so $E(F_1) \cap E(Q_2) = E(F_2) \cap E(Q_1) = \0$.
    At this point, we must either have $\Start(P_1) \in E(F_2)$, or $\Start(P_2) \in E(F_1)$.
    However, as $\Start(P_i) = \End(F_i)$ this implies $V(F_1) \cap V(F_2) \neq \0$, once again contradicting~\ref{item:disjoint}.

    This covers all cases and so~\ref{disjoint:edge} must hold.
\end{proof}

\begin{algorithm}[h]\small
\caption{Augment a large set of chains}\label{alg:augment_large}
\begin{flushleft}
\textbf{Input}: A proper partial $\Sha$-edge-coloring $\phi$, a set of edges $S$ where $S = \Gamma_{\alpha, \beta}(\phi)$ for some colors $\alpha, \beta \in [\Sha]$. \\
\textbf{Output}: A coloring $\psi$ that results from augmenting a subset of chains formed by edges in $S$. 
\end{flushleft}
\begin{algorithmic}[1]
    \State $F_e, P_e \gets \hyperref[alg:shannon_chain]{\mathsf{Shannon Chain}}(\phi, e, \min V(e))$ \textbf{for each} $e \in S$ \label{step:compute_chains} \Comment{Algorithm~\ref{alg:shannon_chain}}
    \State $\psi \gets \phi$, \quad $\visited(v) \gets 0$ \textbf{for each} $v \in V(G)$
    \For{$e\in S$}
        \If{$\visited(v) = 0$ \textbf{for each} $v \in V(F_e) \cup \set{\vend(P_e)}$} \Comment{Ensure chains are disjoint}
            \State $\psi \gets \aug(\psi, F_e + P_e)$ \label{step:augment_chain}
            \State $\visited(v) \gets 1$ \textbf{for each} $v \in V(F_e) \cup \set{\vend(P_e)}$
        \EndIf
    \EndFor
    \State \Return $\psi$
\end{algorithmic}
\end{algorithm}




The correctness of Algorithm~\ref{alg:augment_large} follows from Lemmas~\ref{lemma:first_fan_shannon}, \ref{lemma:disjoint_chains} and \ref{lemma:disjoint_alpha_beta}.
The following lemma shows that the algorithm colors a large fraction of edges in $S$.

\begin{lemma}\label{lemma:augment_size}
    Let $\psi$ be the output after running Algorithm~\ref{alg:augment_large} with input $(\phi, \Gamma_{\alpha, \beta}(\phi))$ for some $\alpha, \beta \in [\Sha]$.
    Then,
    \[|\dom(\psi)| \geq |\dom(\phi)| + \frac{|\Gamma_{\alpha, \beta}(\phi)|}{20\Delta^2}.\]
\end{lemma}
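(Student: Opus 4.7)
The plan is to relate $|\dom(\psi)| - |\dom(\phi)|$ to the number of chains successfully augmented by Algorithm~\ref{alg:augment_large}, and then bound the number of ``blocked'' chains by a charging argument. Let $A \subseteq S$ denote the edges $e$ for which step~\ref{step:augment_chain} is executed, and let $B := S \setminus A$ be the skipped edges. Each augmentation colors exactly one new edge, namely $e$ itself, so $|\dom(\psi)| - |\dom(\phi)| = |A|$. It suffices to prove $|B| \leq (20\Delta^2 - 1)\,|A|$, which yields $|A| \geq |S|/(20\Delta^2)$.

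For every skipped $e \in B$, some vertex of $V(F_e) \cup \{\vend(P_e)\}$ was already marked visited by a previously augmented chain $C_{e'} = F_{e'} + P_{e'}$ with $e' \in A$. Hence
\[
|B| \;\leq\; \sum_{e' \in A} \Big|\big\{e \in S\setminus\{e'\} : (V(F_e) \cup \{\vend(P_e)\}) \cap (V(F_{e'}) \cup \{\vend(P_{e'})\}) \neq \emptyset\big\}\Big|.
\]
Since $F_{e'}$ has length at most $2$, we have $|V(F_{e'}) \cup \{\vend(P_{e'})\}| \leq 4$. Hence it suffices to prove the following key counting claim: for every vertex $v \in V(G)$, the number of edges $e \in S$ with $v \in V(F_e) \cup \{\vend(P_e)\}$ is at most $5\Delta^2$.

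I would prove this by splitting into three cases according to the role of $v$ in $C_e$. \textbf{(i)} If $v \in V(e)$, then $e$ is one of the $\leq \Delta$ edges incident to $v$. \textbf{(ii)} If $v \in V(F_e) \setminus V(e)$, then $F_e$ has length $2$, say $F_e = (e, f)$ with $v \in V(f) \setminus V(e)$ and $\Pivot(F_e) \in N(v)$; thus there are $\leq \Delta$ choices for $\Pivot(F_e)$ and $\leq \Delta$ further choices for $e$ (incident to the pivot), giving $\leq \Delta^2$ edges. \textbf{(iii)} If $v = \vend(P_e)$ with $v \notin V(F_e)$, then $P_e$ is a nontrivial bicolored path, which by inspection of Algorithm~\ref{alg:shannon_chain} arises in one of two subcases: either $F_e$ has length $2$ and $P_e = P(\End(F_e); \Shift(\phi, F_e), \alpha\beta)$, or $F_e = (e)$ and $P_e = P(e; \phi, \alpha\beta)$. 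In both subcases, Lemma~\ref{lemma:first_fan_shannon} ensures no edge of $F_e$ is colored $\alpha$ or $\beta$, so $G(\Shift(\phi, F_e), \alpha\beta) = G(\phi, \alpha\beta)$; moreover, the choices of $\alpha, \beta$ in Algorithm~\ref{alg:first_shannon_fan} force the ``root'' vertex $u$ of the bicolored portion (equal to $\vend(F_e)$ in the first subcase and to the endpoint of $e$ other than $\Pivot(F_e) = x$ in the second) to have degree $\leq 1$ in $G(\phi, \alpha\beta)$, since one of $\alpha, \beta$ is missing at $u$. Hence $u$ is an endpoint of the unique $\alpha\beta$-path component with $v$ as its other endpoint, so $u$ is uniquely determined by $v$. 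The edge $e$ is then incident to $u$ or to a neighbor of $u$ (namely $\Pivot(F_e)$), yielding $\leq \Delta + \Delta^2 \leq 2\Delta^2$ edges.

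Summing the three cases gives the per-vertex bound $\Delta + \Delta^2 + 2\Delta^2 \leq 5\Delta^2$, and so each $e' \in A$ is charged with at most $4 \cdot 5\Delta^2 - 1 = 20\Delta^2 - 1$ blocked edges, completing the argument. The main obstacle is case~(iii): one must verify that the algorithm's deliberate choice of $\alpha \in M(\phi, x)$ and $\beta \in M(\phi, y) \cap M(\phi, z)$ really does force $u$ to be an endpoint of its bicolored component, since otherwise an interior vertex could route to $v$ from many directions and the uniqueness of $u$ would fail.
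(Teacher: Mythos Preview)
Your proposal is correct and follows essentially the same approach as the paper: both prove that each augmented chain can ``block'' at most $O(\Delta^2)$ other chains via a per-vertex counting argument, using that $\vend(P_e)$ and $\vend(F_e)$ are endpoints of the same $\alpha\beta$-path component in $G(\phi,\alpha\beta)$. The paper organizes the cases slightly differently (by whether the shared vertex is $\Pivot(F_f)$, $\vstart(F_f)$, $\vend(F_f)$, or $\vend(P_f)$, obtaining $16\Delta^2$ rather than your $20\Delta^2 - 4$), but the substance is identical.
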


\begin{proof}
    Let $C_e = F_e + P_e$ for each $e \in \Gamma_{\alpha, \beta}(\phi)$ be the chains at step~\ref{step:compute_chains}.
    For a fixed $e \in \Gamma_{\alpha, \beta}(\phi)$, we claim there are at most $16\Delta^2$ edges $f \in \Gamma_{\alpha, \beta}(\phi)$ such that $C_e$, $C_f$  are \textbf{not} disjoint.
    Since $16\Delta^2 + 1 \leq 20\Delta^2$, this implies the desired result.
    As a result of Lemma~\ref{lemma:disjoint_alpha_beta}, we need only consider the case where~\ref{disjoint:vertex} is violated.
    
    There are at most $4$ choices for the intersection vertex on $C_e$.
    Let $v \in V(C_e)$ be the vertex of intersection.
    For each case, we will consider all possible values of $f = \Start(F_f)$.
    \begin{enumerate}[label=\ep{\normalfont{}\textbf{Case\arabic*}},labelindent=0pt,leftmargin=*]
        \item If $v = \Pivot(F_f)$, then $f \in E_G(v)$ and so there are at most $\Delta$ choices for $f$.
        \item Similarly, if $v = \vstart(F_f)$, there are at most $\Delta$ choices for $f$.
        \item\label{item:count_f} If $v = \vend(F_f)$, then either $f \in E_G(v)$, or $f\in E_G(u)$ for some $u \in N_G(v)$.
        In particular, there are at most $\Delta + \Delta^2$ possible choices for $f$.
        \item Finally, if $v = \vend(P_f)$, we can determine $\vend(F_f)$ by following the $\alpha\beta$-path from $v$ to it's other endpoint and apply the previous case.
    \end{enumerate}
    In total, there are at most 
    \[4(2\Delta + 2(\Delta + \Delta^2) \leq 16\Delta^2\]
    choices for $f$, where we use the fact that $\Delta \geq 2$.
\end{proof}

Let us now bound the runtime of Algorithm~\ref{alg:augment_large}.

\begin{lemma}\label{lemma:augment_runtime}
    Algorithm~\ref{alg:augment_large} runs in $O(\Delta^3m)$ time.
\end{lemma}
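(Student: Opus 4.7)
The plan is to split the runtime of Algorithm~\ref{alg:augment_large} into two parts: (I) the construction of all chains at step~\ref{step:compute_chains}, and (II) the main \textsf{for} loop. By equation~\eqref{eqn:runtime_shannon_chain}, part (I) costs $\sum_{e \in S} O(\Delta + \Delta\length(P_e) + \Delta\length(P'_e))$, where $P'_e \defeq P(e;\phi,\alpha\beta)$ is the alternative path computed inside Algorithm~\ref{alg:shannon_chain}. Each iteration of the loop in part (II) performs $O(1)$ \textsf{visited}-checks (since $|V(F_e) \cup \set{\vend(P_e)}| \leq 4$) plus, when a chain is augmented, $O(\length(F_e)+\length(P_e)) = O(\length(P_e))$ additional time to perform $\aug$. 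So the overall runtime is dominated by $O(\Delta|S|) + O(\Delta)\cdot\bigl(\sum_e\length(P_e) + \sum_e\length(P'_e)\bigr)$.

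The central task is therefore to bound the two sums $\sum_e \length(P_e)$ and $\sum_e \length(P'_e)$, where $S = \Gamma_{\alpha, \beta}(\phi)$. By Lemma~\ref{lemma:first_fan_shannon}, no edge of $F_e$ is colored $\alpha$ or $\beta$, so $G(\Shift(\phi, F_e), \alpha\beta) = G(\phi, \alpha\beta)$, and $\deg(\vend(F_e);\Shift(\phi,F_e),\alpha\beta) = \deg(\vend(F_e);\phi,\alpha\beta) < 2$ because $F_e$ is $(\phi,\alpha\beta)$-hopeful. Consequently, the maximal $\alpha\beta$-path extending from $\vend(F_e)$ lies inside a single \emph{path} component of $G(\phi,\alpha\beta)$ (cycles are excluded since their vertices all have $\alpha\beta$-degree $2$), and $\vend(F_e)$ must be one of the at most two endpoints of that component. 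An analogous observation for $P'_e$ shows that its non-trivial portion traverses a single path component of $G(\phi,\alpha\beta)$ starting from an endpoint that is incident to $e$.

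The main obstacle is a careful counting argument: for each path component $H$ of $G(\phi,\alpha\beta)$ and each endpoint $v$ of $H$, I will bound the number of edges $e \in S$ whose chain visits $H$ from $v$. For $P_e$ with $\length(F_e) = 1$, one has $e \in E_G(v)$ (with $v = \vend(F_e)$), giving $\leq \Delta$ choices; for $\length(F_e) = 2$, the pivot $\Pivot(F_e) = \min V(e)$ must lie in $N_G(v)$ ($\leq \Delta$ choices), and $e$ is then incident to that pivot ($\leq \Delta$ further choices), for $\leq \Delta^2$ in total. So at most $\Delta + \Delta^2 = O(\Delta^2)$ edges $e \in S$ contribute to the traversal of $H$ through $v$. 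For $P'_e$, $e$ itself must be incident to $v$, yielding $\leq \Delta$ choices. Summing over the at most two endpoints of each component, over all components, and using $|E(G(\phi,\alpha\beta))|\leq m$, produces
\[\sum_{e \in S}\length(P_e)\,\leq\, |S| + O(\Delta^2)\cdot m \,=\, O(\Delta^2 m), \quad \sum_{e \in S}\length(P'_e)\,\leq\, |S| + O(\Delta)\cdot m \,=\, O(\Delta m).\]
Plugging these bounds into the decomposition above gives part~(I) cost $O(\Delta m) + O(\Delta)\cdot O(\Delta^2 m) = O(\Delta^3 m)$ and part~(II) cost $O(|S|) + O(\sum_e\length(P_e)) = O(\Delta^2 m)$, establishing the overall $O(\Delta^3 m)$ bound.
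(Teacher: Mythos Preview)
Your proof is correct and follows essentially the same approach as the paper: decompose the runtime into the chain-construction step and the augmentation loop, then bound $\sum_{e}\length(P_e)$ and $\sum_{e}\length(P'_e)$ by $O(\Delta^2 m)$ via counting how many $e\in S$ can have their path land in a given component of $G(\phi,\alpha\beta)$ from a given endpoint (yielding $O(\Delta^2)$ choices for $e$). Two minor slips that do not affect the final bound: the $\aug$ step actually costs $O(\Delta\cdot\length(F_e+P_e))$ (the extra $\Delta$ comes from updating the missing-color sets), and the paper bounds the loop more tightly via edge-disjointness of the \emph{augmented} chains to get $O(\Delta m)$ there directly.
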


\begin{proof}
    Consider running Algorithm~\ref{alg:augment_large} on input $(\phi, S)$, where $S = \Gamma_{\alpha, \beta}(\phi)$ for some $\alpha,\beta \in [\Sha]$.
    Let $P_e,\, P_e'$ be the paths computed in the call to Algorithm~\ref{alg:shannon_chain} at step~\ref{step:compute_chains}.
    From \eqref{eqn:runtime_shannon_chain}, step~\ref{step:compute_chains} takes time
    \[O\left(\Delta |S| + \Delta \sum_{e \in S}\length(P_e) + \Delta \sum_{e \in S}\length(P_e')\right).\]
    We note that step~\ref{step:augment_chain} takes $O(\Delta\,\length(F_e + P_e))$ time (the factor of $\Delta$ comes from updating missing sets).
    Furthermore, by storing $\visited$ as a hash map with key set $V(G)$, all accesses to $\visited$ occur in amortized constant time.
    Since the set of chains augmented are edge-disjoint, we conclude the \textsf{for} loop runs in $O(\Delta\,m)$ time.

    It remains to bound $\sum_{e \in S}\length(P_e)$ and $\sum_{e \in S}\length(P_e')$. 
    We will just consider $\sum_{e \in S}\length(P_e)$ as the other case follows identically, \textit{mutatis mutandis}.
    Let $\mathcal{P}$ be the set of all path components of $G(\phi, \alpha\beta)$.
    For each path $P_e$, let $Q_e$ be obtained from $P_e$ by removing $\Start(P_e)$. 
    As $\Start(P_e) = \End(F_e)$ and no edge in $F_e$ is colored $\alpha$ or $\beta$ by Lemma~\ref{lemma:first_fan_shannon}, $Q_e \in \mathcal{P}$.
    We have
    \[\sum_{e \in S}\length(P_e) = \sum_{e \in S}(1 + \length(Q_e)) = |S| + \sum_{e \in S}\length(Q_e).\]
    Let us now consider the term on the right:
    \begin{align*}
        \sum_{e \in S}\length(Q_e) &= \sum_{e \in S}\sum_{Q \in \mathcal{P}}\length(Q)\bbone{Q = Q_e} \\
        &= \sum_{Q \in \mathcal{P}}\length(Q)\sum_{e \in S}\bbone{Q = Q_e}.
    \end{align*}
    For a fixed $Q$, let us bound the number of edges $e$ such that $Q_e = Q$.
    We note that one endpoint of $Q$ is $\vend(P_e)$ and the other is $\vend(F_e)$ and so there are $2$ choices for $\vend(F_e)$. An identical argument to \ref{item:count_f} in Lemma~\ref{lemma:augment_size} shows that there are at most $\Delta + \Delta^2 \leq 2\Delta^2$ choices for $e$ given $\vend(F_e)$.
    With this in hand, we have
    \[\sum_{e \in S}\length(P_e) \leq |S| + 4\Delta^2\sum_{Q \in \mathcal{P}}\length(Q) = O(\Delta^2 m),\]
    as the paths in $\mathcal{P}$ are edge disjoint and $|S| \leq m$.
    Therefore, the overall runtime is $O(\Delta^3\,m)$ as claimed.
\end{proof}

We are now ready to describe our deterministic $\Sha$-edge-coloring algorithm. 
The details are in Algorithm~\ref{alg:shannon_deterministic}, but we first provide an informal overview.
We start with the blank coloring $\phi_0$ ($\phi_0(e) = \blank$ for all $e \in E(G)$) and begin iterating.
At each iteration, we first compute the sets  $\Gamma_{\alpha, \beta}(\phi_{i-1})$ and let $S_i$ be the largest such set.
We then apply Algorithm~\ref{alg:augment_large} on $(\phi_{i-1}, S_i)$ to obtain the coloring $\phi_i$.
Let $U_i$ be the set of uncolored edges under $\phi_i$.
We iterate until $U_i$ is empty.

\begin{algorithm}[h]\small
\caption{Deterministic Sequential Coloring with Shannon Chains}\label{alg:shannon_deterministic}
\begin{flushleft}
\textbf{Input}: A multigraph $G = (V, E)$ of maximum degree $\Delta$. \\
\textbf{Output}: A proper $\Sha$-edge-coloring $\phi$ of $G$.
\end{flushleft}
\begin{algorithmic}[1]
    \State $U \gets E$, \quad $\phi(e) \gets \blank$ \textbf{for each} $e \in U$
    \While{$U \neq \0$}
        \State Compute the sets $\Gamma_{\alpha, \beta}(\phi)$ \textbf{for each} $\alpha, \beta \in [\Sha]$  \label{step:compute_Gamma}
        \State $S \gets \Gamma_{\gamma, \delta}(\phi)$, where $(\gamma, \delta)$ maximizes $|\Gamma_{\alpha, \beta}(\phi)|$.
        \State $\phi \gets \hyperref[alg:augment_large]{\mathsf{AugmentChains}}(\phi, S)$ \Comment{Algorithm~\ref{alg:augment_large}}
        \State $U \gets \set{e\in U\,:\, \phi(e) = \blank}$
    \EndWhile
    \State \Return $\phi$
\end{algorithmic}
\end{algorithm}

Note that $\alpha(e),\,\beta(e)$ are the colors output by Algorithm~\ref{alg:first_shannon_fan}.
In particular, step~\ref{step:compute_Gamma} in Algorithm~\ref{alg:shannon_deterministic} can be implemented in $O(\Delta |U_{i-1}|)$ time.
By this observation and Lemma~\ref{lemma:augment_runtime}, each iteration of the \textsf{while} loop in Algorithm~\ref{alg:shannon_deterministic} takes time $O(\Delta^3 m)$.
It remains to bound the number of iterations.
Note the following:
\[|S_i| \geq \frac{1}{\Sha^2}\sum_{\alpha, \beta \in [\Sha]}|\Gamma_{\alpha, \beta}(\phi_{i-1})| \geq \frac{4|U_{i-1}|}{9\Delta^2}.\]
From Lemma~\ref{lemma:augment_size}, it follows that 
\[|U_T| \leq \left(1 - \frac{1}{45\Delta^4}\right)|U_{T-1}| \leq m\exp\left(-\frac{T}{45\Delta^4}\right).\]
In particular, the entire multigraph is colored after $T = O(\Delta^4 \log n)$ iterations.
It follows that Algorithm~\ref{alg:shannon_deterministic} computes a proper $\Sha$-edge-coloring in $O(\Delta^8n\log n)$ time.
This completes the proof of Theorem~\ref{theo:shannon_deterministic}.

\section{The Multi-Step Shannon Algorithm}\label{section:MSSA}

In this section, we will describe the coloring procedure that will be used to prove Theorem~\ref{theo:shannon}.
We will split this into three subsections.
First, we will describe the \hyperref[alg:multi_shannon_chain]{Multi-Step Shannon Algorithm} (MSSA), which constructs a $\phi$-happy multi-step Shannon chain $C$ given an uncolored edge $e$ and a vertex $x \in V(e)$.
In the second subsection, we will prove the correctness of the \hyperref[alg:multi_shannon_chain]{MSSA} along with certain properties of the chain computed.
In the third subsection, we will analyse the runtime of the \hyperref[alg:multi_shannon_chain]{MSSA}.
We will bound the runtime and $\length(C)$ in terms of $n$, $\Delta$ and a parameter $\ell$.
The value $\ell$ will be specified later; for now, we shall take it to be a polynomial in $\Delta$ of sufficiently large degree.

\subsection{Algorithm Overview}\label{subsec:alg_overview}

Our goal is to build a multi-step Shannon chain $C$. 
We will split the algorithm into subprocedures to distinguish between building the first Shannon chain and subsequent Shannon chains on $C$.
Let us first describe the \hyperref[alg:first_shannon_chain]{First Chain Algorithm}, which is nearly identical to Algorithm~\ref{alg:shannon_chain} with the only difference that we do not compute the entire paths $P$, $P'$. 
It takes as input a proper partial edge-coloring $\phi$, an uncolored edge $e \in E(G)$ and a pivot vertex $x\in V(e)$, and outputs a fan $F$ and a path $P$ such that $\Start(F) = e$, $\Pivot(F) = x$ and $F+P$ is a (not necessarily $\phi$-happy) Shannon chain.

\begin{algorithm}[h]\small
\caption{First Shannon Chain}\label{alg:first_shannon_chain}
\begin{flushleft}
\textbf{Input}: A proper partial $\Sha$-edge-coloring $\phi$, an uncolored edge $e\in E(G)$, and a vertex $x \in V(e)$. \\
\textbf{Output}: A fan $F$ with $\Start(F) = e$ and $\Pivot(F) = x$ and a path $P$ with $\Start(P) = \End(F)$ and $\vstart(P) = \Pivot(F) = x$. 
\end{flushleft}
\begin{algorithmic}[1]
    \State $(F, \alpha, \beta) \gets \hyperref[alg:first_shannon_fan]{\mathsf{FirstFan}}(\phi, e, x)$ \label{step:first_fan}\Comment{Algorithm~\ref{alg:first_shannon_fan}}
    \If{$\beta \in M(\phi, x)$}
        \State \Return $F$, $(\End(F))$ \label{step:first_happy_fan}
    \EndIf
    \State $P \gets P(\End(F);\, \Shift(\phi, F),\, \alpha\beta)$, \quad $P' \gets P(e;\, \phi,\, \alpha\beta)$ \label{step:first_path_def}
    \If{$\length(P) > 2\ell$ or $\vend(P) \neq x$}
        \State \Return $F$, $P|2\ell$ \label{step:first_length_2}
    \Else
        \State \Return $(e)$, $P'|2\ell$ \label{step:first_successful_path}
    \EndIf
\end{algorithmic}
\end{algorithm}

As mentioned earlier, we need only compute $P|2\ell$ and $P'|2\ell$ at step~\ref{step:first_path_def}.
As computing a path chain $Q$ takes time $O(\Delta\length(Q))$, it follows that the running time of Algorithm~\ref{alg:first_shannon_chain} is $O(\Delta+\Delta\,\ell) = O(\Delta\,\ell)$ as our choice for $\ell$ will be large.

Let us now describe the \hyperref[alg:next_shannon_chain]{Next Chain Algorithm}, which
is the subprocedure to build subsequent Shannon chains on $C$. 
It takes as input a proper partial edge-coloring $\phi$, an uncolored edge $e \in E(G)$, a vertex $x \in V(e)$, and a pair of colors $\alpha, \beta \in [\Sha]$ such that for $y \in V(e)$ distinct from $x$:
\begin{itemize}
    \item $\alpha \in M(\phi, x) \setminus M(\phi, y)$, and
    \item $\beta \in M(\phi, y)$.
\end{itemize}
The output is a fan $F$ and a path $P$ such that $\Start(F) = e$, $\Pivot(F) = x$ and $F+P$ is 
a (not necessarily $\phi$-happy) Shannon chain.  
The colors $\alpha$, $\beta$ represent the colors on the path of the previous Shannon chain on $C$ and the coloring $\phi$ refers to the shifted coloring with respect to $C$.
Following a nearly identical argument as for Algorithm~\ref{alg:first_shannon_chain}, we see that the runtime of Algorithm~\ref{alg:next_shannon_chain} is bounded by $O(\Delta\,\ell)$ as well.

\begin{algorithm}[h]\small
\caption{Next Shannon Chain}\label{alg:next_shannon_chain}
\begin{flushleft}
\textbf{Input}: A proper partial $\Sha$-edge-coloring $\phi$, an uncolored edge $e \in E(G)$, a vertex $x \in V(e)$, and a pair of colors $\alpha \in M(\phi, x) \setminus M(\phi, y)$ and $\beta \in M(\phi, y)$ where $y \in V(e)$ is distinct from $x$. \\
\textbf{Output}: A fan $F$ with $\Start(F) = e$ and $\Pivot(F) = x$ and a path $P$ with $\Start(P) = \End(F)$ and $\vstart(P) = \Pivot(F) = x$.
\end{flushleft}
\begin{algorithmic}[1]
    \State $(F, \gamma, \delta) \gets \hyperref[alg:next_shannon_fan]{\mathsf{NextFan}}(\phi, e, x, \alpha, \beta)$ \label{step:next_fan}\Comment{Algorithm~\ref{alg:next_shannon_fan}}
    \If{$\delta \in M(\phi, x)$}
        \State \Return $F$, $(\End(F))$ \label{step:next_happy_fan}
    \EndIf
    \If{$\delta  = \beta$}
        \State $P \gets P(\End(F);\, \Shift(\phi, F),\, \alpha\beta)$
        \State \Return $F$, $P|2\ell$ \label{step:delta_beta}
    \EndIf
    \State $P \gets P(\End(F);\, \Shift(\phi, F),\, \gamma\delta)$, \quad $P' \gets P(e;\, \phi,\, \gamma\delta)$
    \If{$\length(P) > 2\ell$ or $\vend(P) \neq x$}
        \State \Return $F$, $P|2\ell$ \label{step:next_length_2}
    \Else
        \State \Return $(e)$, $P'|2\ell$
    \EndIf
\end{algorithmic}
\end{algorithm}

Before we describe our \hyperref[alg:multi_shannon_chain]{Multi-Step Shannon Algorithm}, we define the non-intersecting property of a multi-step Shannon chain $C$, which is identical to that of multi-step Vizing Chains in \cite[Definition 5.1]{fastEdgeColoring}.
Recall the sets $\IE(P)$ and $\IV(P)$ of a path chain $P$, introduced in Definition~\ref{defn:internal}.

\begin{definition}[Non-intersecting chains]\label{defn:non-int}
    A $k$-step Shannon chain $C = F_0 + P_0 + \cdots + F_{k-1} + P_{k-1}$ is \emphd{non-intersecting} if for all $0\leq i < j < k$, $\IE(P_i) \cap E(F_j + P_j) = \0$ and $V(F_i) \cap V(F_j + P_j) = \0$.
\end{definition}

In our algorithm, we build a non-intersecting multi-step Shannon chain $C$.
We shall first provide an informal overview; the full details are provided in Algorithm~\ref{alg:multi_shannon_chain}.
We begin with a chain $C = (e)$ containing just the uncolored edge. Using Algorithm~\ref{alg:first_shannon_chain}, we find the first Shannon chain $F+P$. 
With this chain defined, we begin iterating.

At the start of each iteration, we have a non-intersecting chain $C = F_0 + P_0 + \cdots + F_{k-1} +P_{k-1}$ and a \emphd{candidate chain} $F+P$.
For example, at the start of the first iteration $C = (e)$ and $F + P$ corresponds to the chain computed using Algorithm~\ref{alg:first_shannon_chain}.
The chain and candidate chain satisfy the following properties for $\psi \defeq \Shift(\phi, C)$:
\begin{enumerate}[label=\ep{\normalfont{}\texttt{Inv}\arabic*},labelindent=15pt,leftmargin=*]
    \item\label{inv:start_F_end_C} $\Start(F) = \End(C)$ and $\vstart(F) = \vend(C)$,
    \item\label{inv:non_intersecting_shiftable} $C + F + P$ is non-intersecting and $\phi$-shiftable, and 
    \item\label{inv:hopeful_length} $F$ is either $\psi$-happy or $(\psi, \alpha\beta)$-hopeful, where $P$ is an $\alpha\beta$-path; furthermore, if $F$ is $(\psi, \alpha\beta)$-disappointed, then $\length(P) = 2\ell$. 
\end{enumerate}
We will prove that these invariants hold in the next subsection. For now, let us take them to be true. Fig.~\ref{fig:iteration_start} shows an example of $C$ (in black) and $F+P$ (in red) at the start of an iteration.

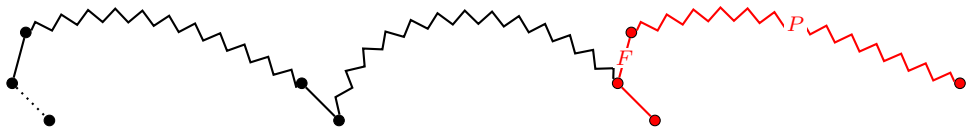
\begin{figure}[H]
    \centering
    \begin{tikzpicture}[xscale = 0.7,yscale=0.7]
        \node[circle,fill=black,draw,inner sep=0pt,minimum size=4pt] (a) at (0,0) {};
        	\path (a) ++(-45:1) node[circle,fill=black,draw,inner sep=0pt,minimum size=4pt] (b) {};
        	\path (a) ++(75:1) node[circle,fill=black,draw,inner sep=0pt,minimum size=4pt] (c) {};

                \path (a) ++(0:5.5) node[circle,fill=black,draw,inner sep=0pt,minimum size=4pt] (d) {};
            
        	\path (d) ++(-45:1) node[circle,fill=black,draw,inner sep=0pt,minimum size=4pt] (e) {};
        	
        	\path (d) ++(0:6) node[circle,fill=red,draw,inner sep=0pt,minimum size=4pt] (g) {};

                \path (g) ++(-45:1) node[circle,fill=red,draw,inner sep=0pt,minimum size=4pt] (h) {};
                \path (g) ++(75:1) node[circle,fill=red,draw,inner sep=0pt,minimum size=4pt] (i) {};
                \path (g) ++(0:6.5) node[circle,fill=red,draw,inner sep=0pt,minimum size=4pt] (j) {};
        	
        	\draw[thick,dotted] (a) -- (b);
        	
        	\draw[thick, decorate,decoration=zigzag] (c) to[out=20,in=150] (d) (e) to[out=85,in=130, looseness = 1.1]  (g);
            \draw[thick, decorate,decoration=zigzag, red](i) to[out=20,in=160] node[font=\fontsize{8}{8},midway,inner sep=1pt,outer sep=1pt,minimum size=4pt,fill=white] {$P$} (j);
        	
        	\draw[thick] (a) -- (c) (d) -- (e);
            \draw[thick, red]  (h) -- (g) to node[font=\fontsize{8}{8},midway,inner sep=1pt,outer sep=1pt,minimum size=4pt,fill=white] {$F$} (i);
    	
    \end{tikzpicture}
    \caption{The chain $C$ and candidate chain $F+P$ at the start of an iteration.}
    \label{fig:iteration_start}
\end{figure}

We first check whether $\length(P) < 2\ell$, in which case $F$ must be $(\psi, \alpha\beta)$-successful.
If so, we have found a $\phi$-happy multi-step Shannon chain and we return $C+F+P$.
If not, we let $F_k = F$ and consider an initial segment $P_k$ of $P$ of length $\ell' \in [\ell, 2\ell - 1]$ chosen uniformly at random. 
Using Algorithm~\ref{alg:next_shannon_chain} on the coloring $\Shift(\phi, C + F_k + P_k)$, we find a chain $\tilde F + \tilde P$ with $\Start(\tilde F) = \End(P_k)$ and $\vstart(\tilde{F}) = \vend(P_k)$.

At this point, we have two cases to consider. 
First, suppose $C + F_k + P_k + \tilde F + \tilde P$ is non-intersecting.
We then continue on, updating the chain to be $C+F_k+P_k$ and the candidate chain to be $\tilde F + \tilde P$. 
We call such an update a \textsf{forward} iteration.
Fig.~\ref{fig:non_intersecting_iteration} shows an example of a \textsf{forward} iteration with $\tilde F + \tilde P$ shown in blue.

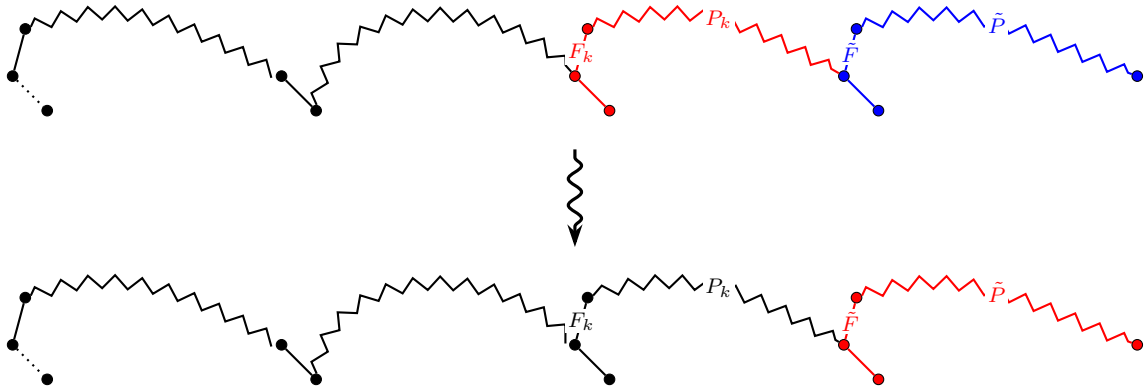
\begin{figure}[t]
    \centering
        \begin{tikzpicture}[xscale = 0.65,yscale=0.65]
            \node[circle,fill=black,draw,inner sep=0pt,minimum size=4pt] (a) at (0,0) {};
        	\path (a) ++(-45:1) node[circle,fill=black,draw,inner sep=0pt,minimum size=4pt] (b) {};
        	\path (a) ++(75:1) node[circle,fill=black,draw,inner sep=0pt,minimum size=4pt] (c) {};

                \path (a) ++(0:5.5) node[circle,fill=black,draw,inner sep=0pt,minimum size=4pt] (d) {};
            
        	\path (d) ++(-45:1) node[circle,fill=black,draw,inner sep=0pt,minimum size=4pt] (e) {};
        	
        	\path (d) ++(0:6) node[circle,fill=black,draw,inner sep=0pt,minimum size=4pt] (g) {};

                \path (g) ++(-45:1) node[circle,fill=black,draw,inner sep=0pt,minimum size=4pt] (h) {};
                \path (g) ++(75:1) node[circle,fill=black,draw,inner sep=0pt,minimum size=4pt] (i) {};
                \path (g) ++(0:5.5) node[circle,fill=red,draw,inner sep=0pt,minimum size=4pt] (j) {};

                \path (j) ++(-45:1) node[circle,fill=red,draw,inner sep=0pt,minimum size=4pt] (k) {};
                \path (j) ++(75:1) node[circle,fill=red,draw,inner sep=0pt,minimum size=4pt] (l) {};
                \path (j) ++(0:6) node[circle,fill=red,draw,inner sep=0pt,minimum size=4pt] (m) {};
        	
        	\draw[thick,dotted] (a) -- (b);
        	
        	\draw[thick, decorate,decoration=zigzag] (c) to[out=20,in=150] (d) (e) to[out=85,in=130, looseness = 1.1]  (g) (i) to[out=20,in=150] node[font=\fontsize{8}{8},midway,inner sep=1pt,outer sep=1pt,minimum size=4pt,fill=white] {$P_k$} (j);
            \draw[thick, decorate,decoration=zigzag, red](l) to[out=20,in=160] node[font=\fontsize{8}{8},midway,inner sep=1pt,outer sep=1pt,minimum size=4pt,fill=white] {$\tilde P$} (m);
        	
        	\draw[thick] (a) -- (c) (d) -- (e) (h) -- (g) to node[font=\fontsize{8}{8},midway,inner sep=1pt,outer sep=1pt,minimum size=4pt,fill=white] {$F_k$} (i);
            \draw[thick, red]  (k) -- (j) to node[font=\fontsize{8}{8},midway,inner sep=1pt,outer sep=1pt,minimum size=4pt,fill=white] {$\tilde F$} (l);

        \begin{scope}[yshift=5.5cm]
            \node[circle,fill=black,draw,inner sep=0pt,minimum size=4pt] (a) at (0,0) {};
        	\path (a) ++(-45:1) node[circle,fill=black,draw,inner sep=0pt,minimum size=4pt] (b) {};
        	\path (a) ++(75:1) node[circle,fill=black,draw,inner sep=0pt,minimum size=4pt] (c) {};

                \path (a) ++(0:5.5) node[circle,fill=black,draw,inner sep=0pt,minimum size=4pt] (d) {};
            
        	\path (d) ++(-45:1) node[circle,fill=black,draw,inner sep=0pt,minimum size=4pt] (e) {};
        	
        	\path (d) ++(0:6) node[circle,fill=red,draw,inner sep=0pt,minimum size=4pt] (g) {};

                \path (g) ++(-45:1) node[circle,fill=red,draw,inner sep=0pt,minimum size=4pt] (h) {};
                \path (g) ++(75:1) node[circle,fill=red,draw,inner sep=0pt,minimum size=4pt] (i) {};
                \path (g) ++(0:5.5) node[circle,fill=blue,draw,inner sep=0pt,minimum size=4pt] (j) {};

                \path (j) ++(-45:1) node[circle,fill=blue,draw,inner sep=0pt,minimum size=4pt] (k) {};
                \path (j) ++(75:1) node[circle,fill=blue,draw,inner sep=0pt,minimum size=4pt] (l) {};
                \path (j) ++(0:6) node[circle,fill=blue,draw,inner sep=0pt,minimum size=4pt] (m) {};
        	
        	\draw[thick,dotted] (a) -- (b);
        	
        	\draw[thick, decorate,decoration=zigzag] (c) to[out=20,in=150] (d) (e) to[out=85,in=130, looseness = 1.1]  (g);
            \draw[thick, decorate,decoration=zigzag, red](i) to[out=20,in=150] node[font=\fontsize{8}{8},midway,inner sep=1pt,outer sep=1pt,minimum size=4pt,fill=white] {$P_k$} (j);
            \draw[thick, decorate,decoration=zigzag, blue](l) to[out=20,in=160] node[font=\fontsize{8}{8},midway,inner sep=1pt,outer sep=1pt,minimum size=4pt,fill=white] {$\tilde P$} (m);
        	
        	\draw[thick] (a) -- (c) (d) -- (e);
            \draw[thick, red]  (h) -- (g) to node[font=\fontsize{8}{8},midway,inner sep=1pt,outer sep=1pt,minimum size=4pt,fill=white] {$F_k$} (i);
            \draw[thick, blue]  (k) -- (j) to node[font=\fontsize{8}{8},midway,inner sep=1pt,outer sep=1pt,minimum size=4pt,fill=white] {$\tilde F$} (l);
        \end{scope}

        \begin{scope}[yshift=3cm]
            \draw[-{Stealth[length=3mm,width=2mm]},very thick,decoration = {snake,pre length=3pt,post length=7pt,},decorate] (11.5,1) -- (11.5,-1);
        \end{scope}
        	
        \end{tikzpicture}
    \caption{Example of a \textsf{forward} iteration.}
    \label{fig:non_intersecting_iteration}
\end{figure}

Now suppose $\tilde F + \tilde P$ intersects $C + F_k + P_k$. The edges and vertices of $\tilde{F} + \tilde{P}$ are naturally ordered, and we let $0 \leq j \leq k$ be the index such that the first intersection point of $\tilde{F} + \tilde{P}$ with $C + F_k + P_k$ occurred at $F_j+P_j$. 
Then we update $C$ to $C' \defeq F_0 + P_0 + \cdots + F_{j-1} + P_{j-1}$ and $F+P$ to $F_j + P'$, where $P'$ is the path of length $2\ell$ from which $P_j$ was obtained as an initial segment.
For $r = k-j$, we call such an update an \textsf{$r$-backward} iteration.
Fig.~\ref{fig:intersecting_iteration} shows an example of an \textsf{$r$-backward} iteration.

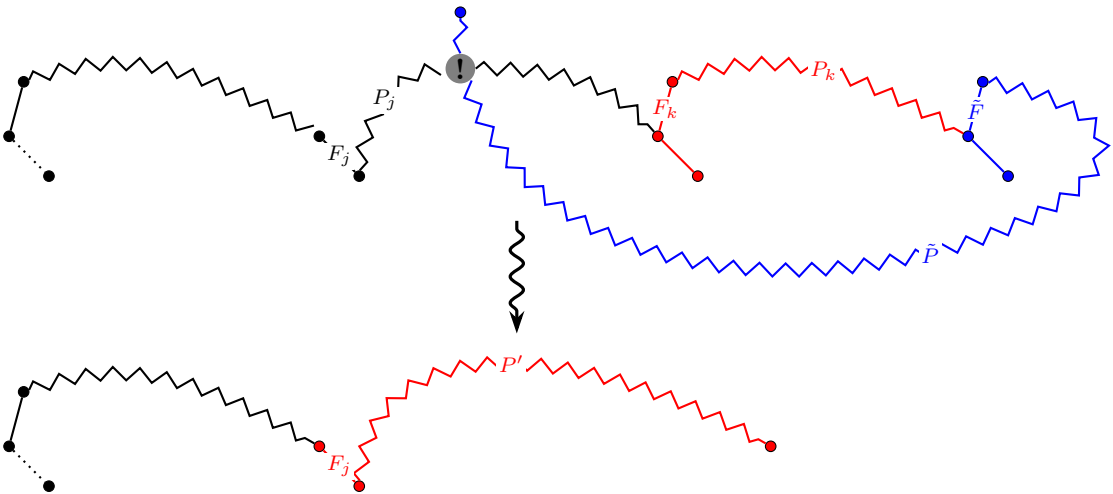
\begin{figure}[!b]
    \centering
        \begin{tikzpicture}[xscale = 0.75,yscale=0.75]
        \clip (-0.5, 8) rectangle (20, -1);
            \node[circle,fill=black,draw,inner sep=0pt,minimum size=4pt] (a) at (0,0) {};
        	\path (a) ++(-45:1) node[circle,fill=black,draw,inner sep=0pt,minimum size=4pt] (b) {};
        	\path (a) ++(75:1) node[circle,fill=black,draw,inner sep=0pt,minimum size=4pt] (c) {};

                \path (a) ++(0:5.5) node[circle,fill=red,draw,inner sep=0pt,minimum size=4pt] (d) {};
            
        	\path (d) ++(-45:1) node[circle,fill=red,draw,inner sep=0pt,minimum size=4pt] (e) {};
        	
        	\path (d) ++(0:8) node[circle,fill=red,draw,inner sep=0pt,minimum size=4pt] (g) {};

        	\draw[thick,dotted] (a) -- (b);
        	
        	\draw[thick, decorate,decoration=zigzag] (c) to[out=20,in=150] (d);
            \draw[thick, decorate,decoration=zigzag, red] (e) to[out=85,in=150, looseness=1.1] node[font=\fontsize{8}{8},midway,inner sep=1pt,outer sep=1pt,minimum size=4pt,fill=white] {$P'$} (g);
        	
        	\draw[thick] (a) -- (c);
            \draw[thick, red] (d) to node[font=\fontsize{8}{8},midway,inner sep=1pt,outer sep=1pt,minimum size=4pt,fill=white] {$F_j$} (e);

        \begin{scope}[yshift=5.5cm]
            \node[circle,fill=black,draw,inner sep=0pt,minimum size=4pt] (a) at (0,0) {};
        	\path (a) ++(-45:1) node[circle,fill=black,draw,inner sep=0pt,minimum size=4pt] (b) {};
        	\path (a) ++(75:1) node[circle,fill=black,draw,inner sep=0pt,minimum size=4pt] (c) {};

                \path (a) ++(0:5.5) node[circle,fill=black,draw,inner sep=0pt,minimum size=4pt] (d) {};
            
        	\path (d) ++(-45:1) node[circle,fill=black,draw,inner sep=0pt,minimum size=4pt] (e) {};

                \node[circle, fill=gray,inner sep=1pt] (z) at (8, 1.2) {\textbf{!}};
        	
        	\path (d) ++(0:6) node[circle,fill=red,draw,inner sep=0pt,minimum size=4pt] (g) {};

                \path (g) ++(-45:1) node[circle,fill=red,draw,inner sep=0pt,minimum size=4pt] (h) {};
                \path (g) ++(75:1) node[circle,fill=red,draw,inner sep=0pt,minimum size=4pt] (i) {};
                \path (g) ++(0:5.5) node[circle,fill=blue,draw,inner sep=0pt,minimum size=4pt] (j) {};

                \path (j) ++(-45:1) node[circle,fill=blue,draw,inner sep=0pt,minimum size=4pt] (k) {};
                \path (j) ++(75:1) node[circle,fill=blue,draw,inner sep=0pt,minimum size=4pt] (l) {};
                \path (z) ++(90:1) node[circle,fill=blue,draw,inner sep=0pt,minimum size=4pt] (m) {};
        	
        	\draw[thick,dotted] (a) -- (b);
        	
        	\draw[thick, decorate,decoration=zigzag] (c) to[out=20,in=150] (d) (e) to[out=85,in=180] node[font=\fontsize{8}{8},midway,inner sep=1pt,outer sep=1pt,minimum size=4pt,fill=white] {$P_j$} (z) to[out=0,in=130] (g);
            \draw[thick, decorate,decoration=zigzag, red] (i) to[out=20,in=150] node[font=\fontsize{8}{8},midway,inner sep=1pt,outer sep=1pt,minimum size=4pt,fill=white] {$P_k$} (j);
            \draw[thick, decorate,decoration=zigzag, blue](l) to[out=-10,in=-70, looseness=2] node[font=\fontsize{8}{8},midway,inner sep=1pt,outer sep=1pt,minimum size=4pt,fill=white] {$\tilde P$} (z) -- (m);
        	
        	\draw[thick] (a) -- (c) (d) to node[font=\fontsize{8}{8},midway,inner sep=1pt,outer sep=1pt,minimum size=4pt,fill=white] {$F_j$} (e);
            \draw[thick, red]  (h) -- (g) to node[font=\fontsize{8}{8},midway,inner sep=1pt,outer sep=1pt,minimum size=4pt,fill=white] {$F_k$} (i);
            \draw[thick, blue]  (k) -- (j) to node[font=\fontsize{8}{8},midway,inner sep=1pt,outer sep=1pt,minimum size=4pt,fill=white] {$\tilde F$} (l);
        \end{scope}

        \begin{scope}[yshift=3cm]
            \draw[-{Stealth[length=3mm,width=2mm]},very thick,decoration = {snake,pre length=3pt,post length=7pt,},decorate] (9,1) -- (9,-1);
        \end{scope}
        	
        \end{tikzpicture}
    \caption{Example of a \textsf{$1$-backward} iteration.}
    \label{fig:intersecting_iteration}
\end{figure}

In order to track intersections we define a hash map $\visited$ with key set $V(G)\cup E(G)$. 
Let $C = F_0 + P_0 + \cdots + F_{k-1} + P_{k-1}$ be the chain at the start of an iteration.
We maintain the following:
\[\visited(v) = \left\{\begin{array}{cc}
    1 & v \in V(F_i) \text{ for some } 0 \leq i < k; \\
    0 & \text{otherwise;}
\end{array}\right. \]
and
\[\visited(e) = \left\{\begin{array}{cc}
    1 & e \in \IE(P_i) \text{ for some } 0 \leq i < k; \\
    0 & \text{otherwise.}
\end{array}\right.\]
The formal statement of our \hyperref[alg:multi_shannon_chain]{Multi-Step Shannon Algorithm} is given in Algorithm~\ref{alg:multi_shannon_chain}.

\begin{algorithm}[h]\small
\caption{Multi-Step Shannon Algorithm (MSSA)}\label{alg:multi_shannon_chain}
\begin{flushleft}
\textbf{Input}: A proper partial $\Sha$-edge-coloring $\phi$, an uncolored edge $e$, and a vertex $x \in V(e)$. \\
\textbf{Output}: A $\phi$-happy multi-step Shannon chain $C$ with $\Start(C) = e$.
\end{flushleft}
\begin{algorithmic}[1]
    \State $\visited(f) \gets 0, \quad \visited(v) \gets 0$ \quad \textbf{for each} $f \in E(G)$, $v \in V(G)$
    \State $(F,P) \gets \hyperref[alg:first_shannon_chain]{\mathsf{FirstChain}}(\phi, e, x)$ \Comment{Algorithm~\ref{alg:first_shannon_chain}}
    \State $C\gets (e), \quad \psi \gets \phi, \quad k \gets 0$
    \medskip
    \While{true}
        \If{$\length(P) < 2\ell$}
            \State \Return $C+F+P$ \label{step:success}\Comment{Success}
        \EndIf
        \State Let $\ell' \in [\ell,2\ell-1]$ be chosen uniformly at random.
        \State $F_k \gets F,\quad P_k\gets P|\ell'$ \label{step:Pk}\Comment{Randomly shorten the path}
        \State Let $\alpha$, $\beta$ be such that $P_k$ is an $\alpha\beta$-path where $\psi(\End(P_k)) = \beta$.
        \State $\psi \gets \Shift(\psi, F_k+P_k)$ 
        \State $\visited(v) \gets 1$ \textbf{for each} $v \in V(F_k)$
        \State $\visited(f) \gets 1$ \textbf{for each} $f \in \IE(P_k)$
        \State $e' \gets \End(P_k), \quad w \gets \vend(P_k), \quad u \in V(e')$ distinct from $w$
        \State $(\tilde F , \tilde P) \gets \hyperref[alg:next_shannon_chain]{\mathsf{NextChain}}(\psi, e', u, \alpha, \beta)$ \label{step:alpha_beta_order}\Comment{Algorithm~\ref{alg:next_shannon_chain}}
        \If{$\visited(z) = 1$ or $\visited(h) = 1$ for some $z\in V(\tilde F + \tilde P)$, $h \in E(\tilde F + \tilde P)$}
            \State Let $0 \leq j \leq k$ be such that the first intersection occurs at $F_j + P_j$.
            \State $\psi \gets \Shift(\psi, (F_j + P_j + \cdots + F_k + P_k)^*)$ \label{step:psi}
            \State $\visited(v) \gets 0$ \textbf{for each} $v \in V(F_j) \cup \ldots \cup V(F_k)$
            \State $\visited(f) \gets 0$ \textbf{for each} $f \in \IE(P_j) \cup \ldots \cup \IE(P_k)$ \label{step:visited}
            \State $C\gets F_0 + P_0 + \cdots + F_{j-1} + P_{j-1}, \quad k \gets j$ \label{step:truncate} \Comment{\textsf{$(k-j)$-backward} iteration}
            \State $F\gets F_j, \quad P \gets P'$ \quad where $P_j$ is an initial segment of $P'$ as described earlier.
        \ElsIf{$2 \leq \length(\tilde P) < 2\ell$ \textbf{and} $\vend(\tilde P) = \Pivot(\tilde F)$}
            \State \Return \textbf{\textsf{FAIL}} \label{step:fail_chain}\Comment{Failure}
        \Else
            \State $ C \gets C + F_k + P_k, \quad F \gets \tilde F, \quad P \gets \tilde P, \quad k \gets k + 1$ \label{step:append}\Comment{\textsf{forward} iteration}
        \EndIf
    \EndWhile
\end{algorithmic}
\end{algorithm}

Note that in steps~\ref{step:psi}--\ref{step:visited} of the algorithm, we can update $\visited$, the missing colors hash map $M(\cdot)$ and $\psi$ simultaneously. 
By construction, $\length(P_k) \geq \ell > 2$ for all $k$. 
This ensures that at least one edge of each color $\alpha$, $\beta$ is on the $\alpha\beta$-path $P_k$ and also guarantees that $V(\End(F_j)) \cap V(\Start(F_{j+1})) = \0$ for all $j$.
As noted in the overview, in step~\ref{step:truncate} of the algorithm, we truncate the current chain at the first vertex $z$ or edge $h$ on $\tilde F + \tilde P$ such that $\visited(z) = 1$ or $\visited(h) = 1$.
These observations will be important for the proofs in subsequent sections.

\subsection{Proof of Correctness}\label{subsection:mssa_poc}

In this subsection, we prove the correctness of Algorithm~\ref{alg:multi_shannon_chain} as well as some auxiliary results on the chain it outputs.
These results will be important for the proofs in the remainder of the paper.
First, let us consider the output of Algorithm~\ref{alg:first_shannon_chain}.

\begin{lemma}\label{lemma:first_chain}
    Let $\phi$ be a proper partial coloring and let $e$ be an uncolored edge such that $V(e) = \set{x,y}$. 
    Let $F$ and $P$ be the fan and path returned by Algorithm~\ref{alg:first_shannon_chain} on input $(\phi, e, x)$, where $P$ is an $\alpha\beta$-path. Then no edge in $F$ is colored $\alpha$ or $\beta$, and
    \begin{itemize}
        \item either $F$ is $\phi$-happy and $P = (\End(F))$, or
        \item $F$ is $(\phi, \alpha\beta)$-hopeful, and $\length(P) = 2\ell$, or
        \item $F$ is $(\phi, \alpha\beta)$-successful.
    \end{itemize}
\end{lemma}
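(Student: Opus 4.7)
My plan is to trace the three return branches of Algorithm~\ref{alg:first_shannon_chain} and read off the structure of the returned pair $(F, P)$ from Lemma~\ref{lemma:first_fan_shannon}. Let $(F_0, \alpha, \beta)$ be the output of $\mathsf{FirstFan}$ at step~\ref{step:first_fan}; Lemma~\ref{lemma:first_fan_shannon} already supplies (i) no edge of $F_0$ is colored $\alpha$ or $\beta$, and (ii) one of three alternatives: (a) $F_0$ is $\phi$-happy, (b) $\length(F_0) = 2$ and $F_0$ is $(\phi, \alpha\beta)$-successful, or (c) $e$ is $(\phi, \alpha\beta)$-successful (in which case the proof of Lemma~\ref{lemma:first_fan_shannon} actually shows $F_0$ is $(\phi, \alpha\beta)$-disappointed, hence still $(\phi, \alpha\beta)$-hopeful). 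The fan I actually return is either $F_0$ or the singleton $(e)$; since $e$ is uncolored, the ``no edge of $F$ is colored $\alpha$ or $\beta$'' clause is then automatic in every branch.

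The key auxiliary fact I will establish, in the branch $\beta \notin M(\phi, x)$, is that $F_0 = (e, f)$ is $(\phi, \alpha\beta)$-disappointed if and only if $\vend(P) = x$, where $P \defeq P(\End(F_0); \Shift(\phi, F_0), \alpha\beta)$. Computing missing sets at the pivot $x$ and at $z \defeq \vend(F_0)$ under $\Shift(\phi, F_0)$ shows $\alpha \in M(\Shift(\phi, F_0), x)$ and $\beta \in M(\Shift(\phi, F_0), z)$, so the maximal $\alpha\beta$-path in $G(\Shift(\phi, F_0), \alpha\beta)$ starting from $z$ with an $\alpha$-edge terminates at $x$ precisely when $x$ and $z$ are $(\Shift(\phi, F_0), \alpha\beta)$-related, which is by definition the negation of being $(\phi, \alpha\beta)$-successful. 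I will also observe that because $\phi(f) = \eta \notin \set{\alpha, \beta}$ (already argued inside the proof of Lemma~\ref{lemma:first_fan_shannon}), the subgraphs $G(\phi, \alpha\beta)$ and $G(\Shift(\phi, F_0), \alpha\beta)$ coincide, so degree and relatedness conditions transfer freely between the two colorings.

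With this equivalence in hand the case analysis is immediate. If $\beta \in M(\phi, x)$ the algorithm exits at step~\ref{step:first_happy_fan} with $F = F_0$ $\phi$-happy and $P = (\End(F_0))$, yielding the first bullet. Otherwise, if $\length(P) > 2\ell$ the return at step~\ref{step:first_length_2} gives $(F_0, P|2\ell)$ with $\length(P|2\ell) = 2\ell$, and $F_0$ is $(\phi, \alpha\beta)$-hopeful in either subcase of Lemma~\ref{lemma:first_fan_shannon}, so the second bullet holds. If $\length(P) \leq 2\ell$ and $\vend(P) \neq x$, the equivalence rules out the disappointed case and forces $F_0$ to be $(\phi, \alpha\beta)$-successful, delivering the third bullet with $P|2\ell = P$. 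Finally, if $\length(P) \leq 2\ell$ and $\vend(P) = x$, the equivalence forces $F_0$ to be disappointed, so alternative (c) of Lemma~\ref{lemma:first_fan_shannon} must hold, and the returned singleton $F = (e)$ then inherits the $(\phi, \alpha\beta)$-successful property, giving the third bullet once more. The main obstacle is the pivotal equivalence of the second paragraph; everything else is routine bookkeeping about which algorithmic branch triggers which bullet.
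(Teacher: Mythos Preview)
Your proposal is correct and follows essentially the same approach as the paper: both trace the three return branches of Algorithm~\ref{alg:first_shannon_chain} and invoke Lemma~\ref{lemma:first_fan_shannon} to classify the output. The paper's proof is slightly terser---it silently identifies the algorithmic test ``$\vend(P) \neq x$'' with ``$F'$ is $(\phi, \alpha\beta)$-successful'' without spelling out the equivalence you establish in your second paragraph---but the underlying argument is the same.
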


\begin{proof}
    Let $(F', \alpha, \beta)$ be the output of Algorithm~\ref{alg:first_shannon_fan} at step~\ref{step:first_fan}.
    By Lemma~\ref{lemma:first_fan_shannon}, we must have no edge in $F'$ is colored $\alpha$ or $\beta$, and
    \begin{itemize}
        \item either $\beta \in M(\phi, x)$ and $F'$ is $\phi$-happy, or
        \item $\length(F') = 2$ and $F'$ is $(\phi, \alpha\beta)$-successful, or
        \item $e$ is $(\phi, \alpha\beta)$-successful.
    \end{itemize}
    If the algorithm reaches step~\ref{step:first_happy_fan}, then $F = F'$ is $\phi$-happy and $P = (\End(F))$.
    If not, $\length(F') = 2$ and let 
    \[P' \defeq P(\End(F'); \Shift(\phi, F'), \alpha\beta), \quad P'' \defeq P(e; \phi, \alpha\beta).\]
    At step~\ref{step:first_length_2}, we first check whether $\length(P') > 2\ell$ or $F'$ is $(\phi, \alpha\beta)$-successful. 
    In this case, we return $F = F'$ and $P = P'|2\ell$ ($\length(P) = 2\ell$ if $F$ is $(\phi, \alpha\beta)$-disappointed). 
    If $\length(P') < 2\ell$ and $F'$ is $(\phi, \alpha\beta)$-disappointed, we pick the $(\phi, \alpha\beta)$-successful fan $F = (e)$ and the initial segment $P = P''|2\ell$.
\end{proof}

Note that in the setting of Lemma~\ref{lemma:first_chain}, the chain $F+P$ is $\phi$-shiftable.
Furthermore, if $\length(P) < 2\ell$, then $F+P$ is $\phi$-happy.
Let us now consider the output of Algorithm~\ref{alg:next_shannon_chain}.

\begin{lemma}\label{lemma:next_chain}
    Suppose that $\phi$ is a proper partial coloring, $e$ is an uncolored edge such that $V(e) = \set{x,y}$, and $\alpha$, $\beta$ are colors such that $\alpha \in M(\phi, x) \setminus M(\phi, y)$ and $\beta \in M(\phi, y)$. 
    Let $\tilde F$ and $\tilde P$ be the fan and the path returned by Algorithm~\ref{alg:next_shannon_chain} on input $(\phi, e, x, \alpha, \beta)$, where $\tilde P$ is a $\gamma\delta$-path for some $\gamma, \delta \in [\Sha]$. 
    Then no edge in $\tilde F$ is colored $\alpha,\,\beta,\,\gamma$ or $\delta$, and
    \begin{itemize}
        \item either $\tilde F$ is $\phi$-happy and $\tilde P = (\End(\tilde F))$, or
        \item $\length(\tilde F) = 2$, $\tilde F$ is $(\phi, \gamma\delta)$-hopeful and $\{\gamma, \delta\} = \{\alpha, \beta\}$, or
        \item $\tilde F$ is $(\phi, \gamma\delta)$-hopeful, $\{\gamma, \delta\} \cap \{\alpha, \beta\} = \0$, and $\length(\tilde P) = 2\ell$, or
        \item $\tilde F$ is $(\phi, \gamma\delta)$-successful and $\{\gamma, \delta\} \cap \{\alpha, \beta\} = \0$.
    \end{itemize}
\end{lemma}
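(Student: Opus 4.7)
The plan is to parallel the proof of Lemma~\ref{lemma:first_chain}: invoke Lemma~\ref{lemma:next_shannon_fan} on the call to Algorithm~\ref{alg:next_shannon_fan} at step~\ref{step:next_fan} to obtain $(F, \gamma', \delta')$ with one of the four possible characterizations stated there, and then case split on which branch of Algorithm~\ref{alg:next_shannon_chain} is actually executed.

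The first two branches are immediate. If step~\ref{step:next_happy_fan} is reached, then $\delta' \in M(\phi, x)$, so $F$ is $\phi$-happy and $\tilde P = (\End(F))$ is trivially a $\gamma\delta$-path for any $\gamma, \delta$; this gives the first bullet. If instead step~\ref{step:delta_beta} is reached, Lemma~\ref{lemma:next_shannon_fan} supplies $\length(F) = 2$ and $F$ is $(\phi, \alpha\beta)$-hopeful, and because $\tilde P$ is an initial segment of the $\alpha\beta$-path from $\End(F)$ we have $\{\gamma, \delta\} = \{\alpha, \beta\}$, giving the second bullet.

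The main case is when $\delta' \neq \beta$, and the first task will be to verify $\{\gamma', \delta'\} \cap \{\alpha, \beta\} = \emptyset$. By construction at step~\ref{step:gamma_neq_alpha}, $\gamma' \neq \alpha$; and since the algorithm passed step~\ref{step:next_happy_edge}, $M(\phi, x) \cap M(\phi, y) = \emptyset$, forcing $\beta \notin M(\phi, x)$ and hence $\gamma' \neq \beta$. On the other side $\delta' \in M(\phi, y)$, so $\delta' \neq \alpha$ (because $\alpha \notin M(\phi, y)$), while $\delta' \neq \beta$ is the case assumption. Lemma~\ref{lemma:next_shannon_fan} then asserts that $F$ or $e$ is $(\phi, \gamma'\delta')$-successful. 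Setting $P \defeq P(\End(F); \Shift(\phi, F), \gamma'\delta')$, an argument identical to the one in Lemma~\ref{lemma:first_fan_shannon} shows $\vend(P) = x$ precisely when $F$ is $(\phi, \gamma'\delta')$-disappointed. Therefore: if $F$ is $(\phi, \gamma'\delta')$-successful we return $\tilde F = F$ with $\tilde P = P|2\ell$ (fourth bullet); if $F$ is disappointed and $\length(P) > 2\ell$ we still return $\tilde F = F$, now with $\length(\tilde P) = 2\ell$ (third bullet); otherwise we return $\tilde F = (e)$ with $\tilde P = P(e; \phi, \gamma'\delta')|2\ell$, and $\tilde F = (e)$ is trivially a $(\phi, \gamma'\delta')$-successful single-edge fan because $e$ itself is (fourth bullet).

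The color-avoidance conclusion on $\tilde F$ is then immediate: Lemma~\ref{lemma:next_shannon_fan} forbids any edge of $F$ from having color $\alpha$, $\beta$, $\gamma'$, or $\delta'$, and in every case above $\{\gamma, \delta\}$ either equals $\{\alpha, \beta\}$ or equals $\{\gamma', \delta'\}$ (and the final subcase has $\tilde F = (e)$ uncolored). The main obstacle I anticipate is the color-disjointness verification in the third case, which quietly relies on $\alpha \notin M(\phi, y)$ together with the fact that $M(\phi, x) \cap M(\phi, y) = \emptyset$ by the time step~\ref{step:gamma_neq_alpha} is reached; the rest is a mechanical translation of Lemma~\ref{lemma:next_shannon_fan} through the branching of Algorithm~\ref{alg:next_shannon_chain}.
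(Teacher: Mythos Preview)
Your proposal is correct and follows essentially the same route as the paper's proof: invoke Lemma~\ref{lemma:next_shannon_fan}, handle the $\phi$-happy and $\delta=\beta$ branches directly, and in the remaining case verify $\{\gamma,\delta\}\cap\{\alpha,\beta\}=\emptyset$ before deferring to the argument of Lemma~\ref{lemma:first_chain}. The only cosmetic difference is that you deduce $\delta'\neq\alpha$ from $\delta'\in M(\phi,y)$ and $\alpha\notin M(\phi,y)$, whereas the paper uses $\delta\in M(\phi,\vend(F'))$ together with $M(\phi,x)\cap M(\phi,\vend(F'))=\emptyset$; both are valid.
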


\begin{proof}
    Let $(F', \gamma, \delta)$ be the output of Algorithm~\ref{alg:next_shannon_fan} at step~\ref{step:next_fan}.
    By Lemma~\ref{lemma:next_shannon_fan},
    no edge in $F'$ is colored $\alpha,\,\beta,\,\gamma$ or $\delta$, and at least one of the following holds:
    \begin{itemize}
        \item either $\delta \in M(\phi, x)$ and $F'$ is $\phi$-happy, or
        \item $\length(F') = 2$, $\delta = \beta$ and $F'$ is $(\phi, \alpha\beta)$-hopeful, or
        \item $\length(F') = 2$, $\delta \neq \beta$ and $F'$ is $(\phi, \gamma\delta)$-successful, or 
        \item $\delta \neq \beta$ and $e$ is $(\phi, \gamma\delta)$-successful.
    \end{itemize}
    If the algorithm reaches step~\ref{step:next_happy_fan}, then $\tilde F = F'$ is $\phi$-happy.
    If not, $\length(F') = 2$.
    If the algorithm reaches step~\ref{step:delta_beta}, then $\tilde F = F'$ is a $(\phi, \alpha\beta)$-hopeful fan.
    Note that according to step~\ref{step:gamma_neq_alpha} in Algorithm~\ref{alg:next_shannon_fan}, at this point we have $\gamma \neq \alpha$.
    Furthermore, we have $M(\phi, x) \cap M(\phi, y) = \0$ and $M(\phi, x) \cap M(\phi, \vend(F')) = \0$.
    The former implies $\gamma \neq \beta$ and the latter implies $\delta \neq \alpha$.
    It follows that $\{\gamma, \delta\} \cap \{\alpha, \beta\} = \0$.
    The rest of the proof follows identically to that of Lemma~\ref{lemma:first_chain}.
\end{proof}

It follows that Algorithm~\ref{alg:multi_shannon_chain} outputs a $\phi$-happy chain as long as 
\begin{enumerate}[label=\ep{\normalfont{}\texttt{Happy}\arabic*},labelindent=15pt,leftmargin=*]
    \item\label{item:valid_input} the input to Algorithm~\ref{alg:next_shannon_chain} at step~\ref{step:alpha_beta_order} is valid,

    \item\label{item:never_fail} we never reach step~\ref{step:fail_chain}, and

    \item\label{item:invariants} the invariants~\ref{inv:start_F_end_C}--\ref{inv:hopeful_length} hold for each iteration of the \textsf{while} loop.
    
\end{enumerate}
The proofs for~\ref{item:valid_input}-\ref{item:invariants} are identical to those in \cite[Lemmas 5.4, 5.5, 5.6]{fastEdgeColoring} \textit{mutatis mutandis} and so we omit them here.

We conclude this subsection with two lemmas.
The first lemma describes an implication of Lemma~\ref{lemma:next_chain} on a certain kind of intersection.

\begin{lemma}\label{lemma:intersection_prev}
    Suppose we have a {\upshape\textsf{$0$-backward}} iteration of the {\upshape\textsf{while}} loop in Algorithm~\ref{alg:multi_shannon_chain}.
    Then the first intersection must occur at a vertex in $V(F_k)$.
\end{lemma}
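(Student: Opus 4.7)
The plan is to argue by contradiction: if the first intersection of $\tilde F + \tilde P$ with the $\visited$ sets were an edge $h \in \IE(P_k)$ rather than a vertex in $V(F_k)$, I would exhibit a visited vertex in $V(F_k)$ lying on $\tilde F + \tilde P$ strictly before $h$. Since $\visited$ only marks vertices in $V(F_i)$ and edges in $\IE(P_i)$, and a \textsf{$0$-backward} iteration forces this first hit to belong to $F_k + P_k$, ruling out the edge case establishes the lemma.

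First I would rule out $h \in E(\tilde F)$: by Lemma~\ref{lemma:next_chain}, no edge of $\tilde F$ is colored $\alpha$ or $\beta$, whereas every internal edge of $P_k$ is colored $\alpha$ or $\beta$ in the current coloring $\psi$ (shifting along an $\alpha\beta$-path only permutes the two colors on its interior). So $h \in E(\tilde P)$. Splitting on the sub-cases of Lemma~\ref{lemma:next_chain}: if $\tilde P = (\End(\tilde F))$ or $\{\gamma, \delta\} \cap \{\alpha, \beta\} = \0$, no edge of $\tilde P$ is colored $\alpha$ or $\beta$, so $h$ cannot exist. The remaining case is $\{\gamma, \delta\} = \{\alpha, \beta\}$, in which $\tilde P$ is an initial segment of an $\alpha\beta$-path in $\Shift(\psi, \tilde F)$.

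In this case I would first check that shifting $\tilde F$ does not alter the $\alpha\beta$-subgraph: the input conditions to Algorithm~\ref{alg:next_shannon_chain} force $\alpha \notin M(\psi, w)$, so the color $\eta = \min M(\psi, w) \setminus \{\beta\}$ swapped onto $\End(P_k)$ lies outside $\{\alpha, \beta\}$. Hence $\tilde P$ traverses an $\alpha\beta$-component $K$ already present in $\psi$. Since $\End(P_k)$ is blank in $\psi$ while $e_{\ell'-2}$ is colored $\alpha$ or $\beta$, the vertex $u = \Pivot(\tilde F) = x_{\ell' - 1}$ is an endpoint of the component containing $\IE(P_k)$; tracing that component away from $u$ one passes through $e_{\ell'-2}, \dots, e_1, e_0$, and hence through the vertices $x_1 = \vend(F_k)$ and $x_0 = \Pivot(F_k)$, both of which lie in $V(F_k)$ and are therefore visited.

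If $\tilde P$ contained any $e_i \in \IE(P_k)$, then $K$ would coincide with this component, and by the hopeful condition $\vend(\tilde F)$ would be its endpoint opposite $u$. Tracing $\tilde P$ from $\vend(\tilde F)$ back toward $u$ along $K$ then forces $\tilde P$ to pass through $x_0$ (and later $x_1$) strictly before any $e_i$ with $1 \leq i \leq \ell' - 2$, producing a visited vertex in $V(F_k)$ on $\tilde P$ before any visited edge, a contradiction. The main obstacle is this geometric step: pinning down $K$ after the shift and confirming that $x_0$, $x_1$ precede the internal edges of $P_k$ on any $K$-traversal from $\vend(\tilde F)$ to $u$; the rest of the argument is a routine case analysis.
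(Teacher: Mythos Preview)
Your approach is correct and follows essentially the same route as the paper. Both arguments reduce to the case $\{\gamma,\delta\}=\{\alpha,\beta\}$ and $\length(\tilde F)=2$ (using Lemma~\ref{lemma:next_chain} to rule out edges of $\tilde F$ and the other sub-cases of $\tilde P$); the paper then simply cites \cite[Lemma~5.8]{fastEdgeColoring} for the remainder, whereas you supply the concrete geometric step---tracing the $\alpha\beta$-component from $\vend(\tilde F)$ to $u$ and observing that $x_0=\Pivot(F_k)$ and $x_1=\vend(F_k)$ are encountered before any edge of $\IE(P_k)$---that the paper delegates to that reference.
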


\begin{proof}
    Consider such an iteration of the \textsf{while} loop.
    Let $C = F_0 + P_0 + \cdots + F_{k-1} + P_{k-1}$ and $F+P$ be the chain and the candidate chain at the start of the iteration such that $P$ is an $\alpha\beta$-path. 
    Let $P_k$ be the random truncation of $P$ computed on step~\ref{step:Pk}
    and let $\tilde F$, $\tilde P$ be the output of Algorithm~\ref{alg:next_shannon_chain} on step~\ref{step:alpha_beta_order} such that $\tilde P$ is a $\gamma\delta$-path.
    Since we reach step~\ref{step:truncate} with $j = k$, we must have
    \begin{itemize}
        \item either $E(\tilde F + \tilde P) \cap \IE(P_k) \neq \0$, or
        \item $V(\tilde F + \tilde P) \cap V(F) \neq \0$.
    \end{itemize}
    Suppose the first intersection is at an edge. 
    Let $\psi \defeq \Shift(\phi, C + F + P_k)$, and,
    without loss of generality, let $\alpha \in M(\psi, \Pivot(\tilde F))$.
    By Lemma~\ref{lemma:next_chain}, no edge in $\tilde F$ is colored $\alpha,\,\beta,\,\gamma$ or $\delta$ under $\psi$ and
    \begin{itemize}
        \item either $\tilde F$ is $\psi$-happy and $\tilde P = (\End(\tilde F))$, or
        \item $\length(\tilde F) = 2$, $\tilde F$ is $(\psi, \gamma\delta)$-hopeful and $\{\gamma, \delta\} = \{\alpha, \beta\}$, or
        \item $\tilde F$ is $(\psi, \gamma\delta)$-hopeful, $\{\gamma, \delta\} \cap \{\alpha, \beta\} = \0$, and $\length(\tilde P) = 2\ell$, or
        \item $\tilde F$ is $(\psi, \gamma\delta)$-successful and $\{\gamma, \delta\} \cap \{\alpha, \beta\} = \0$.
    \end{itemize}
    As no edge in $\tilde F$ is colored $\alpha$ or $\beta$ under $\psi$, $E(\tilde F) \cap E(P_k) = \set{\Start(\tilde F)}$, which is not a violation since $\Start(\tilde F)$ is not an internal edge of $P_k$.
    So we must have $E(\tilde P) \cap \IE(P_k) \neq \0$.
    Let $\psi'\defeq \Shift(\psi, \tilde F)$.
    As $E(\tilde F) \cap \IE(P_k) = \0$, $\psi'(e) = \psi(e) \in \set{\alpha, \beta}$ for all $e \in \IE(P_k)$.
    Therefore, we conclude that $\set{\gamma, \delta} \cap \set{\alpha, \beta} \neq \0$ and hence $\set{\gamma, \delta} = \set{\alpha, \beta}$ and $\length(\tilde F) = 2$.
    The rest of the proof follows identically to \cite[Lemma 5.8]{fastEdgeColoring}.
\end{proof}

The next lemma describes some properties of non-intersecting chains that will be useful in the proofs presented in the subsequent sections.


\begin{lemma}\label{lemma:non-intersecting_degrees}
    Let $\phi$ be a proper partial coloring and let $e$ be an uncolored edge such that $V(e) = \set{x,y}$. 
    Consider running Algorithm~\ref{alg:multi_shannon_chain} with input $(\phi, e, x)$.
    Let $C = F_0+P_0+\cdots+F_{k-1}+P_{k-1}$ be the multi-step Shannon chain at the beginning of an iteration of the {\upshape\textsf{while}} loop and let $F_k + P_k$ be the chain formed at step~\ref{step:Pk} such that, for each $j$, $P_j$ is an $\alpha_j\beta_j$-path in the coloring $\Shift(\phi, F_0+P_0+\cdots +F_{j-1}+P_{j-1})$. Then:
    \begin{enumerate}[label=\ep{\normalfont{}\texttt{Chain}\arabic*},labelindent=15pt,leftmargin=*]
        \item\label{item:degree_end} $\deg(\vend(F_j); \phi, \alpha_j\beta_j) = 1$  for each $0 \leq j \leq k$, and
        
        \item\label{item:related_phi} for each $0 \leq j \leq k$, all edges of $P_j$ except $\Start(P_j)$ are colored $\alpha_j$ or $\beta_j$ under $\phi$. 
    \end{enumerate}
\end{lemma}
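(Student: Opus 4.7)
The plan is to prove both \ref{item:degree_end} and \ref{item:related_phi} simultaneously by induction on $j$. The base case $j=0$ is essentially immediate: since $\psi_0 = \phi$, the $\alpha_0\beta_0$-path property of $P_0$ transfers directly, giving \ref{item:related_phi}; and since $\length(P_0) \geq \ell > 2$, its second edge is colored $\alpha_0$ and incident to $\vend(F_0)$, so $\deg(\vend(F_0);\phi,\alpha_0\beta_0) \geq 1$, while Lemma~\ref{lemma:first_chain} gives $(\phi,\alpha_0\beta_0)$-hopefulness of $F_0$, yielding equality.

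For the inductive step, the heart of \ref{item:related_phi} is to show that for any edge $f \in E(P_j) \setminus \set{\Start(P_j)}$, we have $f \notin E(F_i + P_i)$ for all $i < j$, since then $\phi(f) = \psi_j(f) \in \set{\alpha_j,\beta_j}$. I would dispatch the cases separately: $f \notin E(F_i)$ because $V(f) \subseteq V(F_j+P_j)$ and non-intersecting gives $V(F_i) \cap V(F_j+P_j) = \0$; $f \notin \IE(P_i)$ by the non-intersecting condition; $f \neq \Start(P_i) = \End(F_i)$ by the first case; and for $f \neq \End(P_i)$, split on whether $i < j-1$ or $i = j-1$. When $i < j-1$, the vertex $\vend(P_i) = \vstart(F_{i+1}) \in V(F_{i+1})$ lies outside $V(F_j+P_j)$ by non-intersecting, so $\End(P_i)$ has an endpoint outside $V(P_j)$. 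When $i = j-1$, the construction of the algorithm (step~\ref{step:alpha_beta_order}) forces $\End(P_{j-1})$ to be either $\Start(P_j)$ itself (when $\length(F_j)=1$), which is excluded by assumption on $f$, or $\Start(F_j)$ (when $\length(F_j)=2$), which is uncolored under $\psi_j$ while $f$ is not.

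For \ref{item:degree_end}, the lower bound $\deg(\vend(F_j);\phi,\alpha_j\beta_j) \geq 1$ follows because the second edge of $P_j$ is incident to $\vend(F_j)$ and colored $\alpha_j$ under $\psi_j$, hence under $\phi$ by the just-proven \ref{item:related_phi}. The upper bound $< 2$ transfers from the $(\psi_j,\alpha_j\beta_j)$-hopefulness of $F_j$ (supplied by Lemmas~\ref{lemma:first_chain} and~\ref{lemma:next_chain}): any additional $\alpha_j$- or $\beta_j$-colored edge $g$ at $\vend(F_j)$ under $\phi$ but not under $\psi_j$ would have to lie in $E(F_0+P_0+\cdots+F_{j-1}+P_{j-1})$, but an edge $g$ incident to $\vend(F_j) \in V(F_j)$ cannot sit in any $E(F_i)$ (non-intersecting of vertex sets), nor in $\IE(P_i)$ or the two boundary edges of $P_i$, by essentially the same case analysis as above applied to $g$.

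The main obstacle will be the last case in step two, namely the $i = j-1$ boundary case for $\End(P_{j-1})$, which requires directly invoking the algorithmic construction and the fact that the feeding edge $e' = \End(P_{j-1})$ becomes $\Start(F_j) = \Start(\tilde F)$; this is not purely a consequence of the non-intersecting definition. Otherwise the argument is routine bookkeeping that the $\Shift$ operation only perturbs colors on edges belonging to the shifted chain, so non-intersecting control of edge and vertex sets between layers suffices.
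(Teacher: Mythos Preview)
Your argument for \ref{item:related_phi} is correct and is essentially what the paper (via its reference) does. The gap is in your upper-bound argument for \ref{item:degree_end}. You claim that an edge $g$ incident to $\vend(F_j)$ ``cannot sit in $\IE(P_i)$ \ldots\ by essentially the same case analysis,'' but that case analysis for $f$ used $f \in E(F_j+P_j)$, which is exactly what let you invoke the non-intersecting condition $\IE(P_i)\cap E(F_j+P_j)=\0$. For an arbitrary edge $g$ at $\vend(F_j)$ you have no such membership, and in fact nothing in Definition~\ref{defn:non-int} prevents $\vend(F_j)$ from lying in $\IV(P_i)$ for some $i<j-1$; in that situation both $P_i$-edges at $\vend(F_j)$ are in $\IE(P_i)$ and could well carry color $\alpha_j$ or $\beta_j$ under $\phi$.

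The paper's fix is to abandon the attempt to rule out such $g$ and instead show directly that $\deg(\vend(F_j);\psi_i,\alpha_j\beta_j)=\deg(\vend(F_j);\psi_{i+1},\alpha_j\beta_j)$ for each $i<j$. For $i<j-1$, non-intersecting forces $\vend(F_j)\notin V(F_i)\cup V(F_{i+1})$, so the only effect of the shift $\psi_i\to\psi_{i+1}$ in the neighborhood of $\vend(F_j)$ is that the $\alpha_i$- and $\beta_i$-colored edges swap; a swap of two colors never changes the count of edges whose color lies in a fixed set $\{\alpha_j,\beta_j\}$. The step $i=j-1$ genuinely needs separate treatment (and a case split on $\length(F_j)$): when $\length(F_j)=1$ one has $\vend(F_j)=\vend(P_{j-1})$, and the crucial input is $\{\alpha_j,\beta_j\}\cap\{\alpha_{j-1},\beta_{j-1}\}=\0$ from Lemma~\ref{lemma:next_chain}, so uncoloring the $\beta_{j-1}$-edge at that vertex leaves the $\alpha_j\beta_j$-degree untouched. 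Your sketch did not isolate this $\length(F_j)=1$ subtlety, which is precisely the new content beyond the simple-graph case.
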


\begin{proof}
    Let $\psi_0 \defeq \phi$ and $\psi_j \defeq \Shift(\phi, F_0+P_0+\cdots +F_{j-1}+P_{j-1})$ for $1 \leq j \leq k$. 
    By construction, items~\ref{item:degree_end},~\ref{item:related_phi} hold with $\phi$ replaced by $\psi_{j}$.
    Furthermore, as~\ref{inv:non_intersecting_shiftable} holds, the chain $C' \defeq C + F_k + P_k$ is non-intersecting and $\phi$-shiftable.
    The proof for~\ref{item:related_phi} follows identically to \cite[Lemma 5.7]{fastEdgeColoring}.
    Similarly, if $\length(F_j) = 2$, an identical proof as in \cite[Lemma 5.7]{fastEdgeColoring} proves~\ref{item:degree_end}.
    It remains to consider the case that $\length(F_j) = 1$.

    If $j = 0$, $\psi_0 = \phi$ and so~\ref{item:degree_end} holds. 
    Let $j > 0$ such that $\length(F_j) = 1$ and let $0 \leq i < j$ be arbitrary.
    We claim that $\deg(\vend(F_j); \psi_i, \alpha_j\beta_j) = \deg(\vend(F_j); \psi_{i+1}, \alpha_j\beta_j)$.
    As $i$ is arbitrary, this implies~\ref{item:degree_end}.
    We will consider two cases based off of the value of $i$.
    \begin{enumerate}[label=\ep{\normalfont{}\textbf{Case\arabic*}},labelindent=0pt,leftmargin=*]
        \item $i < j-1$. 
        If $\vend(F_j) \notin V(F_i + P_i)$, the claim follows, so let us assume $\vend(F_j) \in V(F_i + P_i)$.
        As $C'$ is non-intersecting, $\vend(F_j) \notin V(F_i)\cup V(F_{i+1})$ and so, $\vend(F_j) \in \IV(P_i) \setminus (V(F_i)\cup V(F_{i+1}))$.
        Therefore, the only change in the neighborhood of $\vend(F_j)$ from $\psi_i$ to $\psi_{i+1}$ is that the edges colored $\alpha_i$ and $\beta_i$ swap values.
        In particular, $\deg(\vend(F_j); \psi_i, \alpha_j\beta_j) \\= \deg(\vend(F_j); \psi_{i+1}, \alpha_j\beta_j)$.

        \item $i = j-1$. 
        Here, we have $\vend(F_j) = \vend(P_{j-1})$ as $\length(F_j) = 1$.
        Without loss of generality, let $\psi_{j-1}(\End(P_{j-1})) = \beta_{j-1}$.
        From Lemma~\ref{lemma:next_chain}, we may conclude that $\set{\alpha_j, \beta_j} \cap \set{\alpha_{j-1}, \beta_{j-1}} = \0$.
        As the only change on the neighborhood of $\vend(F_j)$ is that the edge colored $\beta_{j-1}$ is now blank, $\deg(\vend(F_j); \psi_i, \alpha_j\beta_j) = \deg(\vend(F_j); \psi_{i+1}, \alpha_j\beta_j)$.
        
    \end{enumerate}
    This covers all cases and completes the proof.
\end{proof}

\subsection{Analysis of the MSSA}\label{subsection:analysis_MSSA}

For this section, we will fix a proper partial coloring $\phi$. The main result is the following:

\begin{theorem}\label{theo:MSSA}
    Let $e$ be an uncolored edge such that $V(e) = \set{x, y}$. For any $t > 0$ and $\ell \geq 1296\Delta^{16}$, Algorithm~\ref{alg:multi_shannon_chain} with input $(\phi, e, x)$ computes an $e$-augmenting multi-step Shannon chain $C$ of length $O(\ell\,t)$ in time $O(\Delta\,\ell\,t)$ with probability at least $1 - 4m(1296\Delta^{15}/\ell)^{t/2}$.
\end{theorem}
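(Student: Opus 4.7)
My plan is to first handle the structural bounds on length and running time, and then to establish the probability estimate via the entropy compression method. If Algorithm~\ref{alg:multi_shannon_chain} terminates within $t$ iterations of the \textsf{while} loop, then each iteration calls Algorithm~\ref{alg:first_shannon_chain} or Algorithm~\ref{alg:next_shannon_chain} (both running in $O(\Delta\ell)$ time), appends at most $2\ell + 2$ edges to $C$ in a \textsf{forward} iteration or truncates $C$ in a \textsf{backward} iteration, and updates the bookkeeping structures $\visited$, $\psi$, and the missing-color hash map in $O(\Delta\ell)$ time. Hence $\length(C) = O(\ell t)$ and the total running time is $O(\Delta\ell t)$, giving the claimed complexity bounds.

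For the probability estimate I would apply entropy compression following the blueprint of \cite{fastEdgeColoring}. The only randomness used by the algorithm is the sequence of truncation lengths $\ell_1', \ldots, \ell_t' \in [\ell, 2\ell - 1]$ drawn on step~\ref{step:Pk}, yielding $\ell^t$ equiprobable random strings for the first $t$ iterations. To each random string that causes the algorithm to survive at least $t$ iterations, I would associate a \emphd{record} which, together with the input $(\phi, e, x)$, uniquely determines the execution. The record stores, for each iteration $i \leq t$: (i) an iteration-type marker---$+1$ for a \textsf{forward} iteration, or $-r$ for an \textsf{$r$-backward} iteration---and (ii) a small amount of edge and color data identifying the end of the chain $C$ after the iteration, enough to pin down where the next candidate chain is built. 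Once the injectivity of the record-to-execution map is established, bounding the number of valid records yields the desired upper bound on the probability of surviving $t$ iterations.

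The number of valid records then factors into three contributions. The sequence of type markers, constrained so that the chain index $k$ stays non-negative throughout, forms a weighted Motzkin- or Dyck-like lattice path; a standard enumeration bounds the number of such sequences by a factor exponential in $t$, which together with the $m$ possibilities for the starting uncolored edge accounts for the $4m$ prefactor. The end-edge and color data per iteration is controlled by the adjacency structure of the non-intersecting chain (Definition~\ref{defn:non-int}) together with the color restrictions in Lemmas~\ref{lemma:first_chain} and \ref{lemma:next_chain} and the degree bounds of Lemma~\ref{lemma:non-intersecting_degrees}, giving at most $O(\Delta^{15})$ options per iteration. A residual factor of $\ell^{t/2}$ arises because $\ell_i'$ is fully recoverable from the record in only roughly half of the iterations---for instance, in a \textsf{backward} iteration $\ell_i'$ is constrained by the first intersection point, whereas in a \textsf{forward} iteration extra information is needed to pin it down. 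Multiplying these counts and dividing by $\ell^t$ produces the target bound $4m(1296\Delta^{15}/\ell)^{t/2}$.

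The main obstacle, and the technical heart of the proof, will be establishing injectivity of the record-to-execution map. This amounts to showing that from the record plus the input one can simulate the algorithm step by step: in a \textsf{forward} iteration, the value $\ell_i'$ is recovered by locating the recorded end edge on the deterministically reconstructable candidate path $P$; in a \textsf{backward} iteration, one must identify the intersection index $j$ from the type marker $-r$, handle the special $0$-backward case via Lemma~\ref{lemma:intersection_prev}, and unroll the $\Shift$ operations consistently using the non-intersecting property and the invariants~\ref{inv:start_F_end_C}--\ref{inv:hopeful_length}. The $\Delta^{15}$ counting, while bookkeeping-heavy, reduces to enumerating possibilities for the pivot, the endpoints of the fan, and the two-colored path continuation at each step, all constrained by the structure already codified in Lemma~\ref{lemma:non-intersecting_degrees}.
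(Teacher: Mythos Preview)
Your overall strategy matches the paper's---entropy compression with type markers and end-of-chain data---but two details are inverted or incomplete.

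First, you have the $\ell$ factor on the wrong iteration type. In a \textsf{forward} iteration the recorded end of $C$ is $\End(P_k)$, and since the candidate path $P$ is deterministic given the previous end, $\ell_i' = \length(P_k)$ \emph{is} recoverable by locating $\End(P_k)$ on $P$. It is in a \textsf{backward} iteration that $\ell_i'$ is lost: after truncation the new end is $\End(P_{j-1})$, which carries no information about where $P_k$ was cut. The paper's weights reflect this: $\val(1) = 3\Delta^4/2$ for forward steps carries no $\ell$, while $\val(d<0) = 36\ell\Delta^{11}$ for backward steps does (the $\ell$ accounts for the $O(\ell)$ possible locations of the first intersection point along $P_j$). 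The exponent $t/2$ then follows because the number of $r$-backward steps with $r \geq 1$ is at most $(t-s)/2 \leq t/2$, by the constraint that the running index $k$ stays non-negative.

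Second, a single $r$-backward iteration does not run in $O(\Delta\ell)$ time: unshifting $F_j + P_j + \cdots + F_k + P_k$ and resetting $\visited$ costs $\Theta(\Delta(r+1)\ell)$. The paper recovers the total bound $O(\Delta\ell t)$ by amortization, using $\sum_{i:\, d_i \leq 0} |d_i| \leq \sum_{i:\, d_i = 1} d_i \leq t$. Finally, note that the paper does not aim for injectivity at all: it records only the \emph{final} terminus $(\End(C),\vend(C))$ together with the type sequence $D$, and then bounds the preimage size $|\mathcal{I}^{(t)}(D,f,u)| \leq \wt(D)$ by backward induction on $t$ (Lemma~\ref{lemma:records_bound_by_wt}). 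Your per-iteration recording is a workable variant of the same idea, but you should be aware that the paper's proof is structured this way.
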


As mentioned in \S \ref{subsection:overview}, we will employ the entropy compression argument to prove Theorem~\ref{theo:MSSA}.
As we may assume $\phi$ is fixed, for any $f \in E(G)$ and $z \in V(f)$, the first $t$ iterations of the \textsf{while} loop in Algorithm~\ref{alg:multi_shannon_chain} is uniquely determined by the \emphd{input sequence} $(f, z, \ell_1, \ldots, \ell_t)$, where $\ell_i \in [\ell, 2\ell-1]$ is the random choice made at step~\ref{step:Pk} during the $i$-th iteration.
We let $\mathcal{I}^{(t)}$ be the set of all input sequences for which Algorithm~\ref{alg:multi_shannon_chain} does not terminate within $t$ iterations.
We define \emphd{records and termini} identically to \cite[Definition 6.2]{fastEdgeColoring}:

\begin{definition}[Records and termini]\label{def:record}
    Let $I = (f, z, \ell_1, \ldots, \ell_t) \in \mathcal{I}^{(t)}$. Consider running Algorithm~\ref{alg:multi_shannon_chain} for the first $t$ iterations with this input sequence.
    We define the \emphd{record} of $I$ to be the tuple $D(I) = (d_1, \ldots, d_t) \in \Z^t$, where for each $i$, $d_i = 1$ if the $i$-th iteration is a {\upshape{\textsf{forward}}} iteration and $d_i = -r$ if it is an {\upshape{\textsf{$r$-backward}}} iteration.
    Note that in the second case, $d_i \leq 0$. 
    Let $C$ be the multi-step Shannon chain produced after the $t$-th iteration. 
    The \emphd{terminus} of $I$ is the pair $\tau(I) = (\End(C), \vend(C))$ (in the case that $C = (f)$, we define $\vend(C) \in V(f)$ to be distinct from $z$).
\end{definition}

Let $\mathcal{D}^{(t)}$ denote the set of all tuples $D \in \Z^t$ such that $D = D(I)$ for some $I \in \mathcal{I}^{(t)}$. 
Given $D \in \mathcal{D}^{(t)}$ and a pair $(f, u)$ such that $f \in E(G)$ and $u \in V(f)$, we let $\mathcal{I}^{(t)}(D, f, u)$ be the set of all input sequences $I \in \mathcal{I}^{(t)}$ such that $D(I) = D$ and $\tau(I) = (f, u)$.
The following functions will assist with our proof:
\begin{align*}
    \val(z) &\defeq \left\{\begin{array}{cc}
        3\Delta^4/2  & \text{if } z = 1 \\
        15\Delta^9  & \text{if } z = 0 \\
        36\ell\Delta^{11} & \text{if } z < 0
    \end{array}\right., & z \in \Z,\, z \leq 1, \\
    \wt(D) &\defeq \prod_{i = 1}^t\val(d_i), & D = (d_1, \ldots, d_t) \in \mathcal{D}^{(t)}.
\end{align*}
Let us now prove a key result in the analysis of Algorithm~\ref{alg:multi_shannon_chain}.

\begin{lemma}\label{lemma:records_bound_by_wt}
    Let $D = (d_1, \ldots, d_t) \in \mathcal{D}^{(t)}, \, f \in E(G)$ and $u \in V(f)$.
    Then $|\mathcal{I}^{(t)}(D, f, u)| \leq \wt(D)$, for $\ell \geq \Delta \geq 2$.
\end{lemma}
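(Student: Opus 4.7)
The plan is to build an injection $\Phi: \mathcal{I}^{(t)}(D, f, u) \hookrightarrow H_D$, where $|H_D| = \wt(D)$, in the spirit of the entropy compression strategy of \cite{fastEdgeColoring}. A hint $h \in H_D$ will be a sequence $(h_1, \dots, h_t)$ with $h_i$ drawn from a ``hint set'' of size $\val(d_i)$, encoding just enough local information to reverse the $i$-th iteration of the MSSA given $(f, u)$. I would induct on $t$: the base case $t = 0$ is immediate since $\wt(\emptyset) = 1$ and the terminus $(f, u)$ determines the initial edge $f_0 = f$ and the pivot $z \in V(f) \setminus \{u\}$.

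For the inductive step, write $D = (d_1, \dots, d_t)$ and $D' = (d_1, \dots, d_{t-1})$. Every $I \in \mathcal{I}^{(t)}(D, f, u)$ restricts to some $I' \in \mathcal{I}^{(t-1)}(D', f', u')$ together with a particular $\ell_t$. Since the inductive hypothesis gives $|\mathcal{I}^{(t-1)}(D', f', u')| \leq \wt(D')$, the task reduces to bounding
\[
    \#\bigl\{(f', u', \ell_t) \,:\, \text{some } I = (I', \ell_t) \in \mathcal{I}^{(t)}(D, f, u)\bigr\} \;\leq\; \val(d_t),
\]
after which multiplying through yields $\wt(D') \cdot \val(d_t) = \wt(D)$. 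I would verify this bound by a case analysis on $d_t$, relying on the structural guarantees of Lemmas~\ref{lemma:first_chain}, \ref{lemma:next_chain}, \ref{lemma:intersection_prev}, and~\ref{lemma:non-intersecting_degrees}.

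If $d_t = 1$ (\textbf{forward}), then $C^{(t)} = C^{(t-1)} + F_{k^{(t-1)}} + P_{k^{(t-1)}}$, so $(f', u')$ sits at the start of the appended fan. A hint encoding the ordered color pair $(\alpha, \beta)$ of the final bichromatic path (at most $\Sha^2$ possibilities) together with a bounded amount of data identifying the pivot among the at most $\Delta$ neighbors involved suffices: one walks backwards along the $\alpha\beta$-path from $(f, u)$ until it terminates at $\vend(F_{k^{(t-1)}})$ (detected via the degree condition in Lemma~\ref{lemma:non-intersecting_degrees}~\ref{item:degree_end}), then traverses the fan to reach $(f', u')$, while $\ell_t$ is recovered as the length of the traversal. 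A direct count gives at most $3\Delta^4/2 = \val(1)$ hints. If $d_t = 0$ (\textbf{zero-backward}), Lemma~\ref{lemma:intersection_prev} forces the first intersection with $C^{(t-1)}$ to lie in $V(F_k)$, and $C^{(t)} = C^{(t-1)}$ implies $(f', u') = (f, u)$; recording the candidate fan $\tilde F$, its two color pairs guaranteed by Lemma~\ref{lemma:next_chain}, and the intersection vertex yields at most $15\Delta^9 = \val(0)$ possibilities. If $d_t = -r$ with $r \geq 1$ (\textbf{$r$-backward}), the iteration truncated the chain by $r$ Shannon links, and the hint must also pinpoint the location of the first intersection along $F_j + P_j$; because this intersection may lie at an internal edge of the length-at-most-$2\ell$ path $P_j$, a factor of $\ell$ appears, and a further bounded number of color and vertex choices let one reconstruct $F_j + P_j$, $\tilde F + \tilde P$, and $\ell_t$, giving at most $36 \ell \Delta^{11} = \val(-r)$ possibilities.

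The main obstacle is the $r$-backward case, where one must simultaneously encode \emph{where} the intersection lies inside a long path and enough additional structural data (colors of $\tilde P$, the fan $\tilde F$, and the precise truncation $P_{k^{(t-1)}} = P|\ell_t$) to recover $\ell_t$ without overshooting the $\val(-r)$ budget; the non-intersecting invariant (Definition~\ref{defn:non-int}) and the degree property of Lemma~\ref{lemma:non-intersecting_degrees} are what keep this count tight. The forward case is the easiest, and the zero-backward case is intermediate and is the one that most directly uses Lemma~\ref{lemma:intersection_prev}. Once each of the three case bounds is verified, the inductive bound $|\mathcal{I}^{(t)}(D, f, u)| \leq \wt(D)$ follows immediately.
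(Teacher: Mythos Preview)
Your proposal is correct and follows essentially the same approach as the paper: induct on $t$, reduce to showing that the number of triples $(f',u',\ell_t)$ consistent with $(D,f,u)$ is at most $\val(d_t)$, and then do a case analysis on the sign of $d_t$ using Lemmas~\ref{lemma:intersection_prev} and~\ref{lemma:non-intersecting_degrees}. The only cosmetic difference is that the paper factors each case through an intermediate claim (given $\vstart(\tilde F)$, there are at most $3\Delta^4/2$ choices for $(f',u',\ell_t)$) and then bounds the number of possibilities for $\vstart(\tilde F)$, whereas you describe the encoding directly per case; the two packagings are equivalent.
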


\begin{figure}[t]
    \centering
    \begin{subfigure}[t]{0.45\textwidth}
        \centering
    	\begin{tikzpicture}
                \clip (-0.5, -1) rectangle (6.6, 4.5);
                
                \node[circle,fill=blue,draw,inner sep=0pt,minimum size=4pt] (a) at (0,0) {};
                \path (a) ++(-30:1) node[circle,fill=black,draw,inner sep=0pt,minimum size=4pt] (b) {};
                \path (a) ++(30:1) node[circle,fill=blue,draw,inner sep=0pt,minimum size=4pt] (c) {};
                \path (c) ++(20:2.5) node[circle,fill=purple,draw,inner sep=0pt,minimum size=4pt] (d) {};
                \path (d) ++(-30:0.75) node[circle,fill=purple,draw,inner sep=0pt,minimum size=4pt] (e) {};
                \path (e) ++(-30:1) node[circle,fill=blue,draw,inner sep=0pt,minimum size=4pt] (f) {};
                \path (f) ++(-30:1) node[circle,fill=blue,draw,inner sep=0pt,minimum size=4pt] (g) {};

                \path (f) ++(30:1) node[circle,fill=black,draw,inner sep=0pt,minimum size=4pt] (h) {};
                \path (h) ++(130:3.5) node[circle,fill=red,draw,inner sep=0pt,minimum size=4pt] (i) {};
                \path (i) ++(-140:1) node[circle,fill=red,draw,inner sep=0pt,minimum size=4pt] (j) {};
                \path (e) ++(-80:1.5) node[circle,fill=red,draw,inner sep=0pt,minimum size=4pt] (k) {};

                \draw[thick,dotted] (a) -- (b);
                \draw[thick, color=blue] (a) -- (c) (f) -- (g);
                \draw[thick, color=red] (i) -- (j);
                \draw[thick, color=purple] (d) to node[font=\fontsize{8}{8},midway,inner sep=1pt,outer sep=1pt,minimum size=4pt,fill=white] {$h$} (e);
                \draw[thick] (f) -- (h);
                \draw[thick,decorate,decoration=zigzag,color=blue] (c) to[out=30, in=160] node[font=\fontsize{8}{8},midway,inner sep=1pt,outer sep=1pt,minimum size=4pt,fill=white] {$P_j$} (d) (e) -- (f);
                \draw[thick,decorate,decoration=zigzag,color=red] (j) to[out=-110, in=120] node[font=\fontsize{8}{8},midway,inner sep=1pt,outer sep=1pt,minimum size=4pt,fill=white] {$\tilde P$} (d) (e) to[out=-70,in=90] (k);
                \draw[thick,decorate,decoration=zigzag](h) to[out=45,in=30, looseness=1.8] (i);

    	\end{tikzpicture}
    	\caption{Intersection between $P_j,\, \tilde P$ at an edge $h$.}\label{fig:Sha_path_path_intersect}
    \end{subfigure}
    \begin{subfigure}[t]{0.45\textwidth}
        \centering
    	\begin{tikzpicture}
                \clip (-0.5, -1) rectangle (6.7, 4.5);
                
    	    \node[circle,fill=blue,draw,inner sep=0pt,minimum size=4pt] (a) at (0,0) {};
                \path (a) ++(-30:1) node[circle,fill=black,draw,inner sep=0pt,minimum size=4pt] (b) {};
                \path (a) ++(30:1) node[circle,fill=blue,draw,inner sep=0pt,minimum size=4pt] (c) {};
                \path (c) ++(20:2.5) node[circle,fill=purple,draw,inner sep=0pt,minimum size=4pt] (d) {};
                \path (d) ++(-30:1) node[circle,fill=purple,draw,inner sep=0pt,minimum size=4pt] (e) {};
                \path (e) ++(-30:1) node[circle,fill=blue,draw,inner sep=0pt,minimum size=4pt] (f) {};
                \path (f) ++(-30:1) node[circle,fill=blue,draw,inner sep=0pt,minimum size=4pt] (g) {};

                \path (f) ++(30:1) node[circle,fill=black,draw,inner sep=0pt,minimum size=4pt] (h) {};
                \path (h) ++(130:3.5) node[circle,fill=black,draw,inner sep=0pt,minimum size=4pt] (i) {};
                \path (i) ++(-140:1) node[circle,fill=black,draw,inner sep=0pt,minimum size=4pt] (j) {};
                \path (d) ++(-90:1) node[circle,fill=red,draw,inner sep=0pt,minimum size=4pt] (k) {};

                \draw[thick,dotted] (a) -- (b);
                \draw[thick, color=blue] (a) -- (c) (f) -- (g);
                \draw[thick, color=red] (d) to node[font=\fontsize{8}{8},midway,inner sep=1pt,outer sep=1pt,minimum size=4pt,fill=white] {$\tilde F$} (k);
                \draw[thick, color=purple] (d) to node[font=\fontsize{8}{8},midway,inner sep=1pt,outer sep=1pt,minimum size=4pt,fill=white] {$h$} (e);
                \draw[thick] (f) -- (h) (i) -- (j);
                \draw[thick,decorate,decoration=zigzag,color=blue] (c) to[out=30, in=160] node[font=\fontsize{8}{8},midway,inner sep=1pt,outer sep=1pt,minimum size=4pt,fill=white] {$P_j$} (d) (e) -- (f);
                \draw[thick,decorate,decoration=zigzag] (j) to[out=-110, in=120] (d);
                \draw[thick,decorate,decoration=zigzag](h) to[out=45,in=30, looseness=1.8] (i);
    		
    	\end{tikzpicture}
    	\caption{Intersection between $P_j,\,\tilde F$ at an edge $h$.}\label{fig:Sha_path_fan_intersect}
    \end{subfigure}

    \vspace{15pt}
    
    \begin{subfigure}[b]{0.45\textwidth}
        \centering
    	\begin{tikzpicture}
                \clip (-0.5, -1.3) rectangle (6, 1);
                
    	    \node[circle,fill=blue,draw,inner sep=0pt,minimum size=4pt] (a) at (0,0) {};
                \path (a) ++(-30:1) node[circle,fill=purple,draw,inner sep=0pt,minimum size=4pt] (b) {};
                \path (a) ++(30:1) node[circle,fill=blue,draw,inner sep=0pt,minimum size=4pt] (c) {};

                \path (b) ++(175:0.5) node[circle,fill=red,draw,inner sep=0pt,minimum size=4pt] (g) {};

                \path (c) ++(0:4) node[circle,fill=red,draw,inner sep=0pt,minimum size=4pt] (d) {};
                \path (d) ++(-30:1) node[circle,fill=black,draw,inner sep=0pt,minimum size=4pt] (e) {};
                \path (d) ++(-90:1) node[circle,fill=red,draw,inner sep=0pt,minimum size=4pt] (f) {};

                \draw[thick,dotted,color=blue] (a) -- (b);
                \draw[thick, color=blue] (a) to node[font=\fontsize{8}{8},midway,inner sep=1pt,outer sep=1pt,minimum size=4pt,fill=white] {$F_j$} (c);
                \draw[thick,decorate,decoration=zigzag] (c) to[out=20,in=160] (d);
                \draw[thick] (d) -- (e);
                \draw[thick,color=red] (d) -- (f) (b) -- (g);
                \draw[thick,decorate,decoration=zigzag,color=red] (f) to[out=200, in=-10] node[font=\fontsize{8}{8},midway,inner sep=1pt,outer sep=1pt,minimum size=4pt,fill=white] {$\tilde P$} (b);

                \node[anchor=north] at (b) {$w$};
    		
    	\end{tikzpicture}
    	
    	\caption{Intersection between $F_j,\, \tilde P$ at a vertex $w$.}\label{fig:Sha_fan_path_intersect}
    \end{subfigure}
    \begin{subfigure}[b]{0.45\textwidth}
        \centering
    	\begin{tikzpicture}
                \clip (-0.5, -2) rectangle (6, 1);
                
    	    \node[circle,fill=blue,draw,inner sep=0pt,minimum size=4pt] (a) at (0,0) {};
                \path (a) ++(-30:1) node[circle,fill=purple,draw,inner sep=0pt,minimum size=4pt] (b) {};
                \path (a) ++(30:1) node[circle,fill=blue,draw,inner sep=0pt,minimum size=4pt] (c) {};

                \path (b) ++(-45:1) node[circle,fill=red,draw,inner sep=0pt,minimum size=4pt] (g) {};
                \path (g) ++(195:1) node[circle,fill=red,draw,inner sep=0pt,minimum size=4pt] (h) {};

                \path (c) ++(0:4) node[circle,fill=black,draw,inner sep=0pt,minimum size=4pt] (d) {};
                \path (d) ++(-30:1) node[circle,fill=black,draw,inner sep=0pt,minimum size=4pt] (e) {};
                \path (d) ++(-90:1) node[circle,fill=black,draw,inner sep=0pt,minimum size=4pt] (f) {};

                \draw[thick,dotted,color=blue] (a) -- (b);
                \draw[thick, color=blue] (a) to node[font=\fontsize{8}{8},midway,inner sep=1pt,outer sep=1pt,minimum size=4pt,fill=white] {$F_j$} (c);
                \draw[thick,decorate,decoration=zigzag] (c) to[out=20,in=160] (d) (f) to[out=220, in=-10] (g);
                \draw[thick] (d) -- (e) (d) -- (f);
                \draw[thick,color=red] (b) -- (g) to node[font=\fontsize{8}{8},midway,inner sep=1pt,outer sep=1pt,minimum size=4pt,fill=white] {$\tilde F$} (h);

                \node[anchor=north east] at (b) {$w$};
    		
    	\end{tikzpicture}
    	
    	\caption{Intersection between $F_j,\, \tilde F$ at a vertex $w$.}\label{fig:Sha_fan_fan_intersect}
    \end{subfigure}
    \caption{Intersecting Multi-Step Shannon Chains at step~\ref{step:truncate}.}
    \label{fig:Sha_intersect}
\end{figure}
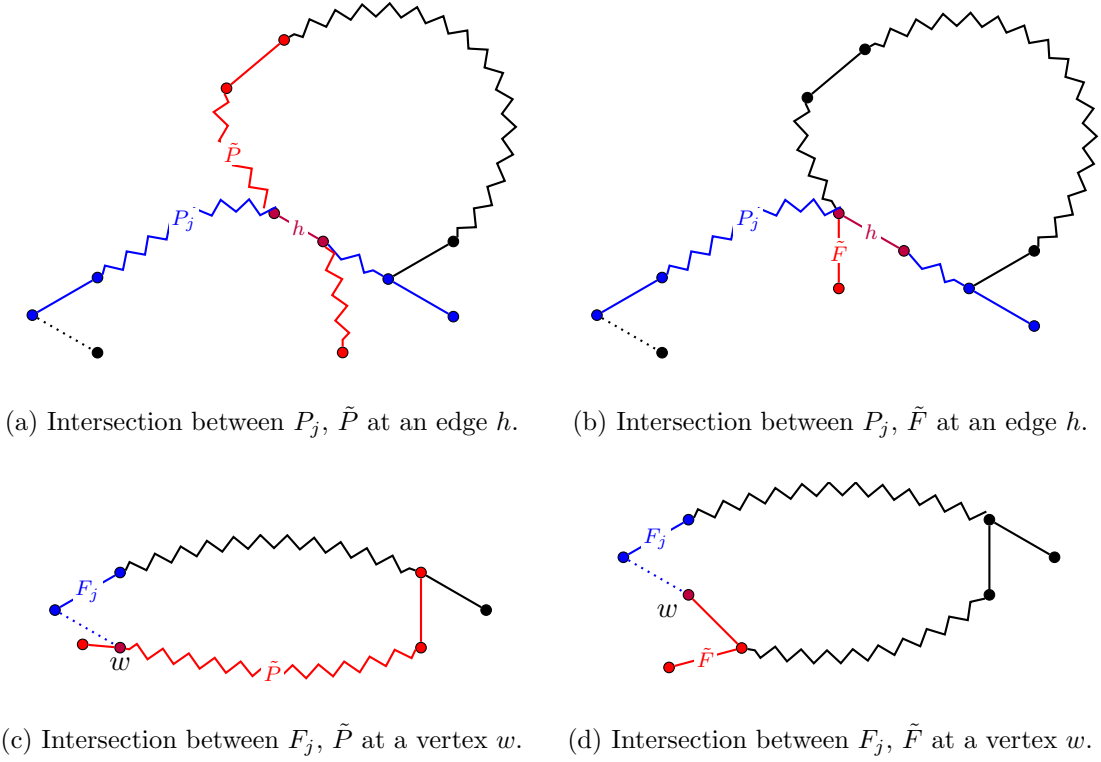

\begin{proof}
    Let $v \in V(f)$ be distinct from $u$.
    We will prove this by induction on $t$. For $t = 0$, we have
    \[\mathcal{I}^{(0)} = \set{(h, z)\,:\, h \in E(G), z \in V(h)}.\]
    Here, we have $\mathcal{I}^{(0)}((), f, u) = \set{(f, v)}$.
    In particular, $|\mathcal{I}((), f, u)| = 1 = \wt(())$.

    Now assume $t \geq 1$. 
    Consider any $(h, z, \ell_1, \ldots, \ell_t) \in \mathcal{I}^{(t)}(D, f, u)$ and run Algorithm~\ref{alg:multi_shannon_chain} with this input.
    Let $C = F_0 + P_0 + \cdots + F_{k-1} + P_{k-1}$ and $F+ P$ be the chain and candidate chain at the start of the $t$-th iteration of the \textsf{while} loop.
    On step~\ref{step:Pk}, we compute $P_k = P|\ell_t$ and the second step chain $\tilde F + \tilde P$. Note that for
    \[D' \defeq (d_1, \ldots, d_{t-1}), \quad f' \defeq \Start(F), \quad u' \defeq \vstart(F),\]
    the input sequence $(h, z, \ell_1, \ldots, \ell_{t-1}) \in \mathcal{I}^{(t-1)}(D', f', u')$.
    By the induction hypothesis, it is now enough to show that the number of choices $(f', u', \ell_t)$ is at most $\val(d_t)$. 
    The following claim will assist with the remainder of the proof.

    \begin{claim}\label{claim:one_step_chain}
        Given the vertex $\vend(P_k) = \vstart(\tilde F)$, there are at most $3\Delta^4/2$ options for $(f', u', \ell_t)$.
    \end{claim}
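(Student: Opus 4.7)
The plan is to reconstruct the triple $(f', u', \ell_t)$ from a small amount of combinatorial data and multiply the resulting bounds. The key tool is Lemma~\ref{lemma:non-intersecting_degrees}, which guarantees that the bichromatic portion of $P_k$ is colored $\alpha_k$ or $\beta_k$ under the \emph{original} coloring $\phi$ (not merely the shifted coloring at iteration $t$), and that $q \defeq \vend(F_k)$ has degree exactly $1$ in $G(\phi, \alpha_k\beta_k)$. This invariance under $\phi$ is essential: it lets us trace the bichromatic path backwards from $w$ without needing the intermediate colorings produced by earlier iterations.

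Given $w$, I first enumerate the last edge $\End(P_k) \in E_G(w)$: at most $\Delta$ options. Since $\length(P_k) \geq \ell > 2$, this edge lies strictly inside the bichromatic portion of $P_k$, so $\alpha_k \defeq \phi(\End(P_k))$ is determined; I then enumerate the other color $\beta_k \in [\Sha] \setminus \{\alpha_k\}$, for at most $\Sha - 1 \leq 3\Delta/2$ options. With $(\End(P_k), \alpha_k, \beta_k)$ fixed, I start at $w$, traverse $\End(P_k)$, and continue along the $\alpha_k\beta_k$-path in $G(\phi, \alpha_k\beta_k)$. By Lemma~\ref{lemma:non-intersecting_degrees}~\ref{item:degree_end}, this walk terminates at $q$, and the length of the traversed sub-path plus $1$ simultaneously pins down $\ell_t$.

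It remains to reconstruct $F_k$, which has length at most $2$. I enumerate $\End(F_k) \in E_G(q)$ with at most $\deg(q) \leq \Delta$ options, fixing $p \defeq \Pivot(F_k)$ as its other endpoint. Then $\Start(F_k)$ is either $\End(F_k)$ itself (when $\length(F_k) = 1$) or some other edge incident to $p$ (when $\length(F_k) = 2$), giving at most $\deg(p) \leq \Delta$ options in total. Finally $f' = \Start(F_k)$ and $u' = \vstart(F_k)$ is the endpoint of $\Start(F_k)$ distinct from $p$. Multiplying the four counts yields $\Delta \cdot (3\Delta/2) \cdot \Delta \cdot \Delta = 3\Delta^4/2$, as required. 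The main subtlety is verifying that the traced bichromatic walk in the second paragraph actually lands at the correct $q$ for every valid triple---this is precisely what items \ref{item:degree_end} and \ref{item:related_phi} of Lemma~\ref{lemma:non-intersecting_degrees} supply.
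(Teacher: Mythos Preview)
Your proposal is correct and follows essentially the same approach as the paper's proof: enumerate $\End(P_k)$ (at most $\Delta$ options), guess the second color (at most $\Sha$ options), use Lemma~\ref{lemma:non-intersecting_degrees} to trace the $\alpha\beta$-path under $\phi$ back to $\vend(F_k)$ and recover $\ell_t$, then enumerate the at most two edges of $F_k$ (at most $\Delta^2$ options) to determine $f'$ and $u'$. The only cosmetic differences are that the paper phrases the last step as ``$\Pivot(F_k) \in N_G(\vend(F_k))$ and $f' \in E_G(\Pivot(F_k))$'' rather than enumerating $\End(F_k)$ and $\Start(F_k)$ separately, and it identifies the correct endpoint of the bichromatic path as ``the endpoint lying on the side of $\Pivot(\tilde F)$'' rather than by specifying the traversal direction from $w$ through $\End(P_k)$---but these amount to the same argument.
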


    \begin{claimproof}
        Given $\vstart(\tilde F)$, there are at most $\Delta$ choices for the edge $\Start(\tilde F) = \End(P_k)$.
        With this in hand, we can determine $\Pivot(\tilde F) \in V(\Start(\tilde F))$ to be distinct from $\vstart(\tilde F)$.
        By Lemma~\ref{lemma:non-intersecting_degrees} \ref{item:related_phi} all edges of $P_k$ except $\Start(P_k)$ are colored one of two colors, say $\alpha$ or $\beta$.
        One of these colors is $\phi(\End(P_k))$ and there are at most $\Sha$ choices for the other.
        Consider the maximal $\alpha\beta$-path under $\phi$ containing $\End(P_k)$.
        By Lemma~\ref{lemma:non-intersecting_degrees}~\ref{item:degree_end} and~\ref{item:related_phi}, $\vend(F_k)$ is the endpoint of this path lying on the side of $\Pivot(\tilde F)$.
        As $v' = \Pivot(F_k) \in N_G(\vend(F_k))$ and $f' \in E_G(v')$, there are at most $\Delta^2$ choices for $f'$ and $v'$.
        Given $f'$ and $v'$, we can determine $u'\in V(f')$ to be distinct from $v'$.
        Finally, the path $P_k$ starts at $v'$, passes through $\vend(F_k)$ and ends at $\vstart(\tilde F)$. 
        With this information, we can determine $\ell_t = \length(P_k)$, completing the proof.
    \end{claimproof}

    As a consequence of Claim~\ref{claim:one_step_chain}, it is now enough to show that there are at most $\val(d_t)/(3\Delta^4/2)$ choices for $\vend(P_k) = \vstart(\tilde F)$.
    We will consider each case separately.

    \begin{enumerate}[label=\ep{\normalfont{}\textbf{Case\arabic*}},labelindent=0pt,leftmargin=*]
        \item $d_t = 1$. Here, $u = \vend(P_k)$. So the number of choices for $\vend(P_k)$ is $1 = \val(d_t)/(3\Delta^4/2)$.

        \item $d_t < 0$. Here we have $v = \Pivot(F_j)$, where $j = k+d_t$ and the first intersection between $\tilde F + \tilde P$ and $C + F_k + P_k$ occurred at $F_j + P_j$.

        Recall from Definition~\ref{defn:non-int}, we must have either $E(\tilde F + \tilde P) \cap \IE(P_j) \neq \0$ or $V(\tilde F + \tilde P) \cap V(F_j) \neq \0$. The possible configurations are in Fig.~\ref{fig:Sha_intersect}.
        Let us consider each case separately.

        \begin{enumerate}[label=\ep{\normalfont{}\texttt{Subcase}2\alph*},labelindent=0pt,leftmargin=*]

            \item $E(\tilde F + \tilde P) \cap \IE(P_j) \neq \0$ (Fig.~\ref{fig:Sha_path_path_intersect},~\ref{fig:Sha_path_fan_intersect}).
            Note that in this case, we must have $V(\tilde F + \tilde P) \cap \IV(P_j) \neq \0$. 
            The vertex $\vend(F_j) \in N_G(v)$ and so there are at most $\Delta$ choices for it.
            By Lemma~\ref{lemma:non-intersecting_degrees}~\ref{item:related_phi} all edges of $P_j$ except $\Start(P_j)$ are colored one of two colors, say $\gamma$ or $\delta$, for which there are at most $\Sha^2$ choices.
            By Lemma~\ref{lemma:non-intersecting_degrees}~\ref{item:degree_end}, $P_j$ is a path starting at $\End(F_j)$ and followed by an initial segment of $P' \defeq G(\vend(F_j); \phi, \gamma\delta)$ of length at most $2\ell$.
            Let $w \in \IV(P_j)$ be the first vertex of intersection. 
            There are at most $2\ell$ choices for $w$.
            If $w \in V(\tilde F)$, then $\vstart(\tilde F)$ is at most distance $2$ away from $w$ and so there are at most $\Delta^2 + 1$ choices for $\vstart(\tilde F)$.
            Now suppose $w \in V(\tilde P)$.
            Note that $\tilde P$ is a $\xi\eta$-path in the coloring $\Shift(\phi, C + F_k + P_k)$ for some $\xi,\eta \in [\Sha]$.
            By our choice of $j$ and $w$, an identical argument to Lemma~\ref{lemma:non-intersecting_degrees} shows that $\deg(\vend(\tilde F); \phi, \xi\eta) = 1$, and $w$ and $\vend(\tilde F)$ are $(\phi, \xi\eta)$-related.
            Given $\vend(\tilde F)$, there are at most $\Delta^2$ choices for $\vstart(\tilde F)$. 
            In particular, the number of choices for $\vstart(\tilde F)$ is at most
            \[\Delta\,\Sha^2\,2\ell(\Delta^2 + 1 + 2\Sha^2\Delta^2) \leq \frac{45}{2}\ell\Delta^7,\]
            for $\Delta \geq 2$.
            
            \item\label{subcase:vtx} $V(\tilde F + \tilde P) \cap V(F_j) \neq \0$ (Fig.~\ref{fig:Sha_fan_path_intersect},~\ref{fig:Sha_fan_fan_intersect}).
            If $V(\tilde F) \cap V(F_j) \neq \0$, $\vstart(\tilde F)$ is at most distance $3$ away from $v$ and so there are at most $\Delta^3 + 1$ choices for $\vstart(\tilde F)$.
            If $V(\tilde P) \cap V(F_j) \neq \0$, there are at most $\Delta + 1$ choices for the intersection vertex $w \in N_G[v]$.
            Following an identical argument in the previous case, the number of choices for $\vstart(\tilde F)$ is at most
            \[\Delta^3 + 1 + (\Delta + 1)\,2\Sha^2\,\Delta^2 \leq 10\Delta^5,\]
            for $\Delta \geq 2$.
            
        \end{enumerate}
        Putting together both cases, the number of choices for $\vstart(\tilde F)$ is at most
        \[\frac{45}{2}\ell\Delta^7 + 10\Delta^5 \leq 24\ell\Delta^7 = \frac{\val(d_t)}{3\Delta^4/2},\]
        for $\ell \geq \Delta \geq 2$.

        \item $d_t = 0$. Here, we have $j = k$. As a result of Lemma~\ref{lemma:intersection_prev}, we need only consider~\ref{subcase:vtx} above. In particular, the number of choices for $\vstart(\tilde F)$ is at most
        \[10\Delta^5 = \frac{\val(d_t)}{3\Delta^4/2},\]
        as desired.
    \end{enumerate}
    This covers all cases and completes the proof.
\end{proof}

Next we prove an upper bound on $\wt(D)$ in terms the sum of the entries of $D$. To this end, let
\[\mathcal{D}_s^{(t)} \defeq \left\{D = (d_1, \ldots, d_t)\in \mathcal{D}^{(t)}\,:\, \sum_{i = 1}^td_i = s\right\}.\]

\begin{lemma}\label{lemma:wD_bound}
    Let $D \in \mathcal{D}_s^{(t)}$. Then $\wt(D) \leq (81\Delta^{15}\ell)^{t/2}\,(36\Delta^7\ell)^{-s/2}$, for $\ell \geq 7\Delta^3$.
\end{lemma}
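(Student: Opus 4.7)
The plan is to parametrize a record $D = (d_1, \ldots, d_t) \in \mathcal{D}_s^{(t)}$ by the counts of its entry types. Let $a \defeq |\{i : d_i = 1\}|$, $b \defeq |\{i : d_i = 0\}|$, $c \defeq |\{i : d_i < 0\}|$, and set $R \defeq -\sum_{i:\,d_i < 0} d_i$. Since each negative entry is an integer $\leq -1$, we have $R \geq c$. The identities $t = a+b+c$ and $s = a - R$ hold immediately. Plugging into the definition of $\val$ gives
\[\wt(D) \,=\, (3\Delta^4/2)^a \,(15\Delta^9)^b\,(36\ell\Delta^{11})^c,\]
since every negative entry, regardless of its exact value, contributes the same factor $36\ell\Delta^{11}$.

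Next I would form the ratio of $\wt(D)$ to the claimed bound and simplify by separating out the contributions of the primes $2, 3, 5$, the parameter $\Delta$, and the parameter $\ell$. Substituting $t = a+b+c$ and $s = a-R$ makes all $a$-dependence cancel in every exponent, and introducing the auxiliary quantity $R' \defeq R - c \geq 0$ absorbs the remaining $c$- and $R$-terms into a single non-negative exponent. The arithmetic collapses to the clean identity
\[
\frac{\wt(D)}{(81\Delta^{15}\ell)^{t/2}(36\Delta^7\ell)^{-s/2}} \,=\, \left[\frac{25\Delta^3}{9\ell}\right]^{b/2}\cdot \left(\frac{2}{3}\right)^c \cdot \left[\frac{1}{36\,\Delta^7\,\ell}\right]^{R'/2}.
\]
Verifying this identity (by prime-by-prime bookkeeping of the exponents) is the only real computation in the proof; it is routine but must be executed carefully.

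Once the identity is established, the bound is immediate. The middle factor $(2/3)^c$ is trivially at most $1$. The first factor is at most $1$ whenever $\ell \geq 25\Delta^3/9$, and the hypothesis $\ell \geq 7\Delta^3$ comfortably implies this since $7 > 25/9$. The third factor is at most $1$ because $R' \geq 0$ and $36\Delta^7\ell \geq 1$ trivially (as $\Delta \geq 2$ and $\ell \geq 1$). Multiplying these three bounds yields $\wt(D) \leq (81\Delta^{15}\ell)^{t/2}\,(36\Delta^7\ell)^{-s/2}$, as required. The main ``obstacle'' is purely clerical: one must be careful that the positive exponent $+3b/2$ on $\Delta$ is paired with the factor $\ell^{-b/2}$ on the same $b$-coordinate so that the hypothesis on $\ell$ can absorb it; no deeper combinatorial input on $\mathcal{D}_s^{(t)}$ is needed.
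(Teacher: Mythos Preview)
Your proof is correct and takes essentially the same approach as the paper: parametrize by the counts $a,b,c$ of the three entry types, use the constraints $t=a+b+c$ and $s=a-R$ with $R\geq c$, and bound the resulting factors using $\ell\geq 7\Delta^3$. The only cosmetic difference is that the paper applies the inequality $|K|\leq (t-s-|J|)/2$ early and then bounds $(3/2)^{|I|}\leq(3/2)^t$ separately, whereas you keep track of $R'=R-c$ to obtain an exact identity before bounding; both routes are the same computation.
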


\begin{proof}
    Let $D = (d_1, \ldots, d_t) \in \mathcal{D}_s^{(t)}$.
    Define
    \begin{align*}
        I\defeq \{i\,:\,d_i = 1\}, \quad J\defeq \{i\,:\,d_i = 0\}, \quad K\defeq [t] \setminus (I\cup J).
    \end{align*}
    Using the definition of $\wt(D)$ and $\val(z)$, we write
    \begin{align*}
        \wt(D) \,&=\, (3\Delta^4/2)^{|I|}\,(15\Delta^9)^{|J|}\, (36\ell\Delta^{11})^{|K|} \\
        &=\, (3/2)^{|I|}\,15^{|J|}\,36^{|K|}\,\ell^{|K|}\,\Delta^{4|I|+9|J|+11|K|} \\
        &=\, (3/2)^{|I|}\,15^{|J|}\,36^{|K|}\,\ell^{|K|}\,\Delta^{4t+5|J|+7|K|},
    \end{align*}
    where we use the fact that $|I| + |J| + |K| = t$.
    Note the following:
    \[s \,=\, |I| - \sum_{k \in K}|d_k| \,\leq\, |I| - |K| \,=\, t - |J| - 2|K|.\]
    It follows that $|K| \leq (t - s - |J|)/2$. With this in mind, we have 
    \begin{align*}
        \wt(D) \,&\leq\, (3/2)^{|I|}\,15^{|J|}\, (36\,\ell)^{(t-s - |J|)/2} \, \Delta^{4t+ 5|J| + 7(t-s-|J|)/2}\\
        &=\,(3/2)^{|I|}\, (36\Delta^{15}\ell)^{t/2}\,(36\Delta^7\ell)^{-s/2}\,\left(\frac{225\Delta^3}{36\ell}\right)^{|J|/2} \\
        & \leq\,  (81\Delta^{15}\ell)^{t/2}\,(36\Delta^7\ell)^{-s/2},
    \end{align*}
    for $\ell \geq 7\Delta^3$.
\end{proof}

The following result bounds the size of $\mathcal{D}_s^{(t)}$.

\begin{lemma}[{\cite[Lemma 6.5]{fastEdgeColoring}}]\label{lemma:Ds_bound}
    $|\mathcal{D}_s^{(t)}| \leq 4^t$.
\end{lemma}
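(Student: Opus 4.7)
The plan is to upper-bound $|\mathcal{D}_s^{(t)}|$ by the count of \emph{all} integer sequences $(d_1, \ldots, d_t)$ with $d_i \in \{1, 0, -1, -2, \ldots\}$ and $\sum_{i = 1}^t d_i = s$, disregarding the non-negative partial sum requirement implicit in $\mathcal{D}^{(t)}$. Since that constraint only restricts membership in $\mathcal{D}_s^{(t)}$, the unconstrained count is a valid upper bound, and the problem reduces to a purely combinatorial count of integer tuples.

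To carry out the count I would stratify by $F \defeq |\{i : d_i = 1\}|$, the number of forward steps. Fixing $F \in \{0, 1, \ldots, t\}$, there are $\binom{t}{F}$ ways to choose the positions of the forward entries. The remaining $t - F$ entries are non-positive, and their magnitudes $r_i \defeq -d_i \geq 0$ must sum to $F - s$ in order for the total to equal $s$. By stars and bars, the number of such magnitude tuples is $\binom{(F-s) + (t-F) - 1}{(t - F) - 1} = \binom{t - s - 1}{t - F - 1}$.

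Summing over $F$ and applying Vandermonde's identity (after writing $\binom{t-s-1}{t-F-1} = \binom{t-s-1}{(t-s) - (t-F)}$ and substituting $m = t-F$) gives
\[
    |\mathcal{D}_s^{(t)}| \;\leq\; \sum_{F = 0}^{t} \binom{t}{F}\binom{t-s-1}{t-F-1} \;=\; \binom{2t - s - 1}{t - 1}.
\]
Since every walk in $\mathcal{D}^{(t)}$ ends at non-negative height we have $s \geq 0$, so by monotonicity of $\binom{\cdot}{t-1}$ in the top argument the right-hand side is at most $\binom{2t - 1}{t - 1} \leq 2^{2t - 1} < 4^t$, as required. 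The only bookkeeping is the boundary case $F = t$ (no backward steps, forcing $s = t$) together with the convention $\binom{n}{-1} = 0$, which makes the summation formula valid at both endpoints; there is no genuine obstacle here, the key insight being simply that dropping the non-negativity-of-partial-sums constraint turns the count into a clean Vandermonde sum.
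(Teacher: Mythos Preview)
Your argument is correct. The paper does not actually prove this lemma; it is simply quoted from \cite[Lemma~6.5]{fastEdgeColoring}, so there is no ``paper's own proof'' to compare against here. Your proof is a valid self-contained substitute: you relax $\mathcal{D}_s^{(t)}$ to the set of all integer tuples with entries in $\{1,0,-1,-2,\ldots\}$ summing to $s$, stratify by the number $F$ of forward steps, apply stars--and--bars to the non-positive block, and close with Vandermonde to get $\binom{2t-s-1}{t-1}\le\binom{2t-1}{t-1}<4^t$. The observation $0\le s\le t$ (each $d_i\le 1$ and partial sums are non-negative since they track the chain index $k$) justifies the monotonicity step, and the $F=t$ boundary case indeed collapses to the single all-ones sequence, matching $\binom{t-1}{t-1}=1$. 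For context, the standard proof in the cited reference proceeds instead by encoding each record as a short binary string of total length at most $2t$; your Vandermonde route is a perfectly acceptable alternative.
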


With this in hand, we are now ready to prove Theorem~\ref{theo:MSSA}.

\begin{proof}[Proof of Theorem~\ref{theo:MSSA}]
    Note that each $\ell_i$ is chosen uniformly at random from $[\ell, 2\ell-1]$. 
    In particular, the probability Algorithm~\ref{alg:multi_shannon_chain} lasts at least $t$ iterations is bounded above by $|\mathcal{I}^{(t)}|/\ell^t$.
    As a result of Lemmas~\ref{lemma:records_bound_by_wt},~\ref{lemma:wD_bound} and~\ref{lemma:Ds_bound}, we have
    \begin{align*}
        |\mathcal{I}^{(t)}| \leq \sum_{\substack{f \in E(G),\, u \in V(f), \\ D \in \mathcal{D}^{(t)}}}|\mathcal{I}^{(t)}(D, f, u)| &\leq  \sum_{\substack{f \in E(G),\, u \in V(f), \\ D \in \mathcal{D}^{(t)}}} \wt(D) \\
        &\leq 2m\sum_{s = 0}^t\sum_{D\in \mathcal{D}_s^{(t)}} \wt(D) \\
        &\leq 2m\sum_{s = 0}^t|\mathcal{D}_s^{(t)}|(81\Delta^{15}\ell)^{t/2}\,(36\Delta^7\ell)^{-s/2} \\
        &\leq 2m (1296\Delta^{15}\ell)^{t/2} \sum_{s = 0}^t(36\Delta^7\ell)^{-s/2} \\
        &\leq 4m (1296\Delta^{15}\ell)^{t/2}.
    \end{align*}
    Therefore, we may conclude that
    \[\Pr[\text{Algorithm~\ref{alg:multi_shannon_chain} does not terminate in $t$ iterations}] \leq 4m\left(\frac{1296\Delta^{15}}{\ell}\right)^{t/2}.\]
    It remains to bound the runtime.
    Consider running Algorithm~\ref{alg:multi_shannon_chain} on an input sequence $I \in \mathcal{I}^{(t)}$. Let $D(I) = (d_1, \ldots, d_t)$.
    We will bound the runtime of each iteration by considering $d_i$.
    As shown in \S\ref{subsec:alg_overview}, Algorithms~\ref{alg:first_shannon_chain} and~\ref{alg:next_shannon_chain} run in $O(\Delta\ell)$ time. 
    Also, while updating the coloring $\psi$, we must update the sets of missing colors as well. 
    Each update to $M(\cdot)$ takes $O(\Delta)$ time. Let us now consider two cases depending on whether $d_i$ is positive.
    \begin{enumerate}[label=\ep{\normalfont{}\textbf{Case\arabic*}},labelindent=0pt,leftmargin=*]
        \item $d_i = 1$. Here, we test for success, shorten the path $P$ to $P_k$, update $\psi$ and the hash map $\visited$,
        call Algorithm~\ref{alg:next_shannon_chain}, and check that the resulting chain is non-intersecting. All of these steps can be performed in time $O(\Delta\ell)$.
        
        \item $d_i \leq 0$. Here we again conduct all the steps mentioned in the previous case, which takes $O(\Delta \ell)$ time.
        Once we locate the intersection, we then update $\psi$ and the hash map $\visited$, which takes $O(\Delta(|d_i| + 1)\ell)$ time (we are adding $1$ to account for the case $d_i = 0$).
    \end{enumerate}
    Since $\sum_{i=1}^t d_i \geq 0$, we have
    \[\sum_{i,\, d_i \leq 0}|d_i| \leq \sum_{i,\, d_i = 1}d_i \,\leq\, t.\]
    It follows that the running time of $t$ iterations of the \textsf{while} loop is $O(\Delta\ell\,t)$, as desired.
\end{proof}

\section{Proof of Theorem~\ref{theo:shannon}}\label{section:shannon_proof}

In this section, we will prove Theorem~\ref{theo:shannon}.
In particular, we will show how to apply the \hyperref[alg:multi_shannon_chain]{MSSA} as a subprocedure to define efficient algorithms for $\Sha$-edge-coloring.
We will split this section into two subsections. 
The first will consider the randomized sequential algorithm, and the second, the distributed algorithms.

\subsection{Sequential Algorithm}

Let us describe our randomized sequential algorithm (see Algorithm~\ref{alg:random_seq} for full details). 
The algorithm takes as input a multigraph $G$ of maximum degree $\Delta$ and outputs a proper $\Sha$-edge-coloring of $G$.
At each iteration, the algorithm picks an uncolored edge uniformly at random and colors it using Algorithm~\ref{alg:multi_shannon_chain} as a subprocedure.

\begin{algorithm}[h]
\caption{Sequential Coloring with Multi-Step Shannon Chains}\label{alg:random_seq}
\begin{flushleft}
\textbf{Input}: A multigraph $G = (V, E)$ of maximum degree $\Delta$. \\
\textbf{Output}: A proper $\Sha$-edge-coloring $\phi$ of $G$.
\end{flushleft}
\begin{algorithmic}[1]
    \State $U \gets E$, $\phi(e) \gets \blank$ for each $e \in U$.
    \While{$U \neq \0$}
        \State Pick an edge $e \in U$ and a vertex $x \in V(e)$ uniformly at random.
        \State $C \gets \hyperref[alg:multi_shannon_chain]{\mathsf{MSSA}}(\phi, e, x)$ \Comment{Algorithm~\ref{alg:multi_shannon_chain}}
        \State $\phi \gets \aug(\phi, C)$
        \State $U \gets U \setminus \set{e}$
    \EndWhile
    \State \Return $\phi$
\end{algorithmic}
\end{algorithm}

The correctness of the algorithm follows from the results of \S\ref{subsection:mssa_poc}. 
To assist with the runtime analysis, we define the following parameters:
\begin{align*}
    \phi_i &\defeq \text{ the coloring at the start of the $i$-th iteration}, \\
    T_i &\defeq \text{the number of iterations of the \textsf{while} loop in the $i$-th call to Algorithm~\ref{alg:multi_shannon_chain}}, \\
    C_i &\defeq \text{the chain output by the $i$-th call to Algorithm~\ref{alg:multi_shannon_chain}}, \\
    U_i &\defeq \set{e \in E(G) \,:\, \phi_i(e) = \blank}.
\end{align*}
Augmenting a chain $C$ can be done in time $O(\Delta\length(C))$ (the factor of $\Delta$ comes from updating the missing sets).
It follows from Theorem~\ref{theo:MSSA} that the runtime of Algorithm~\ref{alg:random_seq} is
\[ \sum_{i = 1}^mO(\Delta\,\ell\,T_i + \Delta\length(C_i)) = \sum_{i = 1}^mO(\Delta\,\ell\,T_i) = O(\Delta\,\ell\,T),\]
where $T = \sum_{i = 1}^mT_i$.
It remains to show that $T = O(m)$ with high probability. 
To this end, we prove the following lemma.

\begin{lemma}\label{lemma:martingale}
    For all $t_1, \ldots, t_i$, we have 
    \[\Pr[T_i \geq t_i\mid T_1 \geq t_1, \ldots, T_{i-1} \geq t_{i-1}] \leq \frac{2m}{|U_i|}\left(\frac{1296\Delta^{15}}{\ell}\right)^{t_i/2}.\]
\end{lemma}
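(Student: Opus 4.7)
The plan is to condition on the partial coloring $\phi_i$ at the start of the $i$-th iteration and bound the conditional probability uniformly over all possible values of $\phi_i$. First, I would note two structural facts. The outer \textsf{while} loop removes exactly one edge from $U$ per iteration, so $|U_i| = m - i + 1$ is deterministic, independent of the random choices made in previous iterations. Moreover, the random bits consumed by the $i$-th iteration (the uniform choice of $(e, x) \in U_i \times V(e)$ together with the truncation lengths $\ell_1, \ldots, \ell_{t_i}$ inside the MSSA) are independent of the history, so it suffices to bound $\Pr[T_i \geq t_i \mid \phi_i]$ for an arbitrary fixed $\phi_i$.

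Next, I would refine the counting argument from the proof of Theorem~\ref{theo:MSSA}. For a fixed starting pair $(e, x)$, the probability that the MSSA on input $(\phi_i, e, x)$ lasts at least $t_i$ iterations equals $|\mathcal{I}^{(t_i)}_{(e,x)}|/\ell^{t_i}$, where $\mathcal{I}^{(t_i)}_{(e,x)} \subseteq \mathcal{I}^{(t_i)}$ consists of those input sequences beginning with $(e, x)$. Since the starting pair is uniform over $U_i \times V(e)$, averaging gives
\[
\Pr[T_i \geq t_i \mid \phi_i] \,=\, \frac{1}{2|U_i|\,\ell^{t_i}} \sum_{e \in U_i,\, x \in V(e)} |\mathcal{I}^{(t_i)}_{(e,x)}| \,\leq\, \frac{|\mathcal{I}^{(t_i)}|}{2|U_i|\,\ell^{t_i}},
\]
since the subsets $\mathcal{I}^{(t_i)}_{(e,x)}$ over distinct starting pairs are disjoint subsets of $\mathcal{I}^{(t_i)}$.

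Finally, I would invoke the bound $|\mathcal{I}^{(t_i)}| \leq 4m (1296\Delta^{15}\ell)^{t_i/2}$ established in the proof of Theorem~\ref{theo:MSSA} (this bound depends on $\phi_i$ only through $m$ and $\Delta$, hence holds for every proper partial coloring). Substituting yields
\[
\Pr[T_i \geq t_i \mid \phi_i] \,\leq\, \frac{4m (1296\Delta^{15}\ell)^{t_i/2}}{2|U_i|\,\ell^{t_i}} \,=\, \frac{2m}{|U_i|} \left(\frac{1296\Delta^{15}}{\ell}\right)^{t_i/2},
\]
and since the bound is uniform in $\phi_i$, it survives conditioning on $\{T_1 \geq t_1, \ldots, T_{i-1} \geq t_{i-1}\}$. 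The one conceptual point worth flagging is the improvement from $4m$ to $2m/|U_i|$ compared to the per-edge guarantee of Theorem~\ref{theo:MSSA}: this is exactly the savings obtained by averaging the bound $|\mathcal{I}^{(t_i)}|$ over the $2|U_i|$ equally likely starting pairs, rather than taking the worst case over $(e, x)$. The only obstacle is making sure that the random choices in iteration $i$ are genuinely independent of the past once $\phi_i$ is fixed, which follows because Algorithm~\ref{alg:random_seq} draws fresh randomness at the start of each iteration.
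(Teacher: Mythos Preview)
Your proposal is correct and follows essentially the same approach as the paper: condition on $\phi_i$, average the input-sequence count over the $2|U_i|$ uniformly random starting pairs to get $\Pr[T_i \geq t_i \mid \phi_i] \leq |\mathcal{I}^{(t_i)}|/(2|U_i|\ell^{t_i})$, and then plug in the bound $|\mathcal{I}^{(t_i)}| \leq 4m(1296\Delta^{15}\ell)^{t_i/2}$ from the proof of Theorem~\ref{theo:MSSA}. Your write-up is in fact slightly more explicit than the paper's (you spell out the partition $\mathcal{I}^{(t_i)} = \bigcup_{(e,x)} \mathcal{I}^{(t_i)}_{(e,x)}$ and note that $|U_i|$ is deterministic), but the argument is identical.
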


\begin{proof}
    Recall the definition of input sequences from \S\ref{subsection:analysis_MSSA}.
    We note that $I = (f, z, \ell_1, \ldots, \ell_t)$ is defined for an arbitrary starting edge.
    As we are choosing the starting edge and pivot vertex uniformly at random at each iteration, we have
    \[\Pr[T_i \geq t_i\mid \phi_i] \leq \frac{|\mathcal{I}^{(t_i)}|}{2|U_i|\ell^{t_i}}.\]
    It was shown in the proof of Theorem~\ref{theo:MSSA} that
    \[|\mathcal{I}^{(t_i)}| \leq 4m(1296\Delta^{15}\ell)^{t_i/2}.\]
    As $T_i$ is independent of $T_1, \ldots, T_{i-1}$ given $\phi_i$, it follows that
    \[\Pr[T_i \geq t_i\mid T_1 \geq t_1, \ldots, T_{i-1} \geq t_{i-1}] \leq \frac{|\mathcal{I}^{(t_i)}|}{2|U_i|\ell^{t_i}} \leq \frac{2m}{|U_i|}\left(\frac{1296\Delta^{15}}{\ell}\right)^{t_i/2},\]
    as desired.
\end{proof}

Let us now bound $\Pr[T \geq t]$. 
We note that $T\geq t$ implies that there exists $t_1, \ldots, t_m$ such that $\sum_i t_i = t$ and $T_i \geq t_i$ for all $i$.
With this in hand, we have:
\begin{align*}
    \Pr[T \geq t] \,&\leq\, \sum_{\substack{t_1, \ldots, t_m \\ \sum_it_i = t}}\Pr[T_1 \geq t_1, \ldots, T_m \geq t_m] \\
    &=\, \sum_{\substack{t_1, \ldots, t_m \\ \sum_it_i = t}}\,\prod_{i = 1}^m\Pr[T_i \geq t_i|T_1 \geq t_1, \ldots, T_{i-1} \geq t_{i-1}] \\
    &\leq\, \sum_{\substack{t_1, \ldots, t_m \\ \sum_it_i = t}}\,\prod_{i = 1}^m\frac{2m}{|U_i|}\,\left(\frac{1296\Delta^{15}}{\ell}\right)^{t_i/2} \\
    &=\, \sum_{\substack{t_1, \ldots, t_m \\ \sum_it_i = t}}\frac{(2m)^m}{m!}\,\left(\frac{1296\Delta^{15}}{\ell}\right)^{t/2} \\
    &\leq\, \binom{t+m-1}{m-1} \,(2e)^{m}\,\left(\frac{1296\Delta^{15}}{\ell}\right)^{t/2} \\
    &\leq\, \left(e\left(1 + \frac{t}{m-1}\right)\right)^{m-1}\,(2e)^{m}\,\left(\frac{1296\Delta^{15}}{\ell}\right)^{t/2} \\
    &\leq\, (2e^2)^m\,\left(\frac{1296\,e^2\,\Delta^{15}}{\ell}\right)^{t/2}.
\end{align*}
For $\ell = \Theta(\Delta^{16})$ and $t = \Theta(m)$, the above is at most $1/\Delta^n$.
It follows that Algorithm~\ref{alg:random_seq} computes a proper $\Sha$-edge-coloring in $O(\Delta^{18}\,n)$ time with probability at least $1 - 1/\Delta^n$, completing the proof of Theorem~\ref{theo:shannon}\,\eqref{item:seq_sha}.

\subsection{Distributed Algorithms}

In this subsection, we will define our distributed algorithms and prove the bounds on their runtimes.
The main result is the following.

\begin{theorem}\label{theo:disjoint_with_bounds}
    Let $\phi$ be a proper partial $\Sha$-edge-coloring of a multigraph $G$ with domain $\dom(\phi) \subset E(G)$ and let $U \defeq E(G) \setminus \dom(\phi)$ be the set of uncolored edges. Then there exists a randomized \LOCAL algorithm that in $O(\Delta^{16} \log n)$ rounds outputs a set $W \subseteq U$ of expected size $\E[|W|] = \Omega(|U|/\Delta^{20})$ and an assignment of connected $e$-augmenting subgraphs $H_e$ to the edges $e \in W$ such that:
    \begin{itemize}
        \item the multigraphs $H_e$, $e \in W$ are pairwise vertex-disjoint, and
        \item $|E(H_e)| = O(\Delta^{16}\log n)$ for all $e \in W$.
    \end{itemize}
\end{theorem}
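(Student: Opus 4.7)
The plan is to run Algorithm~\ref{alg:multi_shannon_chain} in parallel from every uncolored edge and then extract a large vertex-disjoint subcollection through random sparsification. Fix $\ell = 2\cdot 1296\,\Delta^{16}$ and $t = \Theta(\log n)$ with the constant chosen so that $4m(1296\Delta^{15}/\ell)^{t/2} \le 1/n^{3}$. Each edge $e \in U$ privately samples a pivot $x_e \in V(e)$ and independent random bits; its endpoints gather their $O(\ell t)$-neighbourhoods and simulate $\textsf{MSSA}(\phi, e, x_e)$ locally for up to $t$ iterations. By Theorem~\ref{theo:MSSA}, each simulation succeeds with probability at least $1 - 1/n^{3}$, so by a union bound every edge $e$ produces an $e$-augmenting multi-step Shannon chain $C_e$ with $|E(C_e)| = O(\Delta^{16}\log n)$ whp. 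This costs $O(\Delta^{16}\log n)$ LOCAL rounds in total. Let $H_e\subseteq G$ be the connected subgraph spanned by $E(C_e)$; the assignment $e\mapsto H_e$ satisfies every conclusion of the theorem except pairwise vertex-disjointness.

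To enforce disjointness, I would activate each $e$ independently with probability $p$ (to be chosen), let $A$ be the active set, and define
\[
W \,\defeq\, \bigl\{e \in A \,:\, V(H_f)\cap V(H_e)=\0 \text{ for all } f\in A\setminus\{e\}\bigr\}.
\]
Since the $H_e$ have diameter $O(\Delta^{16}\log n)$, membership in $W$ can be decided within another $O(\Delta^{16}\log n)$ rounds. Setting $N(e)\defeq |\{f\in U\setminus\{e\}: V(H_f)\cap V(H_e)\ne\0\}|$, a short calculation based on Bernoulli's inequality gives $\Pr[e\in W\mid C_e] \ge p\bigl(1-p\,\E[N(e)\mid C_e]\bigr)$, and hence $\E|W|\ge p|U|/2$ whenever $p \le 1/\bigl(2\max_{e}\E[N(e)]\bigr)$. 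Picking $p = \Theta(1/\Delta^{20})$ therefore reduces the whole theorem to the bound $\E[N(e)] = O(\Delta^{20})$ for every $e \in U$.

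The technical heart of the proof is this last bound, and this is where I expect the main obstacle to lie. Since the random bits driving $C_e$ and $C_f$ are independent (once $\phi$ is fixed) for $e\ne f$,
\[
\E[N(e)] \,\le\, \sum_{v\in V(G)} \Pr[v\in V(C_e)]\,\sum_{f\in U}\Pr[v\in V(C_f)],
\]
and the first factor sums to $\E|V(C_e)|=O(\Delta^{16}\log n)$. Thus I only need to show that $\sum_{f\in U}\Pr[v\in V(C_f)] = O(\poly(\Delta))$ for every vertex $v$. I would prove this by rerunning the record-weight analysis of Lemmas~\ref{lemma:records_bound_by_wt} and~\ref{lemma:wD_bound}, but now conditioning on the event ``the chain visits $v$ at iteration $i$'' in place of fixing a given terminus. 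The reverse-reconstruction step of Claim~\ref{claim:one_step_chain} should still allow one to recover $(f, x_f, \ell_1,\ldots,\ell_i)$ up to $\poly(\Delta)$ ambiguity from the prefix record $(d_1,\ldots,d_i)$ together with $v$, so that summing over $i\le t$, over all prefix records of length at most $t$, and invoking the geometric decay on $\wt(D)$ from Lemma~\ref{lemma:wD_bound} yields the desired polynomial bound. The subtlety is in adapting the reconstruction to the multigraph setting, where fans can have repeated neighbours (Definition~\ref{defn:fan}) and the bicolored path at step $i$ lives in a shifted colouring rather than in $\phi$ itself; the structural facts collected in Lemmas~\ref{lemma:non-intersecting_degrees} and~\ref{lemma:intersection_prev} should suffice to handle these complications cleanly.
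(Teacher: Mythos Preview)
Your overall strategy—run the MSSA in parallel from every uncolored edge and then sparsify to a vertex-disjoint subfamily—is exactly the paper's, and your proposed key lemma $\sum_{f\in U}\Pr[v\in V(C_f)]=O(\poly(\Delta))$ is precisely what the paper proves via the record analysis (this is the bound $\sum_{j}\E[c_j(v)]=O(\Delta^4)$). The sparsification device differs (Bernoulli sampling versus the paper's random independent set on the conflict graph), but either works once the intersection count is under control.

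The gap is in how you combine the two factors. Your factorization gives
\[
\E[N(e)]\ \le\ \Bigl(\sum_v\Pr[v\in V(C_e)]\Bigr)\cdot\sup_v\sum_{f}\Pr[v\in V(C_f)]\ =\ \E|V(C_e)|\cdot O(\Delta^4),
\]
and you yourself write $\E|V(C_e)|=O(\Delta^{16}\log n)$. That only yields $\E[N(e)]=O(\Delta^{20}\log n)$, not $O(\Delta^{20})$, so with $p=\Theta(1/\Delta^{20})$ you get $\E|W|=\Omega(|U|/(\Delta^{20}\log n))$, off by a $\log n$. The $\log n$ is not an artifact of a crude estimate: by the lower bound of~\cite{CHLPU} there are colorings and edges $e$ for which every augmenting subgraph has $\Omega(\log n)$ vertices, so $\E|V(C_e)|=\Omega(\log n)$ for such $e$. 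Relaxing your per-edge requirement to the average $\sum_e\E[N(e)]=O(\Delta^{20}|U|)$ (which your Bernoulli calculation already supports, since $\E|W|\ge p|U|-p^2\sum_e\E[N(e)]$) does not help: the symmetric bound gives $\sum_e\E[N(e)]\le O(\Delta^4)\sum_e\E|V(C_e)|\le O(\Delta^{20}|U|\log n)$ as well.

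The paper removes the $\log n$ by an \emph{asymmetric} count. Writing $C_e=F_0^e+P_0^e+\cdots$ and $V_j^e\defeq V(F_{j-1}^e+P_{j-1}^e)$, let $N^+(e)$ count only those $f$ for which some intersection $V_j^e\cap V_{j'}^f\ne\0$ occurs with $j\le j'$; then $|E(\Gamma)|\le\sum_e|N^+(e)|$. The record analysis gives the segment-wise bound $\E[c_{j'}(v)]\le 2\ell(6\Delta^4/\ell)^{j'}$, so for any fixed realization of $C_e$,
\[
\E\bigl[N^+(e)\mid C_e\bigr]\ \le\ \sum_{j=1}^{k_e}\ \sum_{v\in V_j^e}\ \sum_{j'\ge j}\E[c_{j'}(v)]\ \le\ \sum_{j\ge 1}3\ell\cdot 4\ell\Bigl(\tfrac{6\Delta^4}{\ell}\Bigr)^{j}\ =\ O(\ell\Delta^4),
\]
a geometric series independent of $k_e$, and hence of $t$ and $\log n$. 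The point is that the geometric decay is applied on the $C_f$ side at index $j'\ge j$, which exactly offsets the accumulation over segments $j$ on the $C_e$ side; your symmetric factorization sums the decay from $j'=1$ and throws this cancellation away. With $\E|E(\Gamma)|\le\sum_e\E[N^+(e)]=O(\ell\Delta^4|U|)$ in hand, either your Bernoulli sparsification or the paper's random-independent-set extraction delivers $\E|W|=\Omega(|U|/\Delta^{20})$.
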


We will prove the above theorem in \S\ref{subsubsection:many_disjoint}. Before we do so, let us show how it can be used to prove our results.

\begin{proof}[Proof of Theorem~\ref{theo:shannon}\,\eqref{item:dist_rand_sha}]
    Our algorithm will proceed in stages.
    Let $\phi_0$ be the blank coloring, i.e., $\phi_0(e) = \blank$ for all $e \in E(G)$.
    At the start of the $i$-th stage we have a partial coloring $\phi_{i-1}$ where $U_{i-1} \subseteq E(G)$ is the set of uncolored edges under $\phi_{i-1}$.
    We run the following steps:
    \begin{itemize}
        \item Compute $W_i \subseteq U_{i-1}$ and $H_e$ for each $e \in W_i$ by running the algorithm from Theorem~\ref{theo:disjoint_with_bounds}.
        \item Augment the coloring $\phi_{i-1}$ with the multigraphs $H_e$ to get $\phi_i$.
    \end{itemize}
    From Theorem~\ref{theo:disjoint_with_bounds} it follows that each stage can be implemented in $O(\Delta^{16}\log n)$ rounds.
    We must now show that we can color the multigraph in $\poly(\Delta) \log n$ stages with high probability.
    To this end, note by Theorem~\ref{theo:disjoint_with_bounds}
    \[\EE[|W_i|\mid U_{i-1}] = \Omega\left(\frac{|U_{i-1}|}{\Delta^{20}}\right) \implies \EE[|U_{i}|\mid U_{i-1}] = \left(1 - \Omega\left(\Delta^{-20}\right)\right)|U_{i-1}|.\]
    In particular, as $|U_0| = m$, we have
    \[\EE[|U_T|] \leq m\exp\left(-T\,\Omega\left(\Delta^{-20}\right)\right).\]
    By Markov's inequality, we have
    \[\Pr[|U_T| \geq 1] \leq \EE[|U_T|] \leq m\exp\left(-T\,\Omega\left(\Delta^{-20}\right)\right).\]
    For $T = \Theta(\Delta^{20}\log n)$, the above is at most $1/\poly(n)$.
    It follows that our randomized distributed algorithm computes a proper $\Sha$-edge-coloring of $G$ in $O(\Delta^{36}\log^2 n)$ rounds.
\end{proof}

For our deterministic distributed Algorithm, we will utilize approximation algorithms for hypergraph maximal matching exactly as in \cite{VizingChain, Christ, fastEdgeColoring}.
Let us first introduce some new definitions and notation. 
Recall that $\log^*n$ denotes the iterated logarithm of $n$, i.e., the number of iterative applications of the logarithm function to $n$ after which the output becomes less than $1$.
Let $\mathcal{H}$ be a hypergraph. 
We let
\[r(\mathcal{H}) \defeq \max_{e \in E(\mathcal{H})}|e|, \quad \deg_\mathcal{H}(x) \defeq |\set{e\,:\,x\in e}|, \quad \Delta(\mathcal{H}) \defeq \max_{x\in V(\mathcal{H})}\deg_\mathcal{H}(x)\]
denote the \emphd{rank} of an edge $e$, \emphd{degree} of a vertex x and \emphd{maximum degree} of $\mathcal{H}$, repectively.
A \emphd{matching} in $\mathcal{H}$ is a set of disjoint hyperedges. 
We let $\mu(\mathcal{H})$ be the size of a maximum matching in $\mathcal{H}$.

While the usual \LOCAL model is defined for graphs, there is an analogous model operating on a hypergraph $\mathcal{H}$. 
Namely, in a single communication round of the \LOCAL model on $\mathcal{H}$, each vertex $x \in V(\mathcal{H})$ is allowed to send messages to every vertex $y \in V(\mathcal{H})$ such that $x$ and $y$ belong to a common hyperedge.

We shall use the following result due to Harris (based on earlier work of Ghaffari, Harris, and Kuhn \cite{derandomizing}):

\begin{theorem}[{Harris \cite[Theorem 1.1]{harrisdistributed}}]\label{theo:hypergraph_matching}
There exists a deterministic distributed algorithm in the \LOCAL model on an $n$-vertex hypergraph $\mathcal{H}$ that outputs a matching $M \subseteq E(\mathcal{H})$ with $|M| = \Omega(\mu/r)$ in
$\tilde{O}(r \log d + \log^2 d + \log^* n)$
rounds, where $r \defeq r(H)$, $d \defeq \Delta(H)$, and $\mu \defeq \mu(H)$.
\end{theorem}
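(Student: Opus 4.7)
Since Theorem~\ref{theo:hypergraph_matching} is cited directly from Harris~\cite{harrisdistributed} (building on the derandomization framework of Ghaffari, Harris, and Kuhn~\cite{derandomizing}), this paper takes it as a black box. Were I to attempt a proof from scratch, the plan would be to design a simple randomized algorithm with the right guarantees, and then derandomize it via network decomposition.

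For the randomized half, I would activate each hyperedge $e \in E(\mathcal{H})$ independently with probability $p \defeq c/d$ for a suitable constant $c$, and include $e$ in the output matching $M$ if no other activated edge shares a vertex with $e$. Seeded from a fractional optimum matching of value $\mu$, a first-moment calculation shows $\EE[|M|] = \Omega(\mu/r)$, and the probability that an individual activated edge fails to enter $M$ can be bounded via the Lov\'asz Local Lemma on a dependency graph of degree $O(rd)$. One would iterate this one-shot procedure $O(\log d)$ times on the residual hypergraph, after which every surviving vertex has low enough degree that a constant-round greedy step completes the matching. This yields a randomized \LOCAL algorithm of complexity $O(\log d + \log^* n)$, where the $\log^* n$ summand absorbs an initial Linial-style coloring used for tie-breaking.

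For the deterministic half, I would invoke the network-decomposition-based derandomization of~\cite{derandomizing}: on the conflict graph of $\mathcal{H}$ (whose max degree is $O(rd)$), compute a network decomposition into $O(\log n)$ color classes of cluster diameter $O(\log n)$; then process clusters sequentially, fixing the random bits of the randomized algorithm via the method of conditional expectations applied to a pessimistic estimator derived from the LLL analysis. Each cluster can be handled by gathering its entire state at a leader vertex in rounds proportional to its diameter, so the total round cost is essentially the number of color classes times the cluster diameter, plus the derandomization overhead inside each cluster.

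The main obstacle is the design of the pessimistic estimator. It must be \emph{local}, so that its conditional expectations can be evaluated from a constant-radius neighborhood of the conflict graph, and \emph{tight}, so that the derandomized matching still has size $\Omega(\mu/r)$. Harris's improvement over the generic derandomization overhead---yielding the specific bound $\tilde{O}(r\log d + \log^2 d + \log^* n)$ rather than $\poly(\log n)$---comes from exploiting the rank-$r$ structure to bound the relevant dependency network by $r$-neighborhoods. This is the technical heart of~\cite{harrisdistributed} and by far the hardest piece to reconstruct from scratch; fortunately, we need only apply the theorem as a black box to derandomize the construction in Theorem~\ref{theo:disjoint_with_bounds}.
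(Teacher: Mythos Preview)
Your proposal is correct: the paper does not prove Theorem~\ref{theo:hypergraph_matching} at all but simply cites it from Harris~\cite{harrisdistributed} and applies it as a black box in the proof of Theorem~\ref{theo:shannon}\,\eqref{item:dist_det_sha}. Your first sentence already captures this, and the remainder of your write-up is an optional proof sketch that the paper neither provides nor requires.
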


With this in hand, we are ready to prove our result.

\begin{proof}[Proof of Theorem~\ref{theo:shannon}\,\eqref{item:dist_det_sha}]
    Once again our algorithm will proceed in stages. 
    Let $\phi_0$ be the blank coloring.
    At the start of the $i$-th stage we have a partial coloring $\phi_{i-1}$ where $U_{i-1} \subseteq E(G)$ is the set of uncolored edges under $\phi_{i-1}$.
    Let us describe a single stage in detail. We will run the following steps:
    \begin{itemize}
        \item First, we define an auxiliary hypergraph $\Gamma_i$ as follows: Set $V(\Gamma_i) \defeq V(G)$ and let $S \subseteq V(G)$ be a hyperedge of $\Gamma_i$ if and only if there exists an uncolored edge $e \in U_{i-1}$ and a connected $e$-augmenting subgraph $H$ of $G$ (with respect to $\phi_{i-1}$) with $S = V(H)$ and $|E(H)| = O(\Delta^{16}\log n)$ (the implied constant factors in this bound are the same as in Theorem~\ref{theo:disjoint_with_bounds}).

        \item Next, we run the Algorithm defined by Theorem \ref{theo:hypergraph_matching} to obtain a matching $M_i$ of $\Gamma_i$.

        \item By definition, each edge $S \in M_i$ contains an augmenting subgraph $H$ such that $V(H) = S$.
        For each $S$, we may pick an arbitrary such graph $H_S$.
        As $M_i$ is a matching, the multigraphs $H_S$ are vertex disjoint and can be simultaneously augmented to obtain the coloring $\phi_i$.
        
    \end{itemize}
    By definition, $r(\Gamma_i) = O(\Delta^{16} \log n)$, and, by Theorem~\ref{theo:disjoint_with_bounds}, $\mu(\Gamma_i) = \Omega(|U_{i-1}|/\Delta^{20})$. 
    To bound $\Delta(\Gamma_i)$, consider an edge $S \in E(\Gamma_i)$. 
    Since the multigraph $G[S]$ is connected, we can order the vertices of $S$ as $(x_0, \ldots, x_{|S| - 1})$ so that each $x_j$, $j \geq 1$ is adjacent to at least one of $x_0$, \ldots, $x_{j-1}$. 
    This means that once $(x_0, \ldots, x_{j-1})$ are fixed, there are at most $j\Delta \leq r(\Gamma_i)\Delta$ choices for $x_j$, and hence,
    \[\Delta(\Gamma_i) \,\leq\, (r(\Gamma_i)\Delta)^{r(\Gamma_i)} \,\leq\, \exp\left(2r(\Gamma_i) \log r(\Gamma_i)\right).\]
    By Theorem~\ref{theo:hypergraph_matching}, the matching $M_i \subseteq E(\Gamma_i)$ satisfies
    \[
        |M_i| \,=\, \Omega\left(\frac{\mu(\Gamma_i)}{r(\Gamma_i)}\right) \,=\, \Omega\left(\frac{|U_{i-1}|}{\Delta^{36}\,\log n}\right).
    \]
    Furthermore, $M_i$ can be found in $\tilde{O}(\Delta^{32} \log^2 n)$ rounds in the \LOCAL model on $\Gamma_i$. Since a single round of the \LOCAL model on $\Gamma_i$ can be simulated by $O(\Delta^{16}\log n)$ rounds in the \LOCAL model on $G$, $M_i$ can be found in $\tilde{O}(\Delta^{48} \log^3 n)$ rounds in the \LOCAL model on $G$. 
    Augmenting the subgraphs $H_S$ can be done in $O(\Delta^{16} \log n)$ rounds and so each stage takes $\tilde O(\Delta^{48}\log^3 n)$ rounds.
    
    It remains to show that we can color the multigraph in $\poly(\Delta) \log^2 n$ stages.
    To this end, we note that
    \[|U_i| \leq \left(1 - \Omega\left(\frac{1}{\Delta^{36}\log n}\right)\right)|U_{i-1}|.\]
    In particular, as $|U_0| = m$, we have
    \[|U_T| \leq m\exp\left(-T\,\Omega\left(\frac{1}{\Delta^{36}\log n}\right)\right).\]
    For $T = \Theta(\Delta^{36}\log^2 n)$, the above is less than $1$.
    It follows that our deterministic distributed algorithm computes a proper $\Sha$-edge-coloring of $G$ in $O(\Delta^{84}\log^5 n)$ rounds.
\end{proof}

\subsubsection{Proof of Theorem~\ref{theo:disjoint_with_bounds}}\label{subsubsection:many_disjoint}

Let $\phi$, $U$ be as in Theorem~\ref{theo:disjoint_with_bounds}. 
Throughout this section, we will fix $\ell = \Theta(\Delta^{16})$ and $t = \Theta(\log n)$, where the implicit constants will be assumed to be sufficiently large.
The \LOCAL algorithm for this section will use a simple randomized independent set algorithm as a subprocedure.
The details are in Algorithm~\ref{alg:rand_ind_set}.
The following result gives a lower bound on the expected size of the output.

\begin{lemma}[{\cite[Lemma 8.2]{fastEdgeColoring}}]\label{lemma:rand_ind_set}
    Let $W$ be the independent set returned by Algorithm \ref{alg:rand_ind_set} on a graph $\Gamma$ of average degree $d(\Gamma)$. 
    Then $\E[|W|] = |V(\Gamma)|/(d(\Gamma) + 1)$.
\end{lemma}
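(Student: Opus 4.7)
The plan is to argue that each vertex $v \in V(\Gamma)$ is included in $W$ with probability $1/(d_\Gamma(v)+1)$, sum over all vertices using linearity of expectation, and conclude via Jensen's inequality applied to the convex function $x \mapsto 1/x$. (I read the conclusion of the lemma as the lower bound $\E[|W|] \geq |V(\Gamma)|/(d(\Gamma)+1)$, which is what the argument below establishes; the equality version is classically known to fail when the degree sequence is not constant.)

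First, I would describe the standard independent set algorithm that Algorithm~\ref{alg:rand_ind_set} is expected to implement: each vertex $v$ independently samples a rank $r(v)$ uniformly at random from $[0,1]$, and $v$ is added to $W$ if and only if $r(v) < r(u)$ for every neighbor $u$ of $v$ in $\Gamma$. The output $W$ is clearly independent: if two adjacent vertices $u, v$ were both in $W$, then we would need both $r(u) < r(v)$ and $r(v) < r(u)$, a contradiction. Since the ranks are continuous, ties occur with probability zero, so the algorithm is well defined.

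Next, I would compute $\Pr[v \in W]$ for each $v \in V(\Gamma)$. The ranks of $v$ and its $d_\Gamma(v)$ neighbors are i.i.d.\ continuous random variables, so the probability that $v$ has the minimum rank among these $d_\Gamma(v)+1$ values is exactly $1/(d_\Gamma(v)+1)$. By linearity of expectation,
\[
    \E[|W|] \,=\, \sum_{v \in V(\Gamma)} \frac{1}{d_\Gamma(v)+1}.
\]

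The final step is to lower bound this sum using Jensen's inequality. Since $f(x) = 1/x$ is convex on $(0,\infty)$, we have
\[
    \frac{1}{|V(\Gamma)|} \sum_{v \in V(\Gamma)} \frac{1}{d_\Gamma(v)+1} \,\geq\, \frac{1}{\displaystyle \frac{1}{|V(\Gamma)|}\sum_{v \in V(\Gamma)} (d_\Gamma(v)+1)} \,=\, \frac{1}{d(\Gamma)+1},
\]
where the last equality uses the definition of average degree $d(\Gamma) = \frac{1}{|V(\Gamma)|}\sum_{v} d_\Gamma(v)$. Multiplying both sides by $|V(\Gamma)|$ yields the desired bound. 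There is no real obstacle here; the argument is entirely standard, and the only substantive content is the Jensen step converting the sum over vertex degrees into an expression involving the average degree.
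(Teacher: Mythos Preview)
The paper does not prove this lemma itself; it is quoted verbatim from \cite[Lemma 8.2]{fastEdgeColoring} and used as a black box. Your argument is the standard one and is correct, including your observation that the statement should really be read as the lower bound $\E[|W|] \geq |V(\Gamma)|/(d(\Gamma)+1)$ (which is precisely how the paper applies it). One cosmetic point: in Algorithm~\ref{alg:rand_ind_set} the vertex $v$ is selected when $x_v$ is the \emph{maximum} among its closed neighborhood, not the minimum, but by symmetry of i.i.d.\ uniform variables this does not affect the probability computation.
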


\begin{algorithm}[t]
    \caption{Distributed Random Independent Set \cite[Algorithm~8.1]{fastEdgeColoring}}\label{alg:rand_ind_set}
    \begin{flushleft}
        \textbf{Input}: A graph $\Gamma$. \\
        \textbf{Output}: An independent set $W$.
    \end{flushleft}
    \begin{algorithmic}[1]
        \State Let $x_v$ be i.i.d. $\mathcal{U}(0,1)$ for each $v \in V(\Gamma)$.
        \If{$x_v > \max_{u \in N_\Gamma(v)}x_u$}
            \State $W \gets W \cup \set{v}$
        \EndIf
        \State \Return $W$
    \end{algorithmic}
\end{algorithm}

With this in hand, let us describe the steps of our \LOCAL algorithm:
\begin{enumerate}
    \item\label{step:find_chains} For each $e \in U$, run Algorithm~\ref{alg:multi_shannon_chain} in parallel for $t$ iterations.
    Let $S \subseteq U$ be the set of edges for which the algorithm terminates and let $C_e$ be the chain computed for $e \in S$.

    \item Define a graph $\Gamma$ as follows: Set $V(\Gamma) = S$ and let $ef \in E(\Gamma)$ if $V(C_e) \cap V(C_f) \neq \0$ (see Fig.~\ref{fig:chain_intersect_Gamma} for an example of such an edge).

    \item\label{step:ind_set} Run Algorithm~\ref{alg:rand_ind_set} on $\Gamma$ to compute the set $W \subseteq S$.
\end{enumerate}
We note that step~\ref{step:find_chains} can be implemented in $O(\ell t)$ rounds.
For $H_e = G[C_e]$, by Theorem~\ref{theo:MSSA} we have $|E(H_e)| = \length(C_e) = O(\ell t)$.
A single round in $\Gamma$ can be implemented in $O(\ell t)$ rounds in $G$ and so step~\ref{step:ind_set} can also be implemented in $O(\ell t)$ rounds.
It follows that the \LOCAL algorithm runs in $O(\Delta^{16}\log n)$ rounds.

It remains to bound $\EE[|W|]$.
To this end, we prove the following lemma.

\begin{lemma}\label{lemma:exp_num_edges}
    $\EE[|E(\Gamma)|] \leq 144\ell\,\Delta^{4}\,|U|$.
\end{lemma}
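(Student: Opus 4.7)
The plan is to reduce $\EE[|E(\Gamma)|]$ to the second moment of the ``visit multiplicity''
$$\EE[|E(\Gamma)|] \,\leq\, \frac{1}{2}\sum_{v\in V(G)} R(v)^2,\qquad R(v) \,\defeq\, \sum_{f\in U}\Pr[v\in V(C_f)],$$
obtained by writing $\EE[|E(\Gamma)|]=\sum_{\{e,f\}}\Pr[V(C_e)\cap V(C_f)\neq\varnothing]$, applying the union bound $\Pr[V(C_e)\cap V(C_f)\neq\varnothing]\leq\sum_v\Pr[v\in V(C_e)]\Pr[v\in V(C_f)]$, and exploiting the independence of the random truncation lengths used by the MSSA across different starting edges. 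The task then reduces to proving $\sum_v R(v)^2 \leq 288\,\ell\,\Delta^4\,|U|$.

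To estimate $\sum_v R(v)^2$, I would use the two-factor bound $\sum_v R(v)^2 \leq (\max_v R(v))\cdot\sum_v R(v)$ and handle each factor separately. For the second factor, linearity gives $\sum_v R(v) = \sum_{f\in U}\EE[|V(C_f)|]$; combining the deterministic estimate $|V(C_f)|\leq (T_f+1)(2\ell+3)$ (where $T_f$ denotes the number of iterations of the {\upshape\textsf{while}}-loop during the $f$-MSSA) with the tail bound $\Pr[T_f\geq s]\leq 4m(1296\Delta^{15}/\ell)^{s/2}$ from Theorem~\ref{theo:MSSA} yields $\EE[|V(C_f)|]=O(\ell)$ and hence $\sum_v R(v)=O(\ell\,|U|)$. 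For the first factor, I aim to show $R(v)=O(\Delta^4)$ uniformly in $v$; I decompose $V(C_f) = \bigcup_{k\geq 0}V(F_k^f+P_k^f)$ into the contributions of each step and bound each summand by a counting argument in the spirit of Lemma~\ref{lemma:records_bound_by_wt}. For the first step ($k=0$), Lemma~\ref{lemma:non-intersecting_degrees} together with the structure of the $\alpha\beta$-paths $P(\cdot;\phi,\alpha\beta)$ shows that $v\in V(F_0^f+P_0^f)$ constrains $f$ to one of $O(\Delta^4)$ possibilities (choosing $O(\Delta^2)$ color pairs and $O(\Delta^2)$ local positions around $v$), while the random truncation controls the inclusion probability. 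For higher steps ($k\geq 1$), the tail bound from Theorem~\ref{theo:MSSA} supplies a geometric decay in $k$ that dominates the corresponding combinatorial growth in the enumeration.

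The hardest part will be the uniform bound $R(v)=O(\Delta^4)$ for later steps: a vertex $v$ can enter the chain $C_f$ at an arbitrary step index $k$, and one must carefully balance the polynomial growth in the number of configurations reaching $v$ at step $k$ against the geometric decay of $\Pr[T_f \geq k]$ to obtain a sum over $k$ that converges to a bound independent of $n$ and $t$. A secondary difficulty is sharpening $\EE[|V(C_f)|]$ from the naive worst case $O(\ell\,t)$ to $O(\ell)$, which again relies on the entropy compression estimate of Theorem~\ref{theo:MSSA} applied to the per-iteration behavior of the \hyperref[alg:multi_shannon_chain]{MSSA}.
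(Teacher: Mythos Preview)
Your reduction $\EE[|E(\Gamma)|]\le\tfrac12\sum_vR(v)^2$ via independence is fine, and the bound $\max_vR(v)=O(\Delta^4)$ is correct and follows from exactly the counting in the paper (the estimate $\E[c_j(v)]\le 2\ell(6\Delta^4/\ell)^j$, which already needs the full entropy-compression machinery of Lemmas~\ref{lemma:records_bound_by_wt}--\ref{lemma:Ds_bound}). The gap is in the other factor: the step ``$\Pr[T_f\ge s]\le 4m(1296\Delta^{15}/\ell)^{s/2}$ yields $\EE[|V(C_f)|]=O(\ell)$'' does not go through. That tail bound carries the prefactor $4m$, so it is vacuous until $s\gtrsim \log m/\log\Delta$; summing gives only $\EE[T_f]=O(\log n)$ (and in any case $T_f\le t=\Theta(\log n)$ deterministically). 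Hence $\sum_vR(v)=\sum_f\EE[|V(C_f)|]=O(\ell\,|U|\log n)$, and your product bound yields $\EE[|E(\Gamma)|]=O(\Delta^4\ell\,|U|\log n)$, off by a $\log n$ from the target. There is no evident way to remove the $m$ in Theorem~\ref{theo:MSSA} for a fixed starting edge: the entropy-compression reconstruction is from the \emph{terminus}, so for a fixed start one still union-bounds over $O(m)$ termini.

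The paper sidesteps the need for any bound on $\EE[|V(C_f)|]$ by an asymmetric charging rather than the symmetric second-moment bound. For an intersecting pair $(C_e,C_f)$, orient $e\to f$ if the first intersection occurs with step index $j$ on $C_e$ and $j'\ge j$ on $C_f$; then, conditioning on $C_e$,
\[
\E[|N^+(e)|\mid C_e]\ \le\ \sum_{j=1}^{k_e}\sum_{x\in V^e_j}\sum_{j'\ge j}\E[c_{j'}(x)]
\ \le\ 3\ell\sum_{j\ge 1}\,4\ell\Big(\tfrac{6\Delta^4}{\ell}\Big)^{j}\ \le\ 144\,\ell\,\Delta^4,
\]
using $\sum_{j'\ge j}\E[c_{j'}(x)]\le 4\ell(6\Delta^4/\ell)^j$. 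The point is that the inner tail in $j'$ is pushed to start at $j$, so the resulting bound decays geometrically in $j$ and the outer sum over $j$ converges \emph{regardless of} $k_e$; nothing about $\EE[k_e]$ or $\EE[|V(C_e)|]$ is ever used. If you want to keep your $R(v)^2$ framework, the analogous fix is to split $R(v)=\sum_j\E[c_j(v)]$ and pair the two copies asymmetrically ($j\le j'$) before summing over $v$, which reproduces the paper's computation; the naive $(\max R)(\sum R)$ split loses exactly this ordering and hence the $\log n$.
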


\begin{figure}[!b]
    \centering
    \begin{tikzpicture}[xscale = 0.65,yscale=0.65, rotate=10]
            \node[circle,fill=black,draw,inner sep=0pt,minimum size=4pt] (a) at (0,0) {};
        	\path (a) ++(-45:1) node[circle,fill=black,draw,inner sep=0pt,minimum size=4pt] (b) {};
        	\path (a) ++(75:1) node[circle,fill=black,draw,inner sep=0pt,minimum size=4pt] (c) {};

                \path (a) ++(0:5.5) node[circle,fill=black,draw,inner sep=0pt,minimum size=4pt] (d) {};
            
        	\path (d) ++(-45:1) node[circle,fill=black,draw,inner sep=0pt,minimum size=4pt] (e) {};
        	
        	\path (d) ++(0:6) node[circle,fill=black,draw,inner sep=0pt,minimum size=4pt] (g) {};

                \path (g) ++(-45:1) node[circle,fill=black,draw,inner sep=0pt,minimum size=4pt] (h) {};
                \path (g) ++(75:1) node[circle,fill=black,draw,inner sep=0pt,minimum size=4pt] (i) {};
                \path (g) ++(0:5.5) node[circle,fill=black,draw,inner sep=0pt,minimum size=4pt] (j) {};

                \path (j) ++(-45:1) node[circle,fill=black,draw,inner sep=0pt,minimum size=4pt] (k) {};
                \path (j) ++(75:1) node[circle,fill=black,draw,inner sep=0pt,minimum size=4pt] (l) {};
                \path (j) ++(0:5.5) node[circle,fill=black,draw,inner sep=0pt,minimum size=4pt] (m) {};
        	
        	\draw[thick,dotted] (a) to node[font=\fontsize{8}{8},midway,inner sep=1pt,outer sep=1pt,minimum size=4pt,fill=white] {$e$} (b);
        	
        	\draw[thick, decorate,decoration=zigzag] (c) to[out=20,in=150] (d) (e) to[out=85,in=130, looseness = 1.1]  (g) (i) to[out=20,in=150] (j) (l) to[out=20,in=160] (m);
        	
        	\draw[thick] (a) -- (c) (d) -- (e) (h) -- (g) -- (i) (k) -- (j) -- (l);
            
            \node[circle,fill=black,draw,inner sep=0pt,minimum size=4pt] (x) at (14,1.15) {};
            \node[anchor=north east] at (x) {$x$};

        \begin{scope}[xshift=6cm,yshift=6.5cm,rotate=-40]
            \node[circle,fill=black,draw,inner sep=0pt,minimum size=4pt] (a) at (0,0) {};
        	\path (a) ++(-45:1) node[circle,fill=black,draw,inner sep=0pt,minimum size=4pt] (b) {};
        	\path (a) ++(75:1) node[circle,fill=black,draw,inner sep=0pt,minimum size=4pt] (c) {};

                \path (a) ++(0:5.5) node[circle,fill=black,draw,inner sep=0pt,minimum size=4pt] (d) {};
            
        	\path (d) ++(-45:1) node[circle,fill=black,draw,inner sep=0pt,minimum size=4pt] (e) {};
        	
        	\path (d) ++(0:6) node[circle,fill=black,draw,inner sep=0pt,minimum size=4pt] (g) {};

                \path (g) ++(-45:1) node[circle,fill=black,draw,inner sep=0pt,minimum size=4pt] (h) {};
                \path (g) ++(75:1) node[circle,fill=black,draw,inner sep=0pt,minimum size=4pt] (i) {};
                \path (g) ++(0:6) node[circle,fill=black,draw,inner sep=0pt,minimum size=4pt] (j) {};

        	\draw[thick,dotted] (a) to node[font=\fontsize{8}{8},midway,inner sep=1pt,outer sep=1pt,minimum size=4pt,fill=white] {$f$} (b);
        	
        	\draw[thick, decorate,decoration=zigzag] (c) to[out=20,in=150] (d) (e) to[out=85,in=130, looseness = 1.1]  (g) (i) to[out=20,in=150] (j);
        	
        	\draw[thick] (a) -- (c) (d) -- (e)  (h) -- (g) -- (i);
        \end{scope}
        	
        \end{tikzpicture}
    \caption{Intersecting chains $C_e$ and $C_f$}
    \label{fig:chain_intersect_Gamma}
\end{figure}
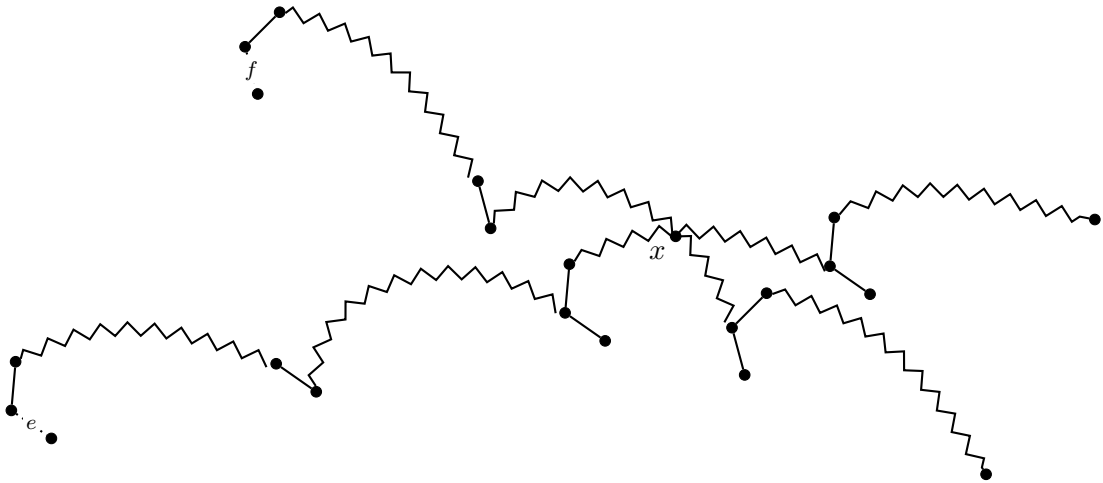

\begin{proof}
    Recall the definition of input sequences, records and termini introduced in \S\ref{subsection:analysis_MSSA}.
    For an input sequence $I \in \mathcal{I}^{(t)}$, we let $|I| = t$ and $C(I) = F_0 + P_0 + \cdots + F_{k(I)-1} + P_{k(I)-1}$ be the resulting multi-step Shannon chain after the first $t$ iterations of Algorithm~\ref{alg:multi_shannon_chain}.
    Consider a vertex $x \in V(G)$ and an integer $1 \leq j \leq t$.
    We define the subset of input sequences $\mathcal{I}_j(x)$ to contain $I = (f,z, \ell_1, \ldots, \ell_{t_0})$ if:
    \begin{itemize}
        \item $t_0 \leq t$, and if $t_0 < t$, then Algorithm~\ref{alg:multi_shannon_chain} with this input sequence terminates after the $t_0$-th iteration of the \textsf{while} loop, and
        
        \item $j \leq k(I)$ and $x \in V(F_{j-1} + P_{j-1})$, i.e., $x$ lies on the $j$-th Shannon chain in $C(I)$.
    \end{itemize}
    Let $c_j(x)$ be the random variable equal to the number of multi-step Shannon chains $C_e$ for $e \in U$ such that $x$ appears on the $j$-th chain in $C_e$.
    For example in Fig.~\ref{fig:chain_intersect_Gamma}, $C_e$ would be included in $c_3(x)$ and $C_f$ would be included in $c_2(x)$.
    Since input sequences include information about the uncolored edge input to Algorithm~\ref{alg:multi_shannon_chain}, we have 
    \[
        \E[c_j(x)] \leq \sum_{I \in \mathcal{I}_j(x)}\frac{1}{\ell^{|I|}} = \frac{1}{\ell^t}\sum_{I \in \mathcal{I}_j(x)}\ell^{t - |I|}.
    \]
    Let us define $\tilde{\mathcal{I}}_j(x)$ to contain all input sequences $I \in \mathcal{I}^{(t)}$ such that some initial segment $I'$ of $I$ is in $\mathcal{I}_j(x)$.
    Note that since no input sequence in $\mathcal{I}_j(x)$ is an initial segment of another sequence in $\mathcal{I}_j(x)$, $I'$ is unique.
    In particular, for each $I \in \mathcal{I}_j(x)$, there are $\ell^{t - |I|}$ input sequences in $\tilde{\mathcal{I}}_j(x)$.
    It now follows that
    \begin{align}\label{eq:exp_cj}
        \E[c_j(x)] \leq |\tilde{\mathcal{I}}_j(x)|/\ell^t.
    \end{align}
    Let us now bound $|\tilde{\mathcal{I}}_j(x)|$.
    Consider any $I = (f, z, \ell_1, \ldots, \ell_{t_0}) \in \mathcal{I}_j(x)$ and let $D(I) = (d_1, \ldots, d_{t_0})$ and $C(I) = F_0 + P_0 + \cdots + F_{k(I)-1} + P_{k(I)-1}$.
    We note that there must be an index $j-1 \leq t' \leq t_0$ such that, for $I' = (f, z, \ell_1, \ldots, \ell_{t'})$, we have
    \begin{itemize}
        \item $I' \in \mathcal{I}^{(t')}(D(I'), \Start(F_{j-1}), \vstart(F_{j-1}))$,
        
        \item $D(I') \in \mathcal{D}_{j-1}^{(t')}$, and

        \item $C(I') = F_0 + P_0 + \cdots + F_{j-2} + P_{j-2}$.
    \end{itemize}
    Let us now consider the number of possible tuples $(\Start(F_{j-1}), \vstart(F_{j-1}))$.
    We will split into two cases depending on whether $x$ belongs to $V(F_{j-1})$ or $V(P_{j-1})$.
    \begin{enumerate}[label=\ep{\textbf{Case \arabic*}}, labelindent=0pt,leftmargin=*]
        \item $x \in V(F_{j-1})$. 
        Here, $\Pivot(F_{j-1}) \in N_G[x]$ and $\Start(F_{j-1}) \in E_G(\Pivot(F_{j-1}))$.
        Finally, $\vstart(F_{j-1}) \in V(\Start(F_{j-1}))$ distinct from $\Pivot(F_{j-1})$. 
        So there are at most $(\Delta + 1)\Delta \leq 2\Delta^2$ choices for $(\Start(F_{j-1}), \vstart(F_{j-1}))$.

        \item $x \in V(P_{j-1}) \setminus V(F_{j-1})$.
        By Lemma \ref{lemma:non-intersecting_degrees} \ref{item:related_phi}, all edges of $P_{j-1}$ are colored with two colors, say $\alpha$ and $\beta$, and there are at most $\Sha^2$ choices for them.
        By Lemma \ref{lemma:non-intersecting_degrees} \ref{item:degree_end} and \ref{item:related_phi}, given $\alpha$ and $\beta$, we can locate $\vend(F_{j-1})$ as one of the two endpoints of the maximal $\alpha\beta$-path under the coloring $\phi$ containing $x$.
        Since $\Pivot(F_{j-1}) \in N_G(\vend(F_{j-1}))$, $\Start(F_{j-1}) \in E_G(\Pivot(F_{j-1}))$ and $\vstart(F_{j-1}) \in V(\Start(F_{j-1}))$ distinct from $\Pivot(F_{j-1})$, there are at most $\Delta^2$ choices for $(\Start(F_{j-1}),$ $\vstart(F_{j-1}))$ given $\vend(F_{j-1})$.
        In conclusion, there are at most $2\Sha^2\Delta^2 \leq 9\Delta^4/2$ choices for $(\Start(F_{j-1}), \vstart(F_{j-1}))$ in this case.
    \end{enumerate}
    Therefore, the total number of options for $(\Start(F_{j-1}), \vstart(F_{j-1}))$ is at most
    \[
        2\Delta^2 + 9\Delta^4/2 \,\leq\, 5\Delta^4,
    \]
    for $\Delta \geq 2$.
    Note that, given $I'$, we have at most $\ell^{t-t'}$ choices for $(\ell_{t'+1}, \ldots, \ell_{t})$.
    With this and Lemmas \ref{lemma:records_bound_by_wt}, \ref{lemma:wD_bound}, and \ref{lemma:Ds_bound}, we obtain the following chain of inequalities:
    \begin{align}
        |\tilde{\mathcal{I}}_j(x)| \,&\leq\, 5 \Delta^4 \sum_{t' = j-1}^t\sum_{D \in \mathcal{D}_{j-1}^{(t')}}\wt(D)\,\ell^{t-t'} \nonumber\\
        &\leq\, 5\Delta^4\ell^t\sum_{t' = j-1}^t|\mathcal{D}_{j-1}^{(t')}|\left(\frac{81\Delta^{15}}{\ell}\right)^{t'/2}(36\Delta^7\ell)^{-(j-1)/2} \nonumber\\
        &\leq\, 5\Delta^4\ell^t(36\Delta^7\ell)^{-(j-1)/2}\sum_{t' = j-1}^t\left(\frac{1296\Delta^{15}}{\ell}\right)^{t'/2} \nonumber\\
        [\text{assuming $\ell \geq 5200\Delta^{15}$}] \qquad &\leq\, 10\Delta^4\ell^t\left(\frac{6\Delta^{4}}{\ell}\right)^{j-1} \nonumber\\
        &\leq\, 2\ell^{t+1} \left(\frac{6\Delta^{4}}{\ell}\right)^j. \label{eq:Ij}
    \end{align}
    From \eqref{eq:exp_cj}, we conclude that
    \[\E[c_j(x)] \leq 2\ell \left(\frac{6\Delta^{4}}{\ell}\right)^j.\]
    Let us now compute $\E[|E(\Gamma)|]$.
    For each uncolored edge $e \in U$, let
    \[
        C_e \,=\, F_0^e + P_0^e + \cdots + F_{k_e-1}^e + P_{k_e-1}^e \qquad \text{and} \qquad V^e_j \,\defeq\, V\left(F^e_{j-1} + P^e_{j-1}\right).
    \]
    Let $N^+(e)$ be the set of all edges $f \in U \setminus \set{e}$ such that for some $j \leq j'$, we have
    \[
        V^e_j \cap V^f_{j'} \,\neq\, \0.
    \]
    By definition, if $ef \in E(\Gamma)$, then $V(C_e) \cap V(C_f) \neq \0$, and hence $f \in N^+(e)$ or $e \in N^+(f)$ (or both). Therefore, we have $|E(\Gamma)| \leq \sum_{e \in U} |N^+(e)|$. Note that
    \[
        |N^+(e)| \,\leq\, \sum_{j = 1}^{k_e} \sum_{x \in V^e_j} \sum_{j' = j}^t \sum_{f \in U \setminus \set{e}} \bbone{x \in V^f_{j'}}.
    \]
    Since Algorithm~\ref{alg:multi_shannon_chain} is executed independently for every uncolored edge, we have
    \begin{align*}
        \E[|N^+(e)|\mid C_e] \,&\leq\, \sum_{j = 1}^{k_e} \sum_{x \in V^e_j} \sum_{j' = j}^t \sum_{f \in U \setminus \set{e}} \Pr\left[x \in V^f_{j'} \mid C_e\right] \\
        &=\, \sum_{j = 1}^{k_e} \sum_{x \in V^e_j} \sum_{j' = j}^t \sum_{f \in U \setminus \set{e}} \Pr\left[x \in V^f_{j'} \right] \\
        &\leq\, \sum_{j = 1}^{k_e} \sum_{x \in V^e_j} \sum_{j' = j}^t \E[c_{j'}(x)] \\
        &\leq\, \sum_{j = 1}^{k_e} \sum_{x \in V^e_j} \sum_{j' = j}^t 2\ell\left(\frac{6\Delta^4}{\ell}\right)^{j'}.
    \end{align*}
    Using that $k_e \leq t$ and $|V^e_j| \leq \Delta + 1 + 2\ell \leq 3\ell$ for all $e \in U$ and $j \leq k_e$ and assuming that $\ell \geq 12\Delta^4$, we can bound the last expression as follows:
    \begin{align*}
        \sum_{j = 1}^{k_e} \sum_{x \in V^e_j} \sum_{j' = j}^t 2\ell\left(\frac{6\Delta^4}{\ell}\right)^{j'} \,&\leq\, 6 \ell^2 \sum_{j = 1}^{t} \sum_{j' = j}^t \left(\frac{6\Delta^4}{\ell}\right)^{j'} \\
        &\leq\, 12\ell^2 \sum_{j = 1}^{t} \left(\frac{6\Delta^4}{\ell}\right)^{j} \\
        &\leq\, 144 \ell\, \Delta^4.
    \end{align*}
    As this bound does not depend on $C_e$, we conclude that $\E[|N^+(e)|] \leq 144 \ell\, \Delta^4$ unconditionally, and
    \[
        \E[|E(\Gamma)|] \,\leq\, \sum_{e \in U} \E[|N^+(e)|] \,\leq\, 144 \ell\, \Delta^4 |U|,
    \]
    as desired.
\end{proof}

Note the following as a result of Theorem~\ref{theo:MSSA}:
\[\Pr[|S| \geq |U|/2] = 1 - \Pr[|S| < |U|/2] \geq 1 - \Pr[S \neq U] \geq 1 - \frac{|U|}{\poly(n)} \geq \frac{1}{2},\]
where the inequalities follow by a union bound and appropriate choices for $\ell$ and $t$.
With this in hand, by Lemma~\ref{lemma:rand_ind_set} and Jensen's inequality, we have
\begin{align*}
    \EE[|W|] \geq \EE\left[\frac{|S|}{d(\Gamma) + 1}\right] &\geq \EE\left[\frac{|S|}{d(\Gamma) + 1} \,\bigg|\, |S| \geq |U|/2\right]\,\Pr[|S| \geq |U|/2] \\
    &\geq \frac{|U|}{4}\,\EE\left[\frac{1}{d(\Gamma) + 1} \,\bigg|\, |S| \geq |U|/2\right] \\
    &\geq \frac{|U|}{4}\,\frac{1}{\EE[d(\Gamma)\mid |S| \geq |U|/2] + 1}.
\end{align*}
Note the following as a result of Lemma~\ref{lemma:exp_num_edges}:
\begin{align*}
    \EE[d(\Gamma)\mid |S| \geq |U|/2] &= \EE\left[\frac{2|E(\Gamma)|}{|S|} \,\bigg|\, |S| \geq |U|/2\right] \\
    &\leq \frac{4}{|U|}\EE[|E(\Gamma)| \mid |S| \geq |U|/2] \\
    &\leq \frac{4}{|U|}\,\frac{\EE[|E(\Gamma)|]}{\Pr[|S| \geq |U|/2]} \\
    &\leq 1152\ell\Delta^4.
\end{align*}
Plugging this above, we get
\[\EE[|W|] \geq \frac{|U|}{4}\,\frac{1}{1152\ell\Delta^4 + 1} = \Omega\left(\frac{|U|}{\Delta^{20}}\right),\]
as desired.

\section*{Acknowledgements}
I would like to thank my advisor, Anton Bernshteyn, for his helps and encouragements, without which this research would have been impossible. 
Also, I would like to thank the anonymous referees for their helpful comments.

\bibliographystyle{alpha}
\bibliography{references}

\appendix
\section{Proof of Theorem~\ref{theo:vizing}}\label{section:vizing}

The proof of Theorem~\ref{theo:vizing} is nearly identical to that of the analogous results in \cite{fastEdgeColoring}.
The only modifications made are to the algorithms described in \cite[\S 3.5]{fastEdgeColoring}.

Let us first discuss the \hyperref[alg:first_vizing_fan]{First Vizing Fan Algorithm}, which is described formally in Algorithm~\ref{alg:first_vizing_fan} and is an extension of \cite[Algorithm~3.1]{fastEdgeColoring} to multigraphs.
The algorithm takes as input a proper partial coloring $\phi$, an uncolored edge $e \in E(G)$, and a choice of a pivot vertex $x \in V(e)$.
The output is a tuple $(F, \beta, j)$ such that:
\begin{itemize}
    \item $F$ is a fan with $\Start(F) = e$ and $\Pivot(F) = x$,
    \item $\beta \in [\Delta + \mu]$ is a color and $j$ is an index such that $\beta \in M(\phi, \vend(F))\cap M(\phi, \vend(F|j))$.
\end{itemize}
We first assign $\beta(z) = M(\phi, z)$ to each $z \in N_G(x)$. 
To construct $F$, we follow a series of iterations. 
At the start of each iteration, we have a fan $F = (e_0, \ldots, e_k)$ where $e_0 = e$ and $x \in V(e_i)$ for all $i$. 
Let $y_i \in V(e_i)$ be distinct from $x$ (the $y_i$'s are not necessarily distinct).
If $\min \beta(y_k) \in M(\phi, x)$, then $F$ is $\phi$-happy and we return $(F, \min \beta(y_k), k+1)$.
If not, let $f \in E_G(x)$ be the unique edge adjacent to $x$ such that $\phi(f) = \min \beta(y_k)$. 
We now have two cases.
\begin{enumerate}[label=\ep{\emph{Case \arabic*}},labelindent=5pt,leftmargin=*]
    \item $f \notin \set{e_0, \ldots, e_k}$. Then we update $F$ to $(e_0, \ldots, e_k, f)$, remove $\min \beta(y_k)$ from $\beta(y_k)$ and continue.
    \item $f = e_j$ for some $0 \leq j \leq k$. Note that $\phi(e_0) = \blank$ and $\min \beta(y_k) \in M(\phi, y_k)$, so we must have $1 \leq j < k$. 
    In this case, we return $(F, \min \beta(y_k), j)$.
\end{enumerate}
We remark that $|M(\phi, z)| \geq \mu$ and so $\beta(z)$ is never empty when defining $\eta$ at step~\ref{step:beta_empty}.

\begin{algorithm}[h]\small
\caption{First Vizing Fan}\label{alg:first_vizing_fan}
\begin{flushleft}
\textbf{Input}: A proper partial $(\Delta+\mu)$-edge-coloring $\phi$, an uncolored edge $e\in E(G)$, and a vertex $x \in V(e)$. \\
\textbf{Output}: A fan $F$ with $\Start(F) = e$ and $\Pivot(F) = x$, a color $\beta \in [\Delta + \mu]$, and an index $j$ such that $\beta \in M(\phi, \vend(F))\cap M(\phi, \vend(F|j))$.
\end{flushleft}


\begin{algorithmic}[1]
    \State Let $y \in V(e)$ be distinct from $x$.
    \State $\mathsf{nbr}(\eta) \gets \blank$ \textbf{for each} $\eta \in [\Delta+\mu]$, \quad $\beta(z) \gets M(\phi, z)$ \textbf{for each} $z \in N_G(x)$
    \For{$f\in E_G(x)$}
        \State $\mathsf{nbr}(\phi(f)) \gets f$
    \EndFor
    \medskip
    \State $\mathsf{index}(f) \gets \blank$ \textbf{for each} $f \in E_G(x)$
    \State $F \gets (e)$, \quad $k \gets 0$, \quad $y_k \gets y$, \quad $\mathsf{index}(e) \gets k$
    \While{$k < \deg_G(x)$}
        \State $\eta \gets \min\beta(y_k)$, \quad $\mathsf{remove}(\beta(y_k), \eta)$ \label{step:beta_empty}
        \If{$\eta \in M(\phi, x)$}
            \State \Return $(F, \eta, k + 1)$ \label{step:happy_fan_vizing}
        \EndIf
        \State $k \gets k + 1$
        \State $e_k \gets \mathsf{nbr}(\eta)$
        \If{$\mathsf{index}(e_k) \neq \blank$}
            \State \Return $(F, \eta, \mathsf{index}(e_k))$
        \EndIf
        \State $\mathsf{index}(e_k) \gets k$, \quad $y_k \in V(e_k)$ distinct from $x$
        \State $\mathsf{append}(F, e_k)$
    \EndWhile
\end{algorithmic}
\end{algorithm}

We will prove the following lemma analogous to \cite[Lemma 3.9]{fastEdgeColoring}.

\begin{lemma}\label{lemma:first_fan_vizing}
    Let $(F, \beta, j)$ be the output of Algorithm~\ref{alg:first_vizing_fan} on input $(\phi, e, x)$. Then either $\beta \in M(\phi, x)$ and $F$ is $\phi$-happy, or else, for $F'\defeq F|j$ and any $\alpha \in M(\phi, x)$, we have that either $F$ or $F'$ is $(\phi, \alpha\beta)$-successful.
\end{lemma}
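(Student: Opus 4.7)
The plan is to follow the simple-graph argument of \cite[Lemma 3.9]{fastEdgeColoring}, inserting one extra step to handle the possibility that several $y_i$'s coincide in a multigraph. I would first establish, by induction on the iteration count, that the while-loop maintains the invariant that $F = (e_0, \ldots, e_k)$ is a $\phi$-shiftable fan with $\Pivot(F) = x$, that the edges $e_1, \ldots, e_k$ are distinct edges of $E_G(x)$ (hence carry pairwise distinct colors under $\phi$), and that $\phi(e_{i+1}) \in M(\phi, y_i)$ for every $0 \le i < k$, where $y_i$ denotes the vertex of $e_i$ distinct from $x$. The last two clauses are immediate from the choice $e_{i+1} = \mathsf{nbr}(\min \beta(y_i))$ made at the iteration that appended $e_{i+1}$.

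I would then dispatch the cases by return point. If the algorithm exits at step~\ref{step:happy_fan_vizing}, then $\beta = \eta \in M(\phi, x) \cap M(\phi, y_k) = M(\phi, x) \cap M(\phi, \vend(F))$ and $j = k+1$, so $F|j = F$ is $\phi$-happy, giving the first alternative. Otherwise the algorithm returns upon detecting $e_k = e_j$ for some $1 \le j \le k - 1$ (after incrementing $k$ but before appending $e_k$), so $F = (e_0, \ldots, e_{k-1})$ has length $k$, $F' = F|j = (e_0, \ldots, e_{j-1})$, and $\beta = \eta = \phi(e_j) = \phi(e_k)$. Applying the invariant at the current iteration and also at the iteration that appended $e_j$ gives $\beta \in M(\phi, y_{k-1}) = M(\phi, \vend(F))$ and $\beta \in M(\phi, y_{j-1}) = M(\phi, \vend(F'))$.

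For the second alternative of the conclusion, I would fix any $\alpha \in M(\phi, x)$ and verify that $\deg(x;\phi,\alpha\beta), \deg(y_{k-1};\phi,\alpha\beta), \deg(y_{j-1};\phi,\alpha\beta) \le 1$ (from $\alpha \in M(\phi,x)$ and $\beta \in M(\phi, y_{k-1}) \cap M(\phi, y_{j-1})$), so both $F$ and $F'$ are $(\phi,\alpha\beta)$-hopeful. The maximal $\alpha\beta$-path in $G(\phi,\alpha\beta)$ starting at $x$ has at most one endpoint other than $x$, so at most one of $y_{k-1}, y_{j-1}$ can be $(\phi,\alpha\beta)$-related to $x$; I would pick $H \in \{F, F'\}$ of length $\ell \in \{k, j\}$ with $\vend(H)$ \emph{not} $(\phi, \alpha\beta)$-related to $x$.

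The main obstacle, and the only multigraph-specific point, is that $(\phi, \alpha\beta)$-successfulness for a fan requires non-relatedness in the shifted coloring $\psi \defeq \Shift(\phi, H)$ rather than in $\phi$. To bridge this, I would show that no edge of $H$ carries color $\alpha$ or $\beta$ under either $\phi$ or $\psi$: $\phi(e_i) = \alpha$ is forbidden by $\alpha \in M(\phi, x)$; $\phi(e_i) = \beta$ for some $i < \ell$ would make $e_i$ and $e_\ell$ two distinct edges of $E_G(x)$ sharing the color $\beta = \phi(e_\ell)$, violating properness; and the same argument applies to $\psi(e_i) = \phi(e_{i+1})$ for $i < \ell - 1$, while $\psi(e_{\ell-1}) = \blank$. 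Consequently $G(\phi, \alpha\beta)$ and $G(\psi, \alpha\beta)$ have identical edge sets, the two relatedness relations coincide, and the chosen $H$ is $(\phi, \alpha\beta)$-successful. This multigraph-specific bookkeeping is exactly the content of Fact~\ref{fact:fan}.
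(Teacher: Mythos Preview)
Your proof has a genuine gap in the bridging step. You claim that for the chosen $H \in \{F, F'\}$, no edge of $H$ is colored $\beta$ under $\phi$, and hence $G(\phi,\alpha\beta) = G(\Shift(\phi,H),\alpha\beta)$. This is correct for $H = F'$, but fails for $H = F$: the edge $e_j$ lies in $F = (e_0,\ldots,e_{k-1})$ (since $1 \le j \le k-1$) and satisfies $\phi(e_j) = \beta$. Your argument that ``$e_i$ and $e_\ell$ are two distinct edges sharing color $\beta$'' breaks down because when $\ell = k$ the object you call $e_\ell = e_k$ is, by construction of the algorithm, \emph{equal} to $e_j$; there is no second $\beta$-edge to contradict properness. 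Consequently, shifting $F$ moves the $\beta$-edge at $x$ from $e_j$ to $e_{j-1}$, so the $\alpha\beta$-component of $x$ genuinely changes, and non-relatedness of $x$ and $y_{k-1}$ under $\phi$ does \emph{not} imply non-relatedness under $\Shift(\phi,F)$. Your reference to Fact~\ref{fact:fan} does not help here: that fact transfers \emph{hopefulness}, not relatedness, and in any case its hypothesis $\phi(e_{i+1}) \notin \{\alpha,\beta\}$ is exactly what is at stake.

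There is a secondary gap: your ``at most one of $y_{k-1}, y_{j-1}$ is $(\phi,\alpha\beta)$-related to $x$'' step silently assumes $y_{k-1} \neq y_{j-1}$, which is automatic for simple graphs but must be argued in a multigraph. The paper establishes this by observing that the algorithm removed $\beta$ from $\beta(y_{j-1})$ at iteration $j$, so if $y_{k-1} = y_{j-1}$ the final $\eta$ could not equal $\beta$. This, not the shift bookkeeping, is the multigraph-specific point. The paper then takes a different route from yours: rather than deciding between $F$ and $F'$ via $\phi$-relatedness, it assumes $F$ is $(\phi,\alpha\beta)$-disappointed and traces the $\alpha\beta$-path under $\psi = \Shift(\phi,F)$ through $e_{j-1}$ to conclude directly that $y_{j-1}$ and $y_{k-1}$ are the endpoints of a path component in $G(\psi',\alpha\beta)$, whence $x$ is not $(\psi',\alpha\beta)$-related to $y_{j-1}$.
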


\begin{proof}
    Let $F = (e_0, \ldots, e_{k-1})$ and let $y_j \in V(e_j)$ be distinct from $x$.
    We note the edges $e_i$ are distinct by construction, while the vertices $y_i$ may not be.
    Let us first show that $F$ is $\phi$-shiftable.
    Note that as the edges $e_i$ are distinct, $x \in V(e_i)$ for each $i$ and $\phi$ is proper, the colors $\phi(e_i)$ are distinct as well.
    By our choice of $\eta$ in the \textsf{while} loop of Algorithm~\ref{alg:first_vizing_fan}, we have $\phi(e_i) \in M(\phi, y_{i-1})$ for each $1 \leq i < k$ and so the coloring $\Shift(\phi, F)$ is proper.

    If $j = k$, we have reached step~\ref{step:happy_fan_vizing}. 
    In this case, $\beta \in M(\phi, x)$ and $F$ is $\phi$-happy.
    If not, we have $j < k$, $\beta = \phi(e_j)$ and $\beta \in M(\phi, y_{k-1})$.
    By construction, we also have $\beta \in M(\phi, y_{j-1})$.
    Therefore, $F$ and $F'$ are both $(\phi, \alpha\beta)$-hopeful.
    Let us show at least one of them is $(\phi, \alpha\beta)$-successful.
    To this end, we let $\psi \defeq \Shift(\phi, F)$ and $\psi' \defeq \Shift(\phi, F')$.


    First, we claim that $e_{k-1}$ (resp. $e_{j-1}$) is $(\psi, \alpha\beta)$-hopeful (resp. $(\psi', \alpha\beta)$-hopeful).
    We will consider $e_{k-1}$ as the argument is similar for $e_{j-1}$.
    The goal is to apply Fact~\ref{fact:fan}. To this end, let $I \defeq \set{0 \leq i < k-1\,:\, e_i \in E_G(x, y_{k-1})}$.
    It is enough to show that $\phi(e_{i+1}) \neq \beta$ for any $i \in I$.
    We note that at the $j$-th iteration, we add $e_j$ such that $\phi(e_j) = \beta \in M(\phi, y_{j-1})$.
    As the edges are distinct, the only case we need to consider is $y_{j-1} = y_{k-1}$.
    We claim that this is not possible.
    To see this, we note that at step~\ref{step:beta_empty} in the $j$-th iteration, we remove $\beta$ from $\beta(y_{j-1})$.
    In particular, if $y_{j-1} = y_{k-1}$, then $\eta \neq \beta$ on the $k$-the iteration of the \textsf{while} loop, contradicting the output being $(F, \beta, j)$.

    Now suppose that $F$ is $(\phi, \alpha\beta)$-disappointed, i.e., $x$ and $y_{k-1}$ are $(\psi, \alpha\beta)$-related. 
    As $e_{k-1}$ is $(\psi, \alpha\beta)$-hopeful, it follows that $x$ and $y_{k-1}$ are the endpoints of a path component of $G(\psi, \alpha\beta)$.
    Let us define
    \[P \defeq P(e_{k-1}; \psi, \beta\alpha) = (f_0 = e_{k-1}, f_1, \ldots, f_l).\]
    As $\alpha \in M(\psi, x)$ and $\psi(e_{j-1}) = \beta$, we have $f_1 = e_{j-1}$.
    For $2 \leq i \leq l$, $x \notin V(f_i)$ and so shifting a fan with pivot $x$ does not change the colors on any of these edges.
    In particular, they have the same colors under $\phi,\, \psi$ and $\psi'$ which implies that $y_{j-1}$ and $y_{k-1}$ are $(\psi', \alpha\beta)$-related.
    As $y_{j-1} \neq y_{k-1}$, we have $\beta \in M(\psi', y_{k-1})$. 
    We can conclude that $y_{j-1}$ and $y_{k-1}$ are the endpoints of a  path component of $G(\psi', \alpha\beta)$.
    Since $\alpha \in M(\psi', x)$, it follows that $\deg(x; \psi', \alpha\beta) < 2$. As $x \neq y_{k-1}$, we have $x$ and $y_{j-1}$ are not $(\psi', \alpha\beta)$-related.
    Therefore, $F'$ is $(\phi, \alpha\beta)$-successful, as desired.
\end{proof}

By storing $\beta(z)$ as a hash map with key set $V(G)$, we can compute $\min \beta (z)$ in $O(\Delta + \mu) = O(\Delta)$ time.
Similarly, removing an element from $\beta(z)$ can be done in $O(\Delta)$ time.
By storing $\mathsf{nbr}(\cdot)$ as a hash map as well, the first loop runs in $O(\Delta)$ time.
Each operation in the second loop takes $O(\Delta)$ time and so Algorithm~\ref{alg:first_vizing_fan} runs in $O(\Delta^2)$ time.


The next algorithm we describe is the \hyperref[alg:next_vizing_fan]{Next Vizing Fan Algorithm}, which is formally described in Algorithm~\ref{alg:next_vizing_fan} and is an extension of \cite[Algorithm~3.2]{fastEdgeColoring} to multigraphs.
The algorithm takes as input a proper partial coloring $\phi$, an uncolored edge $e \in E(G)$, a pivot vertex $x \in V(e)$ and a color $\beta$ such that for $y \in V(e)$ distinct from $x$:
\begin{itemize}
    \item $M(\phi, x) \setminus M(\phi, y) \neq \0$, and
    \item $\beta \in M(\phi, y)$.
\end{itemize}
The output is a tuple $(F, \delta, j)$ such that:
\begin{itemize}
    \item $F$ is a fan with $\Start(F) = e$ and $\Pivot(F) = x$,
    \item $\delta \in [\Delta + \mu]$ is a color and $j$ is an index such that $\delta \in M(\phi, \vend(F))\cap M(\phi, \vend(F|j))$.
\end{itemize}
The algorithm is similar to Algorithm~\ref{alg:first_vizing_fan} with a few minor differences. 
First, we use $\delta(\cdot)$ in place of $\beta(\cdot)$ as a notational change.
Second, we have a restriction on $\delta(y)$.
Finally, in the loop we additionally check whether $\min \delta(z) = \beta$, in which case the algorithm returns $(F, \beta, k+1)$.
We remark that as $y$ is adjacent to an uncolored edge, we must have $|M(\phi, y)| \geq \mu + 1$.
In particular, $|\delta(z)| \geq \mu$ for every $z \in N_G(x)$ at the start of the while loop and so $\delta(z)$ is never empty when defining $\eta$ at step~\ref{step:delta_empty}.

\begin{algorithm}[h]\small
\caption{Next Vizing Fan}\label{alg:next_vizing_fan}
\begin{flushleft}
\textbf{Input}: A proper partial $(\Delta+\mu)$-edge-coloring $\phi$, an uncolored edge $e \in E(G)$, a vertex $x \in V(e)$, a color $\beta \in M(\phi, y)$ where $y \in V(e)$ distinct from $x$. \\
\textbf{Output}: A fan $F$ with $\Start(F) = e$ and $\Pivot(F) = x$, a color $\delta \in [\Delta + \mu]$, and an index $j$ such that $\delta \in M(\phi, \vend(F)) \cap M(\phi, \vend(F|j))$.
\end{flushleft}
\begin{algorithmic}[1]
    \State Let $y \in V(e)$ be distinct from $x$.
    \State $\mathsf{nbr}(\eta) \gets \blank$ \textbf{for each} $\eta \in [\Delta+\mu]$, \quad $\delta(z) \gets M(\phi, z)$ \textbf{for each} $z \in N_G(x)$
    \State $\mathsf{remove}(\delta(y), \beta)$
    \For{$f\in E_G(x)$}
        \State $\mathsf{nbr}(\phi(f)) \gets f$
    \EndFor
    \medskip
    \State $\mathsf{index}(f) \gets \blank$ \textbf{for each} $f \in E_G(x)$
    \State $F \gets (e)$, \quad $k \gets 0$, \quad $y_k \gets y$, \quad $\mathsf{index}(e) \gets k$
    \While{$k < \deg_G(x)$}
        \State\label{step:delta_empty} $\eta \gets \min \delta(y_k)$, \quad $\mathsf{remove}(\delta(y_k), \eta)$
        \If{$\eta \in M(\phi, x)$}
            \State \Return $(F, \eta, k+1)$ \label{step:next_vizing_happy}
        \ElsIf{$\eta = \beta$}
            \State \Return $(F, \eta, k+1)$ \label{step:vizing_next_beta}
        \EndIf
        \State $k \gets k+1$
        \State $e_k \gets \mathsf{nbr}(\eta)$
        \If{$\mathsf{index}(e_k) \neq \blank$}
            \State \Return $(F, \eta, \mathsf{index}(e_k))$
        \EndIf
        \State $\mathsf{index}(e_k) \gets k$, \quad $y_k \in V(e_k)$ distinct from $x$
        \State $\mathsf{append}(F, e_k)$
    \EndWhile
\end{algorithmic}
\end{algorithm}

We will prove the following lemma analogous to \cite[Lemma 3.10]{fastEdgeColoring}.

\begin{lemma}\label{lemma:next_fan_vizing}
    Let $(F, \delta, j)$ be the output of Algorithm~\ref{alg:next_vizing_fan} on input $(\phi, e, x, \beta)$, let $F' \defeq F|j$ and let $\alpha \in M(\phi, x) \setminus M(\phi, y)$ be arbitrary.
    Then no edge in $F$ is colored $\alpha$ or $\beta$ and at least one of the following statements holds:
    \begin{itemize}
        \item $F$ is $\phi$-happy, or
        \item $\delta = \beta$ and $F$ is $(\phi, \alpha\beta)$-hopeful, or
        \item either $F$ or $F'$ is $(\phi, \gamma\delta)$-successful for some $\gamma \in M(\phi, x) \setminus \set{\alpha}$.
    \end{itemize}
\end{lemma}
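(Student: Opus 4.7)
My plan is to mirror the proof of Lemma~\ref{lemma:first_fan_vizing}, adding one new branch for termination at step~\ref{step:vizing_next_beta}. Write $F = (e_0, e_1, \ldots, e_n)$ for the returned fan and let $y_i$ denote the endpoint of $e_i$ distinct from $x$. The same invariants as for Algorithm~\ref{alg:first_vizing_fan} hold by construction: the edges $e_i$ are distinct, consecutive ones share only $x$, the colors $\phi(e_i)$ for $1 \leq i \leq n$ are distinct, and $\phi(e_i) \in M(\phi, y_{i-1})$ for each $i \geq 1$. Together these give that $\Shift(\phi, F)$ is proper, so $F$ is $\phi$-shiftable. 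No edge of $F$ is colored $\alpha$ because $\alpha \in M(\phi, x)$ and every edge of $F$ other than $e_0$ is incident to $x$, while $e_0 = e$ is uncolored. No edge of $F$ is colored $\beta$ either: initially $\beta$ is removed from $\delta(y_0) = \delta(y)$, and whenever a later iteration (with $y_{i-1} \neq y$) would pick $\eta = \beta$, the explicit test at step~\ref{step:vizing_next_beta} terminates the procedure before any edge of color $\beta$ gets appended.

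Next I would case on the termination branch. If the algorithm exits at step~\ref{step:next_vizing_happy}, then $\delta = \eta \in M(\phi, x) \cap M(\phi, y_n)$, so $F$ is $\phi$-happy. If it exits at step~\ref{step:vizing_next_beta}, then $\delta = \beta \in M(\phi, y_n)$; combined with $\alpha \in M(\phi, x)$ and properness of $\phi$, this yields $\deg(x; \phi, \alpha\beta), \deg(y_n; \phi, \alpha\beta) \leq 1 < 2$, so $F$ is $(\phi, \alpha\beta)$-hopeful. Otherwise the algorithm exits with $\mathsf{index}(e^\star) = j$ for some $1 \leq j \leq n$, where $e^\star \defeq \mathsf{nbr}(\eta)$ is the edge found already to lie in $F$ at the final iteration; in this case $\delta = \phi(e^\star) = \phi(e_j) \neq \beta$. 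Since $e$ is an uncolored edge incident to $x$, $|M(\phi, x)| \geq (\Delta+\mu) - (\deg_G(x)-1) \geq \mu + 1 \geq 2$, so I can choose $\gamma \in M(\phi, x) \setminus \set{\alpha}$. The hopefulness of both $F$ (endpoint $y_n$) and $F' = F|j$ (endpoint $y_{j-1}$) relative to $(\gamma, \delta)$ then reduces to $\gamma \in M(\phi, x)$, $\delta \in M(\phi, y_n)$ (via $\delta \in \delta(y_n)$), $\delta = \phi(e_j) \in M(\phi, y_{j-1})$, and properness of $\phi$.

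For the successfulness part of the third case I would reuse the $\gamma\delta$-path argument from Lemma~\ref{lemma:first_fan_vizing} nearly verbatim. Setting $\psi \defeq \Shift(\phi, F)$ and $\psi' \defeq \Shift(\phi, F')$, suppose $F$ is $(\phi, \gamma\delta)$-disappointed, so $x$ and $y_n$ lie in a common component $Q$ of $G(\psi, \gamma\delta)$. The plan is first to show $Q$ begins at $x$ via the edge $e_{j-1}$: indeed $\gamma \in M(\psi, x)$, and $\psi(e_{j-1}) = \phi(e_j) = \delta$ is the unique $\psi$-$\delta$-edge at $x$ (any other $\phi$-$\delta$-edge at $x$ would have to equal $e_j$, which lies in $F$, and distinctness of the $\phi(e_i)$ pins down $e_{j-1}$ as the only fan edge whose shifted color is $\delta$). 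Next, every subsequent edge of $Q$ lies outside $F$, because an $F$-edge in $Q$ would have to be $\gamma$- or $\delta$-colored at $x$ under $\psi$, and I have just enumerated these. On non-$F$ edges $\phi$, $\psi$, and $\psi'$ all agree, so the $y_{j-1}$-to-$y_n$ portion of $Q$ survives in $G(\psi', \gamma\delta)$, giving that $y_{j-1}$ and $y_n$ are $(\psi', \gamma\delta)$-related. A final degree count at $x$ ($\gamma \in M(\psi', x)$ while the only $\psi'$-$\delta$-edge at $x$ is $e_j$, now a non-$F'$ edge) places $x$ at the endpoint of a separate $\gamma\delta$-component, confirming that $F'$ is $(\phi, \gamma\delta)$-successful.

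The main obstacle is the multigraph technicality: a single endpoint $y_{j-1}$ may be incident to several fan-edges $e_i$ (those with $y_i = y_{j-1}$), so verifying the invariance of the $\gamma\delta$-structure at $y_{j-1}$ under the two shifts requires a careful case analysis in the spirit of Lemma~\ref{lemma:non-intersecting_degrees}. The restriction $\gamma \neq \alpha$ in the statement does not constrain the proof itself (the same argument goes through for any $\gamma \in M(\phi, x)$); it is included for downstream use in the multi-step Vizing chain construction, where $\alpha$ is reserved for matching up with the path of the previous chain.
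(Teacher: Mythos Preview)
Your proof is correct and follows essentially the same approach as the paper's: the paper's own argument is extremely terse, observing that $\phi$-shiftability and the $\phi$-happy and $\delta=\beta$ branches are immediate, noting $|M(\phi,x)|\ge 2$ so that $\gamma$ exists, and then deferring the entire $\delta\neq\beta$ case to the proof of Lemma~\ref{lemma:first_fan_vizing}. You spell out more of the details (the ``no edge colored $\beta$'' argument, the $\gamma\delta$-path analysis), and you correctly flag that the only nontrivial point is the multigraph technicality---handled in the paper via Fact~\ref{fact:fan} and the observation that the removal of $\delta$ from $\delta(y_{j-1})$ forces $y_{j-1}\neq y_n$.
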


\begin{proof}
    An identical argument as in the proof of Lemma~\ref{lemma:first_fan_vizing} shows that $F$ is $\phi$-shiftable.
    We also note that $x$ is adjacent to an uncolored edge and so $|M(\phi, x)| \geq \mu + 1 \geq 2$.
    In particular, $\gamma$ is well defined.

    If $j = k$, we have either reached step~\ref{step:next_vizing_happy} or~\ref{step:vizing_next_beta}. 
    In the former case, $\delta \in M(\phi, x)$ and $F$ is $\phi$-happy.
    In the latter, $\delta = \beta$.
    By construction, $F$ is $(\phi, \alpha\beta)$-hopeful.
    Furthermore, as $\alpha \in M(\phi, x)$ and we stop iterating when $\eta = \beta$, no edge in $F$ is colored $\alpha$ or $\beta$.
    The proof for the case that $\delta \neq \beta$ follows identically to that of Lemma~\ref{lemma:first_fan_vizing}.
\end{proof}

As for Algorithm~\ref{alg:first_vizing_fan}, one can bound the runtime of Algorithm~\ref{alg:next_vizing_fan} by $O(\Delta^2)$.
The rest of the proof of Theorem~\ref{theo:vizing} is identical to that in \cite{fastEdgeColoring}, \textit{mutatis mutandis}.

\end{document}